\def\doi{9(1:12)2013}
\newcommand*{\N}[0]{\ensuremath{\mathbb{N}}}
\newcommand*{\domain}[0]{\ensuremath{\text{dom}}}
\newcommand*{\Push}[1]{\ensuremath{\mathrm{push}_{#1}}}
\newcommand*{\Pop}[1]{\ensuremath{{\mathrm{pop}_{#1}}}}
\newcommand*{\Clone}[1]{\ensuremath{{\mathrm{clone}_{#1}}}}
\newcommand*{\Collapse}[0]{\ensuremath{\mathrm{collapse}}}
\newcommand*{\TOP}[1]{\ensuremath{\mathrm{top}_{#1}}}
\newcommand*{\Sym}[0]{\ensuremath{\mathrm{Sym}}}
\newcommand*{\Lvl}[0]{\ensuremath{\mathrm{CLvl}}}
\newcommand*{\Lnk}[0]{\ensuremath{\mathrm{CLnk}}}
\newcommand*{\Op}[0]{\mathrm{OP}}
\newcommand*{\op}[0]{\mathrm{op}}
\newcommand*{\ColPop}[1]{\ensuremath{\mathrm{ColPop}_{#1}}}
\newcommand*{\Stacks}[0]{\mathrm{Stck}}
\newcommand*{\length}[0]{\mathrm{length}}
\newcommand*{\depth}[1]{\mathrm{dpt}({#1})}
\newcommand*{\height}[0]{\mathrm{hgt}}
\newcommand*{\Decode}[0]{\mathrm{Dec}}
\newcommand*{\Encode}[0]{\mathrm{Enc}}
\newcommand*{\bfEncode}[0]{\mathrm{\mathbf{Enc}}}
\newcommand*{\Trees}[1]{\ensuremath{\mathbb{T}_{#1}}}
\newcommand*{\EncTrees}[0]{\ensuremath{\mathbb{T}^{\mathrm{Enc}}}}
\newcommand*{\LeftTree}[1]{\ensuremath{LT({#1})}}
\newcommand*{\LeftStack}[0]{\ensuremath{\mathrm{LStck}}}
\newcommand*{\ExLeftStack}[0]{\ensuremath{\mathrm{exLStck}}}
\newcommand*{\InducedGenMilestone}[0]{\ensuremath{\mathrm{IgM}}}
\newcommand*{\inducedTreeof}[2]{\ensuremath{{#2}_{#1}}}
\newcommand*{\CPS}[0]{\ensuremath{\mathrm{CPS}}\xspace}
\newcommand*{\CPG}[0]{\ensuremath{\mathrm{CPG}}\xspace}
\newcommand*{\MSO}{\ensuremath{\mathrm{MSO}}\xspace}
\newcommand*{\FO}{\ensuremath{\mathrm{FO}}\xspace}
\newcommand*{\Reach}{\ensuremath{\mathrm{Reach}}}
\newcommand*{\coloneqq}[0]{\ensuremath{\mathrel{\mathop:}=}}
\newcommand*{\trans}[1]{\ensuremath{\mathrel{{\vdash^{#1}}}}}
\newcommand*{\Milestones}[0]{\ensuremath{\mathrm{MS}}}
\newcommand*{\genMilestones}[0]{\ensuremath{\mathrm{GMS}}
}
\newcommand*{\treeLR}[3]{\ensuremath{#1\left\langle{#2};{#3}\right\rangle}}
\newcommand*{\treeL}[2]{\ensuremath{#1\left\langle{#2};\emptyset\right\rangle}}
\newcommand*{\treeR}[2]{\ensuremath{#1\left\langle\emptyset;{#2}\right\rangle}}
\newcommand*{\prefixeq}[0]{\ensuremath{\mathop{\trianglelefteq}}}
\newcommand*{\notprefixeq}[0]{\ensuremath{\mathop{\not\trianglelefteq}}}
\newcommand*{\ExRet}[0]{\ensuremath{\mathrm{Rt}}}
\newcommand*{\ExLoop}[0]{\ensuremath{\mathrm{Lp}}}
\newcommand*{\ExHLoop}[0]{\ensuremath{\mathrm{hLp}}}
\newcommand*{\ExLLoop}[0]{\ensuremath{\mathrm{{\ell}Lp}}}
\newcommand*{\ExOneLoop}[0]{\ensuremath{1\mathrm{Lp}}}
\newcommand*{\Conf}[0]{\mathrm{Cnf}}
\newcommand*{\Runs}[0]{\ensuremath{\mathrm{Runs}}}
\newcommand*{\relA}[0]{\ensuremath{R^{\Leftarrow}}}
\newcommand*{\relB}[0]{\ensuremath{R^{\Downarrow}}}
\newcommand*{\relC}[0]{\ensuremath{R^{\Uparrow}}}
\newcommand*{\relD}[0]{\ensuremath{R^{\Rightarrow}}}
\begin{document}

\title[Tree-Automaticity of \texorpdfstring{$2$}{2}-CPG]{Collapsible
  Pushdown Graphs of Level 
  \texorpdfstring{$\mathbf{2}$ }{2} are Tree-Automatic\rsuper*}

\author[A.~Kartzow]{Alexander Kartzow}
\address{Universit{\"a}t Leipzig, Institut f\"ur Informatik,
Augustusplatz 10,  04103 Leipzig, Germany}
\email{kartzow@informatik.uni-leipzig.de}

\keywords{tree-automatic structures, collapsible pushdown graphs,
  collapsible pushdown systems, 
  first-order logic, decidability, reachability}

\ACMCCS{[{\bf  Theory of computation}]:  Formal languages and automata
  theory---Tree languages;  Logic---Higher order logic} 
\subjclass{F.4.1}
\titlecomment{{\lsuper*}A preliminary version of this paper has
  been presented at STACS'10
  \cite{Kartzow10}.}

\begin{abstract}
  \noindent  We show that graphs generated by collapsible pushdown
  systems of 
  level $2$ are tree-automatic. Even if we allow
  $\varepsilon$-contractions and reachability
  predicates (with regular constraints) for pairs of configurations,
  the structures remain 
  tree-automatic whence their first-order logic theories are decidable.
  As a corollary we obtain the tree-automaticity of the second level
  of the Caucal-hierarchy. 
\end{abstract}

\maketitle
\tableofcontents
\section{Introduction}

Higher-order pushdown systems were first introduced by Maslov
\cite{Maslov74,Maslov76} as accepting devices for word languages. 
Later, Knapik et al.\ \cite{KNU02} studied them as generators for
trees. They obtained an equi-expressivity result for higher-order
pushdown systems and for higher-order recursion schemes that satisfy
the constraint of \emph{safety}, which is a rather unnatural syntactic
condition.  
Hague et al.\ \cite{Hague2008} introduced collapsible pushdown systems
as extensions of higher-order pushdown systems and proved that these
have exactly the same power as higher-order recursion schemes as
methods for generating trees. 

Both  higher-order and collapsible pushdown systems  also form
interesting devices for generating graphs. 
Carayol and W\"ohrle \cite{cawo03} showed that the graphs generated by
higher-order pushdown systems
of level $l$  coincide with the graphs in 
the $l$-th level of the Caucal-hierarchy, a class of graphs
introduced by Caucal \cite{Caucal02}. Every level of this hierarchy is
obtained from the preceding level by applying graph unfoldings
and monadic second-order interpretations. Both operations preserve the
decidability of the monadic second-order theory whence the
Caucal-hierarchy forms a 
large class of graphs with decidable monadic second-order theories. 
If we use collapsible pushdown systems as generators for graphs we
obtain a different situation. Hague et al.\ showed that even the second
level of the hierarchy contains a graph with undecidable
monadic second-order theory. Furthermore,  they showed the
decidability of the 
modal $\mu$-calculus theories of
all graphs in the hierarchy. These results turn graphs generated by
collapsible pushdown systems into an interesting class. 
The author only  knows one further natural class of graphs which
shares these two properties,
viz.\ the class of nested pushdown trees (cf.~\cite{Alur06languagesof}). Moreover 
this class can
be seen as a subclass of that of collapsible pushdown graphs (cf.~\cite{KartzowPhd}). 

This paper is the long version of
\cite{Kartzow10} and studies the first-order model-checking problem on
collapsible pushdown graphs. We show that the
graphs in the second level of the collapsible pushdown hierarchy are
tree-automatic. 
Tree-automatic structures were introduced by Blumensath
\cite{Blumensath1999}. 
These structures enjoy decidable first-order theories due to the good
closure properties of finite automata. 
Since the translation from collapsible pushdown systems into
tree-automata presentations of the generated graphs is uniform, 
our result
implies that first-order model-checking on collapsible pushdown graphs
of level $2$
is decidable: given a pushdown system, first compute the
tree-automata representing its graph, then apply classical model-checking
for tree-automatic structures. 

Moreover, the result still holds if
regular reachability predicates are added to the graphs. 

\subsection{Main Result}

\begin{thm}\label{ThmCPGTreeAutomatic}
  Let $\mathcal{S}$ be  a collapsible pushdown system  of level $2$
  with configuration graph $\mathfrak{G}$. Let
  $\mathfrak{G}/\varepsilon$ be the $\varepsilon$-contraction of 
  $\mathfrak{G}$. 
  Any expansion of $\mathfrak{G}/\varepsilon$ by  regular reachability
  relations is tree-automatic.\footnote{
    Due to Broadbent et al.\ \cite{BroadbentCOS10}, the result still
    holds if we expand the graph by $L\mu$ definable predicates. 
  }
\end{thm}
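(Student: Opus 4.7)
The plan is to construct an explicit tree-automatic presentation of $\mathfrak{G}/\varepsilon$ together with each of the desired regular reachability relations. The main ingredients are (i) a tree encoding of level-$2$ collapsible pushdown stacks, (ii) tree automata for the one-step edge relations, and (iii) a combinatorial analysis of long runs that reduces reachability under regular constraints to a tree-recognisable condition on pairs of encodings.

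For the encoding, a level-$2$ stack is a sequence $[s_1, \dots, s_n]$ of level-$1$ stacks of linked symbols, each obtained from its predecessor by a mixture of $\Push{1}$, $\Pop{1}$, and $\Collapse$ followed by a $\Clone{2}$. I would represent such a stack as a single tree whose leftmost branch spells out the topmost $1$-stack $s_n$, with earlier $1$-stacks attached at the branching points where they first diverge from $s_n$. The crucial fact is that in level $2$, every collapse link points into a $1$-stack which appears as a prefix along some branch of this tree, so the link reduces to a local back-pointer recordable in a finite alphabet. Verifying that the set of valid encodings forms a regular tree language is a routine check, and each of the elementary operations $\Push{1}$, $\Pop{1}$, $\Clone{2}$, $\Pop{2}$, $\Collapse$ becomes a local rewrite on the encoding that a tree automaton reading a convolution can verify step by step. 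This already yields a tree-automatic presentation of the one-step graph $\mathfrak{G}$.

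The real obstacle is $\varepsilon$-contraction and regular reachability, since a single such edge can hide an unboundedly long run in $\mathfrak{G}$, and tree automata cannot directly simulate computation. I would handle this via a \emph{milestone} decomposition: every run from a configuration $c$ to a configuration $c'$ is split at the configurations whose $2$-stack is a prefix (in the branching sense) of both endpoints, so that each segment grows and then shrinks the stack back to the milestone level. For level $2$, a saturation argument generalising Bouajjani--Esparza--Maler (and its higher-order refinements for collapsible systems) shows that each milestone-to-milestone segment is summarised by finitely many data capturing the reachable control states, the states of the regular constraint automaton, and the collapse-link effects; moreover the set of admissible sequences of summaries along the encoding tree is itself a regular tree language. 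Expressing reachability as the existence of a compatible labelling of the convolution tree by such summaries then yields a tree-automatic relation, and closure of tree-automatic relations under intersection with regular constraints gives the full expansion. The genuinely hard step is calibrating the milestone notion so that it remains stable under $\Clone{2}$ and so that collapse links crossing clone boundaries are handled correctly; once this is in place, the remaining automata constructions are mechanical.
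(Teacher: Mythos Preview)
Your architecture---tree-encode stacks, then show $\Reach_L$ is tree-automatic via milestone summaries---matches the paper's. But two concrete claims are wrong, and the step you defer is where all the content lies. First, level-$2$ collapse links do \emph{not} reduce to ``local back-pointers recordable in a finite alphabet'': the link target is an unbounded integer (and it points to a level-$2$ substack, not to a $1$-stack). The paper's encoding stores only the link \emph{level} and recovers the numeric target positionally, as the number of right-branches lexicographically below the node; this is the non-obvious trick that makes the alphabet finite while keeping the map injective. Relatedly, the paper does not first prove single transitions tree-automatic and then bootstrap to reachability---it goes the other way, deriving the edges of $\mathfrak{G}/\varepsilon$ as instances of $\Reach_L$, in part because $\Pop{2}$ and level-$2$ $\Collapse$ delete unbounded subtrees rather than performing local rewrites.

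Second, you explicitly defer ``calibrating the milestone notion,'' which is the heart of the proof. The paper's route is: (i) decompose any run as $\relA\circ\relB\circ\relC\circ\relD$ through the \emph{minimal common substack} of the two endpoints; (ii) further decompose each piece into \emph{returns}, \emph{loops}, and \emph{$1$-loops}; (iii) prove that the returns and loops available from a stack depend only on its topmost word and are hence computable by a word automaton reading that word; (iv) establish an order isomorphism between the milestones of a single endpoint and the nodes of its encoding, so that a tree automaton can guess states at each node and verify local consistency using the summaries from (iii). Your proposed milestones (``prefix of both endpoints'') are not the right object---the paper's milestones are substacks of \emph{one} endpoint that any run creating it must traverse---and the Bouajjani--Esparza--Maler analogy is apt only in spirit: step (iii) for level-$2$ collapse is the technical core and is not a routine extension of the level-$1$ saturation.
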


A regular reachability relation is of the form $\Reach_L$ for some
regular language $L$. For nodes $a,b$ of some graph $\mathfrak{G}$
with labelled edges, $\mathfrak{G}\models\Reach_L(a,b)$ if there is a
path from $a$ to $b$ which is labelled by some word $w \in L$. 
The translation from collapsible pushdown systems to tree-automatic
presentations is uniform, i.e., there is a uniform way of computing, 
given a collapsible pushdown system
(and finite automata representing regular languages over the
edge-alphabet of the system), the tree-automata presentation of the
$\varepsilon$-contraction of the generated graph (expanded by the
regular reachability predicates). 
Once we have obtained tree-automata representing some graph, first-order
model-checking on this graph is decidable. Combining these results we
obtain the following corollary. 

\begin{cor}
  The following problem is decidable:\\
  \emph{Input:} a collapsible pushdown system $\mathcal{S}$ (of level $2$),
  finite automata $\mathcal{A}_1, \dots, \mathcal{A}_n$ representing
  regular languages $L_1, \dots, L_n$, and a formula $\varphi$ in
  first-order logic extended by the relations $\Reach_{L_1}, \dots,
  \Reach_{L_n}$\\ 
  \emph{Output:} $\mathfrak{G}/\varepsilon \models \varphi$? 
  ($\mathfrak{G}/\varepsilon$ denotes the $\varepsilon$-contraction of
  the graph generated by $\mathcal{S}$.)
\end{cor}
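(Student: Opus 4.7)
The plan is to reduce the decision problem directly to first-order model-checking on tree-automatic structures, using Theorem~\ref{ThmCPGTreeAutomatic} as the bridge between the collapsible pushdown world and the tree-automatic world.

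First, from the input I would read off the collapsible pushdown system $\mathcal{S}$ and the automata $\mathcal{A}_1, \dots, \mathcal{A}_n$ recognising the languages $L_1, \dots, L_n$ used in the reachability predicates of $\varphi$. Invoking Theorem~\ref{ThmCPGTreeAutomatic}, I obtain that the expansion $(\mathfrak{G}/\varepsilon, \Reach_{L_1}, \dots, \Reach_{L_n})$ is tree-automatic. The crucial point that I would verify is that the translation underlying the theorem is \emph{effective and uniform}: given $\mathcal{S}$ and the $\mathcal{A}_i$ as input, one can actually compute tree-automata for the domain, equality, edge relations and each $\Reach_{L_i}$ relation, together with the encoding function. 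This is what the paper claims by calling the translation ``uniform''; the proof of Theorem~\ref{ThmCPGTreeAutomatic} is constructive, and I would simply invoke it as an algorithmic procedure.

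Second, once a tree-automatic presentation of the expanded structure $\mathfrak{M} = (\mathfrak{G}/\varepsilon, \Reach_{L_1}, \dots, \Reach_{L_n})$ has been constructed, I would appeal to the classical model-checking algorithm for tree-automatic structures due to Blumensath~\cite{Blumensath1999}: starting from atomic tree-automata for the basic relations, any first-order formula $\varphi$ can be translated inductively into a tree-automaton $\mathcal{B}_\varphi$ recognising the (encoded) tuples satisfying $\varphi$, using the closure of regular tree languages under Boolean operations and projection. Applied to our closed formula $\varphi$, this yields a tree-automaton whose non-emptiness is decidable and equivalent to $\mathfrak{M} \models \varphi$.

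Chaining these two algorithms gives the decision procedure, so strictly speaking there is no serious obstacle left after Theorem~\ref{ThmCPGTreeAutomatic}; the corollary is a routine combination of a uniform presentation theorem with a generic meta-theorem on tree-automatic structures. The only point that deserves explicit mention in the proof is effectiveness: I would emphasise that every step in the proof of Theorem~\ref{ThmCPGTreeAutomatic} (the encoding of $2$-stacks as trees, the construction of the automata for the edge relations, and the saturation-style construction for regular reachability) takes the system $\mathcal{S}$ and the automata $\mathcal{A}_i$ as parameters and produces tree-automata computably, so the composed algorithm really terminates on every input.
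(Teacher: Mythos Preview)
Your proposal is correct and matches the paper's approach exactly: the corollary is not given a separate proof in the paper, but is derived precisely as you describe, by invoking the uniform (effective) construction underlying Theorem~\ref{ThmCPGTreeAutomatic} to obtain a tree-automatic presentation of $(\mathfrak{G}/\varepsilon,\Reach_{L_1},\dots,\Reach_{L_n})$ and then applying the standard decidability result for \FO{} on tree-automatic structures (Theorem~\ref{Thm:FOTreeAutomaticDecidable}). Your emphasis on effectiveness is exactly the point the paper highlights when calling the translation ``uniform''.
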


We also show that the decision procedure is necessarily nonelementary in the
size of the formula. 

\subsection{Outline of the Paper}
Sections \ref{Sec:Logics} and \ref{Sec:wordsAndTrees} introduce basic
notation. In Section  \ref{STACS:SecCPG} we introduce collapsible
pushdown graphs (of level $2$) and we show basic properties of these
graphs. We recall the notion of tree-automaticity in Section
\ref{sec:FiniteAutomata}. 
We present our encoding of configurations of collapsible pushdown graphs as
trees in Section \ref{sec:CPGTreeAut}. 
We also show
that, once we have proved that  regular reachability is tree-automatic
via this encoding,  collapsible pushdown graphs are tree-automatic. The final
part of that Section discusses the optimality of the first-order
model-checking  algorithm obtained from this tree-automata approach. 
Sections \ref{Sec:DecompositionOfRuns} and
\ref{sec:RegularReach} complete the proof by showing
that regular reachability is actually tree-automatic via our encoding.
In Section \ref{Sec:DecompositionOfRuns} we develop
the technical machinery for proving the regularity of the reachability
relation. We analyse how arbitrary runs of collapsible pushdown systems
decompose as sequences of simpler runs. 
Afterwards, in Section
\ref{sec:RegularReach}  we apply these results and
show that the  encoding  turns the regular reachability relations into
tree-automatic relations. 
Finally, Section \ref{sec:Conclusion} contains concluding remarks.

\section{Preliminaries and Basic Definitions}

For a function $f$, we denote by $\domain(f)$ its domain. 

\subsection{Logics}
\label{Sec:Logics}
We denote by $\FO{}$ first-order logic. 
Given some graph $\mathfrak{G}=(V, E_1, E_2, \dots, E_n)$ with
labelled edge relations $E_1, \dots, E_n\subseteq V\times V$ 
we denote by $\Reach$ the \emph{reachability predicate}, defined by
\begin{align*}
  \Reach:=\left\{(a,b):\text{ there is a }(\bigcup_{i=1}^n E_i)\text{-path
    from }a\text{ to }b\right\}.
\end{align*}
Given some regular language $L$, we denote by 
\begin{align*}
  \Reach_L:=\left\{(a,b):\text{ there is a }(\bigcup_{i=1}^n
  E_i)\text{-path from }a\text{ to }b\text{ labelled 
  by some word }w\in L\right\}
\end{align*}
the \emph{reachability predicate with respect to $L$}.

\subsection{Words and Trees}
\label{Sec:wordsAndTrees}

For words $w_1,w_2\in\Sigma^*$ we write $w_1 \leq w_2$ if $w_1$ is a
prefix of $w_2$. $w_1\sqcap w_2$ denotes the greatest common prefix of
$w_1$ and $w_2$. The concatenation of $w_1$ and $w_2$ is denoted by
$w_1w_2$.  

We call a finite set $D\subseteq \{0,1\}^*$ a \emph{tree domain}, if
$D$ is prefix closed. 
A \emph{$\Sigma$-labelled tree} is a mapping $T:D\rightarrow \Sigma$
for $D$ some tree domain. 
For $d\in D$ we denote the \emph{subtree rooted at $d$} by
$\inducedTreeof{d}{T}$. This is the tree defined by
$\inducedTreeof{d}{T}(e):= T(de)$. 
We will usually write $d\in T$ instead of $d\in\domain(T)$. 
We denote the \emph{depth} of the tree $T$ by 
\mbox{$\depth{T}\coloneqq\max\left\{\lvert t\rvert: t\in\domain(T)\right\}$}.


\label{HochPlusEinfuehrung}
For $T$ some tree with domain $D$,  let $D_+$ denote the set of minimal
elements of the complement of $D$, i.e., 
\begin{align*}
D_+=\{e\in\{0,1\}^*\setminus D:\text{ all proper ancestors of }e\text{
  are contained in }D\}.   
\end{align*}
We write $D^\oplus$ for $D\cup D_+$. 
Note that $D^\oplus$ is the extension of the tree domain $D$ by one
layer.  

Sometimes it is useful to define trees inductively by describing the
subtrees rooted at $0$ and $1$. For this purpose we fix the following
notation. 
Let $\hat T_0$ and $\hat T_1$ be $\Sigma$-labelled trees and
$\sigma\in\Sigma$. Then we write 
$T\coloneqq \treeLR{\sigma}{\hat T_0}{ \hat T_1}$
for the $\Sigma$-labelled tree $T$ with the following three
properties
\begin{align*}
  &1.\ T(\varepsilon) = \sigma,& &2.\ \inducedTreeof{0}{T} = \hat
  T_0 \text{, and }& &3.\ \inducedTreeof{1}{T} = \hat T_1.&
\end{align*}
We denote by $\Trees{\Sigma}$ the set of all
$\Sigma$-labelled trees. 

\subsection{Collapsible Pushdown Graphs}
\label{STACS:SecCPG}
Before we introduce collapsible pushdown graphs (CPG)
in
detail, we fix some notation. Then we informally explain collapsible
pushdown systems. Afterwards we formally introduce these
systems and the graphs generated by them. We conclude this section
with some basic results on runs of collapsible pushdown systems. 

We set $\Sigma^{+2}:={(\Sigma^+)}^+$ and $\Sigma^{*2}:={(\Sigma^*)}^*$. 
Each element of
$\Sigma^{*2}$ is called a $2$-word. Stacks of a collapsible
pushdown system are certain $2$-words from $\Sigma^{+2}$ over a special
alphabet. In analogy, we will call words also $1$-words. 

Let us fix a  $2$-word $s\in \Sigma^{*2}$.
$s$  consists of an ordered list 
\mbox{$w_1, w_2, \dots, w_m$} of words. 
If we want to state this list of words explicitly, we
write $s=w_1: w_2 : \dots : w_m$. 
Let $\lvert  s \rvert:=m $ denote the \emph{width} of $s$. 
The \emph{height} of $s$ is $\height(s):=\max\{\lvert
w_i \rvert: 1\leq i \leq m\}$ which is the length of the longest 
word occurring in $s$. 

Let $s'$ be another $2$-word with
$s'=w_1':w_2':\dots :w_l'$. We write 
$s:s'$ for the concatenation  
$w_1: w_2 : \dots :w_m:w_1':w_2':\dots:w_l'$. 

For some word $w$, let $[w]$ be the $2$-word that only consists of
$w$. We regularly omit the brackets if no confusion arises.  

A level $2$ stack $s$ over some alphabet $\Sigma$ is a $2$-word over
$\Sigma$ 
where each letter additionally carries a link to some $i$-word for
$1\leq i \leq 2$.
We call $i$ the level of the link. The idea is that the linked
$i$-word contains some information about what the stack looked like when the
letter was created.  

We first define the initial level $2$ stack; afterwards we
describe the stack operations that are used to generate all level $2$
stacks from the initial one. 

\begin{defi}
  Let $\Sigma$ be some finite alphabet with a distinguished
  bottom-of-stack symbol $\bot\in\Sigma$. 
  The \emph{initial stack} of level $1$ is the word
  $\bot_1\coloneqq \bot$.  The initial stack of level $2$ is  the 
  $2$-word
  $\bot_2:= [\bot_1]$. 
\end{defi}

We informally describe the stack operations that 
can be applied to a level $2$ stack. 
\begin{iteMize}{$\bullet$}
\item The  push operation of level $1$, denoted by $\Push{\sigma,k}$ for
  $\sigma\in\Sigma$ and $1\leq k \leq 2$, writes the symbol $\sigma$
  onto the topmost level $1$ stack and attaches a link of level
  $k$. This link points to a copy of the topmost 
  $k$-word of the resulting stack without the topmost $k-1$
  stack. For $k=2$ this means that the link points to the current
  stack where the  topmost word is removed. For $k=1$ the link points
  to the topmost word of the stack before the push operation was
  performed.  
\item The push operation of level $2$ is denoted
  by $\Clone{2}$. It duplicates the topmost word. 
  Since this also copies the values of the links stored in the topmost
  word, 
  the copy
  of each symbol in the newly created word still contains
  information what did the stack look like when the corresponding
  original symbol was pushed onto the stack. 
\item The level $i$ pop operation $\Pop{i}$ for $1\leq i\leq 2$ removes
  the topmost entry of the topmost $i$-word. 
\item The last operation is $\Collapse$. 
  The result of $\Collapse$ is determined by the link attached to the
  topmost letter of the stack. If we apply collapse to a stack $s$ where
  the link level of the topmost letter is $i$, then $\Collapse$
  replaces the topmost level $i$ stack of $s$ by the level $i$ stack
  to which the link points. 
  Due to how push operations create these links, 
  the application of a collapse is equivalent to the
  application of a sequence of  
  $\Pop{i}$ operations where the link of the topmost letter controls
  how long this sequence is.
\end{iteMize}
In the following, we formally introduce collapsible pushdown stacks
and the stack operations. 
We represent such a stack of letters with links as $2$-words over the
alphabet  
\mbox{$(\Sigma\cup (\Sigma\times\{ 2\}\times\N))^{+2}$}. We consider
elements from $\Sigma$ as elements 
with a link of level $1$ and elements $(\sigma,2,k)$ as letters with a
link of level $2$.
In the latter case, the third component specifies the width of the
substack to which the link  points. 
For letters with link of level $1$, the position of this letter within
the stack already determines the stack to which the link points. Thus,
we need not explicitly specify the link in this case.  
\begin{rem}
  Other equivalent definitions, for instance in \cite{Hague2008}, use
  a different 
  way of storing the links: they also store symbols $(\sigma,i,n)$ on
  the stack, but here $n$ denotes the number of $\Pop{i}$ transitions
  that are equivalent to performing the collapse operation at a stack
  with topmost element $(\sigma,i,n)$. The disadvantage of that
  approach is that the $\Clone{i}$ operation  cannot copy
  stacks. Instead, it can only copy the symbols stored
  in the topmost 
  stack and has to alter the links in the new copy. A clone of level
  $i$ must replace all links $(\sigma,i,n)$ by $(\sigma,i,n+1)$ in
  order to preserve the links stored in the stack. 
\end{rem}

We introduce some auxiliary functions which are useful for defining
the stack operations. 
\begin{defi}
  For $s=w_1:w_2:\dots: w_n\in(\Sigma\cup(\Sigma\times\{
  2\}\times\N))^{+2}$, and $w_n=a_1 a_2 \dots a_m$ we define the  
  following  auxiliary functions:
  \begin{iteMize}{$\bullet$}
  \item  $\TOP{2}(s):=w_n$ and
    $\TOP{1}(s):=a_m$.
  \item The \emph{topmost symbol} is
    \mbox{$\Sym(s)\coloneqq\sigma$} for $\TOP{1}(s) = \sigma\in\Sigma$
    or $\TOP{1}(s) = (\sigma,2,k)\in\Sigma\times\{2\}\times \N$.
  \item  The  \emph{collapse level of the
      topmost element} is
    $\Lvl(s)\coloneqq
    \begin{cases}
      1 &\text{if }\TOP{1}(s)\in\Sigma,\\
      2 & \text{otherwise.}      
    \end{cases}
    $
  \item Set
    $\Lnk(s)\coloneqq
    \begin{cases}
      j &\text{if }\TOP{1}(s)\in\Sigma\times\{2\}\times\{j\},\\
      m -1 &\text{if } \TOP{1}(s)\in\Sigma.
    \end{cases}$\\
    $\Lnk(s)$ is called the \emph{collapse link of the topmost element}.
  \item Set
    $\mathrm{p}_{\sigma,2,k}(w_m)\coloneqq w_m(\sigma,2,k)$ and
    $\mathrm{p}_{\sigma,2,k}(s)\coloneqq 
    w_1: w_2 : \dots : w_{n-1}: \mathrm{p}_{\sigma,2,k}(w_n)$ for all
    $k\in\N$.
  \end{iteMize}\vspace{3 pt}
\end{defi}

\noindent These auxiliary function are useful for the formalisation of the stack
operations. 
\begin{defi}
  For $s=w_1:w_2:\dots: w_n\in(\Sigma\cup(\Sigma\times\{
  2\}\times\N))^{+2}$ and $w_n=a_1 a_2\dots a_m$,
  for $\sigma\in\Sigma\setminus\{\bot\}$ and
  for $1\leq k \leq 2$, we define
  the stack operations 
  \begin{align*}
    \Clone{2}(s)\coloneqq
    &w_1: w_2 : \dots : w_{n-1}: w_n : w_n,\\
    \Push{\sigma,k}(s)\coloneqq
    &\begin{cases}
      w_1: w_2 : \dots : w_{n-1}: w_n\sigma  &\text{if } k=1,\\
      \mathrm{p}_{\sigma,k,n-1}(s) &\text{if } k = 2,
    \end{cases} \\
    \Pop{k}(s)\coloneqq
    &\begin{cases}
      w_1: w_2 : \dots : w_{n-1} & \text{if } k=2, n>1,\\
      w_1: w_2 : \dots : w_{n-1}: [a_1 a_2 \dots a_{m-1}] & \text{if }
      k=1, m >1,\\ 
      \text{undefined} & \text{otherwise},
    \end{cases} \\
    \Collapse{}(s)\coloneqq 
    &\begin{cases}
      w_1: w_2 : \dots : w_k & \text{if } \Lvl(s)=2, \Lnk(s)=k>0,\\
      \Pop{1}(s) & \text{if }
      \Lvl(s)=1,  m >1,\\
      \text{undefined}&\text{otherwise.}
    \end{cases}    
  \end{align*}
  The \emph{set of level $2$ operations} is 
  \begin{align*}
    \Op\coloneqq\{(\Push{\sigma,k})_{\sigma\in\Sigma,1\leq k\leq 2},
    \Clone{2}, (\Pop{k})_{1\leq k\leq 2}, \Collapse{}\}.
  \end{align*}
  The \emph{set of (level 2) stacks}  $\Stacks(\Sigma)$ is the
  smallest set 
  that contains $\bot_2$ and is closed under application of operations
  from $\Op$. 
\end{defi}

Note that we defined $\Collapse$ and $\Pop{k}$ in such a way that the 
the resulting stack is always nonempty and does not contain empty
words. This avoids the special treatment of 
empty stacks. Note that there is no $\Clone{1}$ operation. Thus, 
any $\Collapse$ that works on level $1$ is equivalent to one $\Pop{1}$
operation because level $1$ links always point to the preceding letter.
Every
$\Collapse$ that works on level $2$ is equivalent to a
sequence of $\Pop{2}$ operations. Next, we introduce the substack
relation.
\begin{defi}
  Let $s,s'\in\Stacks(\Sigma)$. We say that $s'$ is a \emph{substack} of
  $s$ (denoted as $s'\leq s$) if
  there are  $n,m\in\N$  such that
  $s' = \Pop{1}^{n}(\Pop{2}^{m}(s))$.
\end{defi}

Having concluded the definitions concerning stacks and stack
operations, it is time to introduce collapsible pushdown systems.

\begin{defi}
  A \emph{collapsible pushdown system} of level $2$ (\CPS) is
  a tuple 
  \begin{align*}
    \mathcal{S} = (Q,\Sigma, \Gamma, \Delta, q_0)    
  \end{align*}
  where $Q$ is a
  finite set of states, $\Sigma$  a
  finite stack  alphabet with a distinguished bottom-of-stack symbol
  $\bot\in\Sigma$, $\Gamma$  a finite 
  input alphabet,  $q_0\in Q$ the initial state, and 
  \begin{align*}
    \Delta\subseteq
    Q\times \Sigma \times\Gamma \times Q \times \Op
  \end{align*}
  the transition relation.

  Every  $(q,s)\in\Conf(Q, \Sigma):=Q\times\Stacks(\Sigma)$ is called 
  a \emph{configuration} and   $\Conf(Q,\Sigma)$ is called 
  the \emph{set of configurations}.\footnote{We
    write $\Conf$ instead of $\Conf(Q, \Sigma)$ if $Q$ and $\Sigma$
    are clear from the context.}
  We define  $\gamma$-labelled
  \emph{transitions} $\trans{\gamma}\subseteq \Conf\times\Conf$ as
  follows: 
  $ (q_1,s) \trans{\gamma} (q_2, t)$ 
  if there is
  a $(q_1, \sigma, \gamma, q_2, op)\in\Delta$ such that $\op(s)=t$ and
  \mbox{$\Sym(s) = \sigma$}. 

  We call $\trans{}:=\bigcup_{\gamma\in\Gamma} \trans{\gamma}$ the
  \emph{transition relation} of $\mathcal{S}$. 
  We set $C(\mathcal{S})$ to be the set of all
  configurations that are reachable from $(q_0,\bot_2)$ via
  $\trans{}$. These configurations are called \emph{reachable}.
  The 
  \emph{collapsible pushdown graph (\CPG) generated by $\mathcal{S}$} is 
  \begin{align*}
    \CPG(\mathcal{S})\coloneqq\big(C(\mathcal{S}), (C(\mathcal{S})^2\cap
    \trans{\gamma})_{\gamma\in\Gamma}\big).
  \end{align*}
\end{defi}

\begin{exa} \label{exa:CPG}
  The following example (Figure \ref{fig:CPGExample}) of a collapsible
  pushdown graph $\mathfrak{G}$ 
  of level $2$  is taken from 
  \cite{Hague2008}. Let $Q\coloneqq\{0,1,2\}, \Sigma\coloneqq
  \{\bot,a\}$, 
  $\Gamma:=\{\mathrm{Cl}, A, A', P, \mathrm{Co}\}$.
  $\Delta$ is given by $(0,-,\mathrm{Cl},1,\Clone{2})$,
  $(1,-,A,0,\Push{a,2})$, $(1,-,A',2,\Push{a,2})$,
  $(2,a,P,2,\Pop{1})$, and $(2,a,\mathrm{Co},0,\Collapse)$, where $-$
  denotes any 
  letter from $\Sigma$. \\
  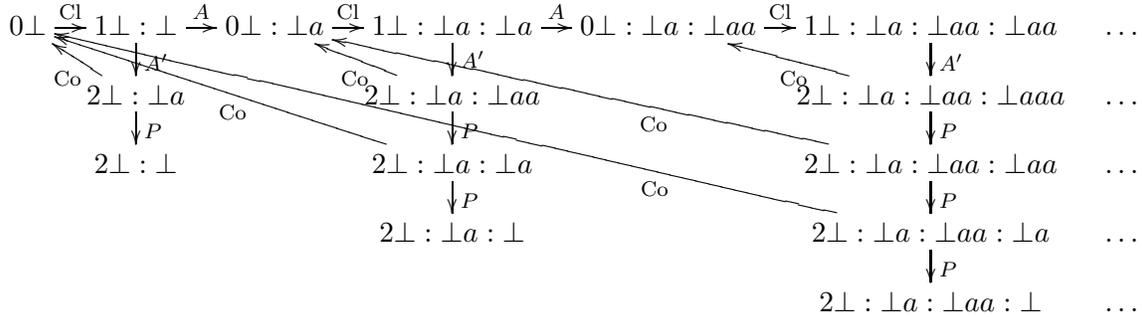
\begin{figure}[ht]
    \centering 
    $
    \begin{xy}
      \xymatrix@R=12pt@C=8.6pt{ 
        0 \bot \ar[r]^-{\mathrm{Cl}}& 
          1 \bot:\bot \ar[r]^-{A} \ar[d]^{ A'}& 
          0 \bot:\bot a \ar[r]^-{\mathrm{Cl}} &
          1 \bot:\bot a:\bot a  \ar[r]^-{A} \ar[d]^-{A'}& 
          0\bot:\bot a: \bot aa \ar[r]^-{\mathrm{Cl}}  
          & 1 \bot:\bot a : \bot aa:\bot aa \ar[d]^{A'}& \dots \\
          & 
          2 \bot:\bot a  \ar[d]^P \ar[ul]^{\mathrm{Co}}&  
          & 
          2 \bot:\bot a:\bot aa \ar[d]^P \ar[ul]^{\mathrm{Co}}&  
          & 
          2 \bot:\bot a : \bot aa:\bot aaa\ar[d]^P
            \ar[ul]^{\mathrm{Co}}
          & \dots \\
        & 
          2 \bot:\bot  &  
          &
          2 \bot:\bot a:\bot a  \ar[d]^P \ar[uulll]^{\mathrm{Co}}&  
          & 
          2 \bot:\bot a : \bot aa:\bot aa
          \ar[d]^P\ar[uulll]^(.415){\mathrm{Co}}
          & \dots \\
        &                &  
          & 
          2 \bot:\bot a:\bot  &  
          & 
          2 \bot:\bot a : \bot aa:\bot a \ar[d]^P\ar[uuulllll]^(.3){\mathrm{Co}}
          & \dots \\
        &              
          &  
          &
          &
          &
          2 \bot:\bot a : \bot aa:\bot  & \dots 
      }
    \end{xy}
    $
    \caption{Example of the  $2$-\CPG $\mathfrak{G}$ (the level
      $2$ links of the letters $a$  are omitted in the representation
      of the stacks).} 
    \label{fig:CPGExample}
  \end{figure}
\end{exa}

Hague et al.\ \cite{Hague2008} already noted that the previous example
has undecidable \MSO theory because the half grid $\{(n,m)\in\N^2:
n>m\}$ is \MSO interpretable in this graph (note that the collapse
edges of two vertices point to the same target if and only if the
vertices are on 
the same diagonal of the grid).

Next we define the $\varepsilon$-contraction of a given collapsible
pushdown graph. From now on,  we always assume that the input alphabet
$\Gamma$ contains the symbol $\varepsilon$. 

\begin{defi}
  Let $\Gamma$ be some alphabet. Let
  $L$ and $L_\gamma$  be the regular languages defined by the expressions
  $(\{\varepsilon\}^*(\Gamma\setminus\{\varepsilon\}))^*$ and
  ${\{\varepsilon\}^*\gamma}$, respectively.
  Given a collapsible pushdown graph $\mathfrak{G}$, the
  $\varepsilon$-contraction $\mathfrak{G}/\varepsilon$ of
  $\mathfrak{G}$ is the graph
  $(M,
  (\Reach_{L_\gamma})_{\gamma\in\Gamma\setminus\varepsilon})$ 
  where \mbox{$M:=\{g\in\mathfrak{G}: \mathfrak{G}\models
  \Reach_{L} ((q_0, \bot_2), g)\}$}.  
\end{defi}
\begin{rem}
  This is the usual definition of $\varepsilon$-contraction. An edge
  in the new graph consists of a sequence of $\varepsilon$-edges
  followed by one non-$\varepsilon$-edge. The set of configurations is
  then restricted to those configurations that are reachable via the new
  edges from the initial configuration. 
\end{rem}

Now we come to the notion of a run of a collapsible pushdown
system. 
\begin{defi}
  Let $\mathcal{S}$ be a collapsible pushdown system. 
  A \emph{run} $\rho$ of $\mathcal{S}$ is a sequence of configurations
  that are connected by transitions, i.e., a sequence
  \begin{align*}
    c_0 \trans{\gamma_1} c_1\trans{\gamma_2} c_2 \trans{\gamma_3} \cdots
    \trans{\gamma_n}c_n.    
  \end{align*}
  We denote by $\rho(i):=c_i$ the $i$-th configuration of $\rho$. 
  Moreover, we denote by $\length(\rho):=n$ the \emph{length} of
  $\rho$. 
  For $0\leq i \leq j \leq \length(\rho)$, 
  we write $\rho{\restriction}_{[i,j]}$ for the subrun 
  \begin{align*}
    c_i \trans{\gamma_{i+1}} c_{i+1} \trans{\gamma_{i+2}} \dots
    \trans{\gamma_j} c_j.    
  \end{align*}
  We write $\Runs(c,c')$ for the set of runs starting at $c$ and
  ending in $c'$. For $s,s'$ stacks, we also write
  $\Runs(s,s'):=\bigcup_{q, q'\in Q} \Runs( (q,s),(q',s'))$. 
\end{defi}

Consider some configuration $(q,s)$ of a \CPS. If $\lvert s \rvert=n$
then  
a $\Push{\sigma,2}$ transition applied to $(q,s)$ 
creates a letter with a link to the substack
of width $n-1$. 
Thus, links to the substack of width $n-1$ in some word above the
$n$-th one are always created by a $\Clone{2}$ operation. A
direct consequence of this fact 
is the following lemma.

\begin{lem} \label{Lem:Howtwolinksevolve}
 Let $s$ be some level $2$ stack with $\TOP{1}(s)=(\sigma,2,k)$. 
 Let $\rho \in \Runs(s,s)$ be a run that passes 
 $\Pop{1}(s)$. If $k<\lvert s \rvert -1$ then
 $\rho$  passes $\Pop{2}(s)$.  
\end{lem}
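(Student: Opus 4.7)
I define $i_0$ to be the latest index at which $\rho(i_0)$ has stack $\Pop{1}(s)$; such an $i_0$ exists by assumption. My strategy is to analyse the suffix of $\rho$ starting at index $i_0$, call it $\rho'$, which begins at a configuration with stack $\Pop{1}(s)$ and ends at a configuration with stack $s$, both of width $\lvert s\rvert$. I plan to distinguish two cases according to whether the width of some intermediate configuration of $\rho'$ drops strictly below $\lvert s\rvert$.

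In the first case (the width does drop below $\lvert s\rvert$ somewhere), I note that $\Clone{2}$ is the only width-increasing operation and that it increases the width by exactly one, so on its way back up to $\lvert s\rvert$ the run must pass through each intermediate width; in particular, it visits width $\lvert s\rvert-1$ somewhere. I would let $i^*$ be the last index of $\rho'$ with width $\lvert s\rvert-1$, and argue via a short case analysis on the operation at step $i^*$ that this operation must be $\Clone{2}$: the width-preserving operations $\Push{\sigma',k'}$, $\Pop{1}$, and level-$1$ $\Collapse$ would force $\rho'(i^*+1)$ to still have width $\lvert s\rvert-1$, contradicting maximality of $i^*$; a $\Pop{2}$, or any level-$2$ $\Collapse$ at width $\lvert s\rvert-1$ (whose link is necessarily at most $\lvert s\rvert-2$), would strictly decrease the width further, forcing a later revisit of width $\lvert s\rvert-1$ on the way back up. Hence $\rho'(i^*+1)$ has width $\lvert s\rvert$ and its first $\lvert s\rvert-1$ words agree with the complete word list of $\rho'(i^*)$. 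Since from step $i^*+1$ onwards the width stays $\geq\lvert s\rvert$, the words at positions $1,\dots,\lvert s\rvert-1$ are never topmost and hence never altered; matching against the final stack $s$, they must equal $w_1,\dots,w_{\lvert s\rvert-1}$, which yields $\rho'(i^*)=w_1:\dots:w_{\lvert s\rvert-1}=\Pop{2}(s)$.

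The remaining task is to rule out the alternative in which the width of $\rho'$ stays $\geq\lvert s\rvert$ throughout, and here the hypothesis $k<\lvert s\rvert-1$ enters. The word at position $\lvert s\rvert$ is modified only when the width equals $\lvert s\rvert$: at larger widths it is non-topmost and so preserved by $\Clone{2}$, $\Pop{2}$, and every admissible level-$2$ $\Collapse$ (which, by the case assumption, must jump to a width at least $\lvert s\rvert$); at width $\lvert s\rvert$ it is modified by $\Pop{1}$ or by $\Push{\sigma',1}$ or $\Push{\sigma',2}$, but the latter always produces a link value equal to $\lvert s\rvert-1$, which differs from $k$. Hence no letter of link $k$ is ever pushed during $\rho'$. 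From the initial value $u$ of this word (namely $w_{\lvert s\rvert}$ with its last letter $(\sigma,2,k)$ removed), an easy invariant shows that the word at position $\lvert s\rvert$ is, at all times, a prefix of $u$ followed by a sequence of pushed letters, none of which has link $k$. But the final value is $w_{\lvert s\rvert}=u\cdot(\sigma,2,k)$, whose last letter has link $k$; no such decomposition exists, a contradiction. I then conclude that this case does not occur and $\rho$ must pass $\Pop{2}(s)$. The main obstacle is precisely this second case: setting up the correct invariant on the topmost word at width $\lvert s\rvert$, and verifying that excursions to wider stacks cannot indirectly deposit a link-$k$ letter at the correct position of that word, is the conceptual heart of the argument, and the strict inequality $k<\lvert s\rvert-1$ enters exactly there.
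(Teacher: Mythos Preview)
Your proof is correct and follows exactly the line of reasoning the paper sketches in the paragraph preceding the lemma (the paper itself leaves the proof to the reader): a $\Push{\sigma,2}$ at width $\lvert s\rvert$ can only create a link to $\lvert s\rvert-1$, so the letter $(\sigma,2,k)$ with $k<\lvert s\rvert-1$ on the $\lvert s\rvert$-th word must come from a $\Clone{2}$ at width $\lvert s\rvert-1$, and your invariant in Case~2 makes this precise. Your Case~1 argument (last visit to width $\lvert s\rvert-1$ is necessarily $\Pop{2}(s)$) is also the standard one and is reused implicitly elsewhere in the paper.
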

The proof is left to the reader. 
Later we often use the contraposition: 
if $\Lvl(s)=2$ and $\Lnk(s)<\lvert s \rvert-1$, then any run not
passing $\Pop{2}(s)$ does not pass $\Pop{1}(s)$.

Following the ideas of Blumensath \cite{Blumensath2008} for
higher-order pushdown systems, we introduce a prefix replacement for
collapsible pushdown systems. This replacement allows to copy runs
starting in one configuration into a run starting at another
configuration. 

\begin{defi}
  For $t\in\Stacks(\Sigma)$ and some substack $s\leq t$ we say that
  $s$
  is a \emph{prefix} 
  of $t$ and write 
  $s\prefixeq t$, if there are $n\leq m
  \in\N$ such that $s=w_1:w_2:\dots : w_{n-1}: w_n$ and 
  $t=w_1:w_2 \dots :w_{n-1} : v_n : v_{n+1} : \dots : v_m$ such that $w_n\leq
  v_j$ for all $n\leq j \leq m$. 
  
  For a configuration $c=(q,t)$, we write $s\prefixeq c$ as an
  abbreviation for $s\prefixeq t$.  
  For  some run $\rho$, we write $s\prefixeq \rho$ if
  $s\prefixeq\rho(i)$ for all $i\in\domain(\rho)$. 
\end{defi}

\begin{defi}
  Let $s,t,u$ be level $2$ stacks such that $s\prefixeq t$. Assume that 
  \begin{align*}
   &s=w_1:w_2:\dots : w_{n-1}: w_n,\\
   &t=w_1:w_2 \dots :w_{n-1} : v_n : v_{n+1} : \dots : v_m, \text{
     and}\\
   &u=x_1: x_2: \dots : x_{n-1}: x_n
  \end{align*}
  for numbers $n,m\in\N$ such that $n\leq m$. 
  For each $n\leq i \leq m$, let $\hat v_i$ be the unique word such that
  $v_i=w_n \hat v_i$. We define
  \begin{align*}
    t[s/u]:= x_1: x_2: \dots: x_{n-1} : (x_n \hat v_n) :
    (x_n \hat v_{n+1}): \dots : (x_n \hat v_m)
  \end{align*}
  and call $t[s/u]$ 
\emph{the stack obtained from $t$ by replacing the prefix $s$ by $u$}. 
\end{defi}
\begin{rem}
  Note that for $t$ some stack with level $2$ links, the resulting
  object $t[s/u]$ may be no stack. 
  Take for example the stacks 
  \begin{align*}
    &s=\bot (a,2,0) : \bot,\\ 
    &t=\bot (a,2,0)
    : \bot (a,2,0)\text{ and}\\ 
    &u=\bot:\bot.     
  \end{align*}
  Then $t[s/u]=\bot: \bot (a,2,0)$.
  This list of words
  cannot be created from the initial stack using the stack operation
  because an element $(a,2,0)$ in the second word has to be a clone of
  some element in the first one. But $(a,2,0)$ does not occur in the
  first word. Note that $t[s/u]$ is always a stack if $s=\Pop{2}^k(t)$
  for some $k\in\N$. 
\end{rem}

\begin{lem} \label{Lem:BlumensathLevel2}
  Let $\rho$ be a run of some collapsible pushdown system
  $\mathcal{S}$ of level $2$ and let $s$ and $u$ be stacks such that
  the following conditions are satisfied:
  \begin{enumerate}[\em(1)]
  \item $s\prefixeq \rho$,
  \item $\TOP{1}(u)=\TOP{1}(s)$, 
  \item $\lvert s \rvert = \vert u \rvert$, and
  \item for $\rho(0)=(q,t)$, $t[s/u]$ is a stack. 
  \end{enumerate}
  Under these conditions  the function 
  $\rho[s/u]$ defined by $\rho[s/u](i):=\rho(i)[s/u]$ is a run of
  $\mathcal{S}$.  
\end{lem}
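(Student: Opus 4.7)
The plan is to proceed by induction on $\length(\rho)$. The base case $\length(\rho)=0$ is immediate: $\rho[s/u]$ consists of the single configuration $\rho(0)[s/u]$, which is a valid configuration since $t[s/u]$ is a stack by hypothesis~(4). For the induction step, consider the final transition $\rho(n-1) \trans{\gamma} \rho(n)$ coming from some tuple $(q,\sigma,\gamma,q',\op)\in\Delta$ with $\op(\rho(n-1))=\rho(n)$ and $\Sym(\rho(n-1))=\sigma$. The induction hypothesis gives that $\rho[s/u]{\restriction}_{[0,n-1]}$ is a run, so it suffices to establish the single transition
$\rho(n-1)[s/u] \trans{\gamma} \rho(n)[s/u]$.

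The two facts that drive every case are the following. First, $\TOP{1}(\rho(i)) = \TOP{1}(\rho(i)[s/u])$ for every $i$: writing $\rho(i)$'s stack as $w_1:\dots:w_{n-1}:v_n:\dots:v_m$ with $v_j=w_n\hat v_j$, the topmost letter of $\rho(i)[s/u]$ is the last symbol of $x_n\hat v_m$, and this agrees with the last symbol of $v_m=w_n\hat v_m$ either because $\hat v_m$ is nonempty (the symbol lies in the preserved suffix) or because $\TOP{1}(u)=\TOP{1}(s)$ by hypothesis~(2). Hence the applicability condition $\Sym=\sigma$ carries over. Second, $|\rho(i)| = |\rho(i)[s/u]|$ because $|u|=|s|$ by hypothesis~(3), so any link value of level $2$ has the same numerical meaning in both stacks.

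It remains to verify for each operation type that $\op(\rho(n-1)[s/u])=\rho(n)[s/u]$ and that the result is a legitimate stack. For $\Clone{2}$, $\Push{\sigma,1}$, and $\Push{\sigma,2}$ this is a direct calculation: cloning duplicates the top word, the level-$1$ push just appends to it, and the level-$2$ push appends $(\sigma,2,|\rho(n-1)|-1)$, with widths (hence link values) preserved thanks to $|u|=|s|$. For $\Pop{2}$, applicability requires that $\rho(n-1)$ have width $\ge 2$; since $s\prefixeq\rho(n)$ forces $|\rho(n)|\ge|s|$, we actually have $|\rho(n-1)|>|s|=|u|$, so the word being removed lies strictly above the prefix region and the replacement commutes with the pop. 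For $\Pop{1}$ we must know that the letter removed sits in $\hat v_m$; if instead $v_m=w_n$, then after the pop the top word would be strictly shorter than $w_n$, contradicting $s\prefixeq\rho(n)$. Hence again the symbol removed lies in the preserved suffix.

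The main obstacle is $\Collapse$, and the argument there pinpoints why hypothesis~(1) is so restrictive. If $\Lvl(\rho(n-1))=1$, the operation coincides with $\Pop{1}$ and the previous case applies. Otherwise $\Lvl(\rho(n-1))=2$ and the result is $w_1:\dots:w_k$ where $k=\Lnk(\rho(n-1))$. The condition $s\prefixeq\rho(n)$ yields $k\ge|s|=n_0$ (writing $n_0$ for $|s|$ to avoid clashing with the induction variable), for otherwise $|\rho(n)|<n_0$ and $s$ could not be a prefix. Consequently the truncation occurs entirely within the part above $u$, so
\[
\Collapse\bigl(\rho(n-1)[s/u]\bigr)=x_1:\dots:x_{n_0-1}:(x_{n_0}\hat v_{n_0}):\dots:(x_{n_0}\hat v_k)=\rho(n)[s/u],
\]
as required. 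Finally, $\rho(n)[s/u]$ is a stack because it is obtained from the stack $\rho(n-1)[s/u]$ by a valid stack operation; this inductively discharges the well-formedness side condition, analogous to~(4), at every step.
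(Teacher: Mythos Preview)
Your proof is correct and follows essentially the same approach as the paper's: induction on $\length(\rho)$, show that $\TOP{1}$ is preserved under the replacement so that the same transition is applicable, then a case distinction on the stack operation to verify that $\op(\rho(n-1)[s/u])=\rho(n)[s/u]$. The paper merely sketches this argument in three sentences; you have carried out the case analysis that the paper declares ``tedious but straightforward,'' including the key uses of hypotheses~(1)--(3) in the $\Pop{1}$, $\Pop{2}$, and level-$2$ $\Collapse$ cases.
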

\begin{proof}[Proof (sketch).]
  The proof is by induction on the length of $\rho$. 
  It is tedious but straightforward to prove that $\rho(i)[s/u]$ and
  $\rho(i)$ share the same topmost element. Thus, the transition
  $\delta$ 
  connecting $\rho(i)$ with $\rho(i+1)$ is also applicable to
  $\rho(i)[s/u]$. By case distinction on the stack operation one
  concludes that $\delta$ connects $\rho(i)[s/u]$  with
  $\rho(i+1)[s/u]$. 
\end{proof}

\subsection{Finite Automata and Automatic Structures}
\label{sec:FiniteAutomata}
In this section, we present
the basic theory of finite bottom-up tree-automata and tree-automatic
structures. For a more detailed introduction, we refer the reader to 
\cite{tata2007}. 

\begin{defi}
  A \emph{(finite tree-)automaton} is a tuple
  $\mathcal{A}=(Q,\Sigma,q_I,F,\Delta)$ 
  where $Q$ is a finite nonempty set of states, $\Sigma$ is a finite
  alphabet, $q_I\in Q$ is the initial  state, $F\subseteq Q$ is the
  set of final states,
  and $\Delta\subseteq Q\times  \Sigma \times Q\times Q$ is the
  transition relation.
\end{defi}

We next define the concept of a run of an  automaton on a tree.
\begin{defi}
  A \emph{run} of $\mathcal{A}$ on a binary $\Sigma$-labelled tree $t$
  is a map  
  $\rho:\domain(t)^\oplus \rightarrow Q$ such that  
  \begin{iteMize}{$\bullet$}
  \item 
    $\rho(d)=q_I $ for all $d\in\domain(t)_+$, and
  \item $\big(\rho(d), t(d), \rho(d0), \rho(d1)\big)\in \Delta$ for
    all $d\in\domain(t)$. 
  \end{iteMize}
  $\rho$ is \emph{accepting} if
  $\rho(\varepsilon)\in F$. 
  We say $t$ is accepted by $\mathcal{A}$ if there is an accepting run
  of $\mathcal{A}$ on $t$. 
  With each automaton $\mathcal{A}$, we associate the \emph{language} 
  \begin{align*}
    L(\mathcal{A}):=\{t: t\text{ is accepted by }\mathcal{A}\}    
  \end{align*}
  \emph{recognised} (or accepted)  by $\mathcal{A}$. 
  The class of languages accepted by automata is called the
  class of \emph{regular} languages. 
\end{defi}

Automata can be used to represent infinite structures. 
Such representations have good computational behaviour. 
In the following we recall the definitions and important results 
on tree-automatic structures. 

We first introduce the convolution of trees. This is a tool
for representing an $n$-tuple of $\Sigma$-trees as a single tree over the
alphabet $(\Sigma\cup\{\Box\})^n$ where $\Box$ is a padding symbol satisfying
$\Box\notin\Sigma$. 
\begin{defi}
  The \emph{convolution} of two
  $\Sigma$-labelled trees $t$ and $s$ is given by a function
  \begin{align*}
    t\otimes s : \domain(t)\cup\domain(s) \rightarrow
    (\Sigma \cup \{\Box\} )^2
  \end{align*}
  where $\Box$ is some new padding symbol, and
  \begin{align*}
    (t\otimes s)(d) \coloneqq
    \begin{cases}
      (t(d),s(d)) & \text{ if } d\in \domain(t)\cap
      \domain(s), \\ 
      (t(d), \Box) & \text{ if }d\in \domain(t)\setminus
      \domain(s), \\ 
      (\Box, s(d)) & \text{ if }d\in \domain(s) \setminus
      \domain(t).
    \end{cases}
  \end{align*}
  We also use the notation $\bigotimes(t_1, t_2, \dots, t_n)$ for 
  $t_1 \otimes t_2 \otimes \dots \otimes t_n$.
\end{defi}

Using convolutions of trees we can use a single automaton for
defining $n$-ary relations on a set of trees.
Thus, we can then use automata to represent a set and a tuple of
$n$-ary relations on this set. If we can represent the domain of some
structure and all its relations by automata, we call the structure
automatic.  
 
\begin{defi}
  We say a relation $R\subseteq {\Trees{\Sigma}}^n$ is automatic
  if there 
  is an automaton $\mathcal{A}$ such that 
  $L(\mathcal{A})=\{ \bigotimes(t_1 ,t_2, \dots,
  t_n)\in{\Trees{\Sigma}}^n: (t_1, t_2, \dots, t_n)\in R\}$. 

  A structure $\mathfrak{B}=(B,E_1, E_2, \dots, E_n)$ with relations
  $E_i$ is \emph{automatic} if there are automata $\mathcal{A}_B,
  \mathcal{A}_{E_1}, \mathcal{A}_{E_2}, 
  \dots, \mathcal{A}_{E_n}$ 
  and a bijection $f: L(\mathcal{A}_B)\rightarrow B$ such that
  for $c_1, c_2, \dots, c_n \in L(\mathcal{A}_B)$, the
  automaton $\mathcal{A}_{E_i}$
  accepts $\bigotimes(c_1, c_2, \dots, c_n)$ if and only if 
  \mbox{$(f(c_1), f(c_2), \dots, f(c_n))\in E_i$.}

  In other words, $f$ is a bijection between $L(\mathcal{A}_B)$ and $B$ and
  the automata $\mathcal{A}_{E_i}$ witness that the relations $E_i$
  are automatic via $f$.
  We call $f$ a tree presentation of $\mathfrak{B}$.
\end{defi}
Automatic structures form a nice class because automata theoretic
techniques may be used to decide first-order formulas on these
structures:

\begin{thm}[\cite{Blumensath1999}, \cite{Rubin2008}, \cite{KartzowPhd}]
  \label{Thm:FOTreeAutomaticDecidable} 
  If $\mathfrak{B}$ is automatic, then its 
  $\FO{}(\exists^{\mathrm{mod}}, \mathrm{Ram})$-theory is decidable.\footnote{
    $\FO{}(\exists^{\mathrm{mod}}, \mathrm{Ram})$ is the extension of
    $\FO{}$ by modulo counting quantifiers and by Ramsey-Quantifiers.}
\end{thm}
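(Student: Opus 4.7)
The plan is to proceed by induction on the structure of formulas, showing that every $\FO(\exists^{\mathrm{mod}}, \mathrm{Ram})$-definable relation on $\mathfrak{B}$ is automatic and that an automaton for it can be computed effectively from the input formula. Once this is established, the decision procedure for the theory is immediate: to decide a sentence $\varphi$, compute an automaton for the (nullary) relation it defines and check whether it accepts the empty convolution.

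For the base case, atomic formulas of the form $E_i(\bar{x})$ and $x = y$ correspond to automatic relations by the definition of $\mathfrak{B}$ being tree-automatic (equality of trees is trivially recognisable). The inductive step splits into several cases. For Boolean connectives, I would appeal to the effective closure of the class of regular tree languages under union, intersection, and complement; one only has to take care of the bookkeeping caused by the padding symbol $\Box$ in convolutions, since a formula $\varphi(x_1,\dots,x_n) \wedge \psi(x_1,\dots,x_n)$ requires both automata to run on the same convolution shape. For existential quantification $\exists y\, \varphi(\bar x, y)$, project away the $y$-track of the convolution; nondeterminism in the tree automaton easily absorbs the resulting guesswork, and one has to allow nondeterministic guesses for the domain of the projected tree since $y$ may live on a tree domain incomparable to those of $\bar x$.

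For modulo-counting quantifiers $\exists^{\mathrm{mod}\, p} y\, \varphi(\bar x, y)$, I would combine the automaton $\mathcal{A}_\varphi$ with a counter modulo $p$ running along the positions of the $y$-track, exploiting that the set of accepting runs of a tree automaton can itself be recursively enumerated by a tree automaton product construction; the key fact is that the cardinality of a regular tree language, reduced modulo $p$, is computable from the automaton. This follows from a standard analysis of the generating function of a regular tree language, which is rational. For the Ramsey quantifier $\mathrm{Ram}\, x\, y\, \varphi(x,y)$, the statement ``there is an infinite set $X$ with $\varphi(x,y)$ for all distinct $x,y\in X$'' translates, via a pumping-type argument on the automaton $\mathcal{A}_\varphi$, to a condition on the existence of certain looping configurations in the product automaton reading two independent trees; one shows that such an infinite homogeneous set exists if and only if $\mathcal{A}_\varphi$ admits a pair of compatible pumpable loops, a condition itself checkable by an automaton.

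The main obstacle is the Ramsey quantifier case: unlike Boolean operations and projections, which admit direct automata constructions, handling $\mathrm{Ram}$ requires a nontrivial combinatorial analysis reducing an infinitary statement to a finitary property of $\mathcal{A}_\varphi$. This is the content of the cited works \cite{Rubin2008,KartzowPhd}, and extending the classical word-automatic argument to tree-automatic structures is where the genuine technical work resides. Once this reduction is in place, all the remaining steps are routine automata-theoretic constructions that yield the desired effective decision procedure.
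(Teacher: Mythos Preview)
The paper does not prove this theorem; it is stated as a cited result from \cite{Blumensath1999}, \cite{Rubin2008}, and \cite{KartzowPhd}, with no proof or sketch given in the body of the paper. So there is nothing to compare your proposal against directly. Your outline follows the standard architecture of the proofs in those references: induction on formulas, closure of regular tree languages under Boolean operations and projection for the classical connectives and quantifiers, and separate arguments for the two generalised quantifiers.

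One technical point in your sketch is shaky. For the modulo-counting quantifier you appeal to the rationality of the generating function of a regular tree language. That is the word-automatic argument; for regular tree languages the generating series are algebraic, not rational, and in any case this is not how the closure under $\exists^{\mathrm{mod}}$ is usually established. The cleaner route is to show directly that for a regular tree language $L$ one can decide whether $L$ is finite, and if so compute $\lvert L\rvert$; this follows from a pumping argument on the tree automaton (a minimal accepted tree of large enough depth yields infinitely many accepted trees). From this, deciding $\lvert L\rvert \equiv k \pmod p$ is immediate, and the relativised version with parameters $\bar x$ then goes through by the usual uniformity argument. Your Ramsey-quantifier paragraph correctly identifies the genuinely nontrivial step and defers to the cited sources, which is appropriate here.
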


\section{Collapsible Pushdown Graphs are 
  Tree-Automatic} 
\label{sec:CPGTreeAut}
In Section \ref{STACS:SecEncoding} we present a bijection $\Encode$
between $\Conf$ and a regular set of trees. 
Moreover, $\Encode$ translates the  reachability predicates
$\Reach_L\subseteq \Conf\times\Conf$ for each regular language $L$ into
a tree-automatic relation. The proof of this claim which is the technical core
of this paper is developed in Sections \ref{Sec:DecompositionOfRuns} and
\ref{sec:RegularReach}. Before we present this proof, 
we show in Section \ref{subsec:ReachLAutomatic} how
this result can be used to prove our main theorem. Moreover, in
Section \ref{sec:LowerBounds} we discuss the optimality of the first-order
model-checking algorithm derived from this construction.

Regularity of the regular reachability predicates implies that
$\Encode{\restriction}_{\domain(\CPG(\mathcal{S})/\varepsilon)}$ is an
automatic presentation of $\CPG(\mathcal{S})/\varepsilon$ because its
domain and its transition relation can be defined as reachability relations
$\Reach_L$ for certain regular languages $L$. Note that for
the definition of the domain, we need
the encoding of the initial configuration as parameter. This parameter
can be hard-coded because its encoding is a fixed tree. 

\subsection{Encoding of Level 2 Stacks in Trees}
\label{STACS:SecEncoding}
 
In this section we present an encoding of level $2$ stacks in trees. 
The idea is to divide a stack into blocks and to
encode different blocks in different subtrees.
The crucial observation is that every stack is a list of words that
share the same first letter. A block is a maximal list of words occurring
in the stack which share the same two first letters. If we remove the first
letter of every word of such a block, the resulting  $2$-word
decomposes again as a list of blocks. Thus, we can inductively
carry on to decompose parts of a stack into blocks and encode every
block in a different subtree. The roots of these subtrees are
labelled with the first letter of the block. 
This results in a tree where every initial left-closed
path in the tree represents one word of the stack. 
A path of a tree is left-closed if its last element has no left
successor (i.e., no $0$-successor).

The following notation is useful for the formal definition of blocks. 
Let $w\in\Sigma^*$ be some word and \mbox{$s=w_1:w_2:\dots:w_n
\in\Sigma^{*2}$} some $2$-word. We
write $s'\coloneqq w\mathrel\backslash s$ 
for $s'=ww_1 : ww_2 : \dots: ww_n$. 
Note that $[w] \prefixeq (w\mathrel\backslash s)$, i.e., $[w]$ is a
prefix 
of $s'$. We say that $s'$
is 
\emph{$s$ prefixed by $w$}. 

\begin{figure}
  \centering
  $
  \begin{xy}
    \xymatrix@=0.2mm{
      & f \\
      & e & g & & i\\
      b& d & d & d & h & & &j & l\\
      a & c & c & c & c & c & c&c & k\\
      \bot & \bot & \bot & \bot & \bot & \bot & \bot & \bot&\bot
      \save "1,2"."4,4"*[F-]\frm{}
      \save "2,5"."4,5"*[F-]\frm{}
      \save "4,6"."4,6"*[F-]\frm{}
      \save "4,7"."4,7"*[F-]\frm{}
      \save "3,8"."4,8"*[F-]\frm{}
    }
  \end{xy}
  $
  \caption{A stack with blocks forming a $c$-blockline.}
  \label{STACSfig:Blocks}
\end{figure}
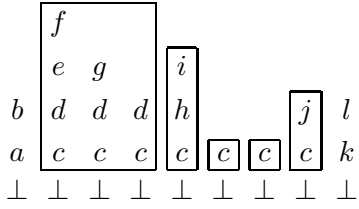

\begin{defi}
  Let $\sigma\in\Sigma$ and
  $b\in\Sigma^{+2}$. We call $b$  a \emph{$\sigma$-block} if
  $b=[\sigma]$ or $b=\sigma\tau\mathrel{\backslash} s'$ for some
  $\tau\in\Sigma$ and some $s'\in\Sigma^{*2}$.
  If $b_1, b_2,\dots, b_n$ are $\sigma$-blocks, then we call
  $b_1:b_2:\dots :b_n$ a \emph{$\sigma$-blockline}. 
\end{defi}

Note that every 
stack in $\Stacks(\Sigma)$ forms a $\bot$-blockline. 
Furthermore, every blockline $l$ decomposes uniquely as
\mbox{$l=b_1: b_2: \dots: b_n$} of maximal blocks $b_i$ in $l$. 

Another crucial observation is that a $\sigma$-block
$b\in\Sigma^{*2}\setminus \Sigma$
decomposes as $b=\sigma\mathrel{\backslash}l$ for some blockline $l$
and we call $l$  the  blockline \emph{induced} by $b$. For a block of
the 
form $[b]$ with $b\in\Sigma$, we define the
blockline induced by $[b]$ to be $\varepsilon$.

\begin{defi}
  Let  $l$ be a $\sigma$-blockline such that 
  $l = b_1:b_2:\dots:b_n$ is its decomposition into maximal blocks.  
  Let $i_1,i_2,\dots,i_m$ be those indices such that for all $1\leq j
  \leq n$ we have
  $b_{j}\neq  [\sigma]$ if and only if $j=i_k$ for some $1\leq k \leq
  m$.
  For $1\leq k \leq m$, let $b_{i_k}'$ be the $2$-word such that
  $b_{i_k}= \sigma \mathrel{\backslash} b'_{i_k}$. 
  We
  recursively define the \emph{blocks of $l$} to be the minimal set
  containing $b_1, b_2, \dots, b_n$ and 
  the blocks of each of  the $b_{i_k}$ ($1\leq k \leq m$) seen as
  $\tau$-blockline for some letter $\tau$. 
\end{defi}
See Figure \ref{STACSfig:Blocks} for an example of a stack with one
of its blocklines.

Recall that the symbols of a collapsible pushdown stack (of level $2$)
come from the set $\Sigma \cup (\Sigma \times\{2\}\times\N)$ where
$\Sigma$ is the stack alphabet. 
For $\tau\in\Sigma \cup (\Sigma \times\{2\}\times\N)$, we encode a
$\tau$-blockline $l$ in a tree as follows. The root of the tree is
labelled by $(\Sym(\tau), \Lvl(\tau))$. 
The
blockline induced by the first maximal block of 
$l$ is encoded in the left subtree  and  the
rest of $l$ is encoded in the right subtree. This means that we only
encode explicitly the symbol and the 
collapse level of each element of the stack, but not the
collapse link. We will later see how to decode the collapse links from the
encoding of a stack. 
When we
encode a part of a blockline in the right subtree, 
we do not repeat the label $(\Sym(\tau), \Lvl(\tau))$, but 
replace it by the empty word $\varepsilon$. 

\begin{defi}
  Let $\tau\in \Sigma\cup(\Sigma\times\{2\}\times\N)$. Furthermore,
  let 
  \begin{align*}
    s =
    w_1:w_2:\dots:w_n\in(\Sigma\cup(\Sigma\times\{2\}\times\N))^{+2}    
  \end{align*}
  be some  
  $\tau$-blockline. Let $w_i'$ be a word for each $1\leq i \leq n$ 
  such that 
  \mbox{$s= \tau \mathrel\backslash [w_1': w_2' :\dots
    :w_n']$} and set \mbox{$s'\coloneqq w_1': w_2' : \dots : w_n'$}. As an
  abbreviation we write  
  $_is_k\coloneqq w_i:w_{i+1}:\dots:w_k$.
  Let  
  $_1s_j$ be a maximal block of $s$. Note that $j>1$ implies 
  that there is some $\tau'\in
  \Sigma\cup(\Sigma\times\{2\}\times\N)$ and there are words $w''_{j'}$ for each
  $j'\leq j$ such that
  $w_{j'}= \tau\tau' w_{j'}''$. 

  For arbitrary
  $\sigma\in(\Sigma\times\{1,2\})\cup\{\varepsilon\}$, we define
  recursively the
  $(\Sigma\times\{1,2\})\cup\{\varepsilon\}$-labelled
  tree $\Encode(s,\sigma)$ via
  \begin{align*}
    \Encode(s,\sigma)\coloneqq
    \begin{cases}
      \sigma & \text{if } \lvert w_1\rvert=1, n=1\\
      \treeR{\sigma}{\Encode( _2s_n,\varepsilon)}
      &\text{if } \lvert w_1 \rvert =1, n>1\\
      \treeL{\sigma}{\Encode( _1s_n',(\Sym(\tau'), \Lvl(\tau')))}
      &\text{if }  \lvert w_1\rvert >1, j=n \\
      \treeLR{\sigma}{\Encode(
        _1s_j',(\Sym(\tau'),\Lvl(\tau')))}{\Encode( 
        _{j+1}s_n,\varepsilon)}      
      &\text{otherwise}
    \end{cases}
  \end{align*}
For every $s\in\Stacks(\Sigma)$, $\Encode(s)\coloneqq
\Encode(s,(\bot,1))$ is called the \emph{encoding of the stack} $s$.
\end{defi}
Figure \ref{STACSfig:Encoding} shows a configuration and its encoding. 
\begin{figure}[t]
  \centering
  $
  \begin{xy}
    \xymatrix@R=0pt@C=0pt{
      &       & (c,2,1) &         & e & \\
      &(b,2,0)& (b,2,0) &       c & (d,2,3) & \\
      &(a,2,0)& (a,2,0) & (a,2,2) & (a,2,2) & (a,2,2) \\
      & \bot  & \bot    &\bot     & \bot    & \bot
      }
  \end{xy}$
  \hskip 1cm
  $\begin{xy}
    \xymatrix@R=9pt@C=3pt{
           & c,2 &       & e,1      &  \\
      b,2 \ar[r]& \varepsilon\ar[u] & c,1 & d,2\ar[u] & \\
      a,2 \ar[u] &  & a,2\ar[r]\ar[u] & \varepsilon\ar[r]\ar[u]
      &\varepsilon \\
       \bot,1 \ar[rr]\ar[u] &  &\varepsilon\ar[u] &  & &      
      }
  \end{xy}
  $
  \caption{A stack $s$ and its Encoding $\Encode(s)$: right
    arrows lead to $1$-successors (right successors), upward arrows
    lead to $0$-successors (left successors).}
  \label{STACSfig:Encoding}
\end{figure}
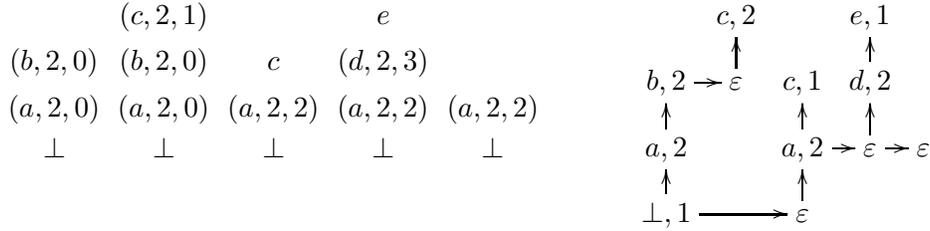

\begin{rem}
  Fix some stack $s$. 
  For $\sigma\in\Sigma$ and $k\in\N$, every $(\sigma,2,k)$-block of
  $s$ is
  encoded in a subtree whose root $d$ is labelled $(\sigma,2)$. 
  We can restore
  $k$ from the position of $d\in\{0,1\}^*0$ in the tree $\Encode(s)$
  as follows. 
  \begin{align*}
    k = \lvert \{d'\in \Encode(s) \cap \{0,1\}^*1:
    d'\leq_{\mathrm{lex}} d\}\rvert,     
  \end{align*}
  where $\leq_{\mathrm{lex}}$ is the lexicographic order. This
  is due to the fact that every right-successor corresponds to the
  separation of some block from
  some other. 

  This correspondence can be seen as a bijection.
  \label{milestonesInEncoding}
  Let \mbox{$s=w_1:w_2:\dots:w_n$} be some stack. We define the set 
  \mbox{$R\coloneqq \domain(\Encode(s)) \cap(\{\varepsilon\}\cup
    \{0,1\}^*1)$}. Then there is a bijection
  \mbox{$f: \{ 1,2, 3,   \dots, n\} \rightarrow R$} 
  such that $i$ is mapped to the $i$-th
  element of $R$ in lexicographic order. Each $1\leq i \leq n$
  represents the $i$-th word of $s$. 
  $f$ maps the
  first word of $s$ to the root of $\Encode(s)$ and every other word in
  $s$ to the element of $\Encode(s)$ that separates this word from its left
  neighbour in $s$. 

  If we interpret $\varepsilon$ as empty word, the word from the root
  to $f(i)$ in $\Encode(s)$ is the greatest common prefix of $w_{i-1}$
  and $w_i$. 
  More precisely,
  the word read along this path is the projection onto the letters and
  collapse levels of $w_{i-1}\sqcap w_i$.

  Furthermore, set $f'(i)\coloneqq f(i)0^m\in\Encode(s)$ 
  such that 
  $m$ is maximal with this property, i.e., $f'(i)$ is the leftmost
  descendant of $f(i)$. Then the path from $f(i)$ to $f'(i)$ is the
  suffix $w_i'$ such that $w_i=(w_{i-1}\sqcap w_i) w_i'$ (here
  we set $w_0\coloneqq \varepsilon$). More precisely, the word read along this
  path is the projection onto the symbols and collapse levels of $w_i'$.
\end{rem}

Having defined the encoding of a stack, we want to encode whole
configurations, i.e., a stack together with a state. To this end, we
just add the state as a new root of the tree and attach
the encoding of the stack as left subtree, i.e.,
for some configuration $(q,s)$ we set
\begin{align*}
  \Encode(q,s)\coloneqq \treeL{q}{\Encode(s)}.
\end{align*}

The image of this encoding function contains only trees of a very
specific type. We call this class $\EncTrees$. 
In the next definition we state the characterising properties of
$\EncTrees$. 
\begin{defi} \label{STACS:DefEncodingTrees}
  Let $\EncTrees$ be the class of trees $T$ that satisfy
  the following conditions.
  \begin{enumerate}[(1)]
  \item  The root of $T$ is labelled by some element of $Q$
    ($T(\varepsilon)\in Q$).
  \item $1\notin\domain(T)$,   $0\in\domain(T)$.
  \item $T(0)=(\bot,1)$. 
  \item Every element of the form $0\{0,1\}^*0$ is labelled by some
    $(\sigma,l)\in(\Sigma\setminus\{\bot\})\times\{1,2\}$, 
  \item Every element of the form $\{0,1\}^*1$ is labelled by
    $\varepsilon$. 
  \item \label{Cond:LevelOneBlockscoincide} 
    \label{STACS:fifthofDefEnc} 
    There is no $t\in T$ such that
    $T(t0) = (\sigma,1)$ and
    $T(t10) = (\sigma,1)$. 
  \end{enumerate}
\end{defi}
\begin{rem}
  Note that all trees in the image of $\Encode$ satisfy condition
  \ref{STACS:fifthofDefEnc} due to the following. 
  \mbox{$T(t0)=T(t10)=(\sigma,1)$} would imply that the 
  subtree rooted at $t$ encodes a blockline $l$ such that the first
  block $b_1$ of $l$ induces a $\sigma$-blockline and the second
  block $b_2$
  induces also a $\sigma$-blockline.  This
  contradicts the maximality of the blocks used in the encoding because
  all words of $b_1$ and $b_2$ have $\sigma$ as
  second letter whence $b_1:b_2$ forms a larger block.
  Note that for letters with links of level $2$ the analogous
  restriction does not hold. In Figure \ref{STACSfig:Encoding} one
  sees the encoding of a stack $s$ where $\Encode(s)(0) =
  \Encode(s)(10) = (a,2)$. Here, the label $(a,2)$ represents two
  different letters. $\Encode(s)(0)$ encodes the element $(a,2,0)$,
  while 
  $\Encode(s)(10)$ encodes the element $(a,2,2)$,
  i.e., the first element encodes a letter $a$ with undefined link and
  the second  encodes the  
  letter $a$ with a link to the substack of width $2$. 
\end{rem}

\begin{lem} \label{Lem:EncTreesAutomatic}
  There is a finite automaton $\mathcal{A}_{\EncTrees}$ with 
  $2+3\lvert\Sigma\rvert$ many states that recognises $\EncTrees$. 
\end{lem}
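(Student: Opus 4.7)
I will construct $\mathcal{A}_{\EncTrees}$ explicitly as a bottom-up tree automaton. The key observation is that conditions (1)--(5) of Definition \ref{STACS:DefEncodingTrees} are local to a node and its children, whereas condition (6) looks one level further down (at the left child of a right child); consequently the state at a node should encode (a) the label of the node itself and, if that label is $\varepsilon$, additionally (b) the label of its left child.

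I take as state set
\[Q_A\coloneqq\{q_I,q_F\}\cup\{s^1_\sigma,s^2_\sigma,s^\varepsilon_\sigma:\sigma\in\Sigma\},\]
which has exactly $2+3|\Sigma|$ elements, with $q_I$ initial and $\{q_F\}$ accepting. The intended semantics are: $q_I$ marks positions in $D_+$; $s^l_\sigma$ marks a well-formed subtree whose root is labelled $(\sigma,l)$; $s^\varepsilon_\sigma$ marks a well-formed $\varepsilon$-labelled subtree whose left child is labelled $(\sigma,1)$; and $q_F$ plays a \emph{double role} — it is both the accepting state at the root and the state of any well-formed $\varepsilon$-labelled subtree whose left child is harmless for condition (6), i.e.\ either absent or of level $2$. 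Merging these two roles into a single state is precisely what saves the one extra state that would otherwise push the count to $3+3|\Sigma|$.

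The transitions are defined by case distinction on the node label. The unique root transition takes label $q\in Q$ with left-child state $s^1_\bot$ and right-child state $q_I$ to $q_F$, enforcing (1)--(3) at one stroke. Transitions on labels $(\sigma,l)\neq(\bot,1)$ produce $s^l_\sigma$ and restrict the admissible left-child state to $\{q_I\}\cup\{s^1_\tau,s^2_\tau:\tau\neq\bot\}$ and the right-child state to $\{q_I,q_F\}\cup\{s^\varepsilon_\tau:\tau\neq\bot\}$; these restrictions block $(\bot,\cdot)$- and $\varepsilon$-labels from occurring at structurally inadmissible positions, yielding (4) and (5). Transitions on label $\varepsilon$ produce $s^\varepsilon_\tau$ when the left-child state is $s^1_\tau$, and $q_F$ otherwise, with analogous restrictions on allowed child states. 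Condition (6) is baked into every transition by suppressing the pairs $(s^1_\tau,s^\varepsilon_\tau)$ of child states for any common $\tau$.

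Correctness is then by induction on subtree depth in both directions, using the semantics above as the invariant: every $T\in\EncTrees$ admits a run that is forced to assign the expected state at each node, and conversely any accepting run must apply the root transition and then propagates the conditions downward. The main bookkeeping subtlety — and the only real obstacle — is verifying that the double use of $q_F$ introduces no spurious runs: a $q_F$ at a non-root position must always witness a safe $\varepsilon$-subtree, and a $q_F$ at the root must come from the root transition. This is automatic from the transition design, since the only non-root transitions producing $q_F$ have label $\varepsilon$, and every transition that accepts $q_F$ as a child state does so only in a slot where an $\varepsilon$-labelled right child is structurally expected. Everything else is a straightforward case analysis.
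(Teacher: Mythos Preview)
Your merge of the accepting state with the ``safe $\varepsilon$-subtree'' state is broken. Consider the single-node tree with root label $\varepsilon$: both children are in state $q_I$, so your $\varepsilon$-transition fires (left child not of the form $s^1_\tau$) and produces $q_F$ at the root. The automaton accepts, but this tree violates condition (1). More generally, any tree with $\varepsilon$ at the root and children that would make a legal $\varepsilon$-node is wrongly accepted. Your justification --- ``the only non-root transitions producing $q_F$ have label $\varepsilon$, and every transition that accepts $q_F$ as a child state\ldots'' --- argues that $q_F$ at \emph{non-root} positions behaves correctly, but says nothing about why an $\varepsilon$-transition cannot fire \emph{at} the root. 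Nothing in a bottom-up automaton prevents that.

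The repair is cheap and is essentially what the paper does. Observe that $s^\varepsilon_\bot$ is dead: it would require an $\varepsilon$-node whose left child is in state $s^1_\bot$, but your own child restrictions forbid $s^1_\bot$ from appearing as the left child of anything except the root. So repurpose $s^\varepsilon_\bot$ to mean ``$\varepsilon$-node with safe left child'', keep $q_F$ exclusively for the root, and you recover the correct count $2+3\lvert\Sigma\rvert$. The paper's construction is the top-down dual of this: its $\varepsilon$-node states $P_\sigma$ record which level-$1$ label is \emph{forbidden} at the left child (with $P_\bot$ meaning ``no constraint'', the dual of your repurposed $s^\varepsilon_\bot$), rather than which label actually occurs there. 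Either direction works once you exploit the otherwise-wasted $\bot$-indexed state.
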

\begin{proof}
  Set
  $\mathcal{A}_{\EncTrees}:=
  (Q_{\mathcal{A}},
  Q\cup(\Sigma\times\{1,2\})\cup\{\varepsilon\}, \bot, \{q_I\},
  \Delta_{\mathcal{A}})$ where $Q_{\mathcal{A}}$
   and $\Delta_{\mathcal{A}}$ are defined as follows. 
   Let $Q_{\mathcal{A}} := \{ \bot, q_I\}\cup (\Sigma\times \{1,2\})
   \cup  \{P_\sigma: \sigma\in\Sigma\}$.
   The states of the form $(\sigma,i)$ are used to guess that a node
   of the tree is labelled by $(\sigma,i)$ while the states $P_\sigma$
   are used to prohibit that the left successor of a  node is
   labelled by $(\sigma,1)$ ($P_\bot$ is used if no restriction
   applies). The transitions ensure that whenever we
   guess that $d0$ is labelled by $(\sigma,1)$ then $d1$ is reached in
   state $P_\sigma$ ensuring that $d10$ cannot be labelled by
   $(\sigma, 1)$. 
   For the definition of $\Delta_{\mathcal{A}}$ we use the following
   conventions. $q$ ranges over $Q$, $i,j$ range over $\{1,2\}$,
   $\sigma$ over $\Sigma$, 
   $\tau$ over $\Sigma\setminus\{\bot\}$ and $\tau_{\not\sigma}$ over
   $\Sigma\setminus\{\bot, \sigma\}$ whenever $\sigma$ is fixed. 
   Set $\Delta_\mathcal{A}:=\{
   (q_I, q, (\bot, 1), \bot)$,
   $\left( (\sigma, i), (\sigma, i), (\tau, 1), P_\tau\right)$,
   $\left( (\sigma, i), (\sigma, i), (\tau, 2), P_\bot\right)$,
   $\left( (\sigma, i), (\sigma, i), (\tau, j), \bot\right)$,
   $\left((\sigma,i),(\sigma,i), \bot, P_\bot\right)$,
   $\left((\sigma,i),(\sigma,i), \bot, \bot\right)$,
   $\left(P_\sigma, \varepsilon, (\tau_{\not\sigma}, 1), P_\tau\right)$, 
   $\left(P_\sigma, \varepsilon, (\tau, 2), P_\bot\right)$,
   $\left(P_\sigma, \varepsilon, (\tau_{\not\sigma}, 1), \bot\right)$,
   $\left(P_\sigma, \varepsilon, (\tau, 2), \bot\right)$,
   $\left(P_\sigma, \varepsilon, \bot,  P_\bot\right)$,
   $\left(P_\sigma, \varepsilon, \bot, \bot\right)\}$.
 \end{proof}

\begin{lem} \label{STACS:Bijective}
  $\Encode:Q\times \Stacks(\Sigma) \rightarrow \EncTrees$ is a 
  bijection. We denote its inverse by $\Decode$.   
\end{lem}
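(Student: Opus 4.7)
The plan is to prove bijectivity by explicitly constructing the inverse map $\Decode$, case-by-case mirroring the recursive definition of $\Encode$, and then verifying $\Decode \circ \Encode = \Id$ and $\Encode \circ \Decode = \Id$ by induction.

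First, I would verify that $\Encode$ is well-defined and takes values in $\EncTrees$. Well-definedness of the recursion follows because each recursive call strictly decreases either the width $n$ of the blockline or the length of its first word $w_1$. Conditions (1)--(5) of Definition \ref{STACS:DefEncodingTrees} are immediate from inspecting the four cases: the root of $\Encode(q,s)$ carries the state $q \in Q$, its unique $0$-child carries $(\bot,1)$, every node at position $\{0,1\}^*1$ is labelled $\varepsilon$ (by Cases 2 and 4), and every node at $0\{0,1\}^*0$ carries some $(\Sym(\tau'),\Lvl(\tau'))$ passed in from the parent call. Condition (6) follows from the maximality of the blocks used in the decomposition: if both $T(t0)$ and $T(t10)$ equalled $(\sigma,1)$, then the two adjacent maximal blocks whose induced blocklines are encoded in the subtrees at $t0$ and $t1$ would share the same first letter $\sigma$ (with level-$1$ link), hence could be merged into one larger block, contradicting maximality.

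Second, I would define $\Decode$ by inverting each case of $\Encode$. Given $T \in \EncTrees$, I set $\Decode(T) := (T(\varepsilon), D(\inducedTreeof{0}{T}))$, where $D$ reconstructs a blockline by case analysis on the shape of its argument: a leaf maps to a one-word block of a single letter; only-right-child to the same with more words appended via recursion on the right subtree; only-left-child to one large block whose induced blockline is obtained by recursing on the left subtree; and both-children to the combination. The letters at level-$1$ nodes are read off directly, while for a level-$2$ node $d$ labelled $(\sigma,2)$ the collapse link value $k$ is recovered as in Remark \ref{milestonesInEncoding}, namely $k := |\{d' \in \domain(T) \cap \{0,1\}^*1 : d' \leq_{\mathrm{lex}} d\}|$, turning the node into $(\sigma,2,k)$. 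The four shape possibilities for the root of a subtree of $T \in \EncTrees$ partition bijectively onto the four cases of $\Encode$, so this inversion is unambiguous.

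Third, I would prove both compositions are the identity by straightforward induction. For $\Decode \circ \Encode = \Id$, the induction is on the size of the blockline, using that which case of $\Encode$ applies is determined by $|w_1|$ and the index $j$ where the first maximal block ends, and that these are faithfully reflected in the shape (leaf/only-right/only-left/both) of the produced tree; the link values are preserved because the counting in Remark \ref{milestonesInEncoding} matches the widths at which $\Encode$ implicitly places the right-successor nodes. For $\Encode \circ \Decode = \Id$, the induction is on $|\domain(T)|$ with the same exhaustive case split. The main obstacle will be verifying that $\Decode(T)$ actually lies in $\Stacks(\Sigma)$ rather than merely in the larger syntactic set $(\Sigma\cup(\Sigma\times\{2\}\times\N))^{+2}$, i.e., that it is reachable from $\bot_2$ by operations in $\Op$. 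For this I would show by induction on the tree that the links produced by the lexicographic counting always designate widths that can be realised by a sequence of $\Clone{2}$ and $\Push{\sigma,k}$ steps: condition (6) of Definition \ref{STACS:DefEncodingTrees} ensures the level-$1$ link pattern is consistent with such a construction, while Remark \ref{milestonesInEncoding} guarantees level-$2$ link values are exactly the widths present at the creation time of the corresponding letter.
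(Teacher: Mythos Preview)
Your approach is essentially the same as the paper's: construct $\Decode$ explicitly and verify it is the inverse by induction on the recursive structure. The paper recovers the level-$2$ link values by threading a width counter $g$ through the recursion (incremented whenever a right subtree is entered), which is equivalent to your lexicographic counting via Remark~\ref{milestonesInEncoding}.

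The one structural difference is in how bijectivity is concluded. You propose proving both $\Decode\circ\Encode=\Id$ and $\Encode\circ\Decode=\Id$; the paper instead proves $\Decode\circ\Encode=\Id$ together with injectivity of $\Decode$ on $\EncTrees$ (a lengthy case analysis over which subtrees are present). Either route is valid, but note that your $\Encode\circ\Decode=\Id$ step presupposes $\Decode(T)\in\Stacks(\Sigma)$ so that $\Encode$ can be applied---which is precisely the obstacle you flag. The paper handles this in a separate remark, and its injectivity argument does not depend on it, so the paper's order of lemmas is slightly more robust. Your plan for that obstacle (showing the produced link values are realisable by stack operations) is correct in outline and matches the paper's remark.
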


The proof of this lemma is tedious. It can be found in Appendix
\ref{Appendix:BijectivityofEnc}. 

\subsection{Tree-automaticity of 
  Collapsible Pushdown Graphs} 

\label{subsec:ReachLAutomatic}

Our main technical contribution in this paper is stated in the next
proposition. It  concerns the regularity of the regular reachability
predicates $\Reach_L$ with respect to our encoding of configurations. 
We postpone the proof of this proposition to 
Section \ref{sec:RegularReach}.

\begin{prop} \label{Prop:ReachLregular}
  There are polynomials $p_1$ and $p_2$ such that the following holds.
  Let $\mathcal{S} = (Q, \Sigma,\Gamma, q_0, \Delta)$ be some
  collapsible pushdown system of level $2$ 
  and let $L$ be some regular language over $\Gamma$ recognised by
  some nondeterministic finite automaton with state set $P$.
  $\Reach_L$ is tree-automatic via $\Encode$ and there is a 
  nondeterministic finite tree-automaton with 
  $p_1(\lvert \Sigma\rvert) \cdot 
  \exp(p_2(\lvert Q \rvert \cdot \lvert P \rvert))$ many states
  recognising $\Reach_L$ in this encoding.
\end{prop}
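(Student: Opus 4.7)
The plan is to leverage a summary-based approach adapted to the block-tree encoding. The central notion I would introduce is, for each subtree of $\Encode(c)$, a \emph{loop summary}: a relation on $Q\times P$ recording which transitions of the combined CPS/NFA product can be realised by runs that start at a configuration whose top substack $s_d$ is the one traced out by the left-closed path to the corresponding node $d$ (cf.\ Remark \ref{milestonesInEncoding}), that stay above $s_d$ in the sense $s_d\prefixeq\rho$, and that return to a configuration with $s_d$ as the current stack. Lemma \ref{Lem:BlumensathLevel2} guarantees that this summary is invariant under arbitrary substitution of the prefix strictly below $s_d$, so it is genuinely a property of the subtree rooted at $d$ together with the topmost symbol there.

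The first main step would be to apply the run-decomposition theory of Section \ref{Sec:DecompositionOfRuns}: every run decomposes into height-preserving loops interleaved with the single $\Op$ operations that modify the top. This yields a recursive fixpoint equation for summaries: the summary at a node $d$ is built from (a) the summaries at its children, which encode loops staying inside the blocks carried by the current top, (b) single transitions from $\Delta$ applied at the top, and (c) for a $\Collapse$ fired at a letter with level-$2$ link value $k$, the summary attached to the right-successor in the tree whose lexicographic index equals $k$, exploiting the correspondence from Remark \ref{milestonesInEncoding}. Since the summary universe $\Pot{(Q\times P)^2}$ is finite and the equation is monotone, there is a unique least fixed point that can be computed bottom-up by a tree-automaton, with states in $\Pot{(Q\times P)^2}$.

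With summaries in hand, I would construct $\mathcal{A}_{\Reach_L}$ on the convolution $\Encode(c_1)\otimes\Encode(c_2)$ as a product of (i) the summary-computing automaton on each component, (ii) the structural checker $\mathcal{A}_{\EncTrees}$ from Lemma \ref{Lem:EncTreesAutomatic} on each side, and (iii) a tracker that follows the path along which $\Encode(c_1)$ and $\Encode(c_2)$ disagree. Along this divergence path the tracker composes precomputed loop summaries with single $\Delta$-transitions and with the NFA transitions for $L$, accepting iff these compose from the source state on the $c_1$-side to the target state on the $c_2$-side with an $L$-accepting NFA run. The resulting state space is a product of $\Pot{(Q\times P)^2}$ and finitely many symbols tracking the current top and the NFA state, which gives the claimed bound $p_1(\lvert\Sigma\rvert)\cdot\exp(p_2(\lvert Q\rvert\cdot\lvert P\rvert))$.

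The principal obstacle will be the correct treatment of $\Collapse$ at level $2$: its effect depends on a link whose value is only implicit in the block structure of the stack, and when the collapse target sits many blocks below the current position, a run may weave in and out of intermediate substacks before actually firing the collapse. Lemma \ref{Lem:Howtwolinksevolve} is the tool that disciplines this behaviour, forcing any such run to cross specific $\Pop{2}$-positions and thereby forcing the relevant loop summaries to compose in a way that a tree-automaton can witness by moving along the right-spine of the encoding. Making this bookkeeping precise, and verifying that the summary calculus is both sound and complete with respect to actual runs (the completeness direction is exactly where Lemma \ref{Lem:BlumensathLevel2} is used to glue locally witnessed loops into a global run), is the technical heart of Sections \ref{Sec:DecompositionOfRuns} and \ref{sec:RegularReach}.
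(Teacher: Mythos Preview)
Your outline has the right shape (product with the NFA, summary-style invariants, compositional tree-automaton), but there is a genuine gap in step (c) of your fixpoint equation. You propose that a $\Collapse$ at a letter with level-$2$ link value $k$ should consult ``the summary attached to the right-successor in the tree whose lexicographic index equals $k$''. A bottom-up tree-automaton at a node $d$ has access only to the states at $d0$ and $d1$; the node of lexicographic index $k$ may be an arbitrary cousin, and no finite state passed locally can encode all such summaries simultaneously. As written, your recursion is not computable by a tree-automaton. The paper avoids precisely this non-locality: its notions of \emph{return} and \emph{loop} are crafted so that a return from a stack $s$ never fires $\Collapse$ on a level-$2$ link stored in $\TOP{2}(s)$ itself (Definition~\ref{DefReturn}), whence $\ExRet(s)$ and $\ExLoop(s)$ depend only on $\TOP{2}(s){\downarrow}_0$, i.e.\ on the word read along the root-to-$d$ path with all link targets erased (Propositions~\ref{Prop:ReturnsAutomatic} and~\ref{Prop:LoopsAutomatic}). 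That is the insight that makes the summaries local; Lemma~\ref{Lem:Howtwolinksevolve} alone does not give it to you.

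A second, smaller gap is the ``divergence path'' tracker. For arbitrary $(c_1,c_2)\in\Reach$ the encodings $\Encode(c_1)$ and $\Encode(c_2)$ need not differ along a single path; the minimal common substack of $s_1$ and $s_2$ can sit strictly below both. The paper handles this by first taking the product $\mathcal{S}\times\mathcal{A}_L$ (so that $\Reach_L$ on $\mathcal{S}$ becomes plain $\Reach$ on the product) and then decomposing $\Reach$ as $\relA\circ\relB\circ\relC\circ\relD$ (Lemma~\ref{lem:GeneralRunDecomposition}, Remark~\ref{Rem:DecompositionOfReach}), where each factor relates configurations whose encodings differ in one controlled geometric pattern. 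Each factor is then recognised by its own automaton (Lemmas~\ref{Lem:RegA}--\ref{Lem:RegD}) that propagates the path-computed summaries $\ExRet,\ExLoop,\ExOneLoop$ along the relevant branch and guesses the intermediate states of the decomposition of Corollary~\ref{Cor:OrderEmbedding} or Lemma~\ref{FormLemma}. Your single-tracker idea would need to be refined into something like this fourfold case split to go through.
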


\begin{rem}
  In the proposition, $\Reach_L$ has to be understood with respect to
  all possible configurations of a level $2$ collapsible pushdown
  system as opposed to those occurring in the configuration graph of
  $\mathcal{S}$, i.e., those reachable via the transitions of
  $\mathcal{S}$ from the initial configuration of $\mathcal{S}$. 
\end{rem}

We obtain the automaticity of the $\varepsilon$-contractions of all
level $2$ collapsible pushdown graphs as a direct
corollary of the previous result.

\begin{cor}
  There are polynomials $p$ and $q$ such that the following holds.
  Given a $\CPS$ \mbox{$\mathcal{S} = (Q, \Sigma,\Gamma, q_0,
    \Delta)$},  
  the $\varepsilon$-contraction
  $\CPG(\mathcal{S})/\varepsilon$ is regular via $\Encode$.
  Moreover, there is a presentation such that
  each automaton in the presentation of 
  $\CPG(\mathcal{S})/\varepsilon$ has at most
  $p(\lvert \Sigma\rvert) \cdot \exp(q(\lvert Q \rvert))$ many states. 
\end{cor}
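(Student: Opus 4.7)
The plan is to derive the corollary almost immediately from Proposition~\ref{Prop:ReachLregular} by observing that the domain and every edge relation of $\CPG(\mathcal{S})/\varepsilon$ can be expressed as a regular reachability relation for a \emph{fixed} language whose minimal automaton has only a constant number of states. More precisely, by the definition of $\varepsilon$-contraction the domain $M$ equals $\{g : \mathfrak{G} \models \Reach_L((q_0,\bot_2),g)\}$ for $L = (\{\varepsilon\}^*(\Gamma\setminus\{\varepsilon\}))^*$, and the $\gamma$-edge relation is the restriction to $M\times M$ of $\Reach_{L_\gamma}$ for $L_\gamma = \{\varepsilon\}^*\gamma$. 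Both $L$ and each $L_\gamma$ are recognised by a two- or three-state automaton, so the parameter $|P|$ appearing in the bound of Proposition~\ref{Prop:ReachLregular} can be treated as a constant in this application.

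For each $\gamma\in\Gamma\setminus\{\varepsilon\}$ I would invoke Proposition~\ref{Prop:ReachLregular} for the language $L_\gamma$ to obtain a tree-automaton $\mathcal{A}_\gamma$ recognising $\Reach_{L_\gamma}$ under $\Encode$, with at most $p_1(|\Sigma|)\cdot\exp(p_2(c\cdot|Q|))$ states for a constant $c$. For the domain I would apply the proposition to $L$ and then specialise the first component of the convolution to the fixed small tree $\Encode((q_0,\bot_2))$ (which has just two nodes: root labelled $q_0$ and left child labelled $(\bot,1)$). This specialisation amounts to tracking, alongside the original state, a position in a constant-size tree, and hence costs at most a constant multiplicative factor. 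The resulting automaton $\mathcal{A}_M$ accepts exactly $\Encode(M)$ because $\Encode$ is bijective by Lemma~\ref{STACS:Bijective}. Finally, to restrict each edge relation to $M\times M$ I would take the product of $\mathcal{A}_\gamma$ with two copies of $\mathcal{A}_M$, one running on each component of the convolution; this multiplies the state count by a polynomial in the size of $\mathcal{A}_M$.

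Collecting the bounds, every automaton in the resulting presentation has at most $p(|\Sigma|)\cdot\exp(q(|Q|))$ states for suitable polynomials $p,q$: products and intersections preserve the $\text{polynomial}\cdot\exp(\text{polynomial})$ shape, and specialising arguments contributes only constant factors. There is no essential obstacle beyond bookkeeping — the entire mathematical content lies in Proposition~\ref{Prop:ReachLregular}. The only point requiring a small argument is that the languages $L$ and $L_\gamma$ have automata of size independent of $|Q|$ and $|\Sigma|$, so that the $|P|$-dependence in the proposition collapses to a constant and the bound of the corollary is indeed a bivariate function of $|\Sigma|$ and $|Q|$ only.
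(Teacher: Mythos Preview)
Your proposal is correct and follows essentially the same approach as the paper: express the domain via $\Reach_L$ with $L=(\{\varepsilon\}^*(\Gamma\setminus\{\varepsilon\}))^*$, express each edge relation via $\Reach_{L_\gamma}$ with $L_\gamma=\{\varepsilon\}^*\gamma$, observe that both languages are recognised by two-state automata so that $|P|$ becomes a constant in Proposition~\ref{Prop:ReachLregular}, and specialise the first argument to the fixed tree $\Encode(q_0,\bot_2)$ at constant multiplicative cost. The paper is slightly terser (it gives the factor~$3$ for hard-coding and does not spell out the restriction of the edge relations to $M\times M$), but your additional bookkeeping for that restriction is a harmless elaboration that preserves the $p(|\Sigma|)\cdot\exp(q(|Q|))$ bound.
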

\begin{proof}
  The domain of $\CPG(\mathcal{S})/\varepsilon$ is 
  $\left\{c: \CPG(\mathcal{S}) \models
    \Reach_{(\Gamma^*(\Gamma\setminus
      \{\varepsilon\}))^*}\left((q_0,\bot_2), c\right)\right\}$.    
  Note that $(\{\varepsilon\}^*(\Gamma\setminus \{\varepsilon\}))^*$ is
  accepted by an automaton with $2$
  states. Furthermore, hard-coding  
  $\Encode(q_0, \bot_2)$ as first argument to the automaton from 
  Proposition \ref{Prop:ReachLregular} increases the number of
  states by a at most a factor $3$ (because $\domain(\Encode(q_0,
  \bot_2))=\{\varepsilon, 0\}$). 
  Thus, the corresponding automaton has
  $3\cdot p_1(\lvert \Sigma\rvert) \cdot \exp(p_2(2\cdot \lvert Q
  \rvert))$ many states where $p_1$ and $p_2$ are the polynomials from
  Proposition \ref{Prop:ReachLregular}.

  Similarly, $\trans{\gamma}$ in $\CPG(\mathcal{S})/\varepsilon$ is
  exactly the same as $\Reach_{\{\varepsilon\}^*\gamma}$. Again $2$
  states suffice to recognise $\{\varepsilon\}^*\gamma$.  
\end{proof}

\subsection{Lower Bound for FO Model-Checking}
\label{sec:LowerBounds}

Since \CPG are  tree-automatic, their
\FO{} model-checking problem is decidable. 
The algorithm obtained this way has nonelementary complexity. In this
section we prove that we cannot do better: there is a fixed
collapsible pushdown graph of level $2$
whose
$\FO{}$ theory has nonelementary complexity. 
We present a reduction of \FO{} model-checking on the
full infinite binary tree to \FO{} model-checking on
this collapsible pushdown graph. 
Recall that \FO{} model-checking on the 
full  infinite binary tree \mbox{$\mathfrak{T}:=(T, \preceq, S_1, S_2)$}
with prefix order $\preceq$ and successor relations $S_1, S_2$
has a nonelementary lower bound 
(cf.~Example 8.3 in \cite{DBLP:journals/apal/ComptonH90}). 

\begin{thm}\label{thm:FOCPGnonelementary}
  The expression complexity of  any \FO{} model-checking algorithm for
  level $2$ collapsible pushdown graphs is nonelementary.
\end{thm}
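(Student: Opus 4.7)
The plan is to exhibit a fixed level-$2$ collapsible pushdown graph $\mathfrak{G}_0$ in which the full infinite binary tree $\mathfrak{T}=(T,\preceq,S_1,S_2)$ is \FO-interpretable. Given such an interpretation $I$, every \FO-sentence $\varphi$ over $\mathfrak{T}$ translates in polynomial time to an \FO-sentence $\varphi^I$ over $\mathfrak{G}_0$ with $\mathfrak{T}\models\varphi \Longleftrightarrow \mathfrak{G}_0\models\varphi^I$. Since the \FO-theory of $\mathfrak{T}$ has nonelementary expression complexity (cf.\ Example~8.3 of \cite{DBLP:journals/apal/ComptonH90}), the nonelementary lower bound transfers to $\mathfrak{G}_0$, proving the theorem. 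The entire work is therefore concentrated in constructing one concrete $2$-\CPG together with the interpretation.

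The candidate $\mathfrak{G}_0$ is a $2$-\CPG over the stack alphabet $\{\bot,0,1\}$ whose reachable configurations encode \emph{pairs} $(w,v)$ with $v\preceq w\in\{0,1\}^*$, rather than single words. The word $w$ sits in the topmost level-$1$ stack as the sequence of pushed symbols, while the prefix $v$ is recoverable through the level-$2$ collapse link of the top symbol: by suitably combining $\Clone{2}$ and $\Push{a,2}$ operations, one arranges that the collapse link of the top symbol points to the substack whose topmost word is precisely $v$. Edges of $\mathfrak{G}_0$ are the atomic operations that (i) extend $w$ by $0$ or $1$ on the right, (ii) shift the marker $v$ up or down by one position along $w$, and (iii) query the bit of $w$ at the marker position. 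Each of these is realised by a short sequence of level-$2$ stack operations, so $\mathfrak{G}_0$ is a legitimate $2$-\CPG. Nodes of $\mathfrak{T}$ are identified with the \FO-definable set of configurations coding pairs $(w,w)$; the successor relations $S_1,S_2$ correspond to the labelled push-edges (followed by a marker reset to the end of $w$); and $\preceq$ becomes \FO-definable as ``there exists a configuration $c$ whose $w$-projection equals $c_w$ and whose $v$-projection equals $c_u$''. This is the crucial step: the pair-encoding moves the prefix relation into a single existential quantifier over vertices of $\mathfrak{G}_0$, bypassing reachability, which is not \FO-expressible.

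The main technical obstacle is to engineer the transitions so that the pair-encoding is \emph{rigid}: for every $v\preceq w$ there should be a uniformly \FO-definable canonical configuration representing $(w,v)$, and no spurious configurations should exist that break the interpretation. Making the encoding canonical requires a careful choice of control states and edge labels, together with a proof that the intended sequence of stack operations is the only way to produce each encoding stack; the uniqueness properties of level-$2$ clone/collapse interactions captured by Lemma~\ref{Lem:Howtwolinksevolve} are the key structural ingredient for this step. Once rigidity is established, writing down the \FO{} interpretation and verifying its correctness on the transitions of $\mathfrak{G}_0$ reduces to a direct case analysis over the stack-operation rules of the underlying \CPS, after which the reduction from $\mathfrak{T}$ is complete.
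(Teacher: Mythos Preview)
Your overall strategy---\FO-interpret the infinite binary tree $\mathfrak{T}=(T,\preceq,S_1,S_2)$ in a fixed $2$-\CPG so that the nonelementary lower bound transfers---is precisely the paper's, and the key insight that level-$2$ collapse links let $\preceq$ be expressed with a single existential vertex quantifier is also the same. The paper's realisation, however, is far simpler than yours. It takes the system of Example~\ref{exa:CPG}, adds a single $\Pop{2}$-transition from state~$2$ to state~$0$ (label~$\mathrm{P}_2$), and duplicates every $a$-transition with a second stack letter~$b$. State-$0$ configurations are then determined by $\TOP{2}$ and biject with $\{a,b\}^*$; the prefix order is defined by $\varphi_\preceq(x,y):=\exists z\,(z\trans{\mathrm{P}_2} y \wedge z\trans{\mathrm{Co}} x)$. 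The witness $z$ is any state-$2$ configuration reached from $y$ by a clone, one push, and some $\Pop{1}$'s: its $\mathrm{P}_2$-edge returns to $y$, while its $\mathrm{Co}$-edge follows the (cloned) link of the current top letter to the chosen prefix. No explicit pair-encoding layer, marker-shift or bit-query edges, rigidity proof, or appeal to Lemma~\ref{Lem:Howtwolinksevolve} is needed.

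One step of your sketch is also problematic as stated. You propose edges that ``shift the marker $v$ up or down'' by changing which substack the top symbol's level-$2$ link targets, but such links are set at $\Push{\sigma,2}$-time to the current width minus one and are only copied by $\Clone{2}$ thereafter; there is no bounded sequence of stack operations that re-targets the top link while keeping the top word equal to $w$. The paper avoids this entirely: at state~$2$ the top word itself (shortened by $\Pop{1}$) plays the role of the marker, and the unchanged word is recovered by the single $\Pop{2}$ edge, so ``marker-down'' is one $\Pop{1}$ and ``marker-up'' is never required.
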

\begin{rem}
  Note that this is a statement about plain collapsible pushdown
  graphs and not about the 
  $\varepsilon$-contractions.
  In contrast to the theorem, the first-order model-checking problem on
  non-$\varepsilon$-contracted level 1 pushdown graphs is 
  complete for alternating exponential time \cite{Volger83}.
\end{rem}
\begin{proof}
  We modify the \CPS of 
  Example \ref{exa:CPG}. 
  We add the transition $(2,a,\mathrm{P}_2,\Pop{2},0)$. 
  Note that the ordinal
  $(\omega, \preceq)$ is first order
  definable in this graph: restrict the domain to all elements with
  state $0$. The order $\preceq$ is then defined via
  $\varphi_\preceq(x,y):=\exists z\
  z\trans{\mathrm{P}_2} y \land z\trans{\mathrm{Co}} x$. 
  
  Now, we obtain the binary tree $(T, \preceq)$ by use of the stack
  alphabet $\{\bot, a, b\}$. For each occurrence of $a$ in a transition
  $\delta$, we make a copy of $\delta$ where we replace $a$ by $b$. 
  Then, each configuration $c$ with state $0$ is determined by
  $\TOP{2}(c)$ and these are in bijection to the set $\{a,b\}^*$. 
  Furthermore, $\varphi_\preceq$ defines the prefix relation on this
  set. 
  Thus, $(T, \preceq, S_1, S_2)$ is \FO{}-interpretable in this
  graph whence its \FO{} theory has nonelementary complexity.  
\end{proof}

\section{Decomposition of Runs}
\label{Sec:DecompositionOfRuns}

In this section we develop the technical background for the proof that
the regular reachability predicates are tree-automatic via $\Encode$. 
We  investigate the structure of runs of \CPS. 
We prove that any run is composed from subruns which can be classified
as 
\emph{returns}, \emph{loops}, or \emph{$1$-loops}. 
Forgetting about technical details, one can say that returns are runs
from some stack $s$ to $\Pop{2}(s)$, loops 
are runs that start and end in the same stack and $1$-loops are runs
from some stack $s$ to a stack $s'$ such that $s$ and $s'$ share the
same topmost word and $s$ is a substack of $s'$. 
Every run decomposes as a sequence of the form
$\lambda_0 \circ \rho_1 \circ \lambda_1 \circ \rho_2 \circ \dots \circ
\rho_n \circ \lambda_n$ where the $\rho_i$ only perform one operation
each 
and the 
$\lambda_i$ are returns, loops or $1$-loops of maximal length in a
certain sense. Let us explain this idea precisely in the case of a run
$\rho$ from some stack $s_1$ to a stack $s_2$ such that
$s_1=\Pop{2}^k(s_2)$. In this special case, the $\lambda_i$ are all
loops and the sequence of operations induced by $\rho_1, \dots,
\rho_n$ is a sequence of minimal length transforming $s_1$ into
$s_2$. This sequence of minimal length is in fact unique up to
replacement of $\Pop{1}$ and $\Collapse$ operations of level $1$. 
As a direct consequence, the loops $\lambda_i$ occurring in the
decomposition cover the largest possible part of $\rho$ in terms of
loops, returns and $1$-loops. 
This is also the key to understanding our decomposition result for
general runs: we identify maximal subruns of an arbitrary run which
are returns, loops and $1$-loops and we prove that the parts not
contained in one of these subruns form a short sequence of
operations.

Hence, understanding the existence of returns, loops and $1$-loops
allows to clarify whether runs between certain configurations exist. 
It turns out that our decomposition is very suitable for the analysis
with finite automata because such automata can be used to
decide whether returns, loops and $1$-loops starting in a given stack exist.

We next start with a general decomposition of any run into four
parts. Afterwards we prove decomposition results for each of the
parts where returns, loops and $1$-loops are the central pieces of the
decomposition. Finally, we show how finite automata acting on the
topmost word of a stack can be used to compute the existence of
returns, loops and $1$-loops starting at this stack.

\subsection{Decomposition of General Runs}
We introduce a decomposition of an arbitrary run $\rho$ into
four parts. The idea is that every run from a stack $s_1$ to a stack
$s_2$ passes a minimal common substack $t$ of $s_1$ and $s_2$. Any run
from $s_1$ to $t$  
decomposes into a first part from $s_1$ to a stack of the form
$t_1:=\Pop{2}^k(s_1)$ such that $\lvert t_1 \rvert = \lvert t \rvert$
and a second part from $t_1$ to $t$. Similarly, for the unique stack
$t_2:=\Pop{2}^j(s_2)$ such that $\lvert t_2 \rvert = \lvert t \rvert$
the run from $t$ to $s_2$ decomposes into a run from $t$ to $t_2$ and a
run from $t_2$ to $s_2$. In the following sections we prove that every
part of this decomposition again decomposes into returns, loops, and
$1$-loops. 

\begin{lem} \label{lem:GeneralRunDecomposition}
  Let $c_1=(q_1,s_1)$ and
  $c_2=(q_2,s_2)$ be configurations and  $\rho\in\Runs(c_1,c_2)$. Let
  $t$ be the minimal substack of $s_1$ such that $\rho$ visits $t$.  
  Furthermore, let
  $m_1:=\Pop{2}^{\lvert s_1 \rvert - \lvert t \rvert}(s_1)$ and
  $m_2:=\Pop{2}^{\lvert s_2\rvert - \lvert t \rvert }(s_2)$. 
  $\rho$ decomposes as
  $\rho=\rho_1\circ\rho_2\circ\rho_3\circ\rho_4$ where
  \begin{iteMize}{$\bullet$}
  \item $\rho_1\in\Runs(s_1,m_1)$ does not
    visit any substack of $m_1$ before its final configuration,
  \item   $\rho_2\in\Runs(m_1,t)$  does not visit any substack of
    $t$ before its final configuration, 
  \item $\rho_3\in\Runs(t, m_2)$ does not visit a substack of
    $\Pop{1}(t)$, and
  \item $\rho_4\in\Runs(m_2,s_2)$  does not visit any
    substack of $m_2$ after its initial configuration. 
  \end{iteMize}
\end{lem}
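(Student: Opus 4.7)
The plan is to identify three canonical split points $i_1 \leq i_2 \leq i_3$ that cut $\rho$ into the four required pieces. I take $i_1$ to be the first index at which $\rho$ visits a configuration of width $\lvert t\rvert$, $i_2$ the first index at which $\rho$ visits $t$ itself, and $i_3$ the last index at which $\rho$ visits a configuration of width $\lvert t\rvert$. All three exist because $\rho$ visits $t$ (of width $\lvert t\rvert$), and the chain $i_1 \leq i_2 \leq i_3$ holds by construction. Before using the decomposition I note that $t$ is a substack of $s_2$ as well: once $\rho$ has reached $t$, the minimality of $t$ forbids any later visit to a configuration strictly below $t$, so in particular $t \leq s_2$ and $m_2$ is well defined with $t \leq m_2$.

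The technical core of the argument is to show $\rho(i_1) = m_1$ and $\rho(i_3) = m_2$. For the first claim, throughout $[0, i_1)$ the width of $\rho$ is strictly greater than $\lvert t\rvert$, so the words at positions $1, \ldots, \lvert t\rvert$ have never been touched and still coincide with those of $s_1$. The width can reach $\lvert t\rvert$ only via $\Pop{2}$ (from width $\lvert t\rvert + 1$) or via a level-$2$ $\Collapse{}$ whose link is $\lvert t\rvert$; in both cases the resulting top word is the $\lvert t\rvert$-th word of $s_1$ and the lower words still agree with those of $s_1$, so the resulting configuration equals $\Pop{2}^{\lvert s_1\rvert - \lvert t\rvert}(s_1) = m_1$. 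A $\Collapse{}$ jumping to a width strictly less than $\lvert t\rvert$ would land in a substack of $s_1$ strictly below $t$, contradicting minimality. The claim $\rho(i_3) = m_2$ follows by a symmetric analysis: $\Clone{2}$ is the only operation that increases width and it leaves the words below the current top untouched, so the $\lvert t\rvert$-th word of $s_2$ is already present at time $i_3$; together with the invariance of the lower $\lvert t\rvert - 1$ words this forces $\rho(i_3) = m_2$.

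With these split points, setting $\rho_j$ to be the corresponding subrun of $\rho$, the four conditions are routine to verify. Every pre-final configuration of $\rho_1$ has width $> \lvert t\rvert$, hence is not a substack of $m_1$. For $\rho_2$, minimality of $t$ prevents visits to proper substacks of $t$ anywhere in $\rho$, and $i_2$ is the first visit to $t$ itself, so no pre-final configuration of $\rho_2$ is a substack of $t$. For $\rho_3$, the condition ``no substack of $\Pop{1}(t)$'' just means ``no proper substack of $t$'', which follows directly from minimality of $t$. For $\rho_4$, every post-initial configuration has width $> \lvert t\rvert$ and hence is no substack of $m_2$. I expect the main obstacle to be the two structural claims $\rho(i_1) = m_1$ and $\rho(i_3) = m_2$: one has to carefully rule out that a $\Collapse{}$ transition takes $\rho$ directly from a wide stack to some width-$\lvert t\rvert$ configuration other than $m_1$, and here the minimality of $t$ combines with the invariant that nothing below the operating top is ever altered to pin down the landing configuration exactly.
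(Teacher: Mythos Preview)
Your proof is correct and follows essentially the same approach as the paper: both fix the first visit to $t$ as $i_2$, the last index of width $\lvert t\rvert$ as $i_3$, and identify $i_1$ via the first drop to width $\lvert t\rvert$; the paper phrases $i_1$ as ``the minimal $i_1<i_2$ with $\rho(i_1)=m_1$'' and then argues this exists by the same width argument you give directly. Your write-up is somewhat more explicit about why a level-$2$ collapse cannot skip past width $\lvert t\rvert$ and why width stays $\geq\lvert t\rvert$ throughout, and your additional observation $t\leq m_2\leq s_2$ is true (it follows from the invariant that the $\lvert t\rvert$-th word always keeps $\TOP{2}(t)$ as a prefix) but is not needed for the lemma, which only requires $\lvert s_2\rvert\geq\lvert t\rvert$ for $m_2$ to be defined.
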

\begin{proof}
  Let $i_2\in\domain(\rho)$ be minimal such that $\rho(i_2)=t$. 
  If $t=m_1$ then set $i_1:=i_2$. 
  Otherwise there is some minimal $i_1<i_2$ such that
  $\rho(i_1)=m_1$: note that a stack operation alters either the width
  of the stack or the content of the topmost word. Thus, before
  reaching $t$, $\rho$ must visit some stack $\hat m$ of width at most 
  $\lvert t \rvert$ and of the form $\hat m=\Pop{2}^k(s_1)$ for some
  $k\in\N$.  Since $\hat m$ cannot be a substack of $t$, 
  $\hat m=m_1$. 
  Thus, we set $\rho_1:=\rho{\restriction}_{[0,i_1]}$ and
  $\rho_2:=\rho{\restriction}_{[i_1,i_2]}$. 

  For the definition of $\rho_3$ and $\rho_4$, note that $\lvert m_2
  \rvert = \lvert t \rvert$. Let $i_2<i_3 \in\domain(\rho)$ be maximal
  such that $\lvert \rho(i_3) \rvert = \lvert t \rvert$. 
  Since the first $\lvert t \rvert$ words of the stack are not changed
  by $\rho$ after $i_3$, $\rho(i_3)=m_2$ and $i_3$ is the last
  occurrence of $m_2$. Setting
  $\rho_3:=\rho{\restriction}_{[i_2,i_3]}$ and 
  $\rho_4:=\rho{\restriction}_{[i_3,\length(\rho)]}$ we are done.
\end{proof}

This decomposition motivates the following definition.
\begin{defi} \label{ABCD-Definition}
  Given a collapsible pushdown system $\mathcal{S}$, we define the
  following four relations on the configurations of $\mathcal{S}$:
  \begin{align*}
    &    \relA:=\{(c_1,c_2)\in\Conf^2:
    c_2=\Pop{2}^k(c_1)
    \text{ and }\exists\rho\in\Runs(c_1,c_2) \forall i<\length(\rho)\quad
    \rho(i)\not\leq c_2\}\\
    & \relB:=\{(c_1,c_2)\in\Conf^2: c_2=\Pop{1}^k(c_1)\text{ and }
    \exists \rho\in\Runs(c_1,c_2)
    \forall i<\length(\rho)\quad \rho(i)\not\leq c_2\} \\
    &  \relC:=\{(c_1,c_2)\in\Conf^2: c_1=\Pop{1}^k(c_2)\text{ and }
    \exists \rho\in\Runs(c_1,c_2)
    \forall i \leq\length(\rho)\quad \rho(i)\not< c_1\} \\
    &  \relD:=\{(c_1,c_2)\in\Conf^2: c_1=\Pop{2}^k(c_2) \text{ and }
    \exists \rho\in\Runs(c_1,c_2) \forall  i>0 \quad
    \rho(i)\not\leq c_1\}.  
  \end{align*}
\end{defi}
\begin{rem}\label{Rem:DecompositionOfReach}
  Since we allow runs of length $0$, the relations $\relA$, $\relB$, $\relC$ and
  $\relD$ are reflexive.
  Lemma \ref{lem:GeneralRunDecomposition} states that
  \begin{align*}
  (c_1,c_2)\in \Reach \Leftrightarrow  \exists d, e,
  f\quad  (c_1,d)\in \relA \land (d,e)\in \relB \land 
  (e, f) \in \relC \land (f,c_2)\in \relD.    
  \end{align*}
\end{rem}

In Section \ref{sec:Reachability_Tree-Automatic} we show that the
relations $\relA, \relB, \relC$ and $\relD$ are automatic whence
$\Reach$ is also automatic. 
In the next section, we prove a decomposition result that especially
applies to all runs in $\relD$. Afterwards, in Section
\ref{subsec:Return} we provide a corresponding decomposition result
for all runs in $\relA$. Finally, we provide (much simpler)
decompositions for $\relB$ and $\relC$ in Section \ref{subsec:DecBC}.

\subsection{Milestones, Loops and Increasing Runs}
\label{subsec:Milestones}

In this section we aim at a decomposition result for all runs in
$\relD$. For this purpose we first introduce the notion of
\emph{generalised milestones} of some stack $s$. The underlying idea
is as follows. Some stack $m$ is a generalised milestone of $s$ if any
run from the initial configuration to $s$ of any collapsible pushdown
system also passes $m$. 
Moreover, if some run ending in $s$ passes some 
generalised milestone $m$ of $s$, then it passes all generalised
milestones of $s$ that are not generalised milestones of $m$. 
From the definition of generalised milestones it will be obvious that
every run $\rho$ in $\relD$ starts at a generalised milestone of its final
stack. Thus, a run in $\relD$ can be decomposed into parts that
connect one generalised milestone of its final stack with the next
generalised milestone. After introducing the precise notion of a loop
we will see that each of these parts consists of such a loop plus one
further transition. 
At a first glance, the formal definition of a generalised milestone
has nothing to do with our informal description. The connection
between the intended meaning and the formal definition is that 
in order to create some stack $s$ from the initial stack $\bot_2$, we
have to create it word-by-word and each word letter-by-letter in the
following sense. If we want to create $s=w_1:w_2:\dots:w_k$, we have
to use push operations in order to create the first word, i.e., the
stack $[w_1]$. Then we have to apply $\Clone{2}$ and obtain
$w_1:w_1$. In order to generate $s$ from this stack, we first have to
generate $w_1:w_2$ and then we can proceed generating the other words
of $s$. But for this purpose, we first have to remove every letter
from the second copy of $w_1$ until we reach the greatest common
prefix of $w_1$ and $w_2$. This can only be done by iteratively
applying $\Pop{1}$ or $\Collapse$ operations of level $1$. Having
reached $w_1:(w_1\sqcap w_2)$, we again start to create $w_1:w_2$ by
using push operations of level $1$. 

This way of creating $s$ from $\bot_2$ is the shortest method
to create $s$ which is unique up to replacements of $\Pop{1}$
operations by $\Collapse$ of level 1 and vice versa. 
At the same time any other method contains this
pattern as a (scattered) subsequence (again up to replacement of
$\Pop{1}$ by $\Collapse$ of level 1 and vice versa). If we deviate
from the described 
way of creating $s$, then we just insert some loops where we first
create some different stack and then return to the position where we
started to deviate. 
At the end of this section, Corollary \ref{Cor:OrderEmbedding}
will show that our intuition is
correct. Let us now formally define generalised milestones. 

\begin{defi}
  Let $s=w_1:w_2:\dots :w_k$ be a stack and let $w_0 = \bot$. We call a
  stack $m$ a 
  \emph{generalised milestone of} $s$ if $m$ is  of the form 
  \begin{align*}
    &m=w_1:w_2:\dots: w_i:v_{i+1} \text{ where }
    0\leq i<k,\\
    &w_i \sqcap w_{i+1}\leq v_{i+1}\text{ and }\\
    &v_{i+1}\leq w_i\text{ or }v_{i+1}\leq w_{i+1}.
  \end{align*}
  We denote by $\genMilestones(s)$ the set of all generalised
  milestones of $s$.  
  
  For a generalised milestone $m$ of $s$, we call $m$ a
  \emph{milestone of }$s$ if  $m$ is a substack of $s$, i.e., if
  $v_{i+1}\leq w_{i+1}$ in the above definition. 
  We write $\Milestones(s)$ for the set of all milestones of
  $s$.
\end{defi}

We next define a partial order that 
turns out to be linear when restricted to 
the set $\genMilestones(s)$.

\begin{defi}
  We define a partial order $\ll$ on all stacks as follows.
  $\ll$ is the smallest reflexive and transitive relation that
  satisfies the following conditions.
  Let \mbox{$s=w_1:w_2:\dots :w_k$} and $t=v_1:v_2:\dots :v_l$ be stacks. 
  $s \ll t$ holds if
  \begin{enumerate}[(1)]
  \item  $\lvert s \rvert < \lvert t \rvert$, or
  \item  $l=k$, $w_i=v_i$ for $i<k$ and
    $v_k < w_k \leq w_{k-1}=v_{k-1}$, or
  \item 
    $l=k$, $w_i=v_i$ for $i<k$ and
    $w_k < v_k \not\leq w_{k-1}=v_{k-1}$.
  \end{enumerate}
\end{defi}

For each stack $s$, we now characterise $\ll$ restricted to
$\genMilestones(s)$. The straightforward proofs of the following
lemmas are left to the
reader. 

\begin{lem}
  Let $s=w_1:w_2:\dots :w_k$ be a stack and
  $m_1,m_2\in\genMilestones(s)$.
  Then  $m_1 \ll m_2$ if one of the following holds:
  \begin{enumerate}[\em(1)]
  \item $\lvert m_1 \rvert < \lvert m_2 \rvert$,
  \item  $\lvert m_1 \rvert = \lvert m_2 \rvert=i$,
    $w_{i-1}\sqcap w_i < \TOP{2}(m_1) \leq w_{i-1}$ and
    $w_{i-1}\sqcap w_i \leq \TOP{2}(m_2) \leq w_i$, 
  \item  $\lvert m_1 \rvert = \lvert m_2 \rvert=i$ and
    $w_{i-1}\sqcap w_i < \TOP{2}(m_2) \leq \TOP{2}(m_1) \leq
    w_{i-1}$, or
  \item  $\lvert m_1 \rvert = \lvert m_2 \rvert=i$ and
    $w_{i-1}\sqcap w_i \leq  \TOP{2}(m_1) \leq \TOP{2}(m_2) \leq
    w_{i}$.
  \end{enumerate}  
\end{lem}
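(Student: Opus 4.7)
The plan is a straightforward case analysis on which of the four conditions is assumed, with each case reducing to one (or at most two, combined with transitivity) applications of the defining clauses of $\ll$.

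First, case (1) is immediate from the first defining clause of $\ll$, since $\lvert m_1\rvert < \lvert m_2\rvert$ is precisely that clause. For cases (3) and (4) one has $\lvert m_1\rvert=\lvert m_2\rvert=i$, and since $m_1, m_2\in\genMilestones(s)$ their first $i-1$ words coincide with $w_1,\ldots,w_{i-1}$, so only the topmost words may differ. In case (3) the hypothesis $\TOP{2}(m_2)\leq\TOP{2}(m_1)\leq w_{i-1}$ is exactly the inequality required by the second defining clause of $\ll$ (with reflexivity absorbing the subcase $\TOP{2}(m_1)=\TOP{2}(m_2)$). In case (4) one applies the third defining clause, which requires $\TOP{2}(m_1)<\TOP{2}(m_2)\not\leq w_{i-1}$; the strict inequality is handled by reflexivity if necessary, and the non-prefix condition follows from the standard property that any word which is simultaneously a prefix of $w_{i-1}$ and of $w_i$ must be a prefix of $w_{i-1}\sqcap w_i$, combined with $\TOP{2}(m_2)>w_{i-1}\sqcap w_i$ and $\TOP{2}(m_2)\leq w_i$.

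For case (2) the two topmost words live on opposite sides of the split between $w_{i-1}$ and $w_i$, so I would insert the intermediate generalised milestone $m^{*}:=w_1:\ldots:w_{i-1}:(w_{i-1}\sqcap w_i)$, which plainly lies in $\genMilestones(s)$. The argument of case (3) then gives $m_1\ll m^{*}$ (or equality if $\TOP{2}(m_1)=w_{i-1}\sqcap w_i$, but the hypothesis $w_{i-1}\sqcap w_i<\TOP{2}(m_1)$ rules this out and the second clause applies verbatim), and the argument of case (4) gives $m^{*}\ll m_2$ (with reflexivity in the degenerate subcase $\TOP{2}(m_2)=w_{i-1}\sqcap w_i$). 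Transitivity of $\ll$ concludes.

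I do not expect a real obstacle: the statement is essentially a bookkeeping of how the three defining clauses of $\ll$ together induce a linear order on the generalised milestones of a given width. The only mildly delicate point is verifying the non-prefix condition $\TOP{2}(m_2)\not\leq w_{i-1}$ in the third clause, which reduces to the characterisation of $\sqcap$ recalled above.
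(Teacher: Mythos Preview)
Your proposal is correct and is precisely the straightforward case analysis the paper has in mind; indeed, the paper does not spell out a proof at all, stating only that ``the straightforward proofs of the following lemmas are left to the reader.'' Your reduction of cases~(3) and~(4) directly to the second and third defining clauses of~$\ll$, and your handling of case~(2) via the intermediate milestone $m^*$ together with transitivity, is exactly the intended argument.
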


\begin{lem}
  For each stack $s$, $\ll$ induces a finite linear order on
  $\genMilestones(s)$. 
  Moreover, 
  if $m\in\genMilestones(s)$ then $(\genMilestones(m),\ll)$ is the
  initial segment of $(\genMilestones(s),\ll)$ up to $m$. 
\end{lem}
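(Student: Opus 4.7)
The plan is to prove, in order, finiteness of $\genMilestones(s)$, linearity of $\ll$ on $\genMilestones(s)$, and the initial-segment property. For finiteness, I would fix $s = w_1:w_2:\cdots:w_k$ with $w_0 := \bot$, and observe that every generalised milestone has width $n \in \{1, \ldots, k\}$, and its topmost word $v_n$ must be a prefix of either $w_{n-1}$ or $w_n$; since both of these are finite fixed words, summing over the $k$ widths yields $\lvert\genMilestones(s)\rvert < \infty$.

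For linearity, I would apply the characterisation from the immediately preceding lemma. Given distinct $m_1, m_2 \in \genMilestones(s)$, I first case-split on widths: if they differ, clause~(1) of the characterisation applies. If both widths equal $n$, both topmost words $u_1, u_2$ lie on the ``bowtie'' $w_{n-1} \to w_{n-1} \sqcap w_n \to w_n$; depending on which side of $w_{n-1} \sqcap w_n$ each $u_j$ sits, and on their relative lengths when on the same side, exactly one of clauses~(2)--(4) applies. To obtain antisymmetry in tandem with total comparability, I would assign each milestone a rank consisting of its width together with its bowtie-position (the $w_{n-1}$-side read in reverse from $w_{n-1}$ towards $w_{n-1}\sqcap w_n$, then the $w_n$-side read forward towards $w_n$), and verify that each of the three generating clauses of $\ll$ strictly increases this rank. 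Since the rank set is linearly ordered, this forces $\ll$ to be a linear order.

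For the initial-segment claim, fix $m = w_1:\cdots:w_{n-1}:v_n \in \genMilestones(s)$ and prove both inclusions. Forward: for $m' \in \genMilestones(m)$ of width $j < n$, its defining conditions involve only $w_1, \ldots, w_j$, which are identical in $s$ and $m$, hence $m' \in \genMilestones(s)$ and clause~(1) of the characterisation gives $m' \ll m$. For $\lvert m'\rvert = n$ with topmost word $v'$, I split on whether $v_n \leq w_{n-1}$ (so $w_{n-1} \sqcap v_n = v_n$) or $v_n \leq w_n$ (so $w_{n-1} \sqcap v_n = w_{n-1} \sqcap w_n$), verify in each sub-case that $v'$ meets the $\genMilestones(s)$ conditions, and read off which of clauses~(2)--(4) witnesses $m' \ll m$. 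Backward: any $m' \in \genMilestones(s)$ with $m' \ll m$ satisfies $\lvert m'\rvert \leq n$ by clause~(1); for strict width the argument is symmetric, and for $\lvert m'\rvert = n$ the information supplied by whichever of (2)--(4) holds is exactly what is needed to verify the $\genMilestones(m)$ conditions for~$v'$.

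The main obstacle is the bookkeeping in the width-$n$ case of the initial-segment claim: the meet $w_{n-1} \sqcap v_n$ appearing in the definition of $\genMilestones(m)$ collapses either to $v_n$ or to $w_{n-1} \sqcap w_n$ depending on which side of the bowtie $v_n$ sits, and in the degenerate case $v_n = w_{n-1} \sqcap w_n$ both disjuncts $v' \leq w_{n-1}$ and $v' \leq v_n$ can hold simultaneously. Once the correct case split is in place, each sub-case reduces to a short routine check.
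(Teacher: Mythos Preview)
The paper does not prove this lemma; it states that ``the straightforward proofs of the following lemmas are left to the reader.'' Your plan supplies exactly such a proof and is correct.

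One clarification you should make explicit in the antisymmetry step: $\ll$ is defined as the reflexive--transitive closure of the three generating clauses on \emph{all} stacks, so a chain witnessing $m_1 \ll m_2$ between two elements of $\genMilestones(s)$ could in principle visit stacks outside $\genMilestones(s)$, where your bowtie rank is not yet defined. This is harmless: clause~(1) strictly increases width, while clauses~(2) and~(3) preserve both the width and the first $k-1$ words. Hence any chain between two width-$n$ generalised milestones of $s$ stays within width~$n$ and therefore within stacks of the form $w_1:\cdots:w_{n-1}:v$. Your rank extends to all such stacks by declaring the secondary component to be $(0,-|v|)$ when $v\leq w_{n-1}$ and $(1,|v|)$ otherwise; with that extension each generating clause strictly increases rank, and antisymmetry follows as you intend.
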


We call some $m\in \genMilestones(s)$ the $i$-th
generalised milestone if it is the $i$-th element of
$\genMilestones(s)$ with respect to $\ll$. 
Later we will see that $\ll$ corresponds to the
order in which the generalised milestones appear in any run from the
initial configuration to $s$. Note that the restriction of $\ll$ to
$\Milestones(s)$ coincides with the substack relation $\leq$.

Now we introduce loops formally and 
characterise
runs connecting generalised milestones in terms of loops. 
Later we
will see that there is a close correspondence between the milestones
of some stack $s$ and the nodes of our encoding $\Encode(s)$. 
This correspondence is one of the key observation in proving that
$\Encode(s)$ yields a tree-automatic encoding of the relation
$\relD$. 

\begin{defi} \label{DefLoop}
  Let $s$ be a stack and $q,q'$ states.
  A \emph{loop} from $(q,s)$ to $(q',s)$ is a run
  $\lambda\in\Runs((q,s),(q',s))$ that does not pass a
  substack of  $\Pop{2}(s)$ and that may pass $\Pop{1}^k(s)$ only if
  the $k$  topmost elements of $\TOP{2}(s)$ are letters with links of
  level $1$. This means that for all $i\in\domain(\lambda)$, if
  $\lambda(i)= (q_i,\Pop{1}^k(s))$ then $\Lvl(\Pop{1}^{k'}(s))=1$ for
  all $0\leq k' < k$. 

  If $\lambda$ is a loop from $(q,s)$ to $(q',s)$ such that
  $\lambda(1)=\Pop{1}(s)=\lambda(\length(\lambda)-1)$%
  , then we call 
  $\lambda$ a \emph{low loop}. 
  If $\lambda$ is a loop from $(q,s)$ to  $(q',s)$ that never passes
  $\Pop{1}(s)$, then we call $\lambda$ a \emph{high loop}. 
\end{defi}

For $s$ some stack, we sometimes write $\lambda$ is a loop \emph{of}
$s$. By this we express that $\lambda$ is a loop and its initial (and
final) stack is $s$. 
We now characterise runs connecting milestones of some stack in terms
of loops.  

\begin{lem} \label{LemmaDecompositioninLoops}
  Let $\rho$ be a run ending in stack
  $s=w_1:w_2:\dots :w_k$. Furthermore, let 
  $m\in\genMilestones(s)\setminus\{s\}$ be such that $\rho$ visits $m$. Let
  $i\in\domain(\rho)$ be maximal such that $\rho$ visits $m$ at $i$,
  i.e., the stack at $\rho(i)$ is $m$. 
  Then $\rho$ also visits the $\ll$-minimal generalised milestone
  $m'\in \genMilestones(s)\setminus\genMilestones(m)$ and for
  $i'\in\domain(\rho)$ maximal such that $\rho(i')=m'$,
  $\rho{\restriction}_{[i+1,i']}$ is a loop of $m'$.
\end{lem}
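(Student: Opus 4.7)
The plan is to combine a case analysis of the shape of $m'$ with a link-based rigidity argument inherited from Lemma~\ref{Lem:BlumensathLevel2} and Lemma~\ref{Lem:Howtwolinksevolve}.

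First I would characterise $m'$ explicitly. Writing $m = w_1:\dots:w_{i-1}:v_i$, the linear order $\ll$ on $\genMilestones(s)$ forces $m'$ to be obtained from $m$ by exactly one stack operation: a $\Pop{1}$ (or equivalently a level-$1$ $\Collapse$) when $w_{i-1}\sqcap w_i < v_i \leq w_{i-1}$, a $\Push{\sigma,l}$ with $\sigma,l$ chosen so that the pushed letter agrees in label and link with the corresponding letter of $s$ when $w_{i-1}\sqcap w_i \leq v_i < w_i$, and a $\Clone{2}$ when $v_i = w_i$ (so that $m$ sits at the $\ll$-top of width $i$). In each case I would verify that $m' \in \genMilestones(s)$ and that no element of $\genMilestones(s)$ strictly $\ll$-lies between $m$ and $m'$.

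Second, I would show that $\rho(i+1)$ already has stack $m'$. Since $i$ is maximal with $\rho(i) = m$, the suffix $\rho{\restriction}_{[i,\length(\rho)]}$ ends at $s$ without ever returning to $m$. I would rule out every alternative operation at step $i+1$ by arguing that any other choice either descends strictly below $m$ (so any subsequent return to width $\geq \lvert m \rvert$ goes through a $\Clone{2}$ at a smaller width, producing a topmost word whose level-$2$ links cannot match those required to reach $s$ without first passing through $m$ again), or it alters the topmost word into a word that is not a prefix of any word of $s$ and cannot be corrected without a detour through $m$. Lemma~\ref{Lem:BlumensathLevel2} makes this rigidity precise because the only link-compatible way to rebuild the top of the stack is the canonical one; any deviation forces a forbidden revisit to $m$. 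Hence $\rho(i+1) = m'$ as stacks, so $\rho$ visits $m'$ and the maximal index $i'$ with $\rho(i') = m'$ is well defined.

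Finally, I would verify that $\lambda := \rho{\restriction}_{[i+1,i']}$ is a loop of $m'$ in the sense of Definition~\ref{DefLoop}. Its endpoints have stack $m'$ by construction. If $\lambda$ visited any stack strictly below $\Pop{2}(m')$, the same link-reconstruction argument used above would show that the return to width $\lvert m' \rvert$ produces a top word incompatible with $m'$ unless $\lambda$ revisits $m'$ strictly after $i'$, contradicting the maximality of $i'$. Moreover, if $\lambda$ visited some $\Pop{1}^k(m')$ while one of the $k$ topmost letters of $m'$ carries a level-$2$ link to width $< \lvert m' \rvert - 1$, Lemma~\ref{Lem:Howtwolinksevolve} (applied to the subloop returning to $m'$) would force $\lambda$ to also pass $\Pop{2}(m')$, contradicting the substack bound already established. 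The main obstacle is the rigidity step in the second paragraph: uniformly excluding every off-track transition from $m$ via the link structure is delicate, with the trickiest alternatives being $\Clone{2}$ and $\Push{\sigma,2}$, which create stacks of the correct width but with wrong link targets, so the contradiction must be routed through a careful comparison of link data along the remaining suffix of $\rho$.
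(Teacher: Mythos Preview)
Your three-case classification of $m'$ and your use of Lemma~\ref{Lem:Howtwolinksevolve} for the loop condition match the paper. One small slip: in the loop verification you must exclude every substack of $\Pop{2}(m')$, including $\Pop{2}(m')$ itself, not just stacks ``strictly below'' it.

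The real gap is your second paragraph. Lemma~\ref{Lem:BlumensathLevel2} is a prefix-replacement lemma: given $s\prefixeq\rho$ and a compatible $u$, it produces another run $\rho[s/u]$. It says nothing about which operations are forced at a given configuration, and it does not make precise any claim that ``the only link-compatible way to rebuild the top of the stack is the canonical one.'' Citing it here does no work. The paper establishes $\rho(i+1)=m'$ without any such machinery, by a direct argument from the maximality of $i$ in each case. In the $\Clone{2}$ case one observes that at the last position $j$ with $\lvert\rho(j)\rvert=\lvert m\rvert$ the first $\lvert m\rvert$ words are already $w_1,\dots,w_{\lvert m\rvert}$ (they are frozen thereafter), so $\rho(j)=m$, hence $j=i$ and the next step must be $\Clone{2}$. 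In the $\Pop{1}$ case $\TOP{2}(m)\not\leq w_{\lvert m\rvert}$, so the $\lvert m\rvert$-th word must be shortened before $w_{\lvert m\rvert}$ can be built; any other step at $i$ forces a later descent through $m$ on the way to the greatest common prefix, contradicting maximality. In the push case the letters of $w_{\lvert m\rvert}$ beyond $\TOP{2}(m)$ cannot be obtained by cloning from $w_{\lvert m\rvert-1}$, so they must be pushed, and again any other step at $i$ forces a revisit. These are elementary observations about how the $\lvert m\rvert$-th word can evolve; no ``link-reconstruction'' lemma is needed here, and in particular Lemma~\ref{Lem:BlumensathLevel2} plays no role.
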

\begin{proof}
  We distinguish the following cases.
  \begin{iteMize}{$\bullet$}
  \item Assume that $m' = \Clone{2}(m)$. In this case 
    \mbox{$m = w_1 : w_2 : \dots : w_{\lvert m \rvert}$}. Thus, at the last
    position $j\in\domain(\rho)$ where $\lvert \rho(j) \rvert =\lvert
    m \rvert$, the stack at 
    $\rho(j)$ is $m$ (because $\rho$ never changes the first $\lvert
    m\rvert$ many words after passing $\rho(j)$). Hence, $i=j$ by
    definition. Since 
    $\lvert s \rvert > \lvert m\rvert$, it follows directly that the
    operation at 
    $i$ is a $\Clone{2}$ leading to $m'$. Note that $\rho$ never
    passes  a stack of 
    width $\lvert m \rvert$ again. Thus, it follows from Lemma 
    \ref{Lem:Howtwolinksevolve} that for $i'$ maximal with
    $\rho(i')=m'$ the run
    $\rho{\restriction}_{[i+1,i']}$  never 
    visits $\Pop{1}^k(m')$ if
    $\Lvl(\Pop{1}^{k-1}(m'))=2$. Thus, we conclude that 
    $\rho{\restriction}_{[i+1,i']}$ is a loop. 
  \item Assume that $m' = \Pop{1}(m)$. In this case, 
    $m = w_1: w_2: \dots: w_{\lvert m\rvert -1}: w$ for some $w$ such that
    $w_{\lvert m\rvert-1} \sqcap w_{\lvert m\rvert} < w \leq
    w_{\lvert m \rvert -1}$. 
    Thus, $w\not\leq w_{\lvert m\rvert}$ and creating $w_{\lvert m\rvert}$
    as the $\lvert m\rvert$-th word 
    on the stack requires passing $w_1: w_2: \dots :w_{\lvert m \rvert-1}:
    w_{\lvert m\rvert-1} \sqcap  w_{\lvert m\rvert}$. This is only possible
    via applying $\Pop{1}$ or $\Collapse$ of level $1$ to $m$. Since
    we assumed $i$ 
    to be maximal, the operation at $i$ must be $\Pop{1}$ or
    $\Collapse$ of level $1$ and leads to $m'$. 

    We still have to show that $\rho{\restriction}_{[i+1,i']}$
    is a loop. 
    By definition of $i$, $\rho{\restriction}_{[i+1,i']}$
    starts and ends in $m'$. 
    Due to the maximality of $i$, $\rho{\restriction}_{[i+1,i']}$
    does not visit the stack \mbox{$\Pop{2}(m) = \Pop{2}(m')$.} 
    Furthermore, 
    $\TOP{1}(m')$ is a cloned element.
    Due to Lemma 
    \ref{Lem:Howtwolinksevolve}, 
    if
    $\rho{\restriction}_{[i+1,i']}$ visits
    $\Pop{1}^k(m')$ then
    $\Lvl(\Pop{1}^{k-1}(m'))=1$.
    Thus, $\rho{\restriction}_{[i+1,i']}$ is a loop.  
  \item The last case is $m' = \Push{\sigma,l}(m)$ for
    $(\sigma,l)\in\Sigma\times\{1,2\}$. In this
    case, 
    \begin{align*}
      m = w_1: w_2: \dots: w_{\lvert m\rvert-1}: w      
    \end{align*}
    for some $w$
    such that 
    $w_{\lvert m\rvert-1} \sqcap w_{\lvert m\rvert} \leq w <
    w_{\lvert m\rvert}$. 
    Creating $w_{\lvert m\rvert}$ on the stack requires pushing the
    missing symbols onto the stack as they cannot be obtained via
    clone operation  from the previous word. Since $i$ is maximal,
    the operation at $i$ is some $\Push{\sigma,l}$ leading to
    $m'$.
    $\rho{\restriction}_{[i,i']}$ is a high loop due
    to the maximality of $i$ (this part of $\rho$ never visits
    $m=\Pop{1}(m')$ or any other proper substack of $m'$).\qedhere
  \end{iteMize}\vspace{3 pt}
\end{proof}

\noindent As a  corollary of the lemma, we obtain that $\ll$ coincides with
the order in which the generalised milestones appear for the last time
in a given run starting in the initial configuration.

\begin{cor} \label{Cor:OrderEmbedding}
  Let $s$ be some stack and $m_1\in\Milestones(s)$.
  Some run $\rho\in \Runs(m_1,s)$ that does not visit
  substacks of $m_1$ (after the initial configuration)
  decomposes as 
  \begin{align*}
    \rho= \rho_1 \circ
    \lambda_1 \circ \dots \circ \rho_n \circ \lambda_n    
  \end{align*}
  where the
  $\lambda_i$ are loops and the $\rho_i$ are runs of length $1$ that
  connect one generalised milestone of $s$ with its $\ll$-successor
  (in $\genMilestones(s)$). 

  In particular, for $t=\Pop{2}^k(s)$ and configurations
  $c=(q,t)$ and $c'=(q',s)$ a run $\rho\in\Runs(c,c')$ witnesses
  $(c,c')\in \relD$ if and only if it decomposes as given above.
\end{cor}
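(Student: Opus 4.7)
The plan is to prove the decomposition by iterated application of Lemma \ref{LemmaDecompositioninLoops}. Since $m_1\in\Milestones(s)\subseteq\genMilestones(s)$ and $\genMilestones(s)$ is linearly ordered by $\ll$, I enumerate the elements of $\genMilestones(s)$ that are $\ll$-above $m_1$ as $m_1\ll m_2\ll\cdots\ll m_N=s$. I first observe that the hypothesis on $\rho$ forces its last visit of $m_1$ to occur at position $0$ (since $m_1$ is a substack of itself). Then I apply Lemma \ref{LemmaDecompositioninLoops} with $m=m_1$: it asserts that $\rho$ visits the $\ll$-successor of $m_1$ in $\genMilestones(s)$, which is precisely $m_2$, and that the portion from position $1$ to the last visit $i_2$ of $m_2$ is a loop $\lambda_1$ at $m_2$; inspecting the proof of that lemma shows that the step at position $0$ is one of $\Clone{2}$, $\Pop{1}$, $\Collapse$ of level $1$, or $\Push{\sigma,l}$, so that $\rho_1\coloneqq\rho{\restriction}_{[0,1]}$ has length $1$. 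I then iterate on the suffix $\rho{\restriction}_{[i_2,\length(\rho)]}$, which is again a run ending in $s$ whose last visit of $m_2$ is its initial position; each iteration extracts one further pair $(\rho_k,\lambda_k)$. The iteration terminates when the current milestone equals $s$, producing the desired decomposition $\rho=\rho_1\circ\lambda_1\circ\cdots\circ\rho_n\circ\lambda_n$.

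For the ``in particular'' clause, set $m_1\coloneqq t$; since $t=\Pop{2}^k(s)$ we have $t\in\Milestones(s)$, so the first paragraph applies to any $\rho\in\Runs(c,c')$ witnessing $(c,c')\in\relD$. For the converse, assume $\rho$ admits the stated decomposition and verify that $\rho(i)\not\leq t$ for every $i>0$. The characterization of $\ll$ on $\genMilestones(s)$ implies $\lvert m_i\rvert\geq \lvert m_2\rvert=\lvert m_1\rvert+1$ for all $i\geq 2$ (because $m_1$, having topmost word $w_{\lvert m_1\rvert}$, is $\ll$-maximal within its width class), so $t$ is itself a substack of $\Pop{2}(m_i)$ for each $i\geq 2$; by definition of a loop, $\lambda_i$ avoids every substack of $\Pop{2}(m_i)$ and in particular every substack of $t$. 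The single-step transitions $\rho_i$ land in stacks of width strictly larger than $\lvert t\rvert$ (being $\Clone{2}$, $\Pop{1}$, $\Collapse$ of level $1$, or $\Push{\sigma,l}$ applied to some $m_i$ with $\lvert m_i\rvert\geq\lvert t\rvert+1$), which are not substacks of $t$. This establishes both implications.

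The main difficulty is essentially bookkeeping: one must justify that the $\ll$-successors produced by the successive applications of Lemma \ref{LemmaDecompositioninLoops} line up with the chain $m_1\ll m_2\ll\cdots\ll m_N$, and that the ``last visit'' positions used to cut $\rho$ into pieces nest correctly. Both reduce to the observation that the suffix of $\rho$ obtained after removing its prefix up to the last visit of $m_k$ is again a run ending in $s$ whose last visit of $m_k$ is its initial position, so the lemma applies with exactly the same formal setup to this suffix and delivers $m_{k+1}$ together with the next loop.
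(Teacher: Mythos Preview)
Your proposal is correct and follows essentially the same approach as the paper. The paper leaves the main decomposition entirely implicit (as the obvious iteration of Lemma~\ref{LemmaDecompositioninLoops}) and for the converse direction simply notes that $\rho_1$ performs a $\Clone{2}$, after which the run $\lambda_1\circ\rho_2\circ\dots\circ\rho_n\circ\lambda_n$ cannot visit $t$ again; your width argument (that all $m_i$ with $i\geq 2$ have width at least $\lvert t\rvert+1$, so loops and single steps stay above width $\lvert t\rvert$) is the same observation spelled out in more detail, modulo a minor indexing slip (each $\lambda_i$ is a loop of $m_{i+1}$, not of $m_i$).
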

For the direction from right to left of the last claim note 
that, if $t=\Pop{2}^k(s)$ and \mbox{$\rho\in\Runs(t,s)$} decomposes as
above, then $\rho_1$ performs a $\Clone{2}$. This implies that the run
\mbox{$\lambda_1\circ \rho_2 \circ \dots\circ \rho_n\circ\lambda_n$}
 cannot visit $t$ again. 

This corollary shows that it is sufficient to understand loops of
generalised milestones of a given stack $s$ in order to understand runs 
in $\relD$ ending in $s$. 
In Section \ref{sec:CompLoops} we show that the loops of a given stack $s$
can be computed by a finite automaton on input $\TOP{2}(s)$.

\subsection{Returns, 1-Loops and Decreasing Runs}
\label{subsec:Return}
Having analysed the form of runs in $\relD$, 
we now analyse runs in the converse direction: how can
we decompose a run in $\relA$? 
We need the notion of  \emph{returns}
and of
\emph{level-$1$-loops} in order to answer this question.

\begin{defi}  \label{DefReturn}
  Let $t=s:w$ be some stack with topmost word $w$. A \emph{return from
    $t$ to $s$} is 
  a run $\rho\in\Runs(t,s)$ such that $\rho$ never visits a
  substack of $s$ before $\length(\rho)$ and
  such that one of the following holds:
  \begin{enumerate}[(1)]
  \item the last operation in $\rho$ is $\Pop{2}$, or
  \item the last operation in $\rho$ is a $\Collapse$ and 
    $w < \TOP{2}(\rho(\length(\rho)-1))$, i.e., $\rho$ pushes at first some new
    letters onto $t$ and then performs a collapse of one of these new
    letters, or 
  \item  there is some
    $i\in\domain(\rho)$ such that
    $\rho{\restriction}_{[i,\length(\rho)]}$ is a return from
    $\Pop{1}(t)$ to $s$. 
  \end{enumerate}
\end{defi}
\begin{rem}
  The technical restrictions in the second condition have the
  following intention.
  A return from
  $t$ to $\Pop{2}(t)$ is 
  a run $\rho$ from $t$ to $\Pop{2}(t)$ that does
  not use the level $2$ links stored in $\TOP{2}(t)$
  (cf.~\cite{KartzowPhd} for a detailed discussion). 
\end{rem}
It is useful to note that any return from some stack $s$ that visits
$\Pop{1}(s)$ in fact satisfies the last condition in the definition of
return. 
\begin{lem} \label{Lem:ReturnVisitsPop1IsCase3}
  Let $\rho$ be some return from a stack $s$ to $\Pop{2}(s)$. If
  $0< i < \length(\rho)$ is the first position such that $\rho$ visits
  $\Pop{1}(s)$ at $i$, then $\rho{\restriction}_{[i,\length(\rho)]}$
  is a return from $\Pop{1}(s)$ to $\Pop{2}(s)$, i.e., $i$ witnesses
  that $\rho$ satisfies the third condition in the definition of a
  return. 
\end{lem}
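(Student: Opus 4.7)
The plan is to perform a case analysis on which of the three clauses in Definition~\ref{DefReturn} witnesses that $\rho$ is a return from $s$ to $\Pop{2}(s)$. In every case the subrun $\rho\restriction_{[i,\length(\rho)]}$ trivially inherits from $\rho$ the property of not visiting any substack of $\Pop{2}(s)$ before its final configuration (note that $\Pop{1}(s)$ itself is not a substack of $\Pop{2}(s)$, because the two stacks have the same width). Hence the task reduces to matching one of clauses~(1)--(3) for the subrun, regarded as a candidate return from $\Pop{1}(s)$ to $\Pop{2}(s)$.

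If $\rho$ satisfies clause~(1), i.e., its last operation is $\Pop{2}$, then so does $\rho\restriction_{[i,\length(\rho)]}$, and clause~(1) applies. If $\rho$ satisfies clause~(2) with $w\coloneqq\TOP{2}(s) < \TOP{2}(\rho(\length(\rho)-1))$, set $w'\coloneqq \TOP{2}(\Pop{1}(s))$; this is $w$ with its last letter removed, so $w' < w < \TOP{2}(\rho(\length(\rho)-1))$, and clause~(2) applies to the subrun with new topmost word~$w'$.

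The recursive clause~(3) is the delicate one. Here a $j \in \domain(\rho)$ is given such that $\rho\restriction_{[j,\length(\rho)]}$ is a return from $\Pop{1}(s)$ to $\Pop{2}(s)$. Since $\rho(j)=\Pop{1}(s)$, the minimality of $i$ yields $i\leq j$. We now case split on how $\rho\restriction_{[j,\length(\rho)]}$ is itself witnessed to be a return: if via clause~(1) or (2), the witness concerns only the final operation and the second-to-last configuration, so it transfers directly to the longer subrun $\rho\restriction_{[i,\length(\rho)]}$; if via clause~(3) with some index $j'\ge j$, then $j'\in\domain(\rho\restriction_{[i,\length(\rho)]})$ as well, and $\rho\restriction_{[j',\length(\rho)]}$ is a return from $\Pop{1}(\Pop{1}(s))$ to $\Pop{2}(s)$, so $j'$ witnesses clause~(3) for the longer subrun.

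I expect the main obstacle to be the recursive character of clause~(3), but since every witness relevant to clauses~(1)--(3) depends only on the tail of $\rho$ strictly beyond position $j$, it is shared by $\rho\restriction_{[i,\length(\rho)]}$ and $\rho\restriction_{[j,\length(\rho)]}$. No genuine induction is therefore required; the lemma is obtained by a single case analysis followed by a one-step reduction in the third case.
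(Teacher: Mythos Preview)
Your proof is correct. The paper states this lemma without proof, treating it as an immediate consequence of the recursive structure of Definition~\ref{DefReturn}; your case analysis spells out precisely that reasoning, and the key observation---that the witnesses for clauses~(1)--(3) depend only on a tail of~$\rho$ that is already contained in $\rho\restriction_{[i,\length(\rho)]}$---is exactly what makes the argument go through without induction.
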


In the case that the topmost word of the initial stack only contains
cloned elements, then there is an easy condition to verify that a run
starting at this stack is a return. 
\begin{lem}\label{lem:ReturnsOfClonedWords}
  Let $s$ and $t$ be stacks. Assume that $\lvert t \rvert <
  \lvert s \rvert$ and that $\TOP{2}(s) \leq \TOP{2}(t)$. 
  For $\rho$ some run of length $l$ starting at stack $s$, $\rho$ is a return if
  \begin{enumerate}[\em(1)]
  \item for all $0\leq i < l, \lvert \rho(i) \rvert \geq
    \lvert s \rvert$ and
  \item \label{Ass2-lem:ReturnsOfClonedWords}
    $\lvert t \rvert \leq \lvert \rho(l) \rvert < \lvert s \rvert$.
  \end{enumerate}
\end{lem}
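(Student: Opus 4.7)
The plan is to induct on the length $l$ of $\rho$, after first establishing the key observation that $\rho(l) = \Pop{2}(s)$. The final operation takes $\rho$ from width $\geq \lvert s \rvert$ to width $< \lvert s \rvert$, so it must be either $\Pop{2}$ or a level-$2$ $\Collapse$. Because $\TOP{2}(s) \leq \TOP{2}(t)$, every element of $\TOP{2}(s)$ also occurs in $\TOP{2}(t)$ and hence has level-$2$ link value at most $\lvert t \rvert - 1$; a $\Collapse$ on such an element would reduce the width to at most $\lvert t \rvert - 1$, contradicting $\lvert t \rvert \leq \lvert \rho(l) \rvert$. So the $\Collapse$ element must have been pushed during $\rho$ at a width $\geq \lvert s \rvert$, which forces its link value to be $\geq \lvert s \rvert - 1$; combined with $\lvert \rho(l) \rvert < \lvert s \rvert$, the link value must equal exactly $\lvert s \rvert - 1$. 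In either case the final width is $\lvert s \rvert - 1$, and since the first $\lvert s \rvert - 1$ words of the stack are untouched throughout the run, I obtain $\rho(l) = \Pop{2}(s)$.

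The base case $l = 1$ is then immediate: only $\Pop{2}$ can realise the transition, so return condition (1) applies. For $l > 1$ I would split on whether $\rho$ visits $\Pop{1}(s)$. If it does, at some minimal index $i > 0$, then $\rho{\restriction}_{[i, l]}$ runs from $\Pop{1}(s)$ to $\Pop{2}(\Pop{1}(s)) = \Pop{2}(s)$ and still satisfies the lemma's hypotheses with $s' := \Pop{1}(s)$ and $t' := \Pop{1}(t)$ (using $\Pop{1}(\TOP{2}(s)) \leq \Pop{1}(\TOP{2}(t))$ and the transferred width bounds). The induction hypothesis turns this suffix into a return from $\Pop{1}(s)$ to $\Pop{2}(s)$, and return condition (3) then makes $\rho$ itself a return.

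If $\rho$ never visits $\Pop{1}(s)$, a final $\Pop{2}$ gives return condition (1) outright, and the delicate remaining case is a final level-$2$ $\Collapse$, for which I must verify return condition (2): $\TOP{2}(s) < \TOP{2}(\rho(l-1))$. I would prove this via the invariant that, throughout this case, every word of $\rho(i)$ (for $i < l$) containing a letter with level-$2$ link value $\lvert s \rvert - 1$ strictly extends $\TOP{2}(s)$. The invariant holds at $\rho(0) = s$ vacuously (by the link bound no such letter is present) and is preserved by each stack operation: $\Clone{2}$ duplicates the property; $\Pop{2}$ discards a word; $\Pop{1}$ either destroys the link-$(\lvert s \rvert - 1)$ status of the topmost letter or leaves a shorter word still extending $\TOP{2}(s)$; and a new link-$(\lvert s \rvert - 1)$ letter can only be pushed at width $\lvert s \rvert$, where the topmost word already extends $\TOP{2}(s)$ (a $\Pop{1}$ at width $\lvert s \rvert$ with topmost word exactly $\TOP{2}(s)$ would produce $\Pop{1}(s)$, which we have excluded, so the $\lvert s \rvert$-th word never drops below $\TOP{2}(s)$). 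Applied at $\rho(l-1)$, whose topmost letter has link $\lvert s \rvert - 1$ and therefore lies outside $\TOP{2}(s)$, the invariant yields $\TOP{2}(s) < \TOP{2}(\rho(l-1))$. The main obstacle is precisely this last invariant on link-$(\lvert s \rvert - 1)$ letters: the propagation of the ``extends $\TOP{2}(s)$'' property through clones, pops, and the interleaving of level-$1$ and level-$2$ operations requires uniform care.
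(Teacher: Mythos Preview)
Your proof is correct and follows essentially the same strategy as the paper's: both rest on the observation that level-$2$ links in $\TOP{2}(s)$ point below $\lvert t\rvert$, both split on whether $\rho$ visits $\Pop{1}(s)$, and both apply induction (you on the run length $l$, the paper on $\lvert\TOP{2}(s)\rvert$). Your explicit invariant on link-$(\lvert s\rvert-1)$ letters spells out rigorously what the paper compresses into one sentence about creating the word $wv$ at width $\geq\lvert s\rvert$; the paper does not isolate $\rho(l)=\Pop{2}(s)$ up front as you do, but it is implicit in its argument.
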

\begin{proof}
  The proof is by induction on the length of $\TOP{2}(s)$. 
  Note that the operation at $\rho(l-1)$ is either $\Pop{2}$ or a
  $\Collapse$ of level 2. If it is a $\Pop{2}$, then it is immediate
  that $\rho$ is a return. If it is a $\Collapse$ of level $2$, then
  there is a prefix $w\leq \TOP{2}(s)$ and some word $v$ such that
  $\TOP{2}(\rho(l-1))=wv$. Since all level $2$ links in $\TOP{2}(s)$
  point to stacks of width smaller than $t$, $v$ must be nonempty by
  assumption \eqref{Ass2-lem:ReturnsOfClonedWords}. Thus, if $w=\TOP{2}(s)$ we conclude immediately that
  $\rho$ is a return. if $w<\TOP{2}(s)$ then the only way to create a
  word $wv$ on a stack of width at least $\lvert s \rvert$ with a
  link to a stack of smaller width requires to visit
  $\Pop{2}(s):w$. But this implies that $\rho$ visits $\Pop{1}(s)$ and
  we conclude by application of the induction hypothesis to the final
  part of $\rho$ starting at the first occurrence of $\Pop{1}(s)$. 
\end{proof}

\begin{defi}
  Let $s$ be some stack and $w$ some word. 
  A run $\lambda$ of length $n$ is called a \emph{level-$1$-loop} (or
  $1$-loop) of $s:w$ 
  if the following conditions are satisfied.
  \begin{enumerate}[(1)]
  \item $\lambda\in\Runs(s:w,s:s':w)$ for some $2$-word $s'$,
    \label{OneLoopCondTwo}
  \item for every $i\in\domain(\lambda)$, $\lvert \lambda(i)\rvert >
    \lvert s \rvert$,  \label{OneLoopCondThree} and
  \item for every $i\in\domain(\lambda)$ such that  
    $\TOP{2}(\lambda(i))=\Pop{1}(w)$, there is some $j>i$ such that
    $\lambda{\restriction}_{[i,j]}$ is a return. 
  \end{enumerate}
\end{defi}

Before we  analyse the form of runs in $\relA$, we prove an auxiliary
lemma.

\begin{lem}
  Let $\rho$ be a run of length $l$ starting in some stack $s$ with
  topmost word $w$ such that 
  \begin{enumerate}[\em(1)]
  \item $\rho$ does not visit a substack of
    $\Pop{2}(s)$ before its final configuration,
  \item $\TOP{2}(\rho(l-1))$ is a proper prefix of $w$, and
  \item the last stack operation in $\rho$ is a collapse of level
    $2$. 
  \end{enumerate}
  If $\rho$ is not a return, then there is some $0 < i < l$ such that 
  \begin{enumerate}[\em(1)]
  \item   $\TOP{2}(\rho(i))=\Pop{1}(w)$,
  \item  for all $i\leq j \leq l$ the
    subrun $\rho{\restriction}_{[i,j]}$ is not a return, and
  \item\label{ass3-LemNoReturnProperties} for all $0<j < i$ such that
    $\TOP{2}(\rho(j))=\Pop{1}(w)$ there is a $j<k<i$ such that
    $\rho{\restriction}_{[j,k]}$ is a return. 
  \end{enumerate}
\end{lem}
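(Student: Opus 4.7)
The plan is to define $i$ as the least element of $J := \{j : 0 < j < l,\ \TOP{2}(\rho(j)) = \Pop{1}(w)\}$ that admits no return tail (that is, no $i < k \leq l$ with $\rho{\restriction}_{[i,k]}$ a return), and then to check the three conditions. The three main subtasks will be: (a) nonemptiness of $J$, (b) existence of such a candidate $i$ within $J$ using the non-return hypothesis on $\rho$, and (c) verification of the covering condition~(3).

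For (a), I would let $m \in \{1,\dots,l-1\}$ be the least index with $\TOP{2}(\rho(m))$ a proper prefix of $w$; such $m$ exists since $\TOP{2}(\rho(l-1)) < w$. A case analysis on the stack operation from $\rho(m-1)$ to $\rho(m)$ rules out all options other than $\Pop{1}$ or $\Collapse$ of level~$1$. The push and $\Clone{2}$ cases preserve or extend the top word, which directly contradicts minimality of $m$. The $\Pop{2}$ and level-$2$ $\Collapse$ cases would require the word at position $\lvert\rho(m)\rvert$ of $\rho(m-1)$---which, since widths grow by one via $\Clone{2}$ and stay at least $\lvert s\rvert$ throughout the run, must have been the top word at the last moment when the width was exactly $\lvert\rho(m)\rvert$---to already be a proper prefix of $w$, once again contradicting minimality. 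The remaining operations, combined with a parallel backtracking argument against $\TOP{2}(\rho(m-1)) \neq w$ (which would require an earlier push of a non-matching letter onto a proper-prefix top word, contradicting minimality yet again), force $\TOP{2}(\rho(m-1)) = w$ and hence $\TOP{2}(\rho(m)) = \Pop{1}(w)$, so $m \in J$.

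For (b), I would argue by contradiction: suppose every $j \in J$ admits a return tail and let $j^{*}$ be the maximal element of $J$. Between $j^{*}$ and $l$, by reapplying step~(a) to the tail of $\rho$ starting whenever the top word momentarily leaves the set of proper prefixes of $w$, no new element of $J$ can appear beyond $j^{*}$; hence the return tail $\rho{\restriction}_{[j^{*},k^{*}]}$ of $j^{*}$ must extend all the way to $l$. If $\lvert\rho(j^{*})\rvert = \lvert s\rvert$, then $\rho(j^{*}) = \Pop{1}(s)$ and clause~(3) of the definition of a return makes $\rho$ itself a return, contradicting the hypothesis. If $\lvert\rho(j^{*})\rvert > \lvert s\rvert$, then $\rho(l) = \Pop{2}(\rho(j^{*}))$ has width at least $\lvert s\rvert$, which, combined with the last operation being a $\Collapse$ of level~$2$ applied to $\rho(l-1)$ whose top word is a proper prefix of $w$, yields a contradiction by tracking the link value of the top letter of $\rho(l-1)$.

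Conditions~(1) and~(2) then follow by construction of $i$. For~(3), any $j' \in J$ with $j' < i$ admits a return tail $\rho{\restriction}_{[j',k_{j'}]}$ by minimality of $i$; taking $k_{j'}$ minimal, if $k_{j'} < i$ we are done. If $k_{j'} \geq i$, I would exploit that $\TOP{2}(\rho(j')) = \TOP{2}(\rho(i)) = \Pop{1}(w)$ and that the first $\lvert\rho(j')\rvert - 1$ stack words are frozen throughout the return to show that $\rho$ already reaches $\Pop{2}(\rho(j'))$ at some $i' \in (j',i]$---either directly or via an intermediate visit to $\Pop{1}(\rho(j'))$ to which Lemma~\ref{Lem:ReturnVisitsPop1IsCase3} applies---contradicting the minimality of $k_{j'}$ and producing a return ending strictly before $i$. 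The main obstacle will be subtask~(b), and especially the level-$2$ link analysis that rules out the width $> \lvert s\rvert$ subcase.
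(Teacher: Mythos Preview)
Your overall strategy matches the paper's --- establish that some index with top word $\Pop{1}(w)$ admits no return tail, then take the minimal such and verify condition~(3) --- but both of the hard steps (b) and (c) have genuine gaps. In (b), the inference ``hence the return tail $\rho{\restriction}_{[j^{*},k^{*}]}$ must extend all the way to $l$'' is unjustified. You appeal to ``reapplying step (a)'' to the tail after $k^{*}$, but step (a) needs the starting top word to be $w$; the top word at $k^{*}$ is the $(\lvert\rho(j^{*})\rvert-1)$-th word of $\rho(j^{*})$, which may well be a proper prefix of $\Pop{1}(w)$, allowing the run to proceed from $k^{*}$ to $l-1$ without ever revisiting top word $\Pop{1}(w)$. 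The paper does not argue about the maximal element of $J$ at all: it case-splits on the stack $\rho(l-1)=w_1{:}\cdots{:}w_k$, locating either the first position at width $\lvert s\rvert$ where $w$ fails to be a prefix of the top word, or (when $w\leq w_{\lvert s\rvert}$) the least height $n>\lvert s\rvert$ with $w\nleq w_n$ and then the first position at width $n$ after the last rise to width $n$ whose top word drops below $w$. In each case the width bound directly rules out any return tail.

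In (c), your claim that ``$\rho$ already reaches $\Pop{2}(\rho(j'))$ at some $i'\in(j',i]$'' is false when $\lvert\rho(j')\rvert\leq\lvert\rho(i)\rvert$: during the return from $j'$ the width stays at least $\lvert\rho(j')\rvert$ until the very end $k_{j'}$, so if $k_{j'}>i$ then $\Pop{2}(\rho(j'))$ is simply not visited in $(j',i]$, and Lemma~\ref{Lem:ReturnVisitsPop1IsCase3} does not help. The paper handles this case by observing $\lvert\rho(k_{j'})\rvert=\lvert\rho(j')\rvert-1<\lvert\rho(i)\rvert$, taking the first $k_0\geq i$ with $\lvert s\rvert\leq\lvert\rho(k_0)\rvert<\lvert\rho(i)\rvert$, and invoking Lemma~\ref{lem:ReturnsOfClonedWords} (with $\TOP{2}(\rho(i))=\Pop{1}(w)\leq w=\TOP{2}(s)$) to conclude that $\rho{\restriction}_{[i,k_0]}$ is itself a return --- contradicting property~(2) of $i$, not the minimality of $k_{j'}$.
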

\begin{proof}
  Assume that $\rho$ is such a run and that it is not a return. 
  We first prove that there is some $0<i<l$ such that
  \begin{iteMize}{$\bullet$}
  \item   $\TOP{2}(\rho(i))=\Pop{1}(w)$ and
  \item  for all $i\leq j \leq l$ the
    subrun $\rho{\restriction}_{[i,j]}$ is not a return.
  \end{iteMize}
  Afterwards we show that the minimal such $i$ also satisfies
  claim \eqref{ass3-LemNoReturnProperties}.
  \begin{enumerate}[(1)]
  \item Assume that the stack at $\rho(l-1)$ decomposes as
    $w_1:w_2:\dots: w_k$ such that $w\not\leq w_{\lvert s\rvert}$. 
    In this case let $i\leq l-1$ be minimal such that 
    $\lvert \rho(i) \rvert = \lvert s \rvert$ and
    $w\not\leq\TOP{2}(\rho(i))$. 
    Since $\rho$ does not visit a substack of $\Pop{2}(s)$ before $i$,
    we conclude immediately that the stack at $i-1$ is $s$ and the
    stack at $i$ is $\Pop{1}(s)$. We show that $i$ witnesses the
    claim. Note that $\TOP{2}(\rho(i))=\Pop{1}(w)$. 
    Heading for a contradiction, assume that there is some $i\leq j
    \leq l$ such that $\rho':=\rho{\restriction}_{[i,j]}$ is a return. 
    This assumption implies directly that $\lvert \rho(j) \rvert <
    \lvert \rho(i) \rvert = \lvert s \rvert$ whence $\rho'$ visits a
    substack of $\Pop{2}(s)$. Since $\rho$ does not do so before $l$,
    we conclude that $l=j$. 
    But this implies that $\rho$ is a run that starts in $s$, passes
    $\Pop{1}(s)$ and continues with a return from $\Pop{1}(s)$. 
    By definition, this implies that $\rho$ is a return
    contradicting  our assumptions.
    Thus, we conclude that there is no $j\geq i$ such that 
    $\rho{\restriction}_{[i,j]}$ is a return.
  \item Otherwise, 
    the stack at $\rho(l-1)$ decomposes as
    $w_1:w_2:\dots: w_k$ for some $k>\lvert s \rvert$ and there is
    some  $n>\lvert s \rvert$ 
    such that 
    $w\leq w_i$ for all $\lvert s \rvert \leq i < n$ and
    $w\not\leq w_{n}$. 
    Let $i_0\leq l-1$ be maximal such that 
    $\lvert \rho(i_0-1) \rvert < n$. 
    Then the stack at $\rho(i_0-1)$ is $w_1:w_2:\dots:w_{n-1}$ and the
    operation at $i_0-1$ is $\Clone{2}$. 
    Thus, $w\leq \TOP{2}(\rho(i_0))$. 
    Let $i_0<i_1<l-1$ be minimal such that 
    $\lvert \rho(i_1) \rvert = n$ and $\TOP{2}(\rho(i_1))<w$. 
    By minimality of $i_1$, we conclude that $\lvert \rho(i_1-1)\rvert
    = n$ whence $\TOP{2}(\rho(i_1-1))=w$ and
    $\TOP{2}(\rho(i_1))=\Pop{1}(w)$. 
    In order to prove that $i_1$ witnesses the claim of the lemma, we
    have to prove that for all $i_1\leq j \leq l$,
    $\rho{\restriction}_{[i_1,j]}$ is not a return. 
    But note that $\lvert \rho(i_1)\rvert = n \leq \lvert \rho(j)
    \rvert$ for all $i_1\leq j < l$ whence
    $\rho{\restriction}_{[i_1,j]}$ is not a return for $i_1\leq j <
    l$. 
    Moreover, since $\rho$ ends in a collapse on a prefix of
    $w=\TOP{2}(\rho(0))$, we conclude that $\lvert \rho(l) \rvert <
    \lvert \rho(0) \rvert = \lvert s \rvert < n$. 
    Thus, $\lvert \rho(i_1) \rvert - \lvert (\rho(l) \rvert \geq 2$
    and $\rho{\restriction}_{[i_1,l]}$ is not a return because it
    does not end in $\Pop{2}(\rho(i_1))$.     
  \end{enumerate}
  This completes the first part of our proof. We still have to deal with
  claim \eqref{ass3-LemNoReturnProperties}. For this purpose
  let  $0<i<l$ be minimal such that
  \begin{iteMize}{$\bullet$}
  \item   $\TOP{2}(\rho(i))=\Pop{1}(w)$ and
  \item  for all $i\leq j \leq l$ the
    subrun $\rho{\restriction}_{[i,j]}$ is not a return.
  \end{iteMize}
  Heading for a contradiction, assume that there is some  $0<j <i$ with
  $\TOP{2}(\rho(j))=\Pop{1}(w)$ such that there is no $j<k\leq i$ such
  that $\rho{\restriction}_{[j,k]}$ is a return.
  
  By minimality of $i$ there is some
  $k > i$ such that $\rho_j:=\rho{\restriction}_{[j,k]}$ is a return. 
  Since $\rho$ is no return, we directly conclude that 
  $\rho(j) \neq \Pop{1}(s)$ whence
  $\lvert \rho(j) \rvert > \lvert s \rvert$. 
  We distinguish two cases.
  \begin{enumerate}[(1)]
  \item If $\lvert \rho(j) \rvert > \lvert \rho(i) \rvert$ then the
    minimal $k_0\geq j$ such that $\lvert \rho(k_0) \rvert < \lvert
    \rho(j) \rvert$ satisfies $k_0\leq i$. But by definition, $k_0$
    is the only candidate for $\rho{\restriction}_{[j,k_0]}$ being  
    a return. Thus, $k=k_0$ which contradicts the assumption that $k>i$.
  \item If $\lvert \rho(j) \rvert \leq \lvert \rho(i) \rvert$, we
    conclude that $\lvert s \rvert \leq \lvert \rho(k) \rvert <\lvert
    \rho(i) \rvert$. Thus, there is also a minimal $k_0\geq i$ such
    that $\lvert s \rvert \leq \lvert \rho(k_0)\rvert < \lvert \rho(i)
    \rvert$. Since $\TOP{2}(\rho(i)) \leq \TOP{2}(s) =w$, we conclude 
    with Lemma \ref{lem:ReturnsOfClonedWords} that
    $\rho{\restriction}_{[i,k_0]}$ is a return contradicting our
    choice of $i$. 
  \end{enumerate}
\end{proof}

With this lemma we are prepared to prove our decomposition result.

\begin{lem} \label{FormLemma}
  Let $s,s'\in\Stacks(\Sigma)$ such that
  $s'= \Pop{2}^k(s)$ for some $k\in\N$ and
  let $\rho$ be some run. 
  $\rho$ is a run in $\Runs(s,s')$
  that does not visit a substack of  $s'$ before its final
  configuration if and only if 
  $\rho\in\Runs(s,s')$ and $\rho$ decomposes as 
  $\rho=\rho_1\circ \rho_2 \circ \dots \circ \rho_n$ where each
  $\rho_i$ is of one of  the following forms.
  \begin{enumerate}[\em F1.]
  \item \label{FormReturn}$\rho_i$ is a return, 
  \item \label{1LoopCol} $\rho_i$ is a
    $1$-loop followed 
    by a $\Collapse$ of collapse level $2$, 
  \item \label{1LoopPop} $\rho_i$ is a 
    $1$-loop followed by a 
    $\Pop{1}$ (or a 
    $\Collapse$ of collapse level $1$), 
    there is a $j>i$ such that 
    $\rho_j$ is of the form F\ref{1LoopCol}
    and for all $i<k<j$ $\rho_k$ is 
    of the form F\ref{1LoopPop}. 
  \end{enumerate}
\end{lem}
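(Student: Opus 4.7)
The plan is to prove both directions by induction on $\length(\rho)$. The $(\Leftarrow)$ direction is routine: I would induct on the number of pieces $n$, noting that an F\ref{FormReturn} piece (a return from $t$ to $\Pop{2}(t)$) stays above $\Pop{2}(t)$ by definition, an F\ref{1LoopCol} piece keeps width at least $\lvert t\rvert$ during its $1$-loop component and then drops via the terminating level-$2$ collapse, and an F\ref{1LoopPop} piece keeps width at least $\lvert t\rvert$ throughout while preserving the first $\lvert t\rvert-1$ words after its terminating level-$1$ operation. Tracking widths across consecutive pieces then confirms that the composition lies in $\Runs(s,s')$ with $s'=\Pop{2}^{k}(s)$ for some $k$ and never visits a substack of $s'$ before the final configuration.

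For the $(\Rightarrow)$ direction, the base case $\length(\rho)=0$ is forced by the non-visiting hypothesis, giving $s=s'$ and an empty decomposition. For the inductive step, set $w=\TOP{2}(s)$ and choose $i_1>0$ minimal such that either $\lvert\rho(i_1)\rvert<\lvert s\rvert$, or $\lvert\rho(i_1)\rvert=\lvert s\rvert$ with $\TOP{2}(\rho(i_1))$ a proper prefix of $w$; the hypothesis $s'=\Pop{2}^{k}(s)$ with $k\geq 1$ guarantees that such an $i_1$ exists. By the minimality of $i_1$ the prefix $\rho{\restriction}_{[0,i_1-1]}$ stays at width at least $\lvert s\rvert$ and, whenever the width equals $\lvert s\rvert$, has topmost word $w$, so it is a candidate $1$-loop of $s$. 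Condition~(3) of the $1$-loop definition at intermediate positions with topmost word $\Pop{1}(w)$ would then be established by invoking the inductive hypothesis together with the preceding auxiliary lemma on those sub-runs to extract the required returns. The operation at $i_1-1$ then determines the first piece: $\Pop{1}$ or $\Collapse$ of level $1$ gives F\ref{1LoopPop}, $\Collapse$ of level $2$ gives F\ref{1LoopCol}, and $\Pop{2}$ applied at exactly $s$ gives an F\ref{FormReturn} return. Applying the inductive hypothesis to $\rho{\restriction}_{[i_1,\length(\rho)]}$ then yields the rest of the decomposition.

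The main obstacle is honouring F\ref{1LoopPop}'s side condition that every F\ref{1LoopPop} piece is eventually followed by an F\ref{1LoopCol} with only F\ref{1LoopPop} pieces in between, and in particular is never immediately followed by an F\ref{FormReturn}. I would handle this by appealing to clause~(3) of the definition of a return: if a run starts at $t$, passes through $\Pop{1}(t)$, and from there is a return from $\Pop{1}(t)$ to $\Pop{2}(t)$, then the entire run is already a return from $t$ to $\Pop{2}(t)$. Hence any would-be sequence of F\ref{1LoopPop} pieces culminating in an F\ref{FormReturn} collapses into one F\ref{FormReturn}; redefining $i_1$ to be the endpoint of this merged return inside the induction absorbs the bad pattern and produces a decomposition of the required form.
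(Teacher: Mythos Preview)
Your $(\Leftarrow)$ direction is fine and essentially matches the paper. The $(\Rightarrow)$ direction, however, has a genuine gap in the case analysis.

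You classify the first piece solely by the operation at $i_1-1$, declaring that a $\Collapse$ of level $2$ always yields an F\ref{1LoopCol} piece. This is not true. Consider the run
\[
s \xrightarrow{\Push{\sigma,2}} s_1 \xrightarrow{\Clone{2}} s_2 \xrightarrow{\Collapse} \Pop{2}(s),
\]
where $\TOP{2}(s_1)=\TOP{2}(s_2)=w\sigma$. Here $i_1=3$, the operation at $i_1-1$ is a level-$2$ collapse, yet $\rho{\restriction}_{[0,2]}$ ends with topmost word $w\sigma\neq w$ and is therefore \emph{not} a $1$-loop of $s$. The whole segment is in fact a return (clause~(2) of Definition~\ref{DefReturn}, since $w<\TOP{2}(\rho(i_1-1))$). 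More generally, whenever the terminal collapse acts on a letter that was freshly pushed above $w$ during the segment, you get a return, not an F\ref{1LoopCol} piece. Your scheme mis-classifies all of these.

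The paper avoids this by reversing the order of the case split: it first asks whether \emph{any} prefix $\rho{\restriction}_{[0,m]}$ is a return; only after ruling that out does it look for the first width drop $m$. At that point the terminal operation is necessarily a collapse with $\TOP{2}(\rho(m-1))\leq w$ (otherwise the segment would already be a return), and the further dichotomy $\TOP{2}(\rho(m-1))=w$ versus $\TOP{2}(\rho(m-1))<w$ cleanly separates F\ref{1LoopCol} from F\ref{1LoopPop}. For the F\ref{1LoopPop} case the paper then invokes the auxiliary lemma immediately preceding Lemma~\ref{FormLemma}, whose hypotheses are exactly ``not a return'' plus ``terminal collapse on a proper prefix of $w$''; that lemma hands back the position $i$ together with both the $1$-loop condition~(3) for the prefix and the guarantee that no return starts at $i$, which is precisely what makes the side condition of F\ref{1LoopPop} go through by induction. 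Your appeal to ``the preceding auxiliary lemma'' to verify condition~(3) does not match its hypotheses; the relevant tool there is Lemma~\ref{lem:ReturnsOfClonedWords}, and it has to be applied after first establishing that the segment does not visit $\Pop{1}(s)$, which in turn uses Lemma~\ref{Lem:Howtwolinksevolve}.
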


\begin{proof}
  First of all, note that the case $\length(\rho)=0$ is solved by
  setting $n:=0$.  
  
  We proceed by induction on the length of $\rho$. 
  Assume that $\rho\in\Runs(s,s')$ and that it does not visit $s'$
  before the final configuration. 
  We write
  $(q_i,s_i)$ for the configuration $\rho(i)$.  
  Firstly, consider the case that there is some $m\in\domain(\rho)$
  such that  $\rho_1:=\rho{\restriction}_{[0,m]}$ is a return.
  Then $\rho_1$ is of the form
  F\ref{FormReturn}. By induction hypothesis,
  $\rho{\restriction}_{[m,\length(\rho)]}$ decomposes as desired. 

  Otherwise, 
  assume that there is no $m\in\domain(\rho)$ such that
  $\rho{\restriction}_{[0,m]}$ is a return. 
  
  Nevertheless, there is a minimal $m\in\domain(\rho)$ such that 
  $\lvert s_m\rvert < \lvert s \rvert$.  
  The last operation of $\hat\rho:=\rho{\restriction}_{[0,m]}$ is a
  $\Collapse$ such that \mbox{$\TOP{2}(s_{m-1}) \leq \TOP{2}(s)$} (otherwise
  $\hat\rho$ would be a return). 
  
  Writing $w:=\TOP{2}(s_{m-1})$, we distinguish two cases.
  \begin{enumerate}[(1)]
  \item 
  First consider the case that $w=\TOP{2}(s)$. 
  Note that this  implies $\Lvl(s)=2$ because the  last operation of
  $\hat\rho$ is a collapse of level $2$.
  
  Furthermore, we claim that  $\hat\rho$ does not
  visit $\Pop{1}(s)$. Heading for a contradiction, assume that 
  $\hat\rho(i) = \Pop{1}(s)$ for some $i\in\domain(\hat\rho)$.
  Since $\hat\rho$ does not visit $\Pop{2}(s)$ between $i$ and $m-1$,
  $\TOP{2}(\hat\rho(m-1))=w$ is only possible if $\Lnk(w) = \lvert s
  \rvert -1$ (cf.~Lemma \ref{Lem:Howtwolinksevolve}).
  But then $\hat\rho{\restriction}_{[i,m]}$ is a return of
  $\Pop{1}(s)$ whence by definition  $\hat\rho$ is a return of
  $s$. This contradicts our  assumption. 

  We claim that $\hat\rho$  
  is a $1$-loop plus a
  $\Collapse$ operation: 
  we have already seen that $\hat\rho$ does not visit any proper
  substack of $s$. 
  Thus, it suffices to show that $\hat\rho$ reaches a stack with
  topmost word $\Pop{1}(w)$ only at positions where a return
  starts. 

  Let $i$ be a position such that 
  $\TOP{2}(\hat\rho(i))=\Pop{1}(w)$. 
  Recall that \mbox{$\TOP{2}(s_{m-1})=w$}, $\Lvl(w)=2$ and 
  $\Lnk(w)\leq\lvert s \rvert -1$. Since
  $\hat\rho$ does not visit $\Pop{1}(s)$, 
  $\lvert \hat\rho(i) \rvert > \lvert s \rvert$ and
  we cannot restore $\TOP{1}(w)$ by a push operation. Thus, there is
  some minimal position $m >  j>i$ such that 
  $\lvert \hat\rho(j) \rvert < \lvert \hat\rho(i) \rvert$. 
  Lemma \ref{lem:ReturnsOfClonedWords} implies that
  $\hat\rho{\restriction}_{[i,j]}$ is a return. 
  
  Thus, $\rho_1:=\hat\rho$ is of the form F\ref{1LoopCol}.
\item 
  For the other case, assume that $w<\TOP{2}(s)$. 
  Since $\hat\rho$ is not a return, we may apply the previous lemma.
  We conclude that there is some
  \mbox{$i\in\domain(\hat\rho)$} such that
  $\rho_1:=\hat\rho{\restriction}_{[0,i]}$ is a $1$-loop
  followed by a $\Pop{1}$ or a collapse of level $1$ and such that
  there is no $j>i$ such that $\hat\rho{\restriction}_{[i,j]}$
  is a return. 
  In order to show that $\hat\rho$
  is of the form  F\ref{1LoopPop} we
  have to  check the side conditions on the segments
  following in the decomposition of $\rho$. 
  For this purpose set $\rho':=\rho{\restriction}_{[i,\length(\rho)]}$. 
  By induction hypothesis $\rho'$ decomposes as 
  $\rho'=\rho_2\circ\rho_3\circ\dots\circ\rho_n$ where
  the $\rho_i$ satisfy the claim of the lemma. 

  Now, by definition of $i$, $\rho'$ does not start with a
  return. Thus, $\rho_2$ is of one of the forms
  F\ref{1LoopCol} or F\ref{1LoopPop}. But these 
  forms require that there is some $j\geq 2$ such that $\rho_j$ is of
  form F\ref{1LoopCol} and for all $2\leq k < j$,
  $\rho_k$ is of the form 
  F\ref{1LoopPop}. 
  From this condition it follows directly that $\rho=\rho_1\circ\rho'
  = \rho_1\circ\rho_2\circ\rho_3\circ\dots\circ\rho_n$
  and $\rho_1$ is of the form F\ref{1LoopPop}.\qedhere\vspace{3 pt}
  \end{enumerate}

  \noindent For the other direction, assume that $\rho\in\Runs(s,s')$ decomposes
  as $\rho=\rho_1\circ \rho_2 \circ \dots \circ \rho_n$ where each
  $\rho_i$ is of one of the forms F1-F3. Let $i_1<i_2<\dots<i_k=n$ be
  the subsequence of subruns of the forms F1 or F2. 
  Let $s_{i_j}$ be the stack of the final configuration of $\rho_{i_j}$. 
  A straightforward
  induction shows that $\lvert s_{i_1} \rvert > \lvert s_{i_2} \rvert
  >\dots > \lvert s_{i_k} \rvert$ and that all stacks that occur after
  the final configuration of $\rho_{i_j}$ and before the final
  configuration of $\rho_{i_{j+1}}$ have width at least $\lvert
  s_{i_j} \rvert$. Analogously, all stacks occurring before the final
  configuration of $\rho_{i_1}$ have width at least $\lvert s
  \rvert$. Thus, we conclude immediately that substacks of $s'$ cannot
  be visited before the final configuration of $\rho$.   
\end{proof}

\subsection{Decompositions for Runs in 
  \texorpdfstring{$\relB$ or  $\relC$}
  {Relations R Down and R Up}}
\label{subsec:DecBC}
The decomposition of runs witnessing that $(c_1,c_2)\in \relA$
or $(c_1,c_2)\in \relD$, respectively, turns out to be very useful for
proving tree-automaticity of $\relA$ and $\relD$. We use similar
characterisations for runs 
witnessing that certain pairs of configurations are contained in
$\relB$ or $\relC$. The proofs of the following characterisations are
straightforward inductions.

\begin{lem} \label{Lem:CharacterisationBC}
  Let $c_1,c_2\in\Conf$ and
  $\rho$ some run.
  \begin{enumerate}[\em(1)]
  \item   $\rho$ witnesses $(c_1,c_2)\in \relB$ if and only if 
    $\rho\in\Runs(c_1,c_2)$ and
    $\rho$ decomposes as
    \[\rho= \lambda_1 \circ \rho_1 \circ \lambda_2 \circ \rho_2 \circ
    \dots \circ \lambda_n \circ \rho_n\] where each $\lambda_i$ is a high
    loop
    and each $\rho_i$ is a run performing exactly one transition which
    is $\Pop{1}$ or a collapse of level $1$. 
  \item 
    Analogously, $\rho$ witnesses $(c_1,c_2)\in \relC$ 
    if and only if $\rho\in\Runs(c_1,c_2)$ and $\rho$ decomposes as
    $\rho= \lambda_0 \circ \rho_1 \circ \lambda_1 \circ \rho_2 \circ
    \dots \circ \lambda_{n-1} \circ \rho_n \circ \lambda_n$ where the
    $\lambda_i$ are high loops and each $\rho_i$ performs exactly one push
    operation.  
  \end{enumerate}
\end{lem}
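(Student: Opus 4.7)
The plan is to prove both parts simultaneously by induction on the parameter $k$ (the number of $\Pop{1}$ transitions for $\relB$, or of $\Push$ transitions for $\relC$). The key invariance underlying both arguments is that the substack hypothesis forces the width of the stacks along $\rho$ to remain at least $\lvert s \rvert$, where $s$ is the stack of $c_1$ in part~(1) or of $c_2$ in part~(2); in particular, any operation performed while the width strictly exceeds $\lvert s \rvert$ leaves the $\lvert s \rvert$-th word of the stack unchanged. A short inductive walk argument on the evolution of this $\lvert s \rvert$-th word then shows that $\rho$ visits each intermediate stack $s^{(j)} := \Pop{1}^j(s)$ for $0 \leq j \leq k$.

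For the forward direction of part~(1), let $q_1$ be the \emph{first} position at which $\rho(q_1)$ has stack $s^{(1)}$. The key claim is that $\rho(q_1-1)$ has stack $s^{(0)}$ and the transition at $q_1 - 1$ is $\Pop{1}$ or a level-$1$ $\Collapse$: otherwise the transition would be $\Pop{2}$ or a level-$2$ $\Collapse$ from higher width, but by the invariance this requires the $\lvert s \rvert$-th word at position $q_1 - 1$ to equal $\Pop{1}(w)$, and reaching such a configuration would force an earlier visit to $s^{(1)}$ at width $\lvert s \rvert$, contradicting the minimality of $q_1$. Setting $\lambda_1 := \rho{\restriction}_{[0, q_1 - 1]}$ and $\rho_1 := \rho{\restriction}_{[q_1 - 1, q_1]}$, the prefix $\lambda_1$ is a high loop of stack $s^{(0)}$: it begins and ends at $s^{(0)}$, stays at width $\geq \lvert s \rvert$, and by the minimality of $q_1$ never visits $\Pop{1}(s^{(0)}) = s^{(1)}$. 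The suffix $\rho{\restriction}_{[q_1, \length(\rho)]}$ then witnesses $(\rho(q_1), c_2) \in \relB$ with parameter $k-1$, and the induction hypothesis supplies the remaining $\lambda_2 \circ \rho_2 \circ \dots \circ \lambda_n \circ \rho_n$.

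Part~(2) is dual but with the opposite split point: take $p$ to be the \emph{last} position at which $\rho$ visits the stack of $c_1$, and set $t^{(1)} := \Pop{1}^{k-1}(c_2)$. I would argue that the transition at $p$ must be a $\Push$ yielding the stack of $t^{(1)}$. Pops and level-$2$ collapses would produce proper substacks of $c_1$, forbidden by the $\relC$ condition; a $\Clone{2}$ would eventually return to $c_1$ (by the same width invariance applied at higher widths), contradicting maximality of $p$; and pushing a wrong letter would later require a $\Pop{1}$ that revisits $c_1$, again contradicting maximality. The prefix is then a high loop $\lambda_0$ of $c_1$ (it avoids all proper substacks of $c_1$ by the $\relC$ condition), and the suffix witnesses $(t^{(1)}, c_2) \in \relC$ with parameter $k-1$. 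The reverse directions in both parts are direct verifications: one checks inductively that each partial composition stays at width $\geq \lvert s \rvert$ with $\lvert s \rvert$-th word appropriately long, so the substack condition holds throughout.

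The main obstacle is the asymmetric choice of split points. For $\relB$ the naive guess --- the last visit to $s^{(0)}$ --- fails because the run may visit $s^{(1)}$ and return to $s^{(0)}$ in the interior, so the prefix would not be a high loop; splitting at the first visit to $s^{(1)}$ is essential. Dually, for $\relC$ the first visit to $t^{(1)}$ does not work, and one must split at the last visit to $c_1$. Identifying the correct split points and proving that the adjacent single-step transition has the required form rests entirely on the width invariance.
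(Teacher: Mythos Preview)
Your proof is correct and matches the paper's approach, which simply notes that the proofs are ``straightforward inductions'' without further detail; your induction on $k$ together with the walk argument tracking the $\lvert s\rvert$-th word is precisely the natural way to carry this out. One minor imprecision: in part~(1), the ``otherwise'' clause should also cover the width-$\lvert s\rvert$ cases (a $\Push{}$ from $s^{(2)}$, or a $\Pop{1}$ from a stack whose top word is $\Pop{1}(w)\sigma$ with $\sigma$ different from the top letter of $w$), but your walk argument disposes of these identically, since in each case the $\lvert s\rvert$-th word must already have passed through $\Pop{1}(w)$ before position $q_1-1$.
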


\subsection{Computing Returns} 
\label{sec:CompReturns}

In this section, we prove that the existence of returns starting at a
given stack $s$ inductively depend on the returns starting at
$\Pop{1}(s)$. Later we use this result in order to show that there is
a similar dependence of loops starting in $s$ from loops and returns
starting in $\Pop{1}(s)$. Let us fix a \CPS $\mathcal{S}$. 
For some word $w$ occurring in a level $2$ stack, let 
$w{\downarrow}_0$ denote the word where each level $2$ link is
replaced by $0$, i.e., each $(\sigma,2,i)$
is replaced by $(\sigma,2,0)$.

\begin{defi}
  Set
  $\ExRet(w):=\{(q,q'):$ there is a return 
  from $(q, w{\downarrow}_0:w{\downarrow}_0)$ to 
  $(q', w{\downarrow}_0)\}$. 
  We also  
  set $\ExRet(s):=\ExRet(\TOP{2}(s))$.
\end{defi}

The main goal of this section is
the proof of the following 
proposition.

\begin{prop} \label{Prop:ReturnsAutomatic}
  There is a finite automaton with $2^{\lvert Q\times
    Q\rvert}$ many states that
    computes $\ExRet(w)$ on input $w{\downarrow}_0$. 
\end{prop}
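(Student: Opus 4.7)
The plan is to build a finite automaton reading $w{\downarrow}_0$ from left to right so that its state after consuming a prefix $v$ equals $\ExRet(v) \subseteq Q \times Q$; taking the state set to be $\Pot{Q \times Q}$ then yields the desired bound of $2^{|Q \times Q|}$ many states. The whole argument therefore reduces to showing that $\ExRet(v\sigma)$ is a function of $\ExRet(v)$ and $\sigma$ (with $\Delta$ fixed).

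To establish this recursion, I would classify each return $\rho$ from $(q, (v\sigma){\downarrow}_0 : (v\sigma){\downarrow}_0)$ to $(q', (v\sigma){\downarrow}_0)$ according to whether it ever visits the $\Pop{1}$-stack $(v\sigma){\downarrow}_0 : v{\downarrow}_0$. If it does not, then throughout $\rho$ the second copy keeps $\sigma{\downarrow}_0$ on top of $v{\downarrow}_0$ and the first word is never inspected, so $\rho$ lives entirely above the ``new bottom letter'' $\sigma{\downarrow}_0$; the set of such pairs $(q,q')$, call it $H(\sigma)$, depends only on $\sigma$ and $\Delta$. Otherwise, let $i$ be the first position at which $\rho$ visits $(v\sigma){\downarrow}_0 : v{\downarrow}_0$. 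Any way to arrive at a width-$2$ stack with top word $v{\downarrow}_0$ without having visited one before must be through a single $\Pop{1}$ or level $1$ $\Collapse$ applied to $(v\sigma){\downarrow}_0 : (v\sigma){\downarrow}_0$, so $\rho{\restriction}_{[0,i]}$ splits as a high loop of the initial stack followed by one such transition; write $L(\sigma)$ and $T(\sigma) \subseteq Q \times Q$ for the corresponding sets of pairs, both again depending only on $\sigma$. By Lemma \ref{Lem:ReturnVisitsPop1IsCase3}, $\rho{\restriction}_{[i,\length(\rho)]}$ is itself a return from $(v\sigma){\downarrow}_0 : v{\downarrow}_0$ to $(v\sigma){\downarrow}_0$. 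This combines to
\[
\ExRet(v\sigma) = H(\sigma) \cup \{(q,q') : \exists q_1,q_2\ \ (q,q_1) \in L(\sigma),\ (q_1,q_2) \in T(\sigma),\ (q_2,q') \in \ExRet(v)\},
\]
provided I can identify the set of pairs realized by returns from $(v\sigma){\downarrow}_0 : v{\downarrow}_0$ to $(v\sigma){\downarrow}_0$ with $\ExRet(v)$ itself. This identification is the main technical obstacle: I would prove it by a Blumensath-style prefix-replacement argument (Lemma \ref{Lem:BlumensathLevel2}) based on the observation that throughout any such return the width stays at least $2$, so no transition ever examines the first word, and all level $2$ links freshly created during the run point to widths determined by the run's own widths rather than by the content of the first word. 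Hence every return on $(v\sigma){\downarrow}_0 : v{\downarrow}_0$ is in bijection with a return executing identical transitions on $v{\downarrow}_0 : v{\downarrow}_0$, with matching start and end states.

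The automaton is then defined with state set $\Pot{Q \times Q}$, initial state $\emptyset$, and a transition on $\bot$ going to the fixed set $\ExRet(\bot) = H(\bot)$ (base case: $\bot$ cannot be popped, so the $T(\bot)$-branch of the recursion is empty). Subsequent transitions on $\sigma \neq \bot$ are defined by the recursion above, which is a well-defined function of the current state and $\sigma$ because $H(\sigma)$, $L(\sigma)$, $T(\sigma)$ are fixed finite sets computable from $\Delta$. An induction on $|w|$ then confirms that after reading $w{\downarrow}_0$ the automaton is in state $\ExRet(w)$, so the construction uses $2^{|Q \times Q|}$ many states as required. The delicate parts of the argument will be the independence of $H(\sigma)$ and $L(\sigma)$ from $v$, and the bijection between returns on $(v\sigma){\downarrow}_0 : v{\downarrow}_0$ and those on $v{\downarrow}_0 : v{\downarrow}_0$; both rest on the same width-at-least-$2$ observation, combined with a careful prefix-replacement bookkeeping for freshly introduced level $2$ links.
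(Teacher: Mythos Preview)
Your overall strategy---show that $\ExRet(v\sigma)$ is determined by $\ExRet(v)$ and the new letter, then take $2^{Q\times Q}$ as the state set---matches the paper exactly. The gap is in your claim that $H(\sigma)$ and $L(\sigma)$ depend only on $\sigma$.

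Consider a return from $s=(v\sigma){\downarrow}_0:(v\sigma){\downarrow}_0$ that never visits $\Pop{1}(s)$. It can still perform $\Clone{2}$ followed by $\Pop{1}$, reaching a width-$3$ stack whose topmost word is $v{\downarrow}_0$; from there the available transitions read the last letter of $v$, and getting back to width~$2$ requires a sub-return of a stack with topmost word $v{\downarrow}_0$. Hence $H(\sigma)$ genuinely depends on $\ExRet(v)$. The same phenomenon occurs inside the high loops that make up $L(\sigma)$ (indeed the paper later shows $\ExHLoop(s)$ depends on $\ExRet(\Pop{1}(s))$, Corollary~\ref{Cor:ExHLoopDependence}). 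Your assertion that ``the run lives entirely above $\sigma{\downarrow}_0$'' is true only of the second word of the stack, not of clones of it.

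The paper handles this with a finer decomposition (Corollaries~\ref{Cor:Return_Decomposition_For_Simulation} and~\ref{Cor:Low-Return_Decomposition_For_Simulation}): every return of $s$ alternates between $s$-prefixed segments and sub-returns from stacks whose topmost word is $\Pop{1}(w)$. The $s$-prefixed parts transfer by prefix replacement (Lemma~\ref{Lem:BlumensathLevel2}) and are sensitive only to $\TOP{1}(s)$; the sub-returns are governed by $\ExRet(\Pop{1}(w))=\ExRet(v)$. This yields precisely the recursion you want, but with $\ExRet(v)$ appearing in both branches of your case split, not only in the branch where $\Pop{1}(s)$ is visited. Once $H$ and $L$ are treated as functions of $(\sigma,\ExRet(v))$ rather than of $\sigma$ alone---and once you prove that dependence via the alternating decomposition rather than the incorrect ``lives above $\sigma$'' argument---your automaton construction goes through unchanged.
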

\begin{rem}
  In fact, the automaton can be effectively constructed from a given
  \CPS (cf.~\cite{KartzowPhd}). The same holds analogously for
  Proposition \ref{Prop:LoopsAutomatic} and Corollary
  \ref{Cor:OneLoopsAutomatic}. 
\end{rem}

This proposition relies on the observation that returns of a stack $s$ with
topmost word $w$ are composed by runs that are prefixed by $s$ and by
runs that are returns of stacks with topmost word
$\Pop{1}(w)$. Furthermore, it relies on the observation that stacks
with equal topmost word share the same returns. The reader who is not
interested in the proof details may safely skip these and continue
reading Section \ref{sec:CompLoops}. 

\begin{lem} \label{Lem:RetunDependsTopword}
  Let $s$ be a stack with $\lvert s\rvert\geq 2$. There is a return from
  $(q,s)$ to $(q',\Pop{2}(s))$ if and only if
  $(q,q')\in\ExRet(\TOP{2}(s))$.
\end{lem}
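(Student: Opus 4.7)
The plan is to prove both directions of the biconditional by exploiting a key structural claim about the shape of returns. The structural claim is: during any return $\rho$ from $(q,s)$ to $(q', \Pop{2}(s))$ with $\lvert s \rvert \geq 2$, (i) every intermediate configuration has stack width $\geq \lvert s \rvert$, so the bottom $\lvert s \rvert - 1$ words (i.e.\ $\Pop{2}(s)$) are never modified; and (ii) no collapse step is ever performed on a level-$2$ letter that originates from $\TOP{2}(s) = w$, nor on any clone of such a letter. Claim (i) is immediate because any intermediate stack of width $\leq \lvert s \rvert - 1$ would coincide with some substack of $\Pop{2}(s)$ (operations only alter the topmost word at a fixed width), which is forbidden before the final step. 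For claim (ii), each letter in $w$ had its level-$2$ link set to some value $k \leq \lvert s \rvert - 1$ at creation time, and cloning preserves this value; a collapse of such a letter would yield a stack of width $k$, a substack of $\Pop{2}(s)$. The case $k < \lvert s \rvert - 1$ is outright forbidden, while $k = \lvert s \rvert - 1$ would force the collapse to be the final step of $\rho$ with penultimate top word equal to $w$, which is excluded by case~2 of the definition (requiring $w < \TOP{2}(\rho(\length(\rho)-1))$); cases~1 and 3 don't involve collapse on the original top letter, where for case 3 we invoke Lemma~\ref{Lem:ReturnVisitsPop1IsCase3}.

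For the forward direction, given a return $\rho$ from $(q,s)$ to $(q', \Pop{2}(s))$, I would construct a run $\rho^*$ on the reduced stack $w{\downarrow}_0 : w{\downarrow}_0$ by applying to each configuration of $\rho$ the following transformation: drop the bottom $\lvert s \rvert - 2$ words (unchanged throughout $\rho$ by (i)); zero out the level-$2$ links on all $w$-letters and their clones (whose link values are irrelevant by (ii)); and decrement by $\lvert s \rvert - 2$ the level-$2$ link values of letters pushed during $\rho$ to keep them consistent with the shrunk stack. Since transitions depend only on the current state and on $\Sym$ of the top letter, both of which are preserved, the resulting sequence is a valid run. Its widths are always $\geq 2$ except at the final configuration, and the last operation mirrors the last operation of $\rho$ (pop$_2$ in case~1, collapse on a newly pushed letter in case~2 whose shifted link now has value $1$, or the recursive return of case~3 after a pop$_1$); so $\rho^*$ is a return from $(q, w{\downarrow}_0 : w{\downarrow}_0)$ to $(q', w{\downarrow}_0)$, giving $(q,q') \in \ExRet(w)$.

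The converse direction is symmetric: given a return $\rho^*$ from $(q, w{\downarrow}_0 : w{\downarrow}_0)$ to $(q', w{\downarrow}_0)$, prepend $\Pop{2}(s)$ to the bottom of every intermediate stack, restore the original link values on the topmost copy of $w$ (and on its clones created during $\rho^*$), and shift the level-$2$ links of letters pushed during $\rho^*$ by $+(\lvert s \rvert - 2)$. The analogous check confirms that the result is a return from $(q,s)$ to $(q', \Pop{2}(s))$. The main obstacle is the careful bookkeeping needed in the translation: verifying that the link-value shifts yield well-formed stacks at each step (so that the operations $\Clone{2}$, $\Push{\sigma,2}$ and $\Collapse$ produce identical symbol/level data across $\rho$ and $\rho^*$), and confirming that the case distinction of the return definition transfers under the translation — in particular, that the recursive case~3 is handled correctly, which can be done by an induction on $\length(\rho)$ supported by Lemma~\ref{Lem:BlumensathLevel2} applied to the invariant prefix.
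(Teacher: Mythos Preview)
Your approach is essentially the paper's (which merely says ``a tedious but straightforward induction on the length of the return provides a transition-by-transition copy''), and your structural claims (i) and (ii) are the right invariants. There is, however, a small technical slip in the transformation you describe: dropping the bottom $\lvert s\rvert-2$ words from $s$ leaves a two-word stack whose \emph{bottom} word is $\TOP{2}(\Pop{2}(s))$, not $w{\downarrow}_0$; likewise the final configuration $\Pop{2}(s)$ is sent to the single word $\TOP{2}(\Pop{2}(s))$, not $w{\downarrow}_0$. So your map, as written, does not land in $\Runs(w{\downarrow}_0{:}w{\downarrow}_0,\,w{\downarrow}_0)$.

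The fix is immediate and fully compatible with your argument: in addition to dropping the bottom $\lvert s\rvert-2$ words, replace the (now bottom) $(\lvert s\rvert-1)$-st word by $w{\downarrow}_0$. This is sound precisely because of your claim (i): that word is never read or modified before the final step, so changing it affects no transition. With this correction your transition-by-transition copy goes through, and the converse construction needs the symmetric adjustment (replace the bottom word $w{\downarrow}_0$ by $\TOP{2}(\Pop{2}(s))$ before prepending the rest of $\Pop{2}(s)$).
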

\begin{proof}
  Let $w:=\TOP{2}(s){\downarrow}_0$. 
  A tedious but straightforward induction on the length of the return
  provides a transition-by-transition copy 
  of a return starting at $(q,s)$ 
  to a return starting at $(q,w:w)$
 and vice versa. 
\end{proof}

The following auxiliary lemmas prepare the decomposition of returns
into subparts that are returns starting at stacks with smaller topmost
words and subparts that are prefixed by the first stack of the return.

\begin{lem}\label{LemmaReturnsContainSubreturns}
  Let $\rho$ be a return starting at some stack $s$ with
  $\TOP{2}(s)=w$. If $\rho$ visits $\Pop{1}(s)$, let $k$ be the
  first occurrence of $\Pop{1}(s)$ in $\rho$, otherwise let
  $k:=\length(\rho)$. 
  If $0<i < k$ is a position such that
  $s\prefixeq \rho(i-1)$ and $\TOP{2}(\rho(i))=\Pop{1}(w)$ then there
  is some $i<j<k$ such that $\rho{\restriction}_{[i,j]}$ is a return.
\end{lem}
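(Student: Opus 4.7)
The plan is to first pin down the stack operation producing the configuration $\rho(i)$, then to identify the desired $j$ as the smallest position after $i$ at which the width drops below $\lvert \rho(i) \rvert$, and finally to invoke Lemma~\ref{lem:ReturnsOfClonedWords}. To begin, observe that $s \prefixeq \rho(i-1)$ forces $\TOP{2}(\rho(i-1)) = wv$ for some word $v$, so the only operations that can yield topmost word $\Pop{1}(w)$ at $\rho(i)$ are $\Pop{1}$ and $\Collapse$ of level $1$, applied with $v = \varepsilon$: push and clone preserve or extend the topmost word, while a $\Pop{2}$ or a $\Collapse$ of level $2$ either produces a new topmost word $\geq w$ (hence unequal to $\Pop{1}(w)$) or lands in a substack of $\Pop{2}(s)$, which is ruled out by the return property of $\rho$ since $i < k \leq \length(\rho)$. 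Moreover $\rho(i-1) \neq s$, since otherwise $\rho(i) = \Pop{1}(s)$ would contradict $i < k$. Writing $s = u_1: u_2: \dots: u_m$ with $u_m = w$ and $m = \lvert s \rvert$, one obtains
\begin{align*}
\rho(i) = u_1: u_2: \dots: u_{m-1}: v_m: v_{m+1}: \dots: v_{n-1}: \Pop{1}(w)
\end{align*}
for some $n = \lvert \rho(i) \rvert > m$ and words $v_j \geq w$ for all $m \leq j \leq n-1$.

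Next, let $j$ be the smallest position $>i$ satisfying $\lvert \rho(j)\rvert < \lvert \rho(i)\rvert$. Such a $j$ exists because $\rho$ is a return and hence eventually reaches $\Pop{2}(s)$ of width $m-1 < n$. Applying Lemma~\ref{lem:ReturnsOfClonedWords} with the starting stack $\rho(i)$ and with $t$ chosen as a suitable substack of $\rho(i)$ of width at most $\lvert \rho(j) \rvert$ whose topmost word has $\Pop{1}(w) = \TOP{2}(\rho(i))$ as a prefix, the minimality of $j$ supplies the condition that all intermediate widths are $\geq \lvert \rho(i) \rvert$, and the lemma yields that $\rho{\restriction}_{[i,j]}$ is a return from $\rho(i)$ to $\Pop{2}(\rho(i))$; in particular $\lvert \rho(j) \rvert = n-1$.

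It remains to rule out $j = k$. If $k < \length(\rho)$ and $j = k$, then $\rho(j) = \Pop{1}(s) = u_1: \dots: u_{m-1}: \Pop{1}(w)$ has width $m$, forcing $n = m+1$; but then $\Pop{2}(\rho(i)) = u_1: \dots: u_{m-1}: v_m$, and equality with $\Pop{1}(s)$ would require $v_m = \Pop{1}(w)$, contradicting $v_m \geq w > \Pop{1}(w)$. If $k = \length(\rho)$ and $j = k$, then $\rho(j) = \Pop{2}(s)$ has width $m-1$ while $\Pop{2}(\rho(i))$ has width $n-1 \geq m$, again a contradiction. Hence $j < k$, as required. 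The main obstacle is the choice of $t$ in the application of Lemma~\ref{lem:ReturnsOfClonedWords}: the condition $\TOP{2}(\rho(i)) \leq \TOP{2}(t)$ has to be read as a prefix relation on words carrying their level-$2$ links, so $t$ must be chosen compatibly with the links inherited from $\rho(i)$; taking $t$ as a substack of $\rho(i)$ of the right width handles this uniformly.
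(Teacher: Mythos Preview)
Your overall strategy---define $j$ as the first position after $i$ where the width drops below $n=\lvert\rho(i)\rvert$ and invoke Lemma~\ref{lem:ReturnsOfClonedWords}---is exactly the paper's approach. The issue is the boundary case $j=k=\length(\rho)$, which you handle circularly.

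In that case $\rho(j)=\Pop{2}(s)$ has width $m-1$. To apply Lemma~\ref{lem:ReturnsOfClonedWords} you need a stack $t$ with $\lvert t\rvert\leq m-1$ and $\Pop{1}(w)\leq\TOP{2}(t)$ (as words \emph{with links}). Your proposed choice, a substack of $\rho(i)$ of width $\leq m-1$, has topmost word among $u_1,\dots,u_{m-1}$ (or their prefixes), and there is no reason for $\Pop{1}(w)$ to be a prefix of any of these. So the lemma does not apply. Your subsequent paragraph then derives a contradiction from $\rho(j)=\Pop{2}(\rho(i))$, but that equality is exactly the conclusion of the lemma you have not established---so the argument is circular.

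The paper fills this gap with a link analysis: between $i$ and $j-1$ the width stays $\geq n$, so every word at position $\geq n$ in any stack in this range is derived from $\Pop{1}(w)$ by push/pop operations performed at width $\geq n$. In particular the letters of $\TOP{2}(\rho(j-1))$ beyond position $\lvert w\rvert-1$ were all pushed at width $\geq n$, hence carry level-$2$ links $\geq n-1$. Since the return does not visit $\Pop{1}(s)$, the definition of return forces the final operation to be a collapse on a word $wv$ with $v$ nonempty, and the link of the last letter of $v$ is therefore $\geq n-1$. This forces $\lvert\rho(j)\rvert\geq n-1>m-1$, contradicting $\lvert\rho(j)\rvert=m-1$; so the case $j=\length(\rho)$ cannot occur, and one is back in the range where your choice of $t$ (any substack of $\rho(i)$ of width between $m$ and $n-1$) does work.
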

\begin{proof}
  Since $i<k$, the stack at $\rho(i)$ is not $\Pop{1}(s)$. 
  Thus, the stack at $\rho(i)$ is of the form $s':\Pop{1}(w)$ with
  $s\prefixeq s'$, in particular 
  $\lvert s \rvert < \lvert \rho(i) \rvert$. 
  There is a minimal  $i < j \leq k$ such that 
  $\lvert \rho(j) \rvert < \lvert \rho(i) \rvert$. 

  If $j < \length(\rho)$, we conclude by application of Lemma
  \ref{lem:ReturnsOfClonedWords}. 
  Otherwise, $j=\length(\rho)$ and $\rho$ does not visit $\Pop{1}(s)$.
  Since $\rho$ is a return, the operation at $j-1$ is $\Pop{2}$ (whence 
  $\rho{\restriction}_{[i,j]}$ is a return) or there is a nonempty word
  $v$ such that $\TOP{2}(\rho(j-1))=wv$ and the operation at $j-1$ is
  a collapse of level $2$. Since $v$ was created between $i$ and $j-1$
  its topmost link points to a stack of width at least $\lvert \rho(i)
  \rvert -1$ and we conclude again with Lemma
  \ref{lem:ReturnsOfClonedWords} that  $\rho{\restriction}_{[i,j]}$ is
  a return.
\end{proof}

\begin{lem} \label{Lemma:SubReturnDecomposition}
  Let $\rho$ be some run, $s$ some stack with topmost word
  $w:=\TOP{2}(s)$ such that the following holds.
  \begin{enumerate}[\em(1)]
  \item $s\prefixeq \rho(0)$,
  \item 
    $\rho(i)\not< s$ for all $0\leq i \leq \length(\rho)$, and
  \item for all $0 < i \leq \length(\rho)$ such that
    $s\prefixeq \rho(i-1)$ and $w\not\leq \TOP{2}(\rho(i))$, there is
    some $i<j \leq \length(\rho)$ such that $\rho{\restriction}_{[i,j]}$ is
    a return.
  \end{enumerate}
  There is
  a well-defined sequence 
  \begin{align*}
    0:=j_0\leq i_1<j_1\leq i_2 <j_2\leq \dots \leq i_n <j_n \leq
    i_{n+1}:=\length(\rho)
  \end{align*}
  with the following properties.
  \begin{enumerate}[\em(1)]
  \item
    For $1\leq k \leq n+1$,  
    $s\prefixeq \rho{\restriction}_{[j_{k-1}, i_k]}$.
  \item 
    For each $1\leq k \leq n$, there is a stack $s_k$ with
    $\TOP{2}(s_k)=\Pop{1}(w)$ such that
    $\rho{\restriction}_{[i_{k}+1,j_k]}$ is a 
    return from $s_k$ to $\Pop{2}(s_k)$. 
  \item For all $1\leq k \leq n$, $\TOP{2}(\rho(i_k))=w$ and the
    operation at $i_k$ in $\rho$ is a $\Pop{1}$ or a collapse of level
    $1$.
  \end{enumerate}
\end{lem}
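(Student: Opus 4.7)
My plan is to construct the sequence by induction on $k$, producing alternating "prefix-preserving" segments $\rho{\restriction}_{[j_{k-1}, i_k]}$ and "return" segments $\rho{\restriction}_{[i_k+1, j_k]}$. Set $j_0 \coloneqq 0$; by hypothesis (1), $s \prefixeq \rho(j_0)$. Given $j_k$ with $s \prefixeq \rho(j_k)$, let $i_{k+1}$ be the largest index such that $s \prefixeq \rho(m)$ for every $j_k \leq m \leq i_{k+1}$. If $i_{k+1} = \length(\rho)$, set $n \coloneqq k$, $i_{n+1} \coloneqq \length(\rho)$, and terminate. Otherwise, I will show that the transition at $i_{k+1}$ is $\Pop{1}$ or a $\Collapse$ of level $1$ applied to a stack with $\TOP{2} = w$ and width strictly greater than $\lvert s \rvert$; then hypothesis (3) supplies a return that defines $j_{k+1}$.

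The main technical step is a case analysis on the stack operation producing $\rho(i_{k+1}+1)$ from $\rho(i_{k+1})$. The operations $\Push{\sigma,l}$ and $\Clone{2}$ only extend the top word or duplicate a word, preserving any prefix. $\Pop{2}$ preserves $s \prefixeq$ whenever the resulting width is at least $\lvert s \rvert$, and otherwise produces $\rho(i_{k+1}+1) < s$, contradicting hypothesis (2). A $\Collapse$ of level $2$ analogously preserves the prefix whenever its link-width is at least $\lvert s \rvert$, and is forbidden by hypothesis (2) in the remaining cases. Finally, $\Pop{1}$ or a level-$1$ $\Collapse$ preserves the prefix unless the top word of $\rho(i_{k+1})$ equals $w$ exactly, and in this last situation the width must exceed $\lvert s \rvert$ (otherwise $\rho(i_{k+1}+1) = \Pop{1}(s) < s$, again contradicting hypothesis (2)). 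This yields $\TOP{2}(\rho(i_{k+1}+1)) = \Pop{1}(w)$, so hypothesis (3) applies with $i = i_{k+1}+1$: its premises $s \prefixeq \rho(i_{k+1})$ and $w \not\leq \Pop{1}(w)$ both hold, producing some $j_{k+1} > i_{k+1}+1$ such that $\rho{\restriction}_{[i_{k+1}+1, j_{k+1}]}$ is a return. Setting $s_{k+1} \coloneqq \rho(i_{k+1}+1)$ then gives conclusions (2) and (3).

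For conclusion (1) and to continue the induction, I observe that $\rho(j_{k+1}) = \Pop{2}(\rho(i_{k+1}+1)) = \Pop{2}(\rho(i_{k+1}))$, since applying a level-$1$ pop followed by a level-$2$ pop deletes the same words as just the level-$2$ pop. Because $\lvert \rho(i_{k+1}) \rvert > \lvert s \rvert$, the width of $\rho(j_{k+1})$ is still at least $\lvert s \rvert$; and writing $\rho(i_{k+1}) = w_1 : \dots : w_{n-1} : v_n : \dots : v_m$, the first $n-1$ words of $\rho(j_{k+1})$ are still $w_1, \dots, w_{n-1}$, while each remaining word still satisfies $w \leq v_j$, so $s \prefixeq \rho(j_{k+1})$. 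Termination is immediate because $j_{k+1} > j_k$ and $\length(\rho)$ is finite.

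The main obstacle is the exhaustive operation-by-operation case analysis above: it must be carried out carefully to isolate the unique scenario in which $s$ genuinely ceases to be a prefix (producing an initial configuration of a return) from the scenarios ruled out by hypothesis (2). Once this classification is in place, hypothesis (3) and the structural properties of returns established earlier handle the rest of the argument mechanically.
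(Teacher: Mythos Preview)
Your proposal is correct and takes essentially the same approach as the paper: find the first position where $s$ ceases to be a prefix, argue via a case analysis on stack operations (the paper compresses this into a one-line observation) that the operation there must be $\Pop{1}$ or level-$1$ collapse on a stack with topmost word $w$ and width $>\lvert s\rvert$, invoke hypothesis~(3) to obtain the return, verify $s\prefixeq\rho(j_{k+1})$, and iterate. The only cosmetic difference is that the paper phrases this as induction on $\length(\rho)$ while you construct the sequence directly with a termination argument.
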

\begin{proof}
  The proof is by induction on $\length(\rho)$.
  If $s\prefixeq \rho$ (in particular, if $\length(\rho)=0$), set
  $n:=0$ and we are done.  
  Otherwise, 
  let $j_0:=0$ and 
  let
  $i_1\in\domain(\rho)$ be the minimal position such that 
  $s\prefixeq \rho(i_1)$ but $s\notprefixeq\rho(i_1+1)$. 
  Since $\rho(i_1+1)\not< s$, the stack at $\rho(i_1+1)$ must be of
  the form  $s':\Pop{1}(w)$ for some $s'$ such that $s\prefixeq s'$. 
  This requires that the stack at $\rho(i_1)$ is $s':w$ and the
  operation at $i_1$ is $\Pop{1}$ or $\Collapse$ of level $1$. 
  By assumption on $\rho$, there is some $i_1+1< j_1 \leq \length(\rho)$
  such that $\rho{\restriction}_{[i_1+1,j_1]}$ is a return. 
  Thus, the stack at $\rho(j_1)$ is $s'$ whence $s\prefixeq \rho(j_1)$. 
  Thus, we can apply the induction hypothesis to 
  $\rho{\restriction}_{[j_1,\length(\rho)]}$ which settles the claim. 
\end{proof}

The previous lemma allows to classify returns as follows.

\begin{cor}
  \label{Cor:Return_Decomposition_For_Simulation}
  Let $\rho$ be a run starting in some stack $s$ with topmost word
  $w=\TOP{2}(s)$. 
  $\rho$ is a return from $s$ to $\Pop{2}(s)$ that does not pass
  $\Pop{1}(s)$ if and only if  $\rho\in\Runs(s,\Pop{2}(s))$ and there is
  a uniquely defined sequence 
  \begin{align*}
    0:=j_0\leq i_1<j_1\leq i_2<j_2\leq \dots\leq i_n<j_n \leq
    i_{n+1}:=\length(\rho)-1
  \end{align*}
  with the following properties.
  \begin{enumerate}[\em(1)]
  \item \label{ass1-Cor:Return_Decomposition_For_Simulation}
    For $1\leq k \leq n+1$,  
    $s\prefixeq \rho{\restriction}_{[j_{k-1}, i_k]}$.
  \item \label{ass2-Cor:Return_Decomposition_For_Simulation}
    For each $1\leq k \leq n$, there is a stack $s_k$ with
    $\TOP{2}(s_k)=\Pop{1}(w)$ such that
    $\rho{\restriction}_{[i_{k}+1,j_k]}$ is a 
    return from $s_k$ to $\Pop{2}(s_k)$. 
  \item \label{ass3-Cor:Return_Decomposition_For_Simulation}
    For all $1\leq k \leq n$, $\TOP{2}(\rho(i_k))=w$ and the
    operation at $i_k$ in $\rho$ is a $\Pop{1}$ or a collapse of level
    $1$.
  \item \label{ass4-Cor:Return_Decomposition_For_Simulation}
    Either $w$ is a proper prefix of $\TOP{2}(\rho(i_{n+1}))$ and
    the  operation at $i_{n+1}$ is a collapse of level $2$ or
    $w$ is a prefix of $\TOP{2}(\rho(i_{n+1}))$ and the operation at
    $i_{n+1}$ is a $\Pop{2}$. 
  \end{enumerate}
\end{cor}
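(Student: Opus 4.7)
My plan is to prove both directions of the biconditional by reducing to the machinery already established in Lemmas \ref{LemmaReturnsContainSubreturns} and \ref{Lemma:SubReturnDecomposition}, the latter doing essentially all the heavy lifting. The key observation is that the corollary is just the specialisation of Lemma \ref{Lemma:SubReturnDecomposition} to the case where $\rho$ is a return avoiding $\Pop{1}(s)$; the only extra content is the tight description of the final operation at index $i_{n+1}=\length(\rho)-1$, which comes from unfolding the definition of a return.

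For the forward direction, assume $\rho$ is a return from $s$ to $\Pop{2}(s)$ that does not visit $\Pop{1}(s)$. First I would argue that the truncated run $\rho'\coloneqq\rho{\restriction}_{[0,\length(\rho)-1]}$ satisfies the three hypotheses of Lemma \ref{Lemma:SubReturnDecomposition} with the same stack $s$: the first hypothesis $s\prefixeq\rho(0)$ is trivial; the second, $\rho'(i)\not< s$ for all $i$, follows because the only proper substacks of $s$ are $\Pop{1}^k(s)$ (excluded by assumption) and substacks of $\Pop{2}(s)$ (reached only at $\length(\rho)$ since $\rho$ is a return); the third hypothesis is exactly Lemma \ref{LemmaReturnsContainSubreturns} applied to $\rho$, using that $k=\length(\rho)$ there because $\rho$ does not visit $\Pop{1}(s)$. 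Lemma \ref{Lemma:SubReturnDecomposition} then yields the desired sequence with $i_{n+1}=\length(\rho)-1$ and properties (1)--(3). For property (4), the final operation at $i_{n+1}$ leads from a stack with $s\prefixeq\rho(i_{n+1})$ to $\Pop{2}(s)$, so it is either $\Pop{2}$ (in which case $\TOP{2}(\rho(i_{n+1}))=w$) or a collapse of level $2$; in the latter case the return definition together with $s\prefixeq\rho(i_{n+1})$ forces $w<\TOP{2}(\rho(i_{n+1}))$.

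For the backward direction, assume the decomposition exists. I would first verify that $\rho$ does not visit any substack of $\Pop{2}(s)$ before its final configuration: on each interval $[j_{k-1},i_k]$ we have $s\prefixeq\rho$, so the width is at least $\lvert s\rvert>\lvert\Pop{2}(s)\rvert$; on each sub-return $\rho{\restriction}_{[i_k+1,j_k]}$ the base stack $s_k$ satisfies $s\prefixeq\Pop{2}(s_k)$, which again keeps the width above $\lvert\Pop{2}(s)\rvert$. The same analysis shows $\Pop{1}(s)$ is avoided, since at every position where the stack has width exactly $\lvert s\rvert$ the topmost word is still $w$ (the operations at the $i_k$ reduce the topmost word only instantaneously before it is rebuilt inside the next interval). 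Finally, property (4) matches the first two clauses of the definition of a return verbatim, so $\rho$ is a return.

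The main obstacle is bookkeeping: one must check that the index $i_{n+1}=\length(\rho)-1$ arising from the corollary aligns with the index $i_{n+1}=\length(\rho')$ produced by Lemma \ref{Lemma:SubReturnDecomposition} on the truncated run, and that uniqueness of the sequence follows from the lemma (each $i_k$ is forced as the next position after $j_{k-1}$ where $s$ ceases to be a prefix, and each $j_k$ is forced by the sub-return it concludes). Modulo this boundary care, no new combinatorial insight is needed beyond the two prior lemmas.
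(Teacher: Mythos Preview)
Your approach is essentially the same as the paper's: for the forward direction you apply Lemma~\ref{Lemma:SubReturnDecomposition} to the truncated run $\rho{\restriction}_{[0,\length(\rho)-1]}$ after checking its hypotheses via Lemma~\ref{LemmaReturnsContainSubreturns}, and read off property~(4) from the definition of a return; for the backward direction you argue via width bounds on the prefixed segments and the sub-returns (using $s\prefixeq\Pop{2}(s_k)$ from condition~(1) at $j_k$) that neither $\Pop{1}(s)$ nor a substack of $\Pop{2}(s)$ is visited early. A couple of minor slips to clean up: in the forward direction your assumption only rules out $\Pop{1}(s)$, not all $\Pop{1}^k(s)$, so you should note that reaching any $\Pop{1}^k(s)$ with $k\ge 2$ would force passing through $\Pop{1}(s)$ first; and in the $\Pop{2}$ sub-case of property~(4) you only get that $w$ is a prefix of $\TOP{2}(\rho(i_{n+1}))$ (which is exactly what~(4) asks), not equality.
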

\begin{proof}
  First assume that $\rho$ is such a return. 
  Due to Lemma \ref{LemmaReturnsContainSubreturns}, we can apply Lemma
  \ref{Lemma:SubReturnDecomposition} to
  $\rho{\restriction}_{[0,\length(\rho)-1]}$. This gives immediately
  the first three items. The last item is a direct consequence of the
  definition of a return. 

  Now assume that $\rho$ is a run from $s$ to $\Pop{2}(s)$ that
  satisfies conditions
  \eqref{ass1-Cor:Return_Decomposition_For_Simulation}-%
  \eqref{ass4-Cor:Return_Decomposition_For_Simulation}. Heading for a
  contradiction assume that 
  $\rho$ 
  visits $\Pop{1}(s)$. Due to 
  \eqref{ass1-Cor:Return_Decomposition_For_Simulation} this happens at
  some position 
  $i_k+1 \leq j \leq j_k-1$.  Due to 
  \eqref{ass2-Cor:Return_Decomposition_For_Simulation} we conclude
  that the width of 
  the stack at $\rho(j_k)$ is smaller than the width at $j$. But this
  contradicts condition 1. because $s\prefixeq \rho(j_k)$.

  For similar reasons $\rho$ does not visit a substack of $\Pop{2}(s)$
  before the final configuration. Thus, condition 
  \eqref{ass4-Cor:Return_Decomposition_For_Simulation} implies that
  $\rho$ is a return.   
\end{proof}

\begin{cor}
  \label{Cor:Low-Return_Decomposition_For_Simulation}
  Let $\rho$ be a run starting in some stack $s$ with topmost word
  $w=\TOP{2}(s)$. 
  $\rho$ is a return from $s$ to $\Pop{2}(s)$ that
  passes $\Pop{1}(s)$ if and only if $\rho\in\Runs(s,\Pop{2}(s))$ and
  there is 
  a uniquely defined sequence 
  \begin{align*}
    0:=j_0\leq i_1<j_1\leq i_2<j_2\leq \dots \leq i_n<j_n=\length(\rho)
  \end{align*}
  with the following properties.
  \begin{enumerate}[\em(1)]
  \item \label{ass1-Cor:Low-Return_Decomposition_For_Simulation}
    For $1\leq k \leq n$,  
    $s\prefixeq \rho{\restriction}_{[j_{k-1}, i_k]}$.
  \item \label{ass2-Cor:Low-Return_Decomposition_For_Simulation}
    For each $1\leq k \leq n$, there is a stack $s_k$ with
    $\TOP{2}(s_k)=\Pop{1}(w)$ such that
    $\rho{\restriction}_{[i_{k}+1,j_k]}$ is a 
    return from $s_k$ to $\Pop{2}(s_k)$. 
  \item \label{ass3-Cor:Low-Return_Decomposition_For_Simulation}
    For all $1\leq k \leq n$, $\TOP{2}(\rho(i_k))=w$ and the
    operation at $i_k$ in $\rho$ is a $\Pop{1}$ or a collapse of level
    $1$.
  \end{enumerate}
\end{cor}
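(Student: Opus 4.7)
The plan is to reduce this corollary to Corollary \ref{Cor:Return_Decomposition_For_Simulation} (and the underlying Lemma \ref{Lemma:SubReturnDecomposition}) together with Lemma \ref{Lem:ReturnVisitsPop1IsCase3}. The key observation is that any return $\rho$ from $s$ to $\Pop{2}(s)$ which passes $\Pop{1}(s)$ splits canonically at the first visit to $\Pop{1}(s)$: the tail is itself a return from $\Pop{1}(s)$ to $\Pop{2}(s)$, while the head behaves like a ``$s$-prefixed'' run that can be analysed via Lemma \ref{Lemma:SubReturnDecomposition}.

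For the forward direction I would proceed as follows. Let $\rho$ be a return from $s$ to $\Pop{2}(s)$ passing $\Pop{1}(s)$ and let $p\in\domain(\rho)$ be the first position with $\rho(p)=\Pop{1}(s)$. By Lemma \ref{Lem:ReturnVisitsPop1IsCase3}, $\rho{\restriction}_{[p,\length(\rho)]}$ is a return from $\Pop{1}(s)$ to $\Pop{2}(\Pop{1}(s))=\Pop{2}(s)$, so I will set $i_n\coloneqq p-1$ and $j_n\coloneqq \length(\rho)$, with $s_n\coloneqq \Pop{1}(s)$ (noting $\TOP{2}(\Pop{1}(s))=\Pop{1}(w)$ and that the operation at $i_n$ must be $\Pop{1}$ or a level-$1$ collapse since it takes a stack with top word $w$ to one with top word $\Pop{1}(w)$ while preserving width). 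To define $i_1<j_1\leq\dots\leq i_{n-1}<j_{n-1}\leq i_n$, I apply Lemma \ref{Lemma:SubReturnDecomposition} to $\rho':=\rho{\restriction}_{[0,i_n]}$. I need to verify its three hypotheses: (i) $s\prefixeq\rho'(0)=s$ is immediate; (ii) $\rho'(i)\not<s$ for $i\leq i_n$, which I argue by observing that any proper substack of $s$ is either a substack of $\Pop{2}(s)$ (excluded because $\rho$ is a return up to its final step) or of $\Pop{1}(s)$, and reaching a substack of $\Pop{1}(s)$ strictly before $\Pop{1}(s)$ itself is impossible since stack operations change the top word only one letter at a time unless they reduce width below $|s|$; (iii) the ``subsequent return'' condition, which follows from Lemma \ref{LemmaReturnsContainSubreturns} applied to the full return $\rho$ (noting that any qualifying index $i\leq i_n$ is strictly less than $p$, so the promised return lies in $[i,p-1]\subseteq[0,i_n]$).

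For the converse direction, I would check that a run $\rho$ satisfying conditions (1)--(3) automatically lies in $\Runs(s,\Pop{2}(s))$, visits $\Pop{1}(s)$, and is actually a return. Condition (3) forces the operation at $i_n$ to be $\Pop{1}$ or a level-$1$ collapse, and condition (2) combined with the fact that the final return ends at $\rho(\length(\rho))=\Pop{2}(s)$ pins down $s_n=\Pop{1}(s)$, so $\rho(i_n+1)=\Pop{1}(s)$; hence $\rho$ visits $\Pop{1}(s)$. Being a return then follows from the third clause of Definition \ref{DefReturn}, applied with the witness $i=i_n+1$, since $\rho{\restriction}_{[i_n+1,\length(\rho)]}$ is by hypothesis a return from $\Pop{1}(s)$ to $\Pop{2}(s)$. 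The condition that $\rho$ avoids substacks of $\Pop{2}(s)$ before its final configuration is verified segment by segment: ``prefix'' segments satisfy $s\prefixeq\rho(i)$, intermediate returns start at stacks of width $>|s|$ and end at $\Pop{2}(s_k)$ of width $\geq|s|$, and the final return avoids substacks of $\Pop{2}(s)$ by definition.

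Uniqueness of the sequence is inherited from the uniqueness in Lemma \ref{Lemma:SubReturnDecomposition} together with the canonical choice $i_n=p-1$. The main obstacle I anticipate is the bookkeeping in verifying hypothesis (iii) of Lemma \ref{Lemma:SubReturnDecomposition} on the truncated run $\rho'$: one has to confirm that the ``subsequent return'' promised by Lemma \ref{LemmaReturnsContainSubreturns} in the full $\rho$ indeed terminates strictly before $p$, rather than reaching across $p$ into the final descent to $\Pop{2}(s)$. This follows because any such subsequent return ends at a stack of width $|\rho(i)|-1\geq|s|$, which is still at least $|\Pop{1}(s)|=|s|$, while the first stack of width $<|s|$ reached by $\rho$ lies strictly after $p$; combined with the minimality of $p$, this locates the return inside $[i,p-1]$ as required.
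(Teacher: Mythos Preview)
Your proposal is correct and follows essentially the same route as the paper: split at the first visit $p$ to $\Pop{1}(s)$, use Lemma~\ref{Lem:ReturnVisitsPop1IsCase3} for the tail $\rho{\restriction}_{[p,\length(\rho)]}$, and apply Lemma~\ref{Lemma:SubReturnDecomposition} to $\rho{\restriction}_{[0,p-1]}$ for the head; the converse is handled by identifying $s_n=\Pop{1}(s)$ from $\Pop{2}(s_n)=\Pop{2}(s)$ and invoking the third clause of Definition~\ref{DefReturn}. Your verification of hypothesis~(iii) via Lemma~\ref{LemmaReturnsContainSubreturns} is exactly the point the paper leaves implicit, and your observation that the promised sub-return ends strictly before $p$ (since that lemma already yields $j<k=p$) is the right justification.
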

\begin{proof}
  First assume that $\rho$ is such a return. 
  Let $0<k<\length(\rho)$ be the first occurrence of $\Pop{1}(s)$ in
  $\rho$. Application of Lemma \ref{Lemma:SubReturnDecomposition} to
  $\rho{\restriction}_{[0, k-1]}$ yields a decomposition into
  $s$-prefixed parts and returns (ending with an $s$-prefixed
  part). Finally, due to Lemma 
  \ref{Lem:ReturnVisitsPop1IsCase3} 
  $\rho{\restriction}_{[k,\length(\rho)]}$ is also a return. 

  Now assume that $\rho$ is a run from $s$ to $\Pop{2}(s)$ that
  satisfies conditions
  \eqref{ass1-Cor:Low-Return_Decomposition_For_Simulation}-\eqref{ass3-Cor:Low-Return_Decomposition_For_Simulation}. As
  in the previous corollary, we conclude 
  that $\rho$ does not visit substacks of $\Pop{2}(s)$ before the
  final configuration. 
  Due to condition
  \eqref{ass2-Cor:Low-Return_Decomposition_For_Simulation},
  $\rho(i_n+1)=\Pop{1}(s)$ and  
  $\rho{\restriction}_{[i_{n}+1,\length(\rho)]}$ is a return whence
  $\rho$ is also a return.  
\end{proof}

In the following corollary, we assume that $\ExRet(\varepsilon)=\emptyset$.
\begin{cor} \label{Cor:ReturnsDetermined}
  For each stack $s$, $\ExRet(s)$ is determined by
  $\ExRet(\Pop{1}(s)), \Sym(s)$ and $\Lvl(s)$. 
\end{cor}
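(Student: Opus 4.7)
The plan is to show that the set $\ExRet(s)$ is exactly the set of pairs $(q,q')$ reachable in a certain finite transition graph whose edges are determined by $\Sym(s)$, $\Lvl(s)$ and $\ExRet(\Pop{1}(s))$. First, by Lemma \ref{Lem:RetunDependsTopword}, $\ExRet(s)$ depends only on $w := \TOP{2}(s)$, and $\ExRet(\Pop{1}(s))$ depends only on $\Pop{1}(w)$, so it suffices to exhibit, for every such $w$, an update rule that computes $\ExRet(w)$ from $\ExRet(\Pop{1}(w))$ together with the symbol and link level of the last letter of $w$.

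Next, I would invoke Corollaries \ref{Cor:Return_Decomposition_For_Simulation} and \ref{Cor:Low-Return_Decomposition_For_Simulation} to decompose any return witnessing $(q,q')\in\ExRet(s)$ as an alternating sequence of $s$-prefixed segments $\rho{\restriction}_{[j_{k-1},i_k]}$ and subreturns $\rho{\restriction}_{[i_k+1,j_k]}$. The subreturns are returns from stacks with topmost word $\Pop{1}(w)$, so by Lemma \ref{Lem:RetunDependsTopword} their start/end state pairs lie exactly in $\ExRet(\Pop{1}(s))$. The seam operations at the positions $i_k$ are $\Pop{1}$ or a level-$1$ $\Collapse$ applied at a stack of topmost word $w$, hence their applicability and effect on the state depend only on $\Sym(s)$ and $\Lvl(s)$ (the latter deciding whether $\Collapse$ acts at level $1$). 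The final seam at $i_{n+1}$ is either a $\Pop{2}$ or a level-$2$ collapse fired above $s$; its existence again depends only on the top letter of $w$.

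The remaining task is to handle the $s$-prefixed segments. Here every visited stack has $s$ as prefix, so no level-$1$ operation can touch a letter of $w$ (otherwise the prefix property fails until the next seam) and no level-$2$ collapse may descend below $|s|$. Consequently, such a segment is a run carried out entirely ``above'' $s$, and its possible effects on the control state are precisely those of runs in $\mathcal{S}$ that start and end on stacks with $\TOP{2}=w$, use only operations applicable to the topmost symbol (determined by $\Sym(s),\Lvl(s)$), and may use subreturns of freshly created topmost words; by an inner application of the same decomposition these subreturns are again returns from stacks with topmost word of the form $\Pop{1}(u)$ for some $u$ with $w\le u$, hence they are accounted for by $\ExRet(\Pop{1}(s))$ together with pushes above $w$ that are controlled by $\Sym(s)$. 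Thus the set of $(q,q')$ realisable by an $s$-prefixed segment is already determined by the three parameters.

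Combining these pieces, $(q,q')\in\ExRet(s)$ if and only if there is a finite sequence of states $q=p_0,p_1,\ldots,p_m=q'$ such that each consecutive pair is realised either by an $s$-prefixed segment ending in a seam, or by a subreturn taken from $\ExRet(\Pop{1}(s))$, and the final segment ends with the correct terminal operation; all edges of this graph are computable from $\Sym(s)$, $\Lvl(s)$ and $\ExRet(\Pop{1}(s))$, which proves the corollary. The main obstacle is the third paragraph: one must argue rigorously that the behaviour of an $s$-prefixed segment is genuinely independent of what lies below the topmost word of $s$; this is handled by Lemma \ref{Lem:BlumensathLevel2} (prefix replacement) together with the observation that inside a prefixed segment every level-$2$ link ever used is created above $s$ and hence points above $s$, so the lower part of the stack is inert and can be ignored in the simulation.
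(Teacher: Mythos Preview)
Your overall strategy matches the paper's: use Corollaries~\ref{Cor:Return_Decomposition_For_Simulation} and~\ref{Cor:Low-Return_Decomposition_For_Simulation} to split a return into $s$-prefixed pieces, seam transitions, and subreturns with topmost word $\Pop{1}(w)$; the subreturns are handled by $\ExRet(\Pop{1}(s))$, the seams by $\Sym(s)$ and $\Lvl(s)$, and the $s$-prefixed pieces by prefix replacement (Lemma~\ref{Lem:BlumensathLevel2}). Your final paragraph names exactly the tool the paper uses.

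However, the argument in your third paragraph does not stand on its own. You try to analyse an $s$-prefixed segment by a further ``inner'' decomposition, claiming the resulting subreturns have topmost word $\Pop{1}(u)$ for some $u\ge w$ and are ``accounted for by $\ExRet(\Pop{1}(s))$ together with pushes above $w$''. But if $u=w\tau$, then $\Pop{1}(u)=w$, so those inner subreturns live in $\ExRet(w)$ --- precisely the set you are trying to compute. More generally, the inner decomposition would require knowing $\ExRet(v)$ for all $v$ with $\Pop{1}(w)\le v$, which is circular. The statement ``the set of $(q,q')$ realisable by an $s$-prefixed segment is already determined by the three parameters'' is true, but not for the reason given in that paragraph.

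The paper avoids this entirely: it never analyses what happens \emph{inside} an $s$-prefixed segment. Instead it fixes two words $w,w'$ with matching $\Sym$, $\Lvl$, and $\ExRet$ of the predecessor, takes a concrete return from $s=w{\downarrow}_0:w{\downarrow}_0$, and builds a return from $s'=w'{\downarrow}_0:w'{\downarrow}_0$ by (i) applying the prefix replacement $[s/s']$ verbatim to each $s$-prefixed piece via Lemma~\ref{Lem:BlumensathLevel2} (the hypotheses $\TOP{1}(s)=\TOP{1}(s')$ and $\lvert s\rvert=\lvert s'\rvert$ hold because both topmost letters are determined by $\Sym,\Lvl$ and the ${\downarrow}_0$ normalisation), (ii) re-using the same seam transitions $\delta_k$, which are applicable since the topmost symbol is unchanged, and (iii) invoking Lemma~\ref{Lem:RetunDependsTopword} to supply fresh subreturns with the same initial/final states from $\ExRet(\Pop{1}(w'))=\ExRet(\Pop{1}(w))$. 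So drop the third paragraph and go straight to the prefix-replacement simulation you sketch at the end; that is both correct and exactly the paper's route.
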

\begin{proof}
  Let $w$ and $w'$ be words such that
  \mbox{$\Sym(w)=\Sym(w')$}, $\Lvl(w)=\Lvl(w')$ and
  $\ExRet(\Pop{1}(w))=\ExRet(\Pop{1}(w'))$. 
  Fix a return $\rho$ starting in
  $(q_1, s)$ for $s:=w{\downarrow}_0:w{\downarrow}_0$ and ending in 
  $(q_2, w{\downarrow}_0)$.  
  We have to prove that there is a return $\rho'$ from 
  $(q_1, s')$ 
  to $(q_2, w'{\downarrow}_0)$ for 
  $s':=w'{\downarrow}_0:w'{\downarrow}_0$. 
  
  The proof is by induction on $\lvert w \rvert$.
  Assume that  
  $\rho$ does not visit $\Pop{1}(s)$ and let 
  \begin{align*}
    0:=j_0\leq i_1<j_1\leq i_2<j_2\leq \dots \leq i_n<j_n \leq
    i_{n+1}:=\length(\rho)-1
  \end{align*}
  be the sequence according to Corollary
  \ref{Cor:Return_Decomposition_For_Simulation}. If $\rho$ visits
  $\Pop{1}(s)$, use the sequence according to Corollary
  \ref{Cor:Low-Return_Decomposition_For_Simulation} and proceed analogously. 
  For all $k\leq n+1$,
  $s\prefixeq\rho{\restriction}_{[j_{k-1},i_k]}$.
  Set \mbox{$\rho_k:=\rho{\restriction}_{[j_{k-1},i_k]}[s/s']$}
  (cf.~Lemma \ref{Lem:BlumensathLevel2}). 
  This settles the claim if $n=0$. 
  
  For the case $n>0$, note that $\rho{\restriction}_{[i_k+1,j_k]}$ is
  a return starting at some stack with topmost word $\Pop{1}(w)$. 
  Thus,   $\ExRet(\Pop{1}(w))=\ExRet(\Pop{1}(w')) \neq \emptyset$
  whence $w$ and $w'$ are words of length at least $2$. 
  Let $\delta_k$ be the transition connecting $\rho(i_k)$ with
  $\rho(i_k+1)$. Note that $\delta_k$ is either a $\Pop{1}$ transition or
  a $\Collapse$ transition and $\Lvl(\rho(i_k))=1$. 
  Note that $\TOP{2}(\rho(i_k))=w{\downarrow}_0$ whence
  $\TOP{2}(\rho_k(\length(\rho_k))[s/s']=w'{\downarrow}_0$. Hence,
  $\delta_k$ is applicable to the last configuration of $\rho_k$ and
  leads to a configuration $c_k$ with topmost word
  $\Pop{1}(w'{\downarrow}_0)$. 

  Recall that $\rho{\restriction}_{[i_k+1,j_k]}$ is a return starting at
  some stack with topmost word $\Pop{1}(w{\downarrow}_0)$. 
  Since $\ExRet(\Pop{1}(w))=\ExRet(\Pop{1}(w'))$, Lemma
  \ref{Lem:RetunDependsTopword} provides  a return 
  $\rho'_k$ from $c_k$ to $\rho_{k+1}(0)$. 

  Finally, let $\gamma$ be the transition connecting $\rho(i_{n+1})$
  with $\rho(i_{n+1}+1)=\rho(\length(\rho))$. 
  $\gamma$ connects the last configuration of
  $\rho_{n+1}$ with $(q_2,w'{\downarrow}_0)=(q_2, \Pop{2}(s'))$: 
  either $\gamma$ is a $\Pop{2}$ transition and $\lvert \rho(i_{n+1})\rvert
  = 2 = \lvert \rho_{n+1}(\length(\rho_{n+1}))\rvert$ or
  $\gamma$ is a $\Collapse$ transition and $\TOP{1}(\rho(i_{n+1})) =
  (\sigma, 2, 1) =
  \TOP{1}(\rho(i_{n+1}))[s/s'] =
  \TOP{1}(\rho_{n+1}(\length(\rho_{n+1})))$. 
  Thus, 
  \begin{align*}
    \rho':= \rho_1 \circ \delta_1 \circ \rho_1' \circ \rho_2 \circ
    \dots \circ \rho_{n} \circ \delta_{n} \circ \rho_{n}' \circ
    \rho_{n+1} \circ \gamma    
  \end{align*}
  is a return from $(q_1, s')$ to $(q_2, \Pop{2}(s'))$. 
\end{proof}

Proposition \ref{Prop:ReturnsAutomatic}, which states that 
$\ExRet(w)$ can be computed by a finite automaton, 
 is a direct corollary of the
previous lemma: $\ExRet(w)$ is a subset of $Q\times Q$. An automaton
in state $\ExRet(\Pop{1}(w))$ can change to state $\ExRet(w)$ on input
$\Sym(w)$ and $\Lvl(w)$.

\subsection{Computing (1-) Loops}
\label{sec:CompLoops}

In analogy to the results of the previous section, we now investigate
the existence of loops. We follow exactly the same ideas except for
the fact that a loop of some stack $s$ depends on the loops \emph{and}
returns of $\Pop{1}(s)$. At the end of this section, we provide a
similar result for $1$-loops.  

\begin{defi}
  Set 
  $\ExLoop(w):=\{(q,q'):$ there is a loop
  from $(q, w{\downarrow}_0)$ to 
  $(q', w{\downarrow}_0)\}$. 
  Similarly,  let 
  $\ExHLoop(w)$ and
  $\ExLLoop(w)$ be the analogous sets for high loops and low loops,
  respectively.  
  Set $\ExOneLoop(w):=\{(q,q'):$ there is a stack $s$ and a 1-loop
  from $(q, w{\downarrow}_0)$ to 
  $(q', s:w{\downarrow}_0)\}$. 
  We also set $\ExLoop(s):=\ExLoop(\TOP{2}(s))$ and 
  analogous for  $\ExHLoop$, $\ExLLoop$ and $\ExOneLoop$.
\end{defi}

Extending the result of the previous section, our main goal is the
following automaticity result for $\ExLoop$ and $\ExOneLoop$.

\begin{prop} \label{Prop:LoopsAutomatic}
  There is a finite automaton $\mathcal{A}$ with 
  $2^{\lvert Q\times Q\rvert} \cdot 2^{\lvert Q\times Q \rvert} \cdot
  \lvert \Sigma \rvert^2 \cdot 2 $ many states that
  computes $\ExRet(w)$, $\ExLoop(w)$, $\ExHLoop(w)$, $\ExLLoop(w)$,
  $\ExOneLoop(w)$, 
  $\Sym(w)$ and 
  $\Lvl(w)$ on input $w{\downarrow}_0$ (where $w$ is a word occurring
  as topmost word of some stack $s$, i.e., for $w=\TOP{2}(s)$). 
\end{prop}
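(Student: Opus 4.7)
My approach mirrors the proof of Proposition~\ref{Prop:ReturnsAutomatic}, generalised to cover loops, high loops, low loops and $1$-loops simultaneously. First I would show that each of the sets $\ExLoop(s),\ExHLoop(s),\ExLLoop(s),\ExOneLoop(s)$ depends only on $\TOP{2}(s){\downarrow}_0$, analogously to Lemma~\ref{Lem:RetunDependsTopword}. As in that lemma, this amounts to a transition-by-transition copy of a witnessing run from a stack $s$ to a stack $s'$ with $\TOP{2}(s){\downarrow}_0=\TOP{2}(s'){\downarrow}_0$, using the prefix-replacement machinery of Lemma~\ref{Lem:BlumensathLevel2}. Hence it suffices to build a bottom-up automaton whose state records the relevant information for the prefix of $w{\downarrow}_0$ read so far.

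The core of the argument is a decomposition of loops of $s$ analogous to Corollaries~\ref{Cor:Return_Decomposition_For_Simulation} and~\ref{Cor:Low-Return_Decomposition_For_Simulation}. I would analyse a loop $\lambda$ by looking at its maximal $s$-prefixed segments; between two such segments the run must touch $\Pop{1}(s)$ (only possible if $\Lvl(s)=1$) and perform, above $\Pop{1}(s)$, a sub-piece that is itself a loop or a return of a stack whose topmost word is $\Pop{1}(\TOP{2}(s))$. Applying Lemma~\ref{Lem:BlumensathLevel2} to transfer the $s$-prefixed segments, and using the fact that loops and returns depend only on the top word to transfer the sub-pieces, I conclude, in analogy with Corollary~\ref{Cor:ReturnsDetermined}, that $\ExLoop(s)$ is determined by $\ExRet(\Pop{1}(s))$, $\ExLoop(\Pop{1}(s))$, $\Sym(s)$, $\Lvl(s)$ and $\Sym(\Pop{1}(s))$; the previous symbol is needed to know which push transitions restore the top letter when re-entering $s$ from $\Pop{1}(s)$. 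The same analysis splits $\ExLoop(s)$ into its high-loop and low-loop contributions: $\ExLLoop(s)$ has the clean form $T_\downarrow\cdot\ExLoop(\Pop{1}(s))\cdot T_\uparrow$ for transition matrices $T_\downarrow,T_\uparrow$ determined by $\Sym(s),\Sym(\Pop{1}(s)),\Lvl(s)$, and $\ExHLoop(s)$ is built from atomic pieces of the form ``$\Clone{2}$ followed by a return of $\TOP{2}(s)$'' (contributing $\ExRet(s)$) and ``$\Push{\tau,k}$ followed by a descending run back to $s$'' (which, via Lemma~\ref{Lem:CharacterisationBC} for $\relB$, decomposes further into high loops at stacks above $s$ that unravel using the same recursion). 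The $1$-loop case is treated in parallel: condition~(3) of the definition of a $1$-loop is exactly the requirement that each visit to the top word $\Pop{1}(\TOP{2}(s))$ is followed by a return, which is expressible using the stored $\ExRet$ data.

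With these recursions in hand, the automaton reads $w{\downarrow}_0$ from $\bot$ upwards. Its state after the $i$-th letter encodes (i)~$\ExRet$ of the prefix read so far (a subset of $Q\times Q$), (ii)~$\ExLoop$ of that prefix (a second subset of $Q\times Q$), (iii)~the current symbol together with the previous symbol (in $\Sigma^{2}$) and (iv)~the current collapse level (in $\{1,2\}$); this yields exactly $2^{\lvert Q\times Q\rvert}\cdot 2^{\lvert Q\times Q\rvert}\cdot\lvert\Sigma\rvert^{2}\cdot 2$ states. Each transition updates these components via the recursion, and the additional quantities $\ExHLoop(w),\ExLLoop(w),\ExOneLoop(w),\Sym(w),\Lvl(w)$ are extracted from the final state by one more application of the recursion.

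The main obstacle I expect is making the decomposition fully rigorous for high loops and $1$-loops. Unlike returns, loops can cross the interface between $s$ and $\Pop{1}(s)$ repeatedly and in both directions, and high-loop pieces contain sub-pieces that \emph{a priori} live on strictly taller stacks; one must partition the run at the correct maximal $s$-prefixed segments, verify that the hypotheses of Lemma~\ref{Lem:BlumensathLevel2} (matching topmost element and matching width) hold at each cut, and confirm that the sub-pieces inherit the precise loop, low-loop, high-loop or return condition needed for the recursion to close on the stored data rather than on data about arbitrarily extended top words.
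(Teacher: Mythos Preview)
Your overall plan is correct and matches the paper's approach: the dependence on $\TOP{2}(s){\downarrow}_0$, the high/low split, the formula $\ExLLoop(s)=T_\downarrow\cdot\ExLoop(\Pop{1}(s))\cdot T_\uparrow$, the fact that $\ExOneLoop(s)$ is determined by $\ExRet(s),\Sym(s),\Lvl(s)$, and the state count are all exactly what the paper does.

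The one place where you make life harder than necessary is the high-loop case. You try to analyse a high loop by its first operation ($\Clone{2}$ followed by a return of $\TOP{2}(s)$, or $\Push{\tau,k}$ followed by a descending run that you then want to unravel further via Lemma~\ref{Lem:CharacterisationBC}). The second branch is the problem: after a push you are at a stack with topmost word $w\tau$, and the ``high loops at stacks above $s$'' you mention would require $\ExHLoop$-data for \emph{longer} words, which the automaton does not store. The paper avoids this entirely with a single coarse decomposition (Lemma~\ref{Lemma_Loop_Decomposition_For_Simulation}): a high loop of $s$ alternates between \emph{whole} $s$-prefixed segments and returns from stacks with topmost word $\Pop{1}(w)$. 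The $s$-prefixed segments are transferred \emph{wholesale} by Lemma~\ref{Lem:BlumensathLevel2} without any further internal analysis, so $\ExHLoop(s)$ is determined by $\ExRet(\Pop{1}(s))$, $\Sym(s)$ and $\Lvl(s)$ alone. This is the missing observation that closes your recursion on the stored data and dissolves the obstacle you anticipated in your final paragraph.
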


The reader who is not interested in the proof details, can safely skip
the rest of this section and continue with Section \ref{sec:RegularReach}.
In the following, we mainly use the same arguments as in the return
case, but we have to consider loops of the stack $\Pop{1}(s)$ because
those occur as subruns of low loops of $s$. We omit proofs whenever
they are analogous to the return case. 

\begin{lem} \label{Lem:LoopsDependsTopword}
  Let $s$ be some stack. There is a loop from
  $(q,s)$ to $(q',s)$ if and only if
  $(q_1,q_2)\in\ExLoop(\TOP{2}(s))$. The analogous statement holds for
  $\ExHLoop$, $\ExLLoop$, and
  $\ExOneLoop$. 
\end{lem}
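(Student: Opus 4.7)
The strategy is to mirror the argument sketched for Lemma \ref{Lem:RetunDependsTopword}: set $w = \TOP{2}(s)$ and construct, by induction on length, a transition-by-transition simulation between loops of $(q,s)$ and loops of $(q, w{\downarrow}_0)$. The correspondence I have in mind sends each stack $t$ visited by a loop $\lambda$ on $(q,s)$ to the stack $t'$ obtained by (i) discarding the bottom $\lvert s\rvert - 1$ words of $t$ and (ii) replacing every level-$2$ link attached to an original letter of $w$ by $0$; the inverse map reinstates the stripped prefix $\Pop{2}(s)$ together with the original link values. For the initial and final configurations this map is by definition a bijection between the stacks $s$ and $[w{\downarrow}_0]$.

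The only delicate point is verifying that the map preserves the applicability and effect of each transition. For $\Push{\sigma,k}$, $\Clone{2}$, $\Pop{1}$, and $\Pop{2}$ acting strictly above the stripped prefix this is immediate, since these operations only touch the part of the stack above $\Pop{2}(s)$ (resp.\ above $[\,]$ on the simulated side), and the level-$2$ links they create point to a width that shifts uniformly by $-(\lvert s\rvert - 1)$. The interesting case is $\Collapse$. A collapse on a level-$1$ letter is local to the topmost word and commutes with the simulation. A collapse on a level-$2$ letter $a$ with link $k$ takes the stack to its substack of width $k$; since a loop never passes a substack of $\Pop{2}(s)$, one must have $k \geq \lvert s\rvert$. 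Hence $a$ cannot be one of the original letters of $w$ — whose level-$2$ links point to widths in $\{1, \dots, \lvert s\rvert - 1\}$ — and must have been produced by a $\Push{\sigma,2}$ or $\Clone{2}$ executed during $\lambda$ itself. Its link value thus shifts consistently under the stripping, which is precisely why setting the original level-$2$ links of $w$ to $0$ in $w{\downarrow}_0$ loses no useful information.

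The main obstacle I anticipate is nailing down exactly this $\Collapse$ case: one must argue that no level-$2$ collapse on an ``old'' letter of $w$ can ever occur inside a loop, so that erasing link information is safe. Once this is in place, both simulations are well defined and preserve transitions step by step, yielding the equivalence $(q,q') \in \ExLoop(w)$ iff there is a loop from $(q,s)$ to $(q',s)$.

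The analogous statements for $\ExHLoop$, $\ExLLoop$, and $\ExOneLoop$ follow from the same simulation, observing that the additional structural conditions are preserved verbatim. A high loop never visits $\Pop{1}(s)$, and stripping below $\Pop{2}(s)$ preserves this. A low loop performs $\Pop{1}$ at its first and last step and eventually re-visits $\Pop{1}(s)$ in between only via returns from $\Pop{1}(s)$, a property expressed purely in terms of the topmost word. For a $1$-loop of $s:w$, the defining conditions (width stays above $\lvert s \rvert$; the final stack has the form $s:s':w$; every visit to a configuration with topmost word $\Pop{1}(w)$ is followed by a return) are all relative to the topmost word $w$ and the relative width above $s$, both of which are preserved by the simulation.
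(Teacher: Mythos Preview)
Your proposal is correct and follows precisely the approach the paper intends: a transition-by-transition simulation between runs on $s$ and runs on $[w{\downarrow}_0]$, hinging on the observation that any level-$2$ collapse inside a loop has link $\geq \lvert s\rvert$ and hence acts on a letter created during the loop itself, so that erasing the original links of $w$ is harmless. Your description of low loops is slightly off---the last step is a push restoring $s$, not a $\Pop{1}$, and there is no ``visits $\Pop{1}(s)$ only via returns'' clause in the definition---but this does not affect the argument, since the actual defining condition $\lambda(1)=\Pop{1}(s)=\lambda(\length(\lambda)-1)$ is manifestly preserved by the simulation.
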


The next step towards the proof of our main proposition is a
characterisation of $\ExLoop(w)$ in terms of
$\ExLoop(\Pop{1}(w))$ and $\ExRet(\Pop{1}(w))$
analogously to the result of Corollary
\ref{Cor:Return_Decomposition_For_Simulation} for returns. 
We do this in the following three lemmas. First, we present a
unique decomposition of loops into high and low loops. Afterwards, we
characterise low loops and high loops. 

\begin{lem} \label{Lem:LoopDecomposition}
  Let $\lambda$ be a loop from $(q,s)$ to $(q',s)$. 
  $\lambda$ is either a high loop or it has a unique decomposition as
  $\lambda = \lambda_0 \circ \lambda_1 \circ \lambda_2$ where
  $\lambda_0$ and $\lambda_2$ are
  high loops and $\lambda_1$ is a low loop.
\end{lem}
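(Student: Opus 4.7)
The plan is to case-split on whether $\lambda$ visits $\Pop{1}(s)$. If it does not, then $\lambda$ is by definition a high loop and the statement holds trivially (taking $\lambda_0 := \lambda$). Otherwise, let $i$ be the minimal and $j$ the maximal position with $\lambda(i) = \lambda(j) = \Pop{1}(s)$, and set $\lambda_0 := \lambda{\restriction}_{[0,i-1]}$, $\lambda_1 := \lambda{\restriction}_{[i-1,j+1]}$, and $\lambda_2 := \lambda{\restriction}_{[j+1,\length(\lambda)]}$.

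The main obstacle is to show that $\lambda(i-1) = s$ (and symmetrically $\lambda(j+1) = s$); everything else is routine. The strategy is to track, for each $k$, the $\lvert s \rvert$-th word $u_k$ of $\lambda(k)$. Because $\lambda$ never visits a substack of $\Pop{2}(s)$, the width of $\lambda(k)$ is always at least $\lvert s \rvert$, so the first $\lvert s \rvert - 1$ words of $\lambda(k)$ coincide with those of $s$ throughout the run. A case analysis on the stack operations (using in particular that all level-$2$ links created during $\lambda$ point to substacks of width at least $\lvert s \rvert - 1$, so a level-$2$ $\Collapse$ never alters the $\lvert s \rvert$-th word and $\Pop{2}$ only copies it over) shows that $u_k$ changes only at steps where $\lambda$ has width exactly $\lvert s \rvert$ both before and after and performs a push, a $\Pop{1}$, or a level-$1$ $\Collapse$. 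Consequently, the successive distinct values of $u_k$ form a walk in word space starting at $\TOP{2}(s)$ whose steps append or delete a single letter. A standard LIFO argument (any sub-walk that stays above the initial length and returns to it returns to the same word) then implies that the first time $u_k$ equals $\Pop{1}(\TOP{2}(s))$, the preceding value of $u$ is $\TOP{2}(s)$ itself with $\lambda$ at width $\lvert s \rvert$; that is, $\lambda(i-1) = s$.

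Given $\lambda(i-1) = s$ and $\lambda(j+1) = s$, the decomposition is immediate: $\lambda_0$ inherits the loop conditions from $\lambda$ and, by minimality of $i$, avoids $\Pop{1}(s)$, hence is a high loop; symmetrically $\lambda_2$ is a high loop by maximality of $j$. The run $\lambda_1$ starts and ends at $s$ and by construction satisfies $\lambda_1(1) = \Pop{1}(s) = \lambda_1(\length(\lambda_1) - 1)$, so it is a low loop.

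For uniqueness, in any valid decomposition $\lambda = \lambda_0' \circ \lambda_1' \circ \lambda_2'$ the two high loops $\lambda_0'$ and $\lambda_2'$ must avoid $\Pop{1}(s)$, so every occurrence of $\Pop{1}(s)$ in $\lambda$ lies inside $\lambda_1'$. Since $\lambda_1'$ is a low loop, its second and second-to-last configurations are $\Pop{1}(s)$, which pins $\lambda_1'$ down to begin at position $i-1$ and end at position $j+1$; hence the decomposition is unique.
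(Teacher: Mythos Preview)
Your proof is correct and follows exactly the same decomposition as the paper: split at the position just before the first occurrence and just after the last occurrence of $\Pop{1}(s)$. The paper's proof is a three-line sketch that asserts the resulting pieces are high/low loops ``by definition''; you supply the one nontrivial detail it glosses over, namely that the configuration immediately before the first (resp.\ after the last) visit to $\Pop{1}(s)$ must have stack exactly $s$, via your word-walk/LIFO argument on the $\lvert s\rvert$-th word.
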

\begin{proof}
  If $\lambda$ is not a
  high loop, let 
  $i\in\domain(\lambda)$ be the minimal position just
  before the first occurrence of
  $\Pop{1}(s)$  and $j\in\domain(\lambda)$ be the position
  directly after the last occurrence of $\Pop{1}(s)$.
  By definition, $\lambda{\restriction}_{[0,i]}$ and
  $\lambda{\restriction}_{[j,\length(\lambda)]}$ are high loops and
  $\lambda{\restriction}_{[i,j]}$ is a low loop. 
\end{proof}
\begin{cor} \label{Cor:ExLoopDependence}
  The set $\ExLoop(s)$ can be computed from the sets $\ExHLoop(s)$ and
  $\ExLLoop(s)$ via 
  $\ExLoop(s):=\ExHLoop(s) \cup \{(q,q'): \exists q_1,q_2
    (q,q_1)\in\ExHLoop(s), (q_1,q_2)\in \ExLLoop(s),\text{ and }(q_2,q')\in
    \ExHLoop(s)\}$.
\end{cor}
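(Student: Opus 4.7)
The plan is to prove both inclusions of the claimed set equality, leveraging Lemma \ref{Lem:LoopDecomposition} and Lemma \ref{Lem:LoopsDependsTopword} as the two main ingredients. The argument is essentially a direct unpacking of the unique decomposition result for loops, translated into the corresponding state-pair sets.

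For the inclusion from left to right, I would fix $(q,q') \in \ExLoop(s)$ and use Lemma \ref{Lem:LoopsDependsTopword} to obtain an actual loop $\lambda$ from $(q,s)$ to $(q',s)$. Then I would apply Lemma \ref{Lem:LoopDecomposition}: either $\lambda$ is a high loop, which immediately yields $(q,q') \in \ExHLoop(s)$ by another application of Lemma \ref{Lem:LoopsDependsTopword}, or $\lambda$ decomposes as $\lambda_0 \circ \lambda_1 \circ \lambda_2$ with $\lambda_0, \lambda_2$ high loops and $\lambda_1$ a low loop, all of $s$. Reading off the intermediate states $q_1$ (end of $\lambda_0$) and $q_2$ (end of $\lambda_1$), one application of Lemma \ref{Lem:LoopsDependsTopword} to each piece gives $(q,q_1) \in \ExHLoop(s)$, $(q_1,q_2) \in \ExLLoop(s)$ and $(q_2,q') \in \ExHLoop(s)$.

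For the reverse inclusion, I would note that every high loop is in particular a loop, which handles the first summand. For the second summand, given witnesses with the required membership in $\ExHLoop(s)$, $\ExLLoop(s)$, $\ExHLoop(s)$, Lemma \ref{Lem:LoopsDependsTopword} produces three concrete runs, namely a high loop $\lambda_0$ from $(q,s)$ to $(q_1,s)$, a low loop $\lambda_1$ from $(q_1,s)$ to $(q_2,s)$, and a high loop $\lambda_2$ from $(q_2,s)$ to $(q',s)$. Their concatenation $\lambda_0 \circ \lambda_1 \circ \lambda_2$ is a run from $(q,s)$ to $(q',s)$, and since none of the three pieces visits a substack of $\Pop{2}(s)$ (being loops of $s$) nor violates the link-level condition in Definition \ref{DefLoop}, the concatenation is again a loop of $s$. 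A final application of Lemma \ref{Lem:LoopsDependsTopword} yields $(q,q') \in \ExLoop(s)$.

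There is no real obstacle: once Lemma \ref{Lem:LoopDecomposition} is in place, the corollary is almost a restatement at the level of state-pair sets, the only small point to check being that concatenations of loops are loops, which is immediate from Definition \ref{DefLoop} because the loop condition is purely a restriction on the stacks visited and their link levels.
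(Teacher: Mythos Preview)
Your proposal is correct and follows the paper's approach: the corollary is stated immediately after Lemma~\ref{Lem:LoopDecomposition} with no separate proof, and your argument is exactly the direct unpacking of that decomposition into state pairs. One cosmetic remark: the detour through Lemma~\ref{Lem:LoopsDependsTopword} is not strictly needed, since $\ExLoop(s)$, $\ExHLoop(s)$ and $\ExLLoop(s)$ are all defined with respect to the same stack $w{\downarrow}_0$, so the decomposition can be applied there directly.
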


In the following, we first explain how low loops depend on the loops
of smaller stacks, afterwards we explain how high loops depend on
returns of smaller stacks. 

\begin{lem} \label{Lem:LowLoopDecomposition}
  Let $\lambda$ be a low loop starting and ending in stack $s$. Then
  $\lambda{\restriction}_{[1,\length(\lambda)-1]}$ is a loop starting
  and ending in $\Pop{1}(s)$. The operation at $0$ is a $\Pop{1}$ or a
  $\Collapse$ of level $1$. The operation at $\length(\lambda)-1$ is
  a $\Push{\sigma}$ where $\TOP{1}(s)=\sigma\in\Sigma$. 
\end{lem}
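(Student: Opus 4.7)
My plan is to carry out a short case analysis based directly on Definition~\ref{DefLoop}, then verify that the middle segment satisfies the loop conditions. The argument has essentially no content beyond unpacking definitions; the main point is to keep track of the side conditions on $\Pop{1}^k$ in the loop definition.

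First, I would show that $\Lvl(s)=1$. By hypothesis $\lambda(1)=\Pop{1}(s)$, so $\lambda$ visits $\Pop{1}^1(s)$. The loop condition then forces the topmost element of $\TOP{2}(s)$ to have a level-$1$ link, i.e.\ $\TOP{1}(s)\in\Sigma$. Write $\sigma\coloneqq\TOP{1}(s)$.

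Next, I would identify the two boundary operations by inspection of $\Op$. The operation at position $0$ maps $s$ to $\Pop{1}(s)$, a stack of the same width whose topmost word is shortened by one letter; this rules out $\Clone{2}$, every $\Push{\tau,k}$, and $\Pop{2}$ on width/length grounds, leaving only $\Pop{1}$ or $\Collapse$. Since $\Lvl(s)=1$, the latter is precisely the collapse of level $1$, as claimed. Dually, the operation at position $\length(\lambda)-1$ maps $\Pop{1}(s)$ to $s$, so it must lengthen the topmost word by pushing $\sigma$ on top. Only a push operation of level $1$ does this, and because the created element $\TOP{1}(s)=\sigma$ carries a level-$1$ link (and not a triple $(\sigma,2,k)$ as $\Push{\sigma,2}$ would produce), the operation is $\Push{\sigma,1}$.

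Finally, I would verify that $\mu\coloneqq\lambda{\restriction}_{[1,\length(\lambda)-1]}$ is a loop of $\Pop{1}(s)$. By definition of a low loop, both endpoints of $\mu$ are $\Pop{1}(s)$. Since $\lambda$ is a loop, it does not visit any substack of $\Pop{2}(s)$, hence neither does $\mu$, and $\Pop{2}(\Pop{1}(s))=\Pop{2}(s)$. For the $\Pop{1}^k$ side condition, suppose $\mu$ visits $\Pop{1}^k(\Pop{1}(s))=\Pop{1}^{k+1}(s)$; then so does $\lambda$, which by the loop condition on $\lambda$ forces the $k+1$ topmost elements of $\TOP{2}(s)$ to have level-$1$ links. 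In particular the $k$ topmost elements of $\TOP{2}(\Pop{1}(s))$ have level-$1$ links, as required by Definition~\ref{DefLoop} applied to $\mu$ and $\Pop{1}(s)$. This completes the verification, and no step presents a real obstacle beyond careful bookkeeping.
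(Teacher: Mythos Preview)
Your proposal is correct and follows essentially the same approach as the paper, which merely notes that $\lambda(1)=\Pop{1}(s)=\lambda(\length(\lambda)-1)$ and that $\Pop{2}(\Pop{1}(s))=\Pop{2}(s)$, concluding that the inner run is a loop ``directly''. You are simply more careful in making the verification explicit---in particular the $\Pop{1}^k$ side condition and the identification of the boundary operations, which the paper leaves to the reader.
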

\begin{proof}
  Note that each low loop $\lambda$ satisfies
$\lambda(1) = \Pop{1}(s)= \lambda(\length(\lambda)-1)$. 
  Since $\Pop{2}(\Pop{1}(s))=\Pop{2}(s)$, 
  it follows directly that the run in between satisfies the definition
  of a loop. 
\end{proof}

\begin{cor} \label{Cor:ExLLoopDependence}
  $\ExLLoop(s)$ depends on $\Sym(s), \Lvl(s), \Sym(\Pop{1}(s))$ and
  $\ExLoop(\Pop{1}(s))$. 
\end{cor}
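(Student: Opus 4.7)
The plan is to express $\ExLLoop(s)$ as an explicit relational composition of three factors, each visibly determined by the listed parameters. Lemma \ref{Lem:LowLoopDecomposition} tells us that any low loop of $s$ has the shape $\delta_1\circ\lambda'\circ\delta_2$, where $\delta_1$ is a single $\Pop{1}$ or level-1 $\Collapse$ step, $\lambda'$ is a loop of $\Pop{1}(s)$, and $\delta_2$ is a single $\Push{\Sym(s),\Lvl(s)}$ step that restores $s$. Lemma \ref{Lem:LoopsDependsTopword} further identifies the set of admissible middle factors with $\ExLoop(\Pop{1}(s))$, which is already one of the parameters we are allowed to depend on.

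Accordingly, I would introduce the boundary relations
\begin{align*}
T_1(\sigma_1,\ell_1) &\coloneqq \{(q,q_1):\exists\gamma\ (q,\sigma_1,\gamma,q_1,\Pop{1})\in\Delta\}\\
 &\quad\cup\{(q,q_1):\ell_1=1\text{ and }\exists\gamma\ (q,\sigma_1,\gamma,q_1,\Collapse)\in\Delta\},\\
T_2(\sigma_0,\sigma_1,\ell_1) &\coloneqq \{(q_2,q'):\exists\gamma\ (q_2,\sigma_0,\gamma,q',\Push{\sigma_1,\ell_1})\in\Delta\},
\end{align*}
and then establish the identity
\begin{equation*}
\ExLLoop(s) \;=\; T_1(\Sym(s),\Lvl(s)) \;\circ\; \ExLoop(\Pop{1}(s)) \;\circ\; T_2(\Sym(\Pop{1}(s)),\Sym(s),\Lvl(s)).
\end{equation*}
The right-hand side manifestly depends only on $\Sym(s)$, $\Lvl(s)$, $\Sym(\Pop{1}(s))$ and $\ExLoop(\Pop{1}(s))$, which yields the corollary.

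The inclusion $\subseteq$ is immediate from the decomposition: $\delta_1$ corresponds to a transition witnessing membership in $T_1$, $\lambda'$ yields a pair in $\ExLoop(\Pop{1}(s))$ via Lemma \ref{Lem:LoopsDependsTopword}, and $\delta_2$ witnesses membership in $T_2$. For $\supseteq$, I would glue a $T_1$-transition, a loop of $\Pop{1}(s)$ supplied by Lemma \ref{Lem:LoopsDependsTopword}, and a $T_2$-transition into a candidate run, then verify the loop conditions of Definition \ref{DefLoop}. The main routine-but-fiddly step is the level restriction: one checks that the glued run visits no substack of $\Pop{2}(s)=\Pop{2}(\Pop{1}(s))$ (because $\lambda'$ already avoids such substacks) and visits $\Pop{1}^k(s)$ only when the top $k$ letters of $\TOP{2}(s)$ have level-1 links (inherited from the analogous property of $\lambda'$ applied to $\Pop{1}(s)$, plus the observation that a low loop of $s$ can only exist when $\Lvl(s)=1$, since $\lambda(1)=\Pop{1}(s)$ already forces this). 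This last observation is also what eliminates the only potential obstacle at the closing step: the push $\Push{\Sym(s),\Lvl(s)}$ is necessarily a level-1 push, so no level-2 link needs to be matched and the glued run genuinely ends at $s$.
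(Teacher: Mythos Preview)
Your approach is exactly the paper's: use Lemma~\ref{Lem:LowLoopDecomposition} to split a low loop into a first step, a loop of $\Pop{1}(s)$, and a closing push, and observe that each factor is governed by one of the listed parameters. The paper states this in one sentence; you spell out an explicit formula.

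One small technical slip in your formula: as written, the identity
\[
\ExLLoop(s) \;=\; T_1(\Sym(s),\Lvl(s)) \circ \ExLoop(\Pop{1}(s)) \circ T_2(\Sym(\Pop{1}(s)),\Sym(s),\Lvl(s))
\]
fails when $\Lvl(s)=2$. In that case the left side is empty (you correctly note that a low loop forces $\Lvl(s)=1$ via Definition~\ref{DefLoop}), but your right side need not be: $T_1(\sigma,2)$ still contains all $\Pop{1}$ transitions, $\ExLoop(\Pop{1}(s))$ always contains the diagonal, and $T_2(\cdot,\sigma,2)$ may well be nonempty. So the $\supseteq$ inclusion breaks. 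The fix is immediate and already implicit in your own remarks: either guard the whole formula by the condition $\Lvl(s)=1$ (and set $\ExLLoop(s)=\emptyset$ otherwise), or fold that guard into $T_1$ by declaring $T_1(\sigma,2)=\emptyset$. With that adjustment your argument is complete and matches the paper's.
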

Note that $\Sym(s)$ and $\Lvl(s)$ determine whether the first
transition of a low loop can be applied to the stack and that
$\Sym(\Pop{1}(s))$ determines whether the last transition of a low loop
can be applied. 

In analogy to the return case, we provide a decomposition of high
loops which shows that 
$\ExHLoop(s)$ is determined by the returns of $\Pop{1}(s)$ and by the
topmost symbol and link level of $s$. 

\begin{lem}[cf.~Corollary \ref{Cor:Return_Decomposition_For_Simulation}]
  \label{Lemma_Loop_Decomposition_For_Simulation}
  Let $\lambda$ be some run starting in
  some stack $s$ with topmost word
  $w=\TOP{2}(s)$. 
  $\lambda$ is a high loop from $s$ to $s$ if and only if
  $\lambda\in\Runs(s,s)$ and 
  there is a sequence $0=:j_0\leq i_1<j_1\leq  i_2 < j_2 \leq \dots
  \leq i_n < j_n \leq i_{n+1}:=\length(\lambda)$ such that
  \begin{enumerate}[\em(1)]
  \item for $1\leq k\leq n+1$,   $s\prefixeq
    \lambda{\restriction}_{[j_{k-1},i_k]}$ and 
  \item for each $1\leq k \leq n$, there is a stack $s_k$ with
    $\TOP{2}(s_k) = \Pop{1}(w)$ such that
    $\lambda{\restriction}_{[i_k+1,j_k]}$ is a return of $s_k$. 
  \end{enumerate}
\end{lem}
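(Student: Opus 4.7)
The plan is to mirror the proof scheme used for Corollary \ref{Cor:Return_Decomposition_For_Simulation}, relying on Lemma \ref{Lemma:SubReturnDecomposition} in the forward direction and a direct width/prefix analysis in the backward direction. The key structural point is that a high loop of $s$ ends at $s$ itself rather than at a proper substack, so Lemma \ref{Lemma:SubReturnDecomposition} may be applied to all of $\lambda$ rather than to a strict prefix as was done in the return corollary.

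For the forward direction, I would assume $\lambda$ is a high loop from $s$ to $s$ and verify the three hypotheses of Lemma \ref{Lemma:SubReturnDecomposition}. Hypothesis (1), $s\prefixeq \lambda(0)$, is immediate. For hypothesis (2), $\lambda(i)\not<s$ for all $i$: the loop definition excludes visits to substacks of $\Pop{2}(s)$; the high-loop condition excludes $\Pop{1}(s)$; and any $\Pop{1}^k(s)$ with $k\geq 2$ would have to be reached from above, which requires first passing $\Pop{1}(s)$ (level-$2$ operations either lower the width below $\lvert s\rvert$ or keep a word extending $w$ as the new topmost word). For hypothesis (3), suppose $s\prefixeq \lambda(i-1)$ and $w\not\leq \TOP{2}(\lambda(i))$. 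A case distinction on the operation at $i-1$ rules out level-$2$ operations (they would force the width below $\lvert s\rvert$ or keep the new topmost word extending $w$) and rules out a level-$1$ operation applied at width exactly $\lvert s\rvert$ (this would yield $\Pop{1}(s)$, forbidden). Hence $\lvert\lambda(i)\rvert>\lvert s\rvert$ and $\TOP{2}(\lambda(i))=\Pop{1}(w)$. Since $\lambda$ returns to width $\lvert s\rvert$ at its end, pick the minimal $j>i$ with $\lvert\lambda(j)\rvert<\lvert\lambda(i)\rvert$. Lemma \ref{lem:ReturnsOfClonedWords}, applied to $\lambda{\restriction}_{[i,j]}$ with any width-$\lvert s\rvert$ stack with topmost word $w$ playing the role of $t$, certifies this subrun as a return. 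Lemma \ref{Lemma:SubReturnDecomposition} then supplies the claimed sequence.

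For the backward direction, I would assume the decomposition exists and show that $\lambda$ visits neither $\Pop{1}(s)$ nor any substack of $\Pop{2}(s)$. On every $s$-prefixed segment $\lambda{\restriction}_{[j_{k-1},i_k]}$, widths are $\geq\lvert s\rvert$ and every word extends $w$, so these segments avoid all forbidden stacks. For a return segment $\lambda{\restriction}_{[i_k+1,j_k]}$ I would first argue $\lvert s_k\rvert>\lvert s\rvert$: the transition at $i_k$ must be a level-$1$ operation (else $\TOP{2}(s_k)$ could not equal $\Pop{1}(w)$), so $\lvert s_k\rvert=\lvert\lambda(i_k)\rvert\geq\lvert s\rvert$, and equality would force $s_k=\Pop{1}(s)$ and $\lambda(j_k)=\Pop{2}(s)$, contradicting $s\prefixeq\lambda(j_k)$ which is part of condition (1) for the $(k{+}1)$-th segment. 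Thus all intermediate widths on the return exceed $\lvert s\rvert$ (so $\Pop{1}(s)$ cannot be visited there) and the endpoint $\Pop{2}(s_k)$, of width $\lvert s_k\rvert-1\geq\lvert s\rvert$ and with topmost word extending $w$, is neither $\Pop{1}(s)$ nor a substack of $\Pop{2}(s)$.

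The main obstacle I anticipate is hypothesis (3) in the forward direction: pinning down, via the loop and high-loop constraints together with the precise effect of each stack operation, that any transition violating the prefix must be a level-$1$ pop or collapse occurring at width strictly above $\lvert s\rvert$, so that Lemma \ref{lem:ReturnsOfClonedWords} becomes applicable and produces the required inner return.
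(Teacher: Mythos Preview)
Your proposal is correct and mirrors exactly the approach the paper intends (the paper omits the proof as analogous to Corollary~\ref{Cor:Return_Decomposition_For_Simulation}): verify the hypotheses of Lemma~\ref{Lemma:SubReturnDecomposition} on all of $\lambda$ for the forward direction, and carry out a width/prefix analysis for the converse. The only minor imprecision is the parenthetical claim that the operation at $i_k$ must be level~$1$ ``else $\TOP{2}(s_k)$ could not equal $\Pop{1}(w)$''---a level-$2$ operation could in principle land on a word $\Pop{1}(w)$ at width below $\lvert s\rvert$, but your subsequent contradiction via $s\prefixeq\lambda(j_k)$ disposes of that case as well.
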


\begin{cor}[cf.~Corollary \ref{Cor:ReturnsDetermined}]
  \label{Cor:ExHLoopDependence}
  $\ExRet(\Pop{1}(s))$, $\Sym(s)$, and $\Lvl(s)$ determine $\ExHLoop(s)$.
\end{cor}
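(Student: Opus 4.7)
The corollary is the high-loop analogue of Corollary \ref{Cor:ReturnsDetermined}, and my plan is to mimic its proof essentially verbatim. Fix words $w$ and $w'$ with $\Sym(w) = \Sym(w')$, $\Lvl(w) = \Lvl(w')$, and $\ExRet(\Pop{1}(w)) = \ExRet(\Pop{1}(w'))$; set $s := w{\downarrow}_0$ and $s' := w'{\downarrow}_0$. Given $(q_1,q_2) \in \ExHLoop(w)$ witnessed by a high loop $\lambda$ of $(q_1, s)$, I aim to construct a high loop $\lambda'$ witnessing $(q_1,q_2) \in \ExHLoop(w')$; the converse direction is symmetric.

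The first step is to invoke Lemma \ref{Lemma_Loop_Decomposition_For_Simulation} on $\lambda$, yielding a sequence $0 = j_0 \leq i_1 < j_1 \leq \cdots \leq i_n < j_n \leq i_{n+1} = \length(\lambda)$ such that each $\mu_k := \lambda{\restriction}_{[j_{k-1}, i_k]}$ is $s$-prefixed while each $\sigma_k := \lambda{\restriction}_{[i_k+1, j_k]}$ is a return starting at some stack with topmost word $\Pop{1}(w)$. To each $\mu_k$ I would apply the prefix-replacement Lemma \ref{Lem:BlumensathLevel2} with the substitution $s \mapsto s'$, obtaining a run $\rho_k := \mu_k[s/s']$; its hypotheses hold because $\lvert s\rvert = \lvert s'\rvert = 1$, $\TOP{1}(s) = \TOP{1}(s')$ (both carry the common symbol $\Sym(w)$ at collapse level $\Lvl(w)$ with link normalised to $0$), and the replaced object is again a valid stack.

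Next, the transition $\delta_k$ applied at position $i_k$ of $\lambda$ is either $\Pop{1}$ or a level-$1$ $\Collapse$, so its applicability depends only on $\TOP{1}$ and on $\Lvl$ of the current stack; since $\TOP{1}$ is preserved by prefix replacement, $\delta_k$ applies to the final configuration of $\rho_k$ and produces a configuration $c_k$ with topmost word $\Pop{1}(w'{\downarrow}_0)$. Because $\sigma_k$ witnesses a state-pair from $\ExRet(\Pop{1}(w)) = \ExRet(\Pop{1}(w'))$ (via Lemma \ref{Lem:RetunDependsTopword}), the same lemma provides a return $\sigma_k'$ starting in $c_k$ and ending in the initial configuration of $\rho_{k+1}$. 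The concatenation $\lambda' := \rho_1 \circ \delta_1 \circ \sigma_1' \circ \rho_2 \circ \delta_2 \circ \cdots \circ \rho_n \circ \delta_n \circ \sigma_n' \circ \rho_{n+1}$ is then certified as a high loop of $s'$ by the converse direction of Lemma \ref{Lemma_Loop_Decomposition_For_Simulation}.

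The main obstacle, exactly as in the return case, is verifying that the glueing goes through at each junction: $\delta_k$ must remain applicable after the substitution, the output configurations of $\rho_k$ and $\sigma_k'$ must match the inputs of the next pieces in state and in topmost element, and the degenerate case $n = 0$ (where $\lambda$ is entirely $s$-prefixed) must be covered by plain prefix replacement. All of these reduce to preservation of $\TOP{1}$ and $\Lvl$ along the construction, which follows from the hypotheses on $\Sym(w), \Lvl(w)$ together with the fact that the link-normalising map $\cdot{\downarrow}_0$ alters neither symbols nor collapse levels.
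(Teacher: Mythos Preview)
Your proposal is correct and follows precisely the approach the paper indicates: the paper gives no explicit proof for this corollary, merely pointing to Corollary~\ref{Cor:ReturnsDetermined}, and your argument is the natural line-by-line transcription of that proof, replacing the return decomposition of Corollary~\ref{Cor:Return_Decomposition_For_Simulation} by the high-loop decomposition of Lemma~\ref{Lemma_Loop_Decomposition_For_Simulation}. The only minor remark is that Lemma~\ref{Lemma_Loop_Decomposition_For_Simulation} as stated does not explicitly record that the operation at $i_k$ is $\Pop{1}$ or level-$1$ collapse, but this follows immediately from its proof via Lemma~\ref{Lemma:SubReturnDecomposition} (condition~3), so your use of it is justified.
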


We conclude the proof of Proposition \ref{Prop:LoopsAutomatic} by
showing that $\ExOneLoop(s)$ is determined by
$\ExRet(s)$, $\Sym(s)$ and $\Lvl(s)$. 

\begin{lem}
  Let $\rho$ be some run. 
  $\rho$ is a $1$-loop from some stack $s$ to some stack $s'$ with
  $\TOP{2}(s)=\TOP{2}(s')$ if and
  only if
  $\rho$ is a run from some stack $s$ to some stack $s'$ with
  $\TOP{2}(s)=\TOP{2}(s')$ and $\rho$ decomposes as
  \begin{align*}
  \rho=\lambda_0\circ\rho_1\circ\lambda_1\circ\rho_2\circ\dots
       \circ\rho_n\circ\lambda_n    
  \end{align*}
  where $s\prefixeq \lambda_i$ and each $\rho_i$ is a
  return of a stack $s_i$ with $\TOP{2}(s_i)=\TOP{2}(s)$.
\end{lem}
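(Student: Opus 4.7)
The plan is to prove both directions in analogy with the decomposition of returns from Corollary~\ref{Cor:Return_Decomposition_For_Simulation}. Write $w := \TOP{2}(s)$ and $\hat s := \Pop{2}(s)$. The driving observation is that, under the $1$-loop width constraint $\lvert\rho(i)\rvert \geq \lvert s\rvert$, the only transition that breaks $s\prefixeq$ is a $\Pop{1}$ (or level-$1$ $\Collapse$) applied to a stack whose topmost word is exactly $w$. Indeed, push and $\Clone{2}$ never shorten any word; $\Pop{2}$ can only be applied at width $>\lvert s\rvert$ and then exposes a word still extending $w$; and a level-$2$ $\Collapse$ can reduce width only to values $\geq \lvert s\rvert$ (else the result would be a substack of $\hat s$, contradicting the width bound), in which case the newly exposed topmost word extends $w$. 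Hence configurations failing $s\prefixeq$ always have topmost word equal to $\Pop{1}(w)$.

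For the forward direction, assume $\rho$ is a $1$-loop and induct on $\length(\rho)$. Let $\lambda_0$ be the maximal initial segment of $\rho$ on which $s\prefixeq$ holds. If $\lambda_0=\rho$, take $n=0$. Otherwise, let $t$ be the first position with $s\not\prefixeq\rho(t)$. By the observation above, the step at $t-1$ is a $\Pop{1}$ or level-$1$ $\Collapse$ applied to a stack $s_1$ with $\TOP{2}(s_1)=w$, and $\TOP{2}(\rho(t))=\Pop{1}(w)$. Condition~(3) of the $1$-loop definition then provides a $j>t$ with $\rho{\restriction}_{[t,j]}$ a return from $\Pop{1}(s_1)$ to $\Pop{2}(s_1)$; prepending the step at $t-1$ and invoking the third clause of the return definition (with witness index~$1$) turns $\rho_1 := \rho{\restriction}_{[t-1,j]}$ into a return from $s_1$, with $\TOP{2}(s_1)=w$ as required. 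The $1$-loop width constraint forces $\lvert s_1\rvert \geq \lvert s\rvert+1$, so $\Pop{2}(s_1)$ is again $s$-prefixed, and the induction hypothesis applied to $\rho{\restriction}_{[j,\length(\rho)]}$ yields the remaining $\lambda_1,\rho_2,\ldots,\lambda_n$.

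For the backward direction, suppose the decomposition $\rho=\lambda_0\circ\rho_1\circ\cdots\circ\rho_n\circ\lambda_n$ is given. Condition~(1) of the $1$-loop definition holds because $s\prefixeq\lambda_n$ yields $s\prefixeq s'$, which together with $\TOP{2}(s')=w$ realises $s'$ in the form $\hat s:s'':w$ for some $2$-word $s''$. Condition~(2) follows because the $\lambda_i$'s stay at width $\geq \lvert s\rvert$ by $s$-prefixedness and each $\rho_i$ begins at a stack $s_i$ inheriting $s\prefixeq s_i$ from the preceding $\lambda_{i-1}$; combined with $\TOP{2}(s_i)=w$ this forces $\lvert s_i\rvert \geq \lvert s\rvert+1$ (the alternative $s_i=s$ would end $\rho_i$ at width $\lvert s\rvert-1$, contradicting $s\prefixeq\lambda_i$), so the return $\rho_i$ never drops below width $\lvert s\rvert$.

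The main obstacle is condition~(3): for every position $i$ with $\TOP{2}(\rho(i))=\Pop{1}(w)$ we must exhibit $j>i$ such that $\rho{\restriction}_{[i,j]}$ is a return. Such an $i$ cannot lie in any $\lambda_k$ (whose configurations have top extending $w$), so it lies inside some $\rho_j$. I would apply the return decomposition of Corollary~\ref{Cor:Return_Decomposition_For_Simulation} or~\ref{Cor:Low-Return_Decomposition_For_Simulation} to $\rho_j$, splitting it into $s_j$-prefixed segments and sub-returns from stacks with topmost word $\Pop{1}(w)$; position $i$ must lie inside one such sub-return. If $i$ is the start of that sub-return, its suffix inside $\rho_j$ is the desired return. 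Otherwise, recurse: the sub-return is itself a return from a stack with top $\Pop{1}(w)$ and carries its own return decomposition, and at the innermost level Lemma~\ref{lem:ReturnsOfClonedWords} yields the required return as a suitable initial segment of the remaining run from $i$. The recursion terminates because each level strictly shortens the relevant topmost word.
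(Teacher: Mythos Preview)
Your forward direction follows the paper's approach, but there is a wrinkle in your induction: the suffix $\rho{\restriction}_{[j,\length(\rho)]}$ starts at $\Pop{2}(s_1)$, whose top word may strictly extend $w$, so it is not itself a $1$-loop and the induction hypothesis (which is about $1$-loops) does not apply to it. The paper avoids this by simply iterating the extraction of $(\lambda_i,\rho_i)$ pairs, each time invoking condition~(3) of the \emph{original} $1$-loop $\rho$ rather than an inductive hypothesis on a shorter $1$-loop.

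The more substantial issue is your argument for condition~(3) in the backward direction. Your recursion via Corollaries~\ref{Cor:Return_Decomposition_For_Simulation}/\ref{Cor:Low-Return_Decomposition_For_Simulation} does not work as described. After one level you correctly locate position $i$ inside a sub-return from some $s_k$ with $\TOP{2}(s_k)=\Pop{1}(w)$. But when you decompose that sub-return, position $i$---whose top word equals $\TOP{2}(s_k)$---can lie in an $s_k$-prefixed segment rather than in a sub-sub-return, so the recursion stalls; the quantity that shrinks is the top word of the return being decomposed, not the top word at $i$, so your termination argument does not bite. The paper's argument is direct and much shorter: once $i$ is located inside some $\rho_j$ (this is the only place it can be, as you note), take $k'$ minimal with $\lvert\rho_j(k')\rvert<\lvert\rho(i)\rvert$; such $k'$ exists because $\rho_j$ eventually reaches $\Pop{2}(s_j)$. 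Since condition~(2) gives $\lvert s\rvert\leq\lvert\rho_j(k')\rvert$, Lemma~\ref{lem:ReturnsOfClonedWords} (with the original $s$ in the role of the lemma's second stack, noting $\TOP{2}(\rho(i))=\Pop{1}(w)\leq w=\TOP{2}(s)$) immediately yields that $\rho{\restriction}_{[i,k']}$ is a return. No decomposition of $\rho_j$ and no recursion are needed.
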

\begin{proof}
  First assume that $\rho$ is a $1$-loop.
  If $s\prefixeq \rho$, set $n=0$. 
  Otherwise, let $j$ be minimal such that
  $s\notprefixeq\rho(j+1)$. 
  Set $\lambda_0:=\rho{\restriction}_{[0,j]}$. Note that $\lambda_0$
  is $s$ prefixed. 
  Since $\rho$ does not visit substacks of
  $\Pop{2}(s)$, $\TOP{2}(\rho(j+1))=\Pop{1}(\TOP{2}(s))$. 
  This implies that $\TOP{2}(\rho(j))=\TOP{2}(s)$. 
  By definition of a $1$-loop, there is some $k>j+1$ such that
  $\rho{\restriction}_{[j+1,k]}$ is a return. It follows directly that 
  $\rho_1:=\rho{\restriction}_{[j,k]}$ is a return of the stack
  $s_1:=\rho(j)$ and $\TOP{2}(s_1)=\TOP{2}(s)$. 
  
  The first direction of the lemma follows by iterating this
  construction. 

  Now assume that $\rho$ is a run from $s$ to $s'$ with
  $\TOP{2}(s')=\TOP{2}(s)$ that decomposes as specified above. 
  We show that $\rho$ is a $1$-loop. $\rho$ cannot visit a stack 
  $t$ with $\lvert t \rvert < \lvert s \rvert$ because  then it
  especially visits such a stack at $\lambda_j(0)$ for some
  $1\leq j \leq n$ which contradicts $s\prefixeq \lambda_j(0)$. 
  Moreover, if it visits some stack $t$ with
  $\TOP{2}(t)=\TOP{2}(\Pop{1}(s))$ then $t$ occurs  within some
  return $\rho_j$ before the final configuration of $\rho_j$.
  Assume that this position is $k$, i.e., the stack at $\rho_j(k)$ is
  $t$. Since $\rho_j$ is a return, there is a minimal 
  $k'$ such that the stack at $\rho_j(k')$ is narrower than $t$. 
  Since $\lvert s \rvert \leq \lvert \rho_j(k') \rvert < \lvert t
  \rvert$, we conclude by Lemma \ref{lem:ReturnsOfClonedWords} that
  $\rho{\restriction}_{[k,k']}$ is a return. 
\end{proof}

\begin{cor} \label{Cor:OneLoopsAutomatic}
  $\ExOneLoop(s)$ is determined by $\ExRet(s)$, $\Sym(s)$ and
  $\Lvl(s)$. 
\end{cor}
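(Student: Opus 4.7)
The plan is to show, by a simulation argument, that whenever two stacks $s$ and $t$ agree on $\Sym$, $\Lvl$, and $\ExRet$, they also agree on $\ExOneLoop$. Given a $1$-loop $\rho$ witnessing $(q,q') \in \ExOneLoop(s)$, I will exhibit a corresponding $1$-loop starting at $t$ with the same initial and final states, and then conclude by symmetry.

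First, I would apply the preceding lemma to decompose
\[
\rho = \lambda_0 \circ \rho_1 \circ \lambda_1 \circ \dots \circ \rho_n \circ \lambda_n
\]
into $s$-prefixed stretches $\lambda_i$ alternating with returns $\rho_j$ of stacks whose topmost word equals $\TOP{2}(s)$. For each $\rho_j$, Lemma \ref{Lem:RetunDependsTopword} expresses its state endpoints as a pair in $\ExRet(\TOP{2}(s)) = \ExRet(s)$; since $\ExRet(s) = \ExRet(t)$, an analogous return exists out of any stack with topmost word $\TOP{2}(t)$. For each $\lambda_i$, Lemma \ref{Lem:BlumensathLevel2} applied with the $s \mapsto t$ prefix replacement produces a $t$-prefixed run $\lambda_i'$ realising the same state transformation, provided $\TOP{1}(s) = \TOP{1}(t)$. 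When $\Lvl(s) = 1$ this equality is automatic, since then $\TOP{1}(s) = \Sym(s) \in \Sigma$. After splicing the translated pieces, the converse direction of the preceding lemma certifies that the result is a $1$-loop of $t$, delivering the desired transfer.

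The main obstacle is the case $\Lvl(s) = 2$: then $\TOP{1}(s) = (\sigma, 2, k)$ and $\TOP{1}(t) = (\sigma, 2, k')$ may carry different link values, so Lemma \ref{Lem:BlumensathLevel2} does not literally apply. I would overcome this by proving a mild variant of that lemma tailored to $s$-prefixed runs: applying $\Collapse$ either to the letter $\TOP{1}(s)$ itself or to any of its clones pushed onto the stack during the run would produce a stack of width at most $\lvert s \rvert - 1$, contradicting the $s$-prefix invariant (cf.\ Lemma \ref{Lem:Howtwolinksevolve} for the behaviour of cloned level-$2$ links). Hence the link value $k$ is never consulted throughout $\lambda_i$, and one can simulate $\lambda_i$ as a $t$-prefixed run by uniformly substituting $k'$ for $k$ in every cloned descendant of $\TOP{1}(s)$ appearing in the stack histories. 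With this variant of the prefix-replacement principle in hand, the simulation goes through and yields $\ExOneLoop(s) = \ExOneLoop(t)$, as required.
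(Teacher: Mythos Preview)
Your proposal is correct and follows essentially the same route as the paper: decompose the $1$-loop via the preceding lemma, transfer the $s$-prefixed segments by prefix replacement (Lemma~\ref{Lem:BlumensathLevel2}), transfer the returns via $\ExRet(s)=\ExRet(t)$ and Lemma~\ref{Lem:RetunDependsTopword}, and reassemble using the converse direction of the preceding lemma.

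The one place you work harder than necessary is the level-$2$ case. The paper observes that, by definition, $\ExOneLoop(s)=\ExOneLoop(\TOP{2}(s))$ is computed on the width-$1$ stack $[\TOP{2}(s){\downarrow}_0]$, and the ${\downarrow}_0$ normalisation sets every level-$2$ link value to $0$. Hence once $\Sym$ and $\Lvl$ agree, the topmost letters of the two normalised stacks are literally equal (both are $(\sigma,2,0)$ in the level-$2$ case), the widths are both $1$, and Lemma~\ref{Lem:BlumensathLevel2} applies as stated. Your variant argument---that the link value of $\TOP{1}(s)$ is never consulted in an $s$-prefixed run because a collapse on it or on any of its clones would drop below width $\lvert s\rvert$---is also correct, but it is simply not needed once you unpack the definition of $\ExOneLoop$.
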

\begin{proof}
  By definition, it suffices to consider
  stacks of width $1$. Thus, 
  let $w$ and $w'$ be words with $\Sym(w)=\Sym(w')$,
  $\Lvl(w)=\Lvl(w')$ and $\ExRet(w)=\ExRet(w')$. 
  Set $s=[w]$ and $s'=[w']$.
  If $s\prefixeq\rho$, then 
  $s'\prefixeq \rho[s/s']$. Moreover if $\rho'$ is a return
  from $(q,\hat s:w)$  to $(q', \hat s)$ for some $\hat
  s\in\Stacks(\Sigma)$ then there is a
  return from $(q, \hat s':w')$ to $(q', \hat s')$ for all 
  $\hat s'\in\Stacks(\Sigma)$. Using the decomposition from the previous
  lemma, we can apply stack replacement and the existence of similar
  returns in order to show that $\ExOneLoop(w) = \ExOneLoop(w')$. 
\end{proof}

Proposition
\ref{Prop:LoopsAutomatic} now follows from
Corollaries  \ref{Cor:ExHLoopDependence},
\ref{Cor:ExLLoopDependence},
\ref{Cor:ExLoopDependence}, 
\ref{Cor:OneLoopsAutomatic} and from Proposition
\ref{Prop:ReturnsAutomatic}:
we can store $\ExRet(\Pop{1}(w))$,
$\ExLoop(\Pop{1}(w))$, $\Sym(\Pop{1}(w))$, $\Sym(w)$ and $\Lvl(w)$
in $2\cdot (2^{\lvert Q\times Q \rvert})^2 \cdot
\lvert \Sigma \rvert^2$ many states and update the information during a
transition reading the next letter of some word. Of course,
$\Sym(\Pop{1}(w))$ and $\Sym(w)$ are only defined for words of length
at least $2$. Note that we are only interested in words occurring as
topmost words of stacks. In such words, the combination $\Sym(w)=\bot$
and $\Sym(\Pop{1}(w))\in\Sigma$ does never occur because $\bot$ is the
bottom of stack symbol. Thus, some states from 
$2^{Q\times Q} \times 2^{Q\times Q} \times
\Sigma \times\{\bot\} \times\{1,2\}$ can be used to deal with the
cases of words of length at most $1$ separately.

\section{Regularity of the Reachability Predicate 
  via \texorpdfstring{$\Encode$}{Enc}} 
\label{sec:RegularReach}

Using the decomposition and automaticity results from the previous
Section, we show that for each
$\CPS$ $\mathcal{S}$ the encoding  $\Encode$
translates the relation $\Reach$ (on all possible configurations, not
only those occurring in the graph) into an automatic
relation. Using the closure of \CPS under products with finite
automata, we then extend this result to all reachability relations
$\Reach_L$ where $L$ is some regular language over the transition labels.

\subsection{Connection  between Milestones 
  and 
  \texorpdfstring{$\mathbf{\bfEncode{}}$ }{}}
\label{sec:MilestonesAndEncode}
We want to show the regularity of the regular
reachability relations on collapsible pushdown graphs. As a
preparation, we develop two correspondences between the nodes of
the encoding of some stack $s$ and the (generalised) milestones
$\Milestones(s)$ ($\genMilestones(s)$, respectively). 
Taking a node $d\in \Encode(s)$ to the stack encoded by
$\Encode(s){\restriction}_{\{e:e\leq_{\mathrm{lex}} d\}}$ is an order
isomorphism between $(\Encode(s), \leq_{\mathrm{lex}})$ and
$(\Milestones(s), \ll)$. We denote the image of a node $d$ under
this isomorphism by $\LeftStack(d, s)$. 
After the discussion of this isomorphism we develop  another
correspondence between nodes of $\Encode(s)$ and the generalised
milestones of $s$. For each $d\in \Encode(s)$, we define the
\emph{induced general milestone} 
$\InducedGenMilestone(d,s)\in\genMilestones(s)$. This is the $\ll$-maximal
$m\in\genMilestones(s)$ such that $\LeftStack(d,s) \prefixeq
m$. Apparently, if $\LeftStack(d,s)\prefixeq s$, then the 
induced general milestone is $s$. This occurs if and only if $d$ is in
the rightmost path of $\Encode(s)$. In all other cases
$\InducedGenMilestone(d,s)$ is the $\ll$-maximal generalised milestone of
$s$ whose topmost word is a copy of the topmost word of
$\LeftStack(d,s)$.  In this case $\TOP{2}(\LeftStack(d,s))$ is not a
prefix of $\TOP{2}(s)$ and $\InducedGenMilestone(d,s)$
is not a milestone. 

$\LeftStack$ and $\InducedGenMilestone$ are useful
concepts for the analysis of runs from some milestone $m$ of $s$ to $s$
due to the fact that any generalised milestone of $s$ occurs in the
image of $\LeftStack$ or in the image of $\InducedGenMilestone$. 
Furthermore, the generalised milestones associated to some node $d$
are closely connected to those associated to its successors. 
Assume that $d, d0, d1\in \Encode(s)$.  Then
$\LeftStack(d0,s)$ is the $\ll$-successor of $\LeftStack(d,s)$,
$\LeftStack(d1,s)$ is the $\ll$-successor of
$\InducedGenMilestone(d0,s)$ and
$\InducedGenMilestone(d,s)=\InducedGenMilestone(d1,s)$. 
If  $m\ll \LeftStack(d,s)$, Corollary \ref{Cor:OrderEmbedding}
implies that a run from $m$ to $s$ which does not visit substacks of $m$
visits $\LeftStack(d,s), 
\LeftStack(d0,s), \InducedGenMilestone(d0,s)$, etc. It also implies
that $\ll$-successors are connected by one operation followed by
some loop.  

We will later show that the combination of these observations is the
key to the regularity of the reachability relations. A finite
automaton may guess at each node $d$ the last states in which
$\LeftStack(d,s)$ and $\InducedGenMilestone(d,s)$ are visited by some
run from some milestone $m$ to $s$. Since the direct $\ll$-successors of these stacks are encoded in the successor or predecessor
of $d$, the automaton can check that these guesses are locally
consistent, i.e., that there is a single transitions followed by a
loop connecting $(q_1,s_1)$ to $(q_2,s_2)$ where $s_1$ and $s_2$ are
the generalised milestones represented by $d$ and its successor (or
predecessor). If the guess of the automaton is locally consistent at
all nodes, it witnesses the existence of a run from $m$ to $s$. 

\begin{defi}\label{STACS:DefLeftTree}
  Let $T\in\EncTrees$ be a tree and $d\in
  T\setminus\{\varepsilon\}$. Then the \emph{left and
  downward closed tree of $d$} is
  $\LeftTree{d,T} \coloneqq T{\restriction}_D$ where 
  \mbox{$D\coloneqq\{d'\in T: d'\leq_{\mathrm{lex}} d\}$}. 
  We denote by 
  \mbox{$\LeftStack(d,T)\coloneqq \pi_2(\Decode(\LeftTree{d,T}))$} the 
  \emph{ left stack induced by $d$}. $\pi_2$ denotes the projection to
  the stack of $\Decode(\LeftTree{d,T})$.  
  If $T$ is clear from the context, we omit it.
\end{defi}
\begin{rem}
  \label{TOP2Determined}
  We exclude the case $d=\varepsilon$ from the definition because the
  root encodes the state of the configuration and not a part of the
  stack.  In order to simplify notation, we use the
  following conventions. Let $c=(q,s)$ be a configuration. For
  arbitrary $d\in\{0,1\}^*$, we set
  $\LeftStack(d,s):=\LeftStack(0d,c):=\LeftStack(0d,\Encode(c))$. 

  Recall that
  \mbox{$w\coloneqq\TOP{2}(\LeftStack(d,s)){\downarrow_0}$} is
  $\TOP{2}(\LeftStack(d,s))$ where all level $2$ links are set to $0$.
  Due to the definition of the encoding, for every
  $d\in \Encode(s)$,
  $w$ is
  determined by the path 
  from the root to $d$: 
  interpreting $\varepsilon$ as empty word, the word along this
  path contains the pairs of      
  stack symbols and collapse levels of the letters of 
  $\TOP{2}(\LeftStack(d,s))$. Since all level $2$ links in $w$ are
  $0$, $w$ is determined by this path. 
  Thus, Proposition \ref{Prop:LoopsAutomatic} implies that there is
  an automaton that calculates at each position $d\in\Encode(q,s)$ the
  existence of loops of $\LeftStack(d, \Encode(q,s))$ with given
  initial and final state. 

  $\LeftStack(d,\Encode(q,s))$ is a substack of $s$ for all
  $d\in \Encode(q,s)$.
  This observation follows from Remark \ref{milestonesInEncoding}
  combined with the fact that the left stack is induced by a  
  lexicographically downward closed subset. 
\end{rem}

\begin{lem}
  Let  $s\in \Stacks(\Sigma)$. For each $d\in
  \Encode(s)$ we have
  $\LeftStack(d,s)\in\Milestones(s)$.
  Furthermore, for each
  $s'\in\Milestones(s)$ there is some
  \mbox{$d\in\Encode(s)$} such that
  \mbox{$s'=\LeftStack(d,s)$.}
\end{lem}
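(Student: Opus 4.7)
The plan is to prove both containments using the bijection $f:\{1,\dots,n\}\to R$ from Remark~\ref{milestonesInEncoding} (where $R = \domain(\Encode(s))\cap(\{\varepsilon\}\cup\{0,1\}^*1)$ and $n=\lvert s\rvert$). Recall that $f(i)$ marks the position in $\Encode(s)$ corresponding to the $i$-th word $w_i$ of $s=w_1\!:\!\dots\!:\!w_n$, that the path from the root to $f(i)$ spells out (the projection of) the shared prefix $w_{i-1}\sqcap w_i$, and that the path from $f(i)$ to its leftmost descendant $f'(i)$ spells out the suffix $w_i'$ with $w_i=(w_{i-1}\sqcap w_i)w_i'$.

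For the forward direction, fix $d\in\Encode(s)$ and let $i$ be maximal with $f(i)\leq_{\mathrm{lex}}d$ (such $i$ exists as $f(1)=\varepsilon$). Using the recursive definition of $\Encode$, I would argue that $\LeftTree{d}$ contains, for every $j\leq i$, the full subtree needed to reconstruct $w_j$ (because all such subtrees sit lex-before $d$), and in addition a partial continuation determined by where $d$ sits inside the subtree rooted at $f(i)$. Reading along the path from the root to $d$ and then descending to the leftmost descendant inside $\LeftTree{d}$, this partial continuation yields a word $v_i$ that, by the path-interpretation of Remark~\ref{milestonesInEncoding}, starts with $w_{i-1}\sqcap w_i$ and is a prefix of $w_i$. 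Hence $\LeftStack(d,s)=w_1\!:\!\dots\!:\!w_{i-1}\!:\!v_i$ with $w_{i-1}\sqcap w_i\leq v_i\leq w_i$, which is exactly the definition of a milestone.

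For the converse, given a milestone $m=w_1\!:\!\dots\!:\!w_{i-1}\!:\!v_i$ of $s$, I would construct $d$ explicitly: follow the unique path in $\Encode(s)$ from the root to $f(i)$ (which reads $w_{i-1}\sqcap w_i$), then descend along left children for the remaining $\lvert v_i\rvert -\lvert w_{i-1}\sqcap w_i\rvert$ letters of $v_i$. A direct unfolding of $\Encode$ and the definition of $\LeftTree{d}$ verifies $\LeftStack(d,s)=m$. Combined with the forward direction, this also establishes that $d\mapsto\LeftStack(d,s)$ is a bijection between $\Encode(s)$ and $\Milestones(s)$.

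The main obstacle is the bookkeeping of the recursive block decomposition: one must check that the lex-downward closure $\LeftTree{d}$ really picks up all of $\Encode(w_j)$ for $j<i$ and only a prefix inside the block containing $f(i)$, regardless of whether $d$ lies on a ``left interior'' path or straddles a right-child transition to a new block. I would carry this out by induction on $\depth{\Encode(s)}$ using the recursive clauses defining $\Encode$, treating separately the boundary cases $i=1$ (partial first word) and $v_i=w_{i-1}\sqcap w_i$ (where $d$ lands on a node of the form $\{0,1\}^*1$). As with the bijectivity of $\Encode$, the argument is routine but tedious, and is most naturally deferred to the appendix.
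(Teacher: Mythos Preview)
Your approach is correct in outline and lands in the same place as the paper, but it is organised differently enough to be worth a comparison. The paper argues both directions directly via the block structure: for the forward direction, it identifies each node $d$ with a maximal block $b$ of $s$ and observes that the lexicographically smaller nodes encode exactly the blocks preceding $b$, so $\LeftStack(d,s)=s_1:w\tau$ where $s=s_1:(w\mathrel\backslash b):s_2$; for the converse, it shows that the maximal-block decomposition of a milestone $m$ is an initial segment of that of $s$, then recurses into the last block. Your route instead pivots on the word-index bijection $f$ of Remark~\ref{milestonesInEncoding} and the path interpretation of $\TOP{2}$; this is closer to the definition of milestones (which is phrased in terms of the words $w_i$) and makes the converse a one-line construction $d=f(i)0^k$ rather than a recursive descent through blocks. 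Both approaches ultimately need the same inductive bookkeeping you flag at the end.

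Two slips to fix. First, in the forward direction you write ``for every $j\leq i$, the full subtree needed to reconstruct $w_j$'': this should be $j<i$; the $i$-th word is only partially present, and that partial piece is precisely your $v_i$. Second, the phrase ``descending to the leftmost descendant inside $\LeftTree{d}$'' is misleading, since $d$ is the lexicographically maximal node of $\LeftTree{d}$ and hence has no descendants there. What you actually use (and what the Remark provides) is that the nodes of $\Encode(s)$ lex-between $f(i)$ and $f(i+1)$ are exactly $f(i),f(i)0,\dots,f'(i)$, so any such $d$ is $f(i)0^k$ and the path from the root to $d$ itself already spells out $v_i$; no further descent is needed.
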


\begin{proof}
  For the first claim, let
  $d\in\Encode(s)$.  We 
  know that \mbox{$s_d:=\LeftStack(d,s)$} is a substack of $s$. 
  Recall that the path from the root to $s_d$ encodes
  $\TOP{2}(s_d)$. Furthermore, by definition of $\Encode$, $d$
  corresponds to some maximal block $b$ occurring in $s$ in the
  following sense:
  there are $2$-words $s_1, s_2$ and a word $w$ such that
  $s=s_1: (w\mathrel{\backslash} b) : s_2$ and such that the subtree
  rooted at $d$ encodes $b$. Moreover, $d$ encodes the first letter of
  $b$, i.e., if $b$ is a $\tau$-block, then the path from the
  root to $d$ encodes $w\tau$. 

  Note that by maximality of $b$, the greatest common prefix of the
  last word of $s_1$ and the first word of $w\mathrel{\backslash} b$
  is a prefix of $w\tau$. 
  Since the elements that are lexicographically smaller than $d$
  encode the 
  blocks to the left of $b$, one sees that $s_d=s_1:w\tau$. 
  Setting $k:=\lvert s_d \rvert$, we conclude that $s_d$ is a substack
  of $s$ such that the greatest common prefix of the $(k-1)$-st and
  the $k$-th word of $s$ is a prefix of $\TOP{2}(s_d)$. 
  Recall that this matches exactly the definition of a milestone of
  $s$. Thus, $s_d$ is a milestone of $s$ and  we completed the proof of
  the first claim.

  Now we turn to the second claim. 
  The fact that every milestone $s'\in\Milestones(s)$ is indeed represented by
  some node of $\Encode(s)$ can be seen by induction on the block
  structure of $s'$. 
  Assume that $s'\in\Milestones(s)$ and that $s'$  decomposes as 
  $s'=b_0: b_1 : \dots: b_{m-1}: b_m'$ into maximal blocks. We claim
  that $s$ then 
  decomposes as $s=  b_0: b_1: \dots:  b_{m-1}:b_m:\dots:b_n$ into
  maximal  blocks. 
  In order to verify this claim, we have to prove
  that $b_{m-1}$ cannot be the initial segment of a larger block
  $b_{m-1}:b_m$ in $s$. Note that if $b_{m}'$ only contains one letter, then 
  by definition of a milestone the last word of $b_{m-1}$ and the
  first word occurring in $s$ after $b_{m-1}$, which is the first word of
  $b_m$,  can only have a common prefix of length at most $1$. Hence,
  their composition does not form a block. 
  Otherwise, the first word of $b_m'$ contains two letters which do
  not coincide with the first two letters of the words in $b_{m-1}$. 
  Since this word is by definition a prefix of the first word in
  $b_m$, we can conclude again that $b_{m-1}:b_m$ does not form a block.
  

  Note that all words in the blocks $b_i$ for $1\leq i \leq n$ and in
  the block $b'_m$ share the same first letter which is encoded at the
  position $\varepsilon$ in $\Encode(s)$ and in $\Encode(s')$. 
  By the definition of $\Encode(s)$ the blockline induced by 
  $b_i$ is encoded in the subtree rooted at $1^i0$ in $\Encode(s)$. 
  For $i<m$ the same holds in $\Encode(s')$. We set $d:=1^m$. 
  Note that $\Encode(s')$ and $\Encode(s)$ coincide on all
  elements that are lexicographically smaller than $d$ (because these
  elements encode the blocks $b_1:b_2:\dots b_{m-1}$.  

  Now, we distinguish the following cases.
  \begin{enumerate}
  \item Assume that $b_m'= [\tau]$ for
    $\tau\in\Sigma\cup(\Sigma\times\{2\}\times\N)$. Then the block $b_m'$
    consists of only one letter. In this case
    $d$ is the lexicographically largest element of $\Encode(s')$
    whence
    $s'=\LeftStack(d,\Encode(s')) = \LeftStack(d,\Encode(s))$. 
  \item Otherwise, there is a
    $\tau\in\Sigma\cup(\Sigma\times\{2\}\times\N)$ such that
      \begin{align*}
        &b_m=  \tau \mathrel{\backslash}(c_0: c_1:\dots:
        c_{m'-1}:c_{m'}:\dots: c_{n'}) \text{ and}\\
        &b_m'= \tau\mathrel{\backslash}(c_0: c_1: \dots:
        c_{m'-1}:c'_{m'})        
      \end{align*}
     for some $m' \leq n'$ such that $c_0:c_1:\dots: c_{n'}$ are the
     maximal blocks of the blockline 
     induced by $b_m$  
     and \mbox{$c_0:c_1:\dots c_{m'-1}: c'_{m'}$} are the maximal
     blocks of the 
     blockline induced by $b'_{m}$.
     Now, \mbox{$c_1:c_2:\dots: c_{m'-1}$} are encoded in the subtrees
     rooted at $d01^i0$ for $0\leq i \leq m'-1$ in $\Encode(s)$ as
     well as in $\Encode(s')$.  $c_{m'+1}: c_{m'+2}:\dots :c_{n'}$ is
     encoded in the subtree rooted at $d01^{m'+1}$ in $\Encode(s)$
     and these elements are all lexicographically larger than
     $d01^{m'}0$. Hence, we can set $d':=d01^{m'}$ and repeat this
     case distinction on $d', c'_{m'}$ and $c_{m'}$ instead of $d,
     b'_m$ and $b_m$. 
  \end{enumerate}
  Since $s'$ is finite, by repeated application of the case distinction,
  we will eventually end up in the first case where we find a
  $d\in\Encode(s)$ such that $s'=\LeftStack(d,\Encode(s))$. 
\end{proof}

The next lemma states the tight connection between milestones of a stack
(with substack relation) and elements in the encoding of this stack 
(with lexicographic order).

\begin{lem} \label{LemmaOrderIso}
  $\LeftStack(\cdot,s)$ is an isomorphism between
  $(\domain(\Encode(s)), \leq_{\mathrm{lex}})$ and
  \mbox{$(\Milestones(s), \ll)$}. 
\end{lem}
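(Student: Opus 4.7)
The plan is to prove the lemma in two stages: bijectivity first, then order preservation.

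For bijectivity, surjectivity is precisely the content of the second claim of the previous lemma, which for each $s'\in\Milestones(s)$ exhibits a node $d\in\Encode(s)$ with $\LeftStack(d,s)=s'$. For injectivity I would observe that each $\LeftTree{d,\Encode(s)}$ is itself a member of $\EncTrees$: the conditions of Definition~\ref{STACS:DefEncodingTrees} are inherited when one restricts an encoding tree to a prefix-closed, lex-downward-closed subset (prefix-closure is automatic because every prefix of $d$ is lex-below $d$, and condition~\ref{Cond:LevelOneBlockscoincide} is a local constraint preserved under restriction). Now if $d_1<_{\mathrm{lex}}d_2$ then $d_2\in\LeftTree{d_2,\Encode(s)}\setminus\LeftTree{d_1,\Encode(s)}$, so the two left trees have distinct domains. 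By Lemma~\ref{STACS:Bijective}, $\Decode$ is a bijection on $\EncTrees$, hence $\LeftStack(d_1,s)\neq\LeftStack(d_2,s)$.

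For order preservation, since $\ll$ restricted to $\genMilestones(s)$ is linear (as shown previously) and $\Milestones(s)\subseteq\genMilestones(s)$, it suffices to establish the forward direction: $d_1<_{\mathrm{lex}}d_2$ implies $\LeftStack(d_1,s)\ll\LeftStack(d_2,s)$. I would proceed by induction on $\depth{\Encode(s)}$, unfolding the recursive definition of $\Encode$. Write $\Encode(s)=\treeLR{\sigma}{T_0}{T_1}$ (with the obvious special cases when a subtree is absent): by definition of $\Encode$, the subtree $T_0$ encodes the blockline induced by the first maximal block $b_1$ of $s$, while $T_1$ encodes the remaining blockline $b_2:\dots:b_n$. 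The nodes $d=0u$ therefore correspond to milestones whose width is bounded by $\lvert b_1\rvert$, whereas nodes $d=1v$ correspond to milestones of width strictly larger than $\lvert b_1\rvert$. The split $0u<_{\mathrm{lex}}1v$ thus matches the first clause in the characterisation of $\ll$ (strictly smaller width). Within each subtree the claim reduces to an instance of the inductive hypothesis, after translating between milestones of the substack and milestones of $s$ obtained by prefixing them with the already-fixed initial segment of $s$.

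The main obstacle I expect is the bookkeeping in the inductive step: one must match the three positional relationships of $d_1,d_2$ in lex order (one a prefix of the other along a $0$-spine; the two diverging at a common ancestor via $0$ versus $1$; or one extending the other into a deeper subtree) with the three clauses of the characterisation of $\ll$ given earlier. In each case I would track how $\LeftStack$ grows from $d_1$ to $d_2$ along the path through the tree, verifying that the topmost word of the newly produced milestone stands in the correct prefix relationship with the topmost word of the previous one. A particularly delicate subcase is when $d_1$ is a proper prefix of $d_2$ reached purely by $0$-successors: here $\LeftStack(d_2,s)$ extends $\LeftStack(d_1,s)$ strictly, and one has to verify that this extension falls under the fourth of the four clauses from the earlier characterisation of $\ll$ (the one with $\TOP{2}(m_1)\leq\TOP{2}(m_2)\leq w_i$).
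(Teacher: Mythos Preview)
Your proposal is correct but takes a considerably more laborious route than the paper. The paper's proof is essentially three sentences: it observes that the lexicographic successor of a node $d$ is either $d0$ (in which case $\LeftStack$ grows by exactly one letter on the topmost word, same width) or is a node whose left tree contains strictly more elements ending in $1$ (in which case the encoded stack has strictly larger width). In either case the new milestone is $\ll$-larger than the old one, so $\LeftStack$ is strictly $\leq_{\mathrm{lex}}$-to-$\ll$ monotone; combined with surjectivity from the preceding lemma and linearity of $\ll$ on $\Milestones(s)$, the isomorphism follows immediately. In particular, injectivity drops out of strict monotonicity for free, so your separate argument via bijectivity of $\Decode$ on $\EncTrees$ is not needed.

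Your structural-induction plan is sound in spirit, but note that the subtrees $T_0$ and $T_1$ in $\Encode(s)=\treeLR{\sigma}{T_0}{T_1}$ encode blocklines rather than stacks (no $(\bot,1)$ at the root, no state label), so the inductive hypothesis as stated for stacks does not literally apply; you would first have to generalise the statement to arbitrary good blocklines and thread the ambient prefix word and the current width through the induction. This is exactly the bookkeeping you anticipate in your final paragraph. The paper sidesteps all of it by arguing globally along the lex-successor chain rather than recursively through the subtree decomposition, which is why its proof fits in a few lines.
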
  
\begin{proof}
  If the successor of $d$ in lexicographic order is $d0$, then the
  left stack of the latter
  extends the former by just one letter. Otherwise, the
  left and downward closed tree of the successor of $d$
  contains more elements ending in 
  $1$, whence it encodes a stack of larger width. Since each left and
  downward closed tree
  induces a milestone, it follows that $g$ is an order isomorphism. 
\end{proof}

Recall that by Corollary \ref{Cor:OrderEmbedding}, each run to a
configuration $(q,s)$ visits the milestones of $s$ in the order given
by the substack relation. With the previous lemma, this translates
into the fact that the left stacks induced by the elements of
$\Encode(q,s)$ are visited by the run in lexicographical order of the
elements of $\Encode(q,s)$.

Beside the tight correspondence of milestones and nodes of the
encoding, there is another correspondence between generalised
milestones and nodes. Using both correspondences, each generalised
milestone is represented by a node of the encoding. The following
definition describes the second correspondence. Recall that $\leq$
denotes the prefix relation on trees. 

\begin{defi}
  Let $T\in\EncTrees$ be the encoding of a configuration. 
  Let $d\in T\setminus\{\varepsilon\}$. 
  Set $D:=\LeftTree{d,T} \cup \{d'\in T: d\leq d'\}$. 
  Let 
  $\ExLeftStack(d, T):= \pi_2(\Decode( T{\restriction}_{D}))$ 
   where $\pi_2(c)$ is the projection to the stack of the
   configuration $c$. 

  By case distinction on the rightmost branch of $T$, we define the
  \emph{generalised milestone induced by $d$} as follows.
  \begin{enumerate}[(1)]
  \item If $d$ is in the rightmost branch of $T$, then
    $\InducedGenMilestone(d, T):=\ExLeftStack(d,T) = \pi_2(\Decode(T))$,
  \item otherwise, set $\InducedGenMilestone(d,
    T):=\ExLeftStack(d,T):\TOP{2}(\LeftStack(d,T))$. 
  \end{enumerate}
\end{defi}
\begin{rem} \label{rem:BasicPropExLeftStack}
  As the name indicates, $\InducedGenMilestone(d,T)$ is a always a
  generalised milestone of $\Decode(T)$. For some $d$ in the
  rightmost branch of $T$ this holds trivially. 
  For all $d\in T$ that are not in the rightmost branch, note the
  following:  
  \begin{iteMize}{$\bullet$}
  \item If $d1\in T$, then $\InducedGenMilestone(d,T) =
    \InducedGenMilestone(d1,T)$.
  \item If $d$ is a leaf, then 
    $\InducedGenMilestone(d,T) = \Clone{2}(\LeftStack(d,T))$. 
    Since $\LeftStack(d,T)$ is the maximal milestone of 
    $\Decode(T)$ of width $ \lvert \LeftStack(d,T)\rvert$,
    $\InducedGenMilestone(d,T)$ is a generalised 
    milestone of $\Decode(T)$ (since $d$ is not in the rightmost
    branch, $\lvert \Decode(T)\rvert> \lvert\LeftStack(d,T)\rvert$).
  \item If $d,e\in T$ such that $e=d0$ and
    $d1\notin T$, 
    then $\InducedGenMilestone(d,T)=\Pop{1}(\InducedGenMilestone(e,T))$.
  \item If $d,e\in T$ such that $e=d0$ and
    $d1\in T$, 
    then $\LeftStack(d1,T)=\Pop{1}(\InducedGenMilestone(e,T))$.
  \end{iteMize}
  By induction from the leaves to the inner nodes one concludes that 
  $\InducedGenMilestone(d,T)$ is a generalised milestone of
  $\Decode(T)$ for each $d\in T$.  One also shows that, for each
  generalised milestone $m$ of $\Decode(T)$  that is not a milestone, 
  there is some 
  $d$ such that $m=\InducedGenMilestone(d,T)$ as follows. 
  For $s=w_1:w_2:\dots:w_n$, let $w_{i_k}$  be the $k$-th word
  occurring  in $s$ such that $w_{i_k}$ is not a prefix of $w_{i_k+1}$.
  Then $w_1:w_2:\dots:w_{i_k-1}:w_{i_k} = \LeftStack(d,T)$ for the $k$-th
  leaf $d$ and
  $\InducedGenMilestone(d,T) =
  w_1:w_2:\dots:w_{i_k-1}:w_{i_k}:w_{i_k}$ is the $k$-th generalised
  milestone of the form $w_1:w_2:\dots:w_{j-1}:w_j:w_j$ for some
  $j\leq n$ which is not a milestone. Now one can show the following. 
  If $\InducedGenMilestone(d,T)$ is not a milestone (but a generalised
  one) and
  $\Pop{1}(\InducedGenMilestone(d,T))$ is not a milestone, then 
  $\Pop{1}(\InducedGenMilestone(d,T))$ is a generalised milestone,
  there is a node $e$ such that $d=e01^k$ for some $k\in\N$, and
  $\Pop{1}(\InducedGenMilestone(d,T)) = \InducedGenMilestone(e,T)$. 
  Apparently, every generalised milestone that is not a milestone is
  of the form $\Pop{1}^k(w_1:w_2:\dots:w_{i-1}:w_i:w_i)$ for some
  $i\leq n$. Thus, this proves the claim. 
\end{rem}

\subsection{Tree-Automaticity of Reachability}  
\label{sec:Reachability_Tree-Automatic}

In this section we show that 
the reachability relation $\Reach$ is automatic via 
$\Encode$. In the next section we extend this result to the regular
reachability predicates.
Recall that due to Remark \ref{Rem:DecompositionOfReach}, 
a proof of the regularity of the relations  $\relA,
\relB, \relC$ and $\relD$ implies the regularity of $\Reach$.

\subsubsection{Regularity of the Relation 
  \texorpdfstring{$\relA$}{$\relA$}}
\label{subsec:RegA}
Recall that a pair of configurations $(c_1,c_2)$ is in $\relA$ if and
only if they are connected by a run $\rho$ that decomposes as
explained in Lemma \ref{FormLemma}. 
In the Appendix \ref{Appendix:AA} we construct an
automaton  $\mathcal{A}_{\relA}$ recognising $\relA$ based on the
following idea. $\mathcal{A}_{\relA}$ guesses the decomposition
according to Lemma \ref{FormLemma} and identifies a node representing
the stack reached after each part of the
decomposition. $\mathcal{A}_{\relA}$ labels this node 
by the initial and final state of the segment of the
decomposition starting at the corresponding stack and checks whether
the labelling of all the representatives fit together. 
The fact that $\mathcal{A}_{\relA}$ can check the correctness of its
guess relies heavily on the computability of the returns and $1$-loops
of the stack $\LeftStack(d, \Encode(q,s))$ along the path from the root
of $\Encode(q,s)$ to $d$. We next explain how $\mathcal{A}_{\relA}$
processes two configurations $c_1=(q_1,s_1)$ and $c_2=(q_2,s_2)$. Let
us assume that $s_2=\Pop{2}^k(s_1)$ and let 
$\rho$ be a run from $c_1$ to $c_2$ witnessing $(c_1,c_2)\in \relA$. 
Let us first assume that
$\rho$  is a sequence of returns
$\rho=\rho_1\circ \dots \circ \rho_k$. 
Due to the special form of $s_2$, $\Encode(s_1)$ and $\Encode(s_2)$
agree on the domain of $\Encode(s_2)$. Moreover $\Encode(s_1)$ extends
$\Encode(s_2)$ by $k$ paths to nodes $d_1, \dots, d_k$ such that each
$d_i$ does not have a $0$-successor. Moreover, all $d_i$ are
lexicographically larger than all nodes in $\Encode(s_2)$. 
There is a close correspondence between the $\rho_i$ and the $d_i$: 
$\LeftStack(d_i, \Encode(s_1))=\Pop{2}^{k-i}(s_1)$ and $\rho_i$ starts
in $\LeftStack(d_{k+1-i}, \Encode(s_1))$ and ends in
$\LeftStack(d_{k-i}, \Encode(s_1))$ (where we set $d_0$ to be the
rightmost leaf of $\Encode(s_2)$).
$\mathcal{A}_{\relA}$ guesses the existence of the run $\rho$ as
follows: at first, it checks that $\Encode(s_1)$ and $\Encode(s_2)$ agree on
the domain of $\Encode(s_2)$. 
Secondly, along the common prefix it propagates the initial and final
state of $\rho$, i.e., the states $q_1$ an $q_2$ and it computes at
each node $d$ the possible returns at the corresponding milestone. 
Now assume that at some node $e$ there starts a left and a right
branch such that the left branch is a prefix of $d_1, \dots, d_i$ and
the right branch is a prefix of $d_{i+1}, \dots, d_k$. At this
position the automaton guesses that 
$\LeftStack(d_{i+1}, \Encode(c_1))$ and 
$\LeftStack(d_{i}, \Encode(c_1))$ are connected via a return and
guesses the initial and final state $q_i, q_e$. Now the automaton
propagates along the left branch starting at $e$ the state information
$q_e$ and $q_2$ (trying to find a run from 
$\left(q_e,\LeftStack(d_i,\Encode(c_1)\right)$ to $c_2$) and along the
right branch the information 
$q_1, (R,  q_i, q_e)$ (trying to find a run from $c_1$ to
$c'=(q_i,\LeftStack(d_{i+1}, \Encode(c_1)))$ such that there is a return
starting in $c'$ and ending in state $q_e$). 
Doing the same at each splitting points of the prefixes of the 
$d_1, \dots, d_k$, the automaton finally reaches each node $d_i$ with
a tuple $(R,q_i, q_i')$ of states such that it has to check whether there
is a return starting in $(q_i, \LeftStack(d_i, \Encode(c_1)))$ and
ending in state $q_i'$. But since the returns of 
$\LeftStack(d_i, \Encode(c_1))$ are computable with an automaton
reading the path to $d_i$, this can be easily checked with a
tree-automaton. 

The case that $\rho$ decomposes as a sequence of returns is the
easiest one because all parts of the decomposition then start and end
in stacks that are milestones of $c_1$. Now assume that in the
decomposition of $\rho$ according to Lemma \ref{FormLemma} $1$-loops
followed by $\Pop{1}$ or $\Collapse$ operations occur. 
For simplicity of the explanation assume that 
$\rho=\lambda_1\circ \lambda_2 \circ \lambda_3$ such that $\lambda_1$
and $\lambda_3$ 
decompose as sequences of returns and 
$\lambda_2=\rho_1\circ\dots\circ \rho_n$ decomposes such that $\rho_i$ is a 
sequence of $1$-loops followed by $\Pop{1}$ and $\Collapse$ where only
$\rho_n$ ends in a collapse of level $2$. 
Using the notation from the previous case, we find $d_i$ and $d_j$
$(i<j)$ among $d_1, \dots, d_k$ such that $\lambda_2$ starts in stack
$\LeftStack(d_j, \Encode(c_1))$ and ends in $\LeftStack(d_i,
\Encode(c_1))$. 
We would like to treat this case similar to the return case, but
note that the final stacks of $\rho_1, \dots, \rho_{n-1}$ do not
necessarily appear as milestones of $s_1$: in general, these stack can
be wider than $s_1$! The key observation that allows to represent
these stack by certain milestones is the following: let
$t_0:=\LeftStack(d_j, \Encode(c_1))$ be the stack in which $\rho_1$
starts. By definition of a $1$-loop followed by a $\Pop{1}$ operation,
the final stack of $\rho_1$ is some stack $t_1$ such that
$\TOP{2}(t_1) = \TOP{2}(\Pop{1}(t_0))$. In particular, each level $2$
collapse link in $\TOP{2}(t_1)$ points to the same substack as the
corresponding element of $\TOP{2}(t_0)$. If $e_1$ is the
unique ancestor of $d_j$ such that $d_j=e_101^x$ for some $x\in\N$,
one sees that $t_1$ and $\LeftStack(e_1, \Encode(c_1))$ agree on their
topmost words including the targets of their level $2$ collapse
links. Note that $\lambda_2$ (modulo widening the stack during
$1$-loops) basically performs $\Pop{1} / \Collapse$
of level $1$ on $\TOP{2}(t_0)$ and finally a $\Collapse$ of level $2$
on a prefix of $\TOP{2}(t_0)$. Thus, with respect to the stack
operations induced by the run
$\rho_2\circ\dots\circ\rho_n$, $m_1:=\LeftStack(e_1, \Encode(c_1))$
and $t_1$ agree and $e_1$ can serve as representative of
$t_1$. Iterating this argument, we find nodes $e_2, \dots, e_{n-1}$
such that $m_x:=\LeftStack(e_x, \Encode(c_1))$ agrees with the final
stack $t_x$ of $\rho_x$ on the topmost word including the targets of
all level $2$ collapse links for all $x<n$. 
Moreover,
$\Collapse(t_{n-1})=\Collapse(m_{n-1})=\LeftStack(d_i,\Encode(c_1))$.
The automaton $\mathcal{A}_{\relA}$ uses this observation as follows. 
Processing the encodings of $c_1$ and $c_2$ from the root to the
leaves it arrives at the greatest common prefix $f$ of $d_i$ and $d_j$
with a guess $(q_i, q_e)$ of the initial and final state of the subrun
of $\rho$ which decomposes as
$\lambda_1'\circ\lambda_2\circ\lambda_3'$ where $\lambda_2$ is
defined as before and $\lambda_1'$ is a suffix of $\lambda_1$ and
$\lambda_3'$ a prefix of $\lambda_3$. 
Now the automaton guesses the initial and final state ($q_1,q_2$) of
the run $\lambda_2$ and propagates along the left branch the guess
that $\lambda_3'$ is a run from $q_2$ to $q_e$ and along the right
branch a guess of the form $(C, q_1, q_2)$ meaning that the last
part of $\lambda_1'\circ\lambda_2$ is a $1$-loop followed by collapse
(hence the ``$C$'')
from state $q_1$ to state $q_2$.
It then nondeterministically guesses the path to $d_j$ and updates
a guess $(X, q_1, q_2)$ with $X\in\{C, P\}$ at node $e_x$ as follows. 
From $e_x$ ($x\in\N$) on it propagates a guess $(P, q_1, q_2')$
towards $d_j$ meaning that the part of $\lambda_2$ connecting the
corresponding nodes of $e_{x-1}$ and $e_x$ is a $1$-loop followed by a
$\Pop{1} / \Collapse$ of level $1$ (``P'' for pop) from state $q_1$
to state $q_2'$. It verifies the compatibility of the guess 
$(X,q_1,q_2)$ at $e_x$ and the guess $(P, q_1, q_2')$ by checking that 
for any stack with the same topmost word as 
$m_x=\LeftStack(e_x,\Encode(c_1))$ there is a $1$-loop followed by an
operation induced by $X$ from state $q_2'$ to state $q_1$ (induced
operation means $\Collapse$ of level $2$ if $X=C$ and $\Pop{1}$ or
$\Collapse$ of level $1$ if $X=P$). This way the automaton reaches
$d_j$ with a guess $(X, q_1, q_2)$ and needs to verify that there 
is a 
$1$-loop followed by an operation induced by $X$ starting in 
$(q_1,t_0)$ and ending in $q_2$. Since an automaton can keep track of
the possible $1$-loops at each node of the tree this is possible. 
The automaton can guess states and successfully verify its assumptions
if  and only if 
there is a run from $\LeftStack(d_j, \Encode(c_1))$ to
$\LeftStack(d_i, \Encode(c_1))$ with initial and final state as
guessed at the node $f$ that decomposes into $1$-loops followed by one
$\Pop{1}$ or $\Collapse$ operation each. 
Combining this idea with the verification of guesses on parts of the
run $\rho$ that are returns, the tree-automaton accepts the encodings
of two configurations if and only if this pair of configurations is in
$\relA$. 

In Appendix \ref{Appendix:AA} we show that the described automaton
works correctly and can be implemented with exponentially many states
in the number of states of the collapsible pushdown system. Thus, we
obtain the following result.

\begin{lem}  \label{Lem:RegA}
  There are two polynomials $p_1, p_2$ such that the following holds.
  Let $\mathcal{S}$ be a level $2$ collapsible pushdown system with 
  stack alphabet $\Sigma$ and state space $Q$. 
  There is an
  automaton with $p_1(\lvert \Sigma\rvert) \cdot \exp(p_2(\lvert Q
  \rvert))$ many states  that accepts the convolution of two 
  trees if and only if this convolution is of the form 
  $\Encode(c_1) \otimes \Encode(c_2)$ for $c_1,c_2$
  configurations with  $(c_1,c_2)\in \relA$. 
\end{lem}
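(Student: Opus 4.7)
The plan is to build a bottom-up tree automaton $\mathcal{A}_{\relA}$ that processes $\Encode(c_1) \otimes \Encode(c_2)$ and accepts exactly when $(c_1,c_2) \in \relA$. First I would use the local structure of the convolution to verify that $c_1 = (q_1,s_1)$ and $c_2 = (q_2,s_2)$ satisfy $s_2 = \Pop{2}^k(s_1)$ for some $k$; this amounts to checking that $\Encode(c_2)$ is exactly the downward closure within $\Encode(c_1)$ obtained by pruning $k$ rightmost leaves $d_1 <_{\mathrm{lex}} d_2 <_{\mathrm{lex}} \dots <_{\mathrm{lex}} d_k$, where $\LeftStack(d_i,\Encode(c_1)) = \Pop{2}^{k-i}(s_1)$ by Lemma \ref{LemmaOrderIso}.

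Given this skeleton, I would invoke Lemma \ref{FormLemma} to reduce the existence of a witnessing run to the existence of a decomposition into segments of the three forms F1--F3. At every greatest common ancestor $f$ of a pair of splitting leaves $d_i$, $d_{i+1}$ the automaton guesses, nondeterministically, the intermediate pair of states $(q,q')$ at which the run transitions from working above $d_{i+1}$ to working above $d_i$. It then propagates the resulting state obligations down the left branch (continuing the run from $d_i$ onward) and down the right branch (guessing a witness for the segment starting at $d_{i+1}$) and, at each leaf $d_i$, certifies that a return of the appropriate type with the prescribed initial and final states starts in $\LeftStack(d_i,\Encode(c_1))$. This last check is precisely what Proposition \ref{Prop:ReturnsAutomatic} enables, since $\ExRet$ at each node $d$ is a function of the path from the root to $d$.

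The novel ingredient is the handling of form F3, where intermediate stacks may be wider than $s_1$ and hence are \emph{not} milestones of $s_1$. Here I would exploit the observation that after a $1$-loop followed by $\Pop{1}$ or $\Collapse$ of level $1$, the new topmost word, including all level-$2$ collapse targets, coincides with that of some $m_x = \LeftStack(e_x,\Encode(c_1))$ for an ancestor $e_x$ of the ``entry leaf'' $d_j$ where the chain of $1$-loops starts. The automaton guesses a descending chain $e_1 > e_2 > \dots$ of ancestors of $d_j$ together with, at each $e_x$, a guess $(X,q,q') \in \{C,P\} \times Q \times Q$ describing the type and endpoints of the corresponding $1$-loop-and-operation segment, and verifies consistency using Corollary \ref{Cor:OneLoopsAutomatic} (computable on the path from the root to $e_x$). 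Correctness then follows by matching this decomposition bijectively with the one from Lemma \ref{FormLemma}, using the properties of $\LeftStack$ and $\InducedGenMilestone$ from Section \ref{sec:MilestonesAndEncode}.

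The main obstacle, and the reason this is postponed to the appendix, is the bookkeeping needed to fit everything into a state space of the claimed size. Each state must carry the precomputed tables $\ExRet, \ExLoop, \ExHLoop, \ExLLoop, \ExOneLoop$ (contributing a factor of the form $\exp(p_2(|Q|))$ by Propositions \ref{Prop:ReturnsAutomatic} and \ref{Prop:LoopsAutomatic}), together with the local guesses $(X,q,q')$ describing the segment obligations currently being propagated (contributing at most $\textrm{poly}(|Q|,|\Sigma|)$). The hard part will be proving that the guesses can always be synchronised consistently at the branching points of the two rightmost chains shared by $\Encode(c_1)$ and $\Encode(c_2)$, i.e., showing both that every witnessing run produces a valid accepting computation of $\mathcal{A}_{\relA}$ via its canonical decomposition, and, conversely, that every accepting computation can be recombined into an actual run; the second direction relies on Lemma \ref{Lem:BlumensathLevel2} to paste together the segments guaranteed by the local checks.
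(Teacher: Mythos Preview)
Your proposal is correct and follows essentially the same approach as the paper's construction in Appendix~\ref{Appendix:AA}: the automaton verifies the $\Pop{2}^k$ shape, guesses intermediate state pairs at branching points, uses flags $R$, $P$, $C$ to track the segment type from Lemma~\ref{FormLemma}, and represents the non-milestone intermediate stacks of F\ref{1LoopPop}-chains by ancestors $e_x$ of the entry leaf, checking the $1$-loop-plus-operation compatibility via the precomputed $\ExOneLoop$ information. Two minor deviations worth noting: the paper's state carries only $\ExRet$ (from which $\ExOneLoop$ is derived on the fly via Corollary~\ref{Cor:OneLoopsAutomatic}) rather than all five tables, and the correctness argument for $\relA$ does not actually use $\InducedGenMilestone$ or Lemma~\ref{Lem:BlumensathLevel2}---the back direction is established directly by showing that the locally verified conditions reproduce a decomposition satisfying Lemma~\ref{FormLemma}.
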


\subsubsection{Regularity of the Relation 
  \texorpdfstring{$\relB$}{$\relB$}}

Recall that the relation $\relB$ from Definition \ref{ABCD-Definition}
contains pairs $(c_1,c_2)$ if $c_2=\Pop{1}^m(c_1)$ and there is some
run from $c_1$ to $c_2$ not visiting any substack of $c_2$ before its
final configuration. 
A simple induction on the blocks in $c_1$ and $c_2$ 
yields the following characterisation of
$\Encode(c_1)\otimes\Encode(c_2)$.
For $d\in\{0,1\}^*$, $\lvert d \rvert_0$ denotes the number of $0$'s
in $d$. 

\begin{lem} \label{Lem:CharacterisationPop1Sequence}
  Let $s_1, s_2$ be stacks and
  $d$ the rightmost leaf of
  $\Encode(s_2)$.
  $s_2=\Pop{1}^m(s_1)$ for
  some $m\in\N$ if and only if $\Encode(s_1)\otimes\Encode(s_2)$ is of
  one of the following forms.
  \begin{enumerate}[\em(1)]
  \item  If $d\in\Encode(s_1)$, then
    $\domain(\Encode(s_1))=\domain(\Encode(s_2))\cup\{ d0^k:k\leq m\}$
    and $\Encode(s_1)$ and $\Encode(s_2)$ agree on 
    $\domain(\Encode(s_2))$.
  \item \label{ass2-Lem:CharacterisationPop1Sequence}
      If $d\notin\Encode(s_1)$, then $d\in\{0,1\}^*1$. Let
    $c\in\{0,1\}^*$ be 
    the predecessor of $d$. There is some $e\in\{0,1\}^*$ such that
    $ce$ is the 
    rightmost leaf of $\Encode(s_1)$. Then $\lvert e \rvert_0=m$, 
    \begin{align*}
      \domain(\Encode(s_1)) = \big(\domain(\Encode(s_2))\cup\{ x: c\leq x
        \leq e\}\big)\setminus\{d\}
    \end{align*}
    and $\Encode(s_1)$ and $\Encode(s_2)$ agree on
    $\domain(\Encode(s_2)) \setminus\{d\}$.
    Moreover, there exists some
    \mbox{$f\in\{0,1\}^*1$} such that 
    \begin{iteMize}{$\bullet$}
    \item $ce=f0^k$ for some
      $k\leq m$ and
    \item 
      \mbox{$\domain(\Encode(s_1))\setminus\domain(\Encode(s_2)) = 
        \{ f\leq x \leq ce\}$. }
    \end{iteMize}
  \end{enumerate}
\end{lem}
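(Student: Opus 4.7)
The plan is induction on $m$, reducing to the analysis of a single $\Pop{1}$ step and iterating. The underlying geometric fact is Remark~\ref{milestonesInEncoding}: for $s = w_1:w_2:\dots:w_n$, the rightmost leaf of $\Encode(s)$ sits at $f(n) \cdot 0^k$, where $f(n) \in \{0,1\}^* 1 \cup \{\varepsilon\}$ is the node separating the $n$-th word from its left neighbour, and $k$ is the length of the suffix of $w_n$ strictly below $w_{n-1} \sqcap w_n$. The base case $m = 0$ is immediate and falls under Case~1 with empty extension.

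For the induction step I would distinguish two regimes according to whether the current topmost word strictly extends $u := w_{n-1} \sqcap w_n^{s_1}$. In the \emph{extending regime}, a single $\Pop{1}$ merely removes the bottommost $0$ from the rightmost chain of the encoding; iterating this produces Case~1 of the lemma. The \emph{regime transition} happens when a $\Pop{1}$ removes the last letter of $w_n$ lying below $u$: the node $f_{s_1}(n) \in \{0,1\}^* 1$ that marks the branching for the $n$-th word disappears from the tree, and the new rightmost leaf of $\Encode(s_2)$ is located in the chain encoding $w_{n-1}$ at depth $\lvert w_n^{s_2} \rvert$. Further $\Pop{1}$ operations then merely shorten this chain from below. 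This produces the structure of Case~2: the new rightmost leaf $d$ lies in $\{0,1\}^* 1$, its parent $c$ lies in the chain encoding $w_{n-1}$, and $f := f_{s_1}(n)$ is the unique right-child ancestor of the rightmost leaf $ce$ of $\Encode(s_1)$; the invariants $\lvert e \rvert_0 = m$ and $ce = f 0^k$ with $k \leq m$ then follow inductively.

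The converse direction is symmetric: given a convolution of one of the stated shapes, applying $\Decode$ to $\Encode(s_1)$ and $\Encode(s_2)$ and reading off the words shows that $s_1$ and $s_2$ agree on their first $n - 1$ words and that $w_n^{s_2}$ arises from $w_n^{s_1}$ by deleting exactly $m$ trailing symbols, counted as $k$ in Case~1 and as $\lvert e \rvert_0$ in Case~2.

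The main technical obstacle is Case~2, where the regime transition deletes a $1$-successor and alters the tree shape in a way that is not purely along the rightmost path; the exact form of the domain difference claimed in item~\eqref{ass2-Lem:CharacterisationPop1Sequence} must be tracked carefully. Consistency with the encoding of level~$2$ links via the ordinal count of right-child ancestors (cf.~Remark~\ref{milestonesInEncoding}) also requires a brief check, but it holds automatically since $\Pop{1}$ modifies no letter of $w_1, \dots, w_{n-1}$ and the whole sequence of $m$ pops removes at most one $1$-successor, precisely at the regime transition.
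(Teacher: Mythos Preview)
Your approach (induction on $m$, analysing one $\Pop{1}$ at a time) differs from the paper's, which proceeds by a structural induction on the block decomposition that mirrors the recursive definition of $\Encode$. Both strategies are viable, but your step-by-step analysis contains a genuine gap.

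The problem is your description of what happens after the ``regime transition''. You claim that once $w_n$ drops to $u = w_{n-1}\sqcap w_n^{s_1}$, the node $f_{s_1}(n)$ disappears and ``further $\Pop{1}$ operations then merely shorten this chain from below''. This is not what happens. Consider $s_1 = \bot bcde : \bot bcfgh$ and pop five times to $s_2 = \bot bcde : \bot$. After three pops ($w_n = \bot bc = u$) the rightmost leaf of the intermediate encoding is $001 = f_{s_1}(2)$, a $1$-successor. The fourth pop does not just shorten a chain: it \emph{removes} the $1$-leaf $001$ and \emph{creates} a new $1$-leaf $01$; the fifth pop removes $01$ and creates $1$. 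In general, every pop in this second phase migrates the rightmost $1$-leaf one level up along the path that encodes $w_{n-1}$ (and this path may itself contain $1$-edges, so the migration is not even along a pure $0$-chain). There is no single transition; the $1$-leaf moves at every step once $w_n$ has become a prefix of $w_{n-1}$.

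Your endpoint observation is nevertheless correct: comparing $\Encode(s_1)$ and $\Encode(s_2)$ directly, exactly one $1$-successor (namely $f = f_{s_1}(n)$) is lost and at most one (namely $d = f_{s_2}(n)$) is gained. But your inductive scheme does not establish this, because it mis-describes the intermediate dynamics. Two fixes are available. Either carry out the induction on $m$ honestly, tracking at each step whether the current rightmost leaf is a $0$-successor (remove it) or a $1$-successor $p1$ (remove it and add $q1$, where $q$ is the unique ancestor of $p$ on the $w_{n-1}$-path with one fewer $0$ in its address), and then verify that the composite of these moves yields the claimed shape. Or, more simply, abandon the induction on $m$ and argue directly from Remark~\ref{milestonesInEncoding}: the encodings of $s_1$ and $s_2$ agree on everything lexicographically left of $f_{s_2}(n)$, and the difference is exactly the branch from $f_{s_1}(n)$ down versus the single node $f_{s_2}(n)$. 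This direct comparison is essentially the paper's block induction specialised to the last word.
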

In Remark \ref{TOP2Determined} we pointed out that the path  from 
the root  to the rightmost leaf of $\Encode(s_1)$ encodes
$\TOP{2}(s_1)$. 
If the first case of the characterisation applies, then 
the $k$-th predecessor $x_k$ of the rightmost leaf of $\Encode(s_1)$
satisfies $\LeftStack(x_k, s_1) = \Pop{1}^{k}(s_1)$
for all $k<m$. 
If the second case applies,  
for all $c\leq x \leq e$ with $\lvert x \rvert_0 - \lvert c\rvert_0 =
k\leq m$,  
the path to $x$ encodes $w_k:=\TOP{2}(\Pop{1}^{m-k}(s_1))$ and
$\TOP{2}(\LeftStack(x, s_1))=w_k$. 
Thus, the elements on the path from $d$ (or $c$, respectively) to the
rightmost leaf of $\Encode(s_1)$ may serve as representatives of the
stacks that a run from $s_1$ to $s_2$ passes. 

Recall the decomposition into high loops of witnessing runs for
$(c_1,c_2)\in \relB$ from  Lemma \ref{Lem:CharacterisationBC}.
We  describe informally the automaton $\mathcal{A}_{\relB}$ 
that recognises the relation $\relB$. 
$\mathcal{A}_{\relB}$  guesses the path to the rightmost leaf of
$\Encode(c_2)$  and keeps track of $\ExHLoop(d)$ at each node $d$
on this path. 
Each node $d$ on the path from the rightmost leaf of $\Encode(c_2)$ to
the rightmost leaf of $\Encode(c_1)$ is labelled by
the state of $c_1$, by $\ExRet(\Pop{1}(\LeftStack(d,c_1)))$, by
$\Sym(\LeftStack(d,c_1))$, 
by $\Lvl(\LeftStack(d, c_1))$ and by a guess $q_d\in Q$ of a 
final state of
some run from $\Encode(c_1)$ to $\Pop{1}^{m-k}(s_1)$ for $k$
appropriate such that 
$\TOP{2}(\LeftStack(d,c_1))=\TOP{2}(\Pop{1}^{m-k}(c_1))$.  
Recall that such a state determines the set
$\ExHLoop(\LeftStack(d, c_1))$. 
The automaton can verify that the guesses of the $q_d$  are consistent in
the following sense. If it has labelled some node $d$ with a state
$q_d$ such 
that there is a run from $c_1$ to $(q_d, \Pop{1}^{m-k}(s_1))$, then
there is also a run from $c_1$ to $(q_c, \Pop{1}^{m-k-(1-i)}(s_1))$ for
$c$ the node such that $ci=d$. 
If $i=1$, $q_c=q_d$ and if $i=0$ then
the run to $s':=\Pop{1}^{m-k-(1-i)}(s_1)$ is extended by a high loop
of $s'$ followed by a $\Pop{1}$ or collapse of level $1$. 
Since the automaton ``knows'' the  possible high loops, the topmost
symbol and the link level of the stack, this check is trivial.  
Since it also stores  the state
of $c_1$, $\mathcal{A}_{\relB}$ can verify that its guess at the rightmost
leaf of $\Encode(c_1)$ is a state $q$ such that there is a high loop
starting in $c_1$ and ending in state $q$. 
We postpone the formal
definition of $\mathcal{A}_{\relB}$ and the proof of the following lemma to
Appendix \ref{Appendix:AB}.  
\begin{lem}  \label{Lem:RegB}
  There are polynomials $p_1, p_2$ such that the following holds.
  Let $\mathcal{S}$ be a level $2$ collapsible pushdown system with 
  stack alphabet $\Sigma$ and state space $Q$. 
  There is an
  automaton with 
  $p_1(\lvert \Sigma\rvert) \cdot \exp(p_2(\lvert Q \rvert))$ 
  many states that accepts the convolution of two 
  trees if and only if this convolution is of the form 
  $\Encode(c_1) \otimes \Encode(c_2)$ for $c_1,c_2$
  configurations with  $(c_1,c_2)\in \relB$. 
\end{lem}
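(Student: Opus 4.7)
The plan is to build the recognising automaton $\mathcal{A}_{\relB}$ as the synchronous product of three cooperating sub-components whose joint task is dictated by two previously established results. By Lemma \ref{Lem:CharacterisationBC}, every witness for $(c_1,c_2)\in\relB$ decomposes as $\lambda_1\circ\rho_1\circ\cdots\circ\lambda_n\circ\rho_n$ with $\lambda_i$ a high loop and $\rho_i$ a single $\Pop{1}$ or collapse-of-level-$1$ transition; by Lemma \ref{Lem:CharacterisationPop1Sequence}, the structural relationship between $\Encode(c_1)$ and $\Encode(c_2)$ when $c_2=\Pop{1}^m(c_1)$ is rigidly fixed, and all intermediate milestones of a witnessing run appear as $\LeftStack(d,c_1)$ for $d$ ranging over the ``extra'' segment of $\Encode(c_1)$ lying between the rightmost leaf of $\Encode(c_2)$ and the rightmost leaf of $\Encode(c_1)$. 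So it suffices to verify the shape, compute the high-loop data along that segment, and thread guessed states through it.

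First I would implement the shape component. Descending on $\Encode(c_1)\otimes\Encode(c_2)$ while the products of $\mathcal{A}_{\EncTrees}$ run on each side, the automaton checks that the two trees agree on $\domain(\Encode(c_2))$ and that the extra part of $\Encode(c_1)$ realises exactly one of the two cases of Lemma \ref{Lem:CharacterisationPop1Sequence}: a left spine of length at most $m$ hanging from the rightmost leaf of $\Encode(c_2)$, or the slightly more involved hook pattern with the extra path $ce=f0^k$. A constant number of control states suffices. Parallel to this, along the (now identified) path to the rightmost leaf of $\Encode(c_1)$, I would run the automaton of Proposition \ref{Prop:LoopsAutomatic}; since that path encodes $\TOP{2}(\LeftStack(d,c_1)){\downarrow}_0$ by Remark \ref{TOP2Determined}, each node $d$ on the path receives a label consisting of $\ExHLoop(\LeftStack(d,c_1))$, $\ExRet(\LeftStack(d,c_1))$, $\Sym(\LeftStack(d,c_1))$ and $\Lvl(\LeftStack(d,c_1))$.

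Third, a nondeterministic component guesses a state $q_d\in Q$ for every node $d$ between the rightmost leaf of $\Encode(c_2)$ and the rightmost leaf of $\Encode(c_1)$, with intended meaning \emph{there is a $\relB$-witnessing run prefix reaching} $(q_d,\LeftStack(d,c_1))$. Initialisation at the topmost such node is checked against the state of $c_1$ composed (if wanted) with an element of $\ExHLoop(\LeftStack(d,c_1))$. At a local step from a node $c$ to a successor $d=c0$ the automaton requires a $q'\in Q$ with $(q_c,q')\in\ExHLoop(\LeftStack(c,c_1))$ and a transition in $\mathcal{S}$ of type $\Pop{1}$ or collapse-of-level-$1$ taking $q'$ to $q_d$, using that the applicability depends only on $\Sym$ and $\Lvl$; at a step to $d=c1$ (which can occur only in the hook pattern) it requires $q_d=q_c$ because no stack operation separates the two corresponding milestones. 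At the rightmost leaf of $\Encode(c_1)$ the guess, composed with one more element of the $\ExHLoop$-label, must equal the state component of $c_2$.

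The hard part is marrying the state-guessing component to case~\eqref{ass2-Lem:CharacterisationPop1Sequence} of Lemma \ref{Lem:CharacterisationPop1Sequence}, where the two trees disagree already at the node $d$ and the walk through intermediate milestones must jump from $\Encode(c_2)$ onto a sibling branch of $\Encode(c_1)$ while keeping the loop-bookkeeping from Proposition \ref{Prop:LoopsAutomatic} synchronised; this is where the three components have to be choreographed most carefully, and this also explains why the rather baroque second case of Lemma \ref{Lem:CharacterisationPop1Sequence} was stated the way it is. Taking the synchronous product and accepting exactly when every check succeeds yields an automaton recognising $\{\Encode(c_1)\otimes\Encode(c_2):(c_1,c_2)\in\relB\}$. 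The state count is dominated by the loop-bookkeeping from Proposition \ref{Prop:LoopsAutomatic}, contributing $2^{|Q|^2}\cdot 2^{|Q|^2}\cdot|\Sigma|^2\cdot 2$, multiplied by the state guess ($|Q|$) and the constant-size shape and consistency checks, which gives the required bound $p_1(|\Sigma|)\cdot\exp(p_2(|Q|))$.
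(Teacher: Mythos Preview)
Your approach is the same as the paper's: verify the shape condition of Lemma~\ref{Lem:CharacterisationPop1Sequence}, run the loop-computing automaton of Proposition~\ref{Prop:LoopsAutomatic} along the rightmost path of $\Encode(c_1)$, and thread a guessed state through the extra segment according to the decomposition of Lemma~\ref{Lem:CharacterisationBC}. The paper's automaton $\mathcal{A}_{\relB}$ in Appendix~\ref{Appendix:AB} does exactly this, using flags $S,P_1,P_2$ to distinguish the three portions of the rightmost path (the common rightmost branch of both encodings, the hook portion still inside $\Encode(c_1)\cap\Encode(c_2)$, and the tail in $\Encode(c_1)\setminus\Encode(c_2)$).

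Your local check and boundary conditions, however, are oriented backwards. Since $\LeftStack(c,c_1)=\Pop{1}(\LeftStack(c0,c_1))$, the node $c0$ represents the \emph{larger} intermediate stack and is visited by a witnessing run \emph{before} the node $c$. The high loop in the decomposition of Lemma~\ref{Lem:CharacterisationBC} takes place at the larger stack, and the subsequent $\Pop{1}$ (or level-$1$ collapse) leads to the smaller one; hence the correct local constraint is $(q_{c0},q')\in\ExHLoop(\LeftStack(c0,c_1))$ together with a pop-type transition from $q'$ to $q_c$, not $(q_c,q')\in\ExHLoop(\LeftStack(c,c_1))$ followed by a pop to $q_{c0}$ as you wrote (the latter would send you to an even smaller stack, not to $\LeftStack(c0,c_1)$). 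Correspondingly, the initialisation against the state of $c_1$ belongs at the rightmost leaf of $\Encode(c_1)$ (which encodes $s_1$, where the run starts), and the check against the state of $c_2$ belongs at the rightmost leaf of $\Encode(c_2)$; you have these swapped. With the orientation corrected, your construction is the paper's.
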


\subsubsection{Regularity of the Relation 
  \texorpdfstring{$\relC$}{$\relC$}}

Recall that the relation $\relC$ is in some sense the backward version of 
the relation ${\relB}$. $(c_1,c_2)\in \relB$ holds if there is a 
sequence of high loops, 
$\Pop{1}$ and collapse of level $1$ connecting $c_1$ with $c_2$. 
Analogously, $(c_1,c_2)\in \relC$ holds if there is a sequence of high
loops and 
$\Push{\sigma,l}$ operations that generates $c_2$ from $c_1$. 
Since $(c_1,c_2)\in \relC$ implies that $c_1=\Pop{1}^k(c_2)$ for some
$k\in\N$, Lemma  \ref{Lem:CharacterisationPop1Sequence} applies
analogously. There is just one further condition: if $(c_1,c_2)\in \relC$
and $\TOP{2}(c_1) < \TOP{2}(c_2)\sqcap \TOP{2}(\Pop{2}(c_2))$, 
then there is some word $w$ such that
$\TOP{2}(c_2)\sqcap \TOP{2}(\Pop{2}(c_2)) = \TOP{2}(c_1) w$ and
$w$ only contains links of level $1$ (otherwise, the stack of
$c_2$ cannot be generated from the stack of $c_1$ without passing
$\Pop{2}(c_1)$). 

The automaton $\mathcal{A}_{\relC}$ recognising the relation $\relC$ via
$\Encode$ does the following.
Due to Lemma
\ref{Lem:CharacterisationPop1Sequence}, the path from the rightmost
leaf of $\Encode(c_1)$ to the rightmost leaf of $\Encode(c_2)$ has the
following form: 
For each $\lvert \TOP{2}(c_1)\rvert \leq m \leq \lvert \TOP{2}(c_2)
\rvert$ it contains nodes $d, d1, \dots, d1^{k_m}$ with $\lvert
d\rvert_0=m$ such that $\TOP{2}(\LeftStack(d, \Encode(c_2))$ is the
prefix of $\TOP{2}(c_2)$ of length $m$. 
Recall that $\mathcal{A}_{\relB}$ tries to label $d$ with a state $q_e$
and $e=d1^{k_m}0$ with a state $q_e'$ such that  
$q_e'$ and $q_e$ are connected by a high loop of $\LeftStack(e,
\Encode(c_1))$ plus a $\Pop{1}$ or collapse of level $1$. 
Since $\mathcal{A}_{\relC}$ works in the other direction, it labels $d$ with
a state $q_i$ and $d1^{k_m}0$ with a state $q_i'$ such that
$q_i$ and $q_i'$ are connected by a high loop of
$\LeftStack(d, \Encode(c_2))$ followed by a $\Push{\sigma,l}$. 
Furthermore, it checks that $l=1$ as long as the path to $d$ encodes a
proper prefix of 
$\TOP{2}(c_2)\sqcap \TOP{2}(\Pop{2}(c_2))$ which is not a prefix of
$\TOP{2}(c_1)$. 
With these remarks, the formal construction of $\mathcal{A}_{\relC}$
from $\mathcal{A}_{\relB}$ (cf.~Appendix \ref{Appendix:AB}) is left to the
reader. 

\begin{lem}  \label{Lem:RegC}
  There are two polynomials $p_1, p_2$ such that the following holds.
  Let $\mathcal{S}$ be a level $2$ collapsible pushdown system with 
  stack alphabet $\Sigma$ and state space $Q$. 
  There is an
  automaton with 
  $p_1(\lvert \Sigma\rvert) \cdot \exp(p_2(\lvert Q \rvert))$ 
  many states  that accepts the convolution of two 
  trees if and only if this convolution is of the form 
  $\Encode(c_1) \otimes \Encode(c_2)$ for $c_1,c_2$
  configurations with  $(c_1,c_2)\in \relC$. 
\end{lem}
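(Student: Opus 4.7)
The plan is to construct $\mathcal{A}_{\relC}$ by mirroring the construction of $\mathcal{A}_{\relB}$ from Appendix \ref{Appendix:AB}, and then supplementing it with one extra local check to enforce the restriction on collapse links that is peculiar to $\relC$. First I would verify, using Lemma \ref{Lem:CharacterisationPop1Sequence}, that the input $\Encode(c_1)\otimes\Encode(c_2)$ has the correct shape for $c_1=\Pop{1}^k(c_2)$. This is a regular condition on the convolution: the two trees must agree outside a distinguished ``$\Pop{1}$-path'' that witnesses which letters of $\TOP{2}(c_2)$ are popped to obtain $\TOP{2}(c_1)$, and a constant number of additional states suffice to check it.

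Next I would exploit Lemma \ref{Lem:CharacterisationBC}(2): a run witnesses $(c_1,c_2)\in\relC$ iff it decomposes as $\lambda_0\circ\rho_1\circ\lambda_1\circ\dots\circ\rho_n\circ\lambda_n$ where each $\lambda_i$ is a high loop and each $\rho_i$ is a single push. By Remark \ref{TOP2Determined} and Lemma \ref{LemmaOrderIso} the intermediate stacks $\Pop{1}^{j}(c_2)$ appear as left stacks $\LeftStack(d,c_2)$ for certain nodes $d$ on the path from the rightmost leaf of $\Encode(c_1)$ to the rightmost leaf of $\Encode(c_2)$. The automaton guesses at each such representative $d$ a pair $(q_d, q_d')\in Q\times Q$ whose intended meaning is that there is a high loop of $\LeftStack(d,c_2)$ from $q_d$ to some intermediate state followed by a $\Push{\sigma,l}$ reaching the next representative in state $q_d'$. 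Proposition \ref{Prop:LoopsAutomatic} provides an automaton which, on the path to $d$, computes $\ExHLoop(\LeftStack(d,c_2))$, $\Sym(\LeftStack(d,c_2))$ and $\Lvl(\LeftStack(d,c_2))$; this is exactly the information needed to check each such local transition and to glue the leading high loop at $c_1$ and the trailing high loop at $c_2$ to the guessed states at the endpoints of the $\Pop{1}$-path.

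The genuinely new point compared with $\mathcal{A}_{\relB}$ is the link-level constraint mentioned in the paper: if $\TOP{2}(c_1)$ is a proper prefix of $\TOP{2}(c_2)\sqcap \TOP{2}(\Pop{2}(c_2))$, then every letter of $\TOP{2}(c_2)\sqcap \TOP{2}(\Pop{2}(c_2))$ lying strictly above $\TOP{2}(c_1)$ must carry a link of level $1$, since a level $2$ link created by a $\Push{\sigma,2}$ at that height would point to a stack of width $\lvert c_1\rvert-1$ and so could not be produced without first visiting $\Pop{2}(c_1)$. This is a purely local label condition of the form $T(d)=(\sigma,1)$ for the nodes $d$ encoding the relevant portion of the common prefix, and it is easy to encode as a finite-state check by adding a bit of memory that tracks whether the currently visited node lies in the fragment of the path between the rightmost leaf of $\Encode(c_1)$ and the split between $\TOP{2}(c_2)$ and $\TOP{2}(\Pop{2}(c_2))$.

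The state set of $\mathcal{A}_{\relC}$ is then the product of: the shape-checker for Lemma \ref{Lem:CharacterisationPop1Sequence} (constant size), the loop/return computer of Proposition \ref{Prop:LoopsAutomatic} (of size $p(\lvert\Sigma\rvert)\cdot\exp(q(\lvert Q\rvert))$), the pair $(q_d,q_d')\in Q\times Q$ guessed along the path, and a constant number of bits for the link-level check. This yields the required $p_1(\lvert\Sigma\rvert)\cdot\exp(p_2(\lvert Q\rvert))$ bound. The main obstacle, though essentially routine in light of Appendix \ref{Appendix:AB}, is bookkeeping: one has to make sure that the propagation of guessed states along the unique $\Pop{1}$-path inside $\Encode(c_2)$ commutes correctly with the branching structure of the tree, and that subtrees on which $\Encode(c_1)$ and $\Encode(c_2)$ coincide are accepted unconditionally; this is handled exactly as in the construction of $\mathcal{A}_{\relB}$.
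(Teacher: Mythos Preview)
Your proposal is correct and follows essentially the same approach as the paper: mirror the construction of $\mathcal{A}_{\relB}$ with the direction reversed (high loops followed by $\Push{\sigma,l}$ instead of high loops followed by $\Pop{1}$/level-$1$ collapse), and add the extra local check that the pushed letters lying in the segment between $\TOP{2}(c_1)$ and $\TOP{2}(c_2)\sqcap\TOP{2}(\Pop{2}(c_2))$ carry only level-$1$ links. The paper's informal description before Lemma~\ref{Lem:RegC} makes exactly these two points and then leaves the formal construction to the reader, so your proof sketch matches it in substance and in structure.
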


\subsubsection{Regularity of the Relation 
  \texorpdfstring{$\relD$}{$\relD$}}
\label{subsec:RegD}

Given a \CPS $\mathcal{S}=(Q, \Sigma,\Gamma, \Delta_{\mathcal{S}},
q_0)$, we define an automaton $\mathcal{A}_{\relD}$ 
 that recognises the relation $\relD$
  in the following sense. 
Given configurations $c_1=(q_1, s_1)$ and $c_2=(q_2, s_2)$,
$\mathcal{A}_{\relD}$ accepts 
$\Encode(c_1)\otimes 
\Encode(c_2)$ if and
only if 
$s_1=\Pop{2}^k(s_2)$ for some $k\in\N$ and
there is a run $\rho$ of $\mathcal{S}$ from $c_1$ to
$c_2$ witnessing $(c_1,c_2)\in \relD$.

We informally explain how $\mathcal{A}_{\relD}$ processes the encoding 
\mbox{$\Encode(c_1)\otimes\Encode(c_2)$} of
two configurations $c_1,c_2$ in order to verify $(c_1,c_2)\in \relD$. 
First of all the automaton guarantees
that $s_1=\Pop{2}^k(s_2)$ for some $k\in\N$
(this is the case if and only if $\Encode(s_1)$ is a subtree of
$\Encode(s_2)$, the rightmost leaf $l$ of $\Encode(s_1)$ does not have
a $0$-successor in $\Encode(s_2)$ and
for each $d \leq_{\mathrm{lex}} l$, 
$d\in\Encode(s_2) \Leftrightarrow d\in \Encode(s_1)$).

Assume that $c_1=(q_1,s_1)$ and $c_2=(q_2,s_2)$ with
$s_1=\Pop{2}^k(s_2)$. If there is a run from $c_1$ to $c_2$ its form
is described in Corollary \ref{Cor:OrderEmbedding}: it is a sequence of
loops followed by one 
operation each that starts in some generalised milestone of $s_2$ and
leads to the next generalised milestone (with respect to $\ll$). 
Recall the following: each node in
$\Encode(s_2)$ which is not contained in $\Encode(s_1)$ corresponds to
a milestone in 
$\Milestones(s_2)\setminus\Milestones(s_1)$ via 
$\LeftStack(d, \Encode(s_2))$. Moreover, each node in the rightmost
branch of $\Encode(s_1)$ or in $\Encode(s_2)\setminus\Encode(s_1)$ 
corresponds to a generalised milestone in
$\genMilestones(s_2)\setminus\genMilestones(s_1)$ via
$\InducedGenMilestone$. 
The essence of Remark \ref{rem:BasicPropExLeftStack} is that for each
generalised milestones represented by a
node $d$ the node
representing the
$\ll$-successor of this generalised milestone can be found
locally around $d$. Since we can compute the possible loops of the
stack $\LeftStack(d, \Encode(c_2))$ and $\InducedGenMilestone(d,
\Encode(c_2))$ along the path from the root to $d$, a tree-automaton
may guess the initial and final states of each part of the
decomposition of a run according to Corollary \ref{Cor:OrderEmbedding} and
check the local compatibility of each of the guesses. 

The detailed definition of $\mathcal{A}_{\relD}$ as well as a proof of
the following lemma can be found in 
appendix \ref{Appendix:AD}. 
\begin{lem} \label{Lem:RegD}
  There are two polynomials $p_1, p_2$ such that the following holds.
  Let $\mathcal{S}$ be a level $2$ collapsible pushdown system with 
  stack alphabet $\Sigma$ and state space $Q$. 
  There is an
  automaton with 
  $p_1(\lvert \Sigma\rvert) \cdot \exp(p_2(\lvert Q \rvert))$ 
  many states  that accepts the convolution of two 
  trees if and only if this convolution is of the form 
  $\Encode(c_1) \otimes \Encode(c_2)$ for $c_1,c_2$
  configurations with  $(c_1,c_2)\in \relD$. 
\end{lem}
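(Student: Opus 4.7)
The plan is to construct $\mathcal{A}_{\relD}$ as a bottom-up tree-automaton that first verifies structurally that $s_1 = \Pop{2}^k(s_2)$ for some $k$ and then guesses a decomposition of a witnessing run into loops and single transitions according to Corollary \ref{Cor:OrderEmbedding}. The structural check is simple: $s_1 = \Pop{2}^k(s_2)$ holds exactly when $\Encode(s_1)$ and $\Encode(s_2)$ agree on all nodes that are lexicographically at most the rightmost leaf $\ell$ of $\Encode(s_1)$, and $\ell$ has no $0$-successor in $\Encode(s_1)$; this is a local regular property of the convolution. Let $R$ be the set of nodes of $\Encode(s_2)$ that either lie on the rightmost branch of $\Encode(s_1)$ or lie strictly lexicographically above $\ell$. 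By Lemma \ref{LemmaOrderIso} together with the analysis of $\InducedGenMilestone$ in Remark \ref{rem:BasicPropExLeftStack}, the nodes of $R$ are precisely the representatives (via $\LeftStack$ and $\InducedGenMilestone$) of the generalised milestones of $s_2$ that strictly exceed $s_1$ in $\ll$, i.e., exactly the milestones traversed by any candidate witnessing run.

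At each such node $d \in R$ the automaton guesses a pair $(q_d^{\mathrm{in}}, q_d^{\mathrm{out}}) \in Q \times Q$ attached to $\LeftStack(d, c_2)$ and, when $d$ is not on the rightmost branch of $\Encode(s_1)$, a second such pair attached to $\InducedGenMilestone(d, c_2)$. The intended meaning is that $(q_d^{\mathrm{in}}, q_d^{\mathrm{out}})$ records the states in which the candidate run enters and leaves the associated milestone, and that the portion spent looping at this milestone lies in $\ExLoop$ of that stack. The boundary conditions anchor $q_\ell^{\mathrm{in}}$ to the state $q_1$ of $c_1$ and the guess at the rightmost leaf of $\Encode(c_2)$ to $q_2$. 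Local compatibility between $d$ and its neighbours in $R$ encodes the single transition of $\mathcal{S}$ produced by the run when moving from one generalised milestone to its $\ll$-successor, and the four local patterns of Remark \ref{rem:BasicPropExLeftStack} furnish a finite case distinction: $\Clone{2}$ at a leaf that is not on the rightmost branch, a $\Pop{1}$ or level-$1$ $\Collapse$ from $\InducedGenMilestone(d0, c_2)$ to $\LeftStack(d1, c_2)$ when $d1 \in \Encode(s_2)$, and a $\Push{\sigma,l}$ along the edge from $d$ to $d0$.

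To verify these guesses the automaton runs, in parallel along every root-to-leaf path, the automaton of Proposition \ref{Prop:LoopsAutomatic} so that at every node $d$ it has available the values $\ExLoop, \ExHLoop, \ExLLoop, \ExOneLoop, \Sym$ and $\Lvl$ of $\LeftStack(d, c_2)$; the corresponding data for $\InducedGenMilestone(d, c_2)$ are obtained by a single bookkeeping update (this stack differs from $\LeftStack(d, c_2)$ only by a clone, so its topmost word is unchanged). With this information the automaton checks that each guessed pair $(q_d^{\mathrm{in}}, q_d^{\mathrm{out}})$ belongs to $\ExLoop$ of the corresponding milestone and that the transition linking consecutive milestones is a transition of $\mathcal{S}$ under the topmost symbol read along the path. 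Correctness is then an immediate consequence of Corollary \ref{Cor:OrderEmbedding}: any run witnessing $(c_1, c_2) \in \relD$ decomposes uniquely as such a sequence of loops and single operations, and conversely locally consistent guesses can be glued, one milestone transition at a time, into an actual run.

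For the size bound, the structural check uses constantly many states, the parallel copy of the automaton from Proposition \ref{Prop:LoopsAutomatic} contributes $\mathrm{poly}(|\Sigma|) \cdot 2^{\mathrm{poly}(|Q|)}$ states, and the per-node guesses add a multiplicative factor bounded by $|Q|^{O(1)}$; the one-time anchoring of $q_1$ along the rightmost branch of $\Encode(c_1)$ adds only a factor $|Q|$. The total is of the form $p_1(|\Sigma|) \cdot \exp(p_2(|Q|))$ for suitable polynomials $p_1, p_2$. The main technical obstacle I expect is the bookkeeping around nodes $d$ that have both $d0$ and $d1$ present while not lying on the rightmost branch of $\Encode(s_1)$: here $\LeftStack(d0, c_2)$, $\InducedGenMilestone(d0, c_2)$ and $\LeftStack(d1, c_2)$ all play distinct roles in the $\ll$-successor structure, and the local rules of the tree-automaton must faithfully mirror all three patterns in Remark \ref{rem:BasicPropExLeftStack} while remaining compatible with the automaton of Proposition \ref{Prop:LoopsAutomatic} run in parallel.
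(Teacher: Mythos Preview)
Your proposal is correct and follows essentially the same approach as the paper: verify the structural condition $s_1=\Pop{2}^k(s_2)$, run the automaton of Proposition~\ref{Prop:LoopsAutomatic} along root-to-node paths, and guess states at each node representing the entry and exit of the witnessing run at the associated (generalised) milestone, checking local compatibility via the patterns of Remark~\ref{rem:BasicPropExLeftStack} and Corollary~\ref{Cor:OrderEmbedding}. The only cosmetic difference is that the paper stores a single pair $(q_i,q_e)$ per node, with $q_i$ tied to $\LeftStack(d,c_2)$ and $q_e$ tied to $\InducedGenMilestone(d,c_2)$, rather than two separate loop-pairs as you describe; this is the same information organised slightly differently and leads to the same state bound.
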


\subsection{Regularity of 
  \texorpdfstring{$\mathrm{Reach}_L$}
  {regular reachability}} 

In this part, we use the closure of collapsible pushdown systems
under products with finite automata in order to provide a proof of
the
automaticity of all regular reachability predicates (Proposition
\ref{Prop:ReachLregular}): we reduce regular
reachability to reachability in a product of the collapsible pushdown
system with the automaton for the regular language.

Recall that for $L\subseteq\Gamma^*$ some (string-) language, $\Reach_L$
is the binary relation that contains configurations $(c,\hat c)$ if and
only if there is a run $\rho$ from $c$ to $\hat c$ such that the labels of
the transitions used in $\rho$ form a word $w\in L$. 

Let $L$ be some regular language and $\mathcal{A}_L$ an automaton
recognising $L$. We construct the product $\mathcal{S} \times
\mathcal{A}_L$. 
$\Reach_L$ on $\CPG(\mathcal{S})$ is expressible via the relation
$\Reach$
 on $\CPG(\mathcal{S} \times
\mathcal{A}_L)$.  
As a corollary of this result, we obtain the tree-automaticity of
$\Reach_L$.

\begin{defi}
  Let $\mathcal{S}=(Q,\Sigma, \Gamma ,q_i,\Delta)$ be a
  $2$-\CPS and let
  $\mathcal{A}_L=(Q_L,\Gamma,i_0,F, \Delta_L)$ be a finite word-automaton. 
  We define the product of $\mathcal{S}$ and $\mathcal{A}_L$ to be the
  collapsible pushdown system 
  \begin{align*}
    &\mathcal{S}\times\mathcal{A}_L:=(
    Q\times Q_L, \Sigma, \Gamma,
    (q_i, i_0), \bar \Delta)    
    \text{ where}\\
  &\bar \Delta := \{((q,q_l),\sigma,\gamma, (q',q_l'), \op):
    (q,\sigma,\gamma,q',\op) \in \Delta 
    \text{ and } (q_l, \gamma, q_l')\in \Delta_L\}.      
  \end{align*}
\end{defi}
A straightforward induction shows that there is a run of
$\mathcal{S} \times \mathcal{A}_L$
from 
$((q,i_0), s)$ to  $((q',q_f), s')$ for $q,q'\in Q$, and $q_f\in F$ if
and only if there is a run of $\mathcal{S}$ from $(q,s)$ to $(q',s')$
such that the labels of the run form a word in $L$. 
Since $\Reach$ is a tree-automatic relation, we obtain the proof of
Proposition \ref{Prop:ReachLregular}.

\begin{proof}[Proof of Proposition \ref{Prop:ReachLregular}]
  \label{Proof:ReachLregular}
  Recall Remark \ref{Rem:DecompositionOfReach}. It says that 
  there is a positive existential first-order formula defining
  $\Reach$ in terms of $\relA, \relB, \relC$ and $\relD$.
  Due to Lemmas \ref{Lem:EncTreesAutomatic}, \ref{Lem:RegA}, \ref{Lem:RegB}, \ref{Lem:RegC} and
  \ref{Lem:RegD}, there are polynomials $p$ and $p'$ such that there
  is a (nondeterministic) tree-automaton
  $\mathcal{A}$ corresponding to $\Reach$ on 
  $\mathcal{S}\times\mathcal{A}_L$ with 
  $p(\lvert \Sigma \rvert) \cdot \exp(p'(\lvert Q \rvert \cdot \lvert
  P \rvert))$ many states.
  We obtain that for 
  states $q,q'$ and stacks $s,s'$ there is a final state $q_f\in F$ of
  $\mathcal{A}_L$ such that
  $\mathcal{A}$ accepts $(\Encode((q,i_0),s), \Encode((q',q_f),s'))$
  if and only if $((q,s),(q',s'))\in \Reach_L $ holds in $\mathcal{S}$.

  This almost completes the proof. We only have to modify
  $\mathcal{A}$ in such a way that it guesses $q_f$ and treats the
  configuration $(q,s)$ as if it was $((q,i_0),s)$. This can easily be
  done without increasing the number of states of the automaton
  because the states of the configurations are encoded in the roots of
  the trees. We explain in Appendix \ref{Appendix:ReachL} the detailed
  modification. 
\end{proof}

\section{Conclusion}
\label{sec:Conclusion}

We have shown that level $2$ collapsible pushdown graphs are uniformly
tree-automatic. Thus, their first-order theories are decidable with
nonelementary complexity. Moreover, even first-order extended by
regular reachability is decidable because of the automaticity of the
regular reachability relations.
Our result is sharp in several directions. First, we have also
shown a nonelementary lower bound for the complexity of the first-order
model-checking problem on collapsible pushdown graphs. Furthermore,
Broadbent \cite{Broadbent2012} showed that
the first-order theories of collapsible pushdown graphs are
undecidable from level $3$ on (which implies that they are not
tree-automatic).

\section*{Acknowledgement}
The content of this paper and its presentation developed during the
last 3 years and many people contributed to it with valuable
comments. I am afraid I cannot remember all people with whom I had
discussions on this topic but I surely have to give thanks to 
Dietrich Kuske, who initiated my interest in tree-automaticity. 
I also thank Martin Otto and Achim Blumensath. They surely had the
greatest influence on me during this time. 
I also want to thank the referees of the various versions that
appeared of this work, i.e., the referees of \cite{Kartzow10}, of
\cite{KartzowPhd} and of this paper itself, who had very valuable
comments on my work. 
Finally, I acknowledge funding from the DFG first via the project
'Model Constructions and Model-Theoretic Games
in Special Classes of Structures' 
and later via the project 'GELO'.

\bibliography{../../../Standard}
\bibliographystyle{plain}

\newpage
\appendix

\section{Proof of Bijectivity of 
  \texorpdfstring{$\Encode$}{Enc}}
\label{Appendix:BijectivityofEnc}
We start by explicitly constructing 
the inverse of $\Encode$. This inverse is called \mbox{$\Decode$}. Since
$\Encode$ removes the collapse links of 
the elements in a stack, we have to restore these now. In order to restore
the collapse links we use the following auxiliary 
functions for each $g\in \N$
\begin{align*}
  f_g: \{\varepsilon\}\cup(\Sigma\times\{1,2\}) \to
  \{\varepsilon\} \cup \Sigma\cup ( \Sigma\times\{2\} \times \N)  
\end{align*}
which map labels of trees to $1$-words of length up to
 $1$. We set 
\begin{align*}
  f_g(\tau)\coloneqq
  \begin{cases}
    \sigma &\text{if } \tau=(\sigma,1),\\
    (\sigma,2,g)  &\text{if } \tau=(\sigma,2),\\
    \varepsilon &\text{if } \tau=\varepsilon.
  \end{cases}
\end{align*}
In the next definition $g$ is the width of the stack decoded so far.
\begin{defi}
  Let \mbox{$\Gamma:=(\Sigma\times\{1,2\})\cup\{\varepsilon\}$}.
  Recall that encodings of stacks are trees in $\Trees{\Gamma}$. 
  We define the function 
 \mbox{ $\Decode: \Trees{\Gamma}\times \N \rightarrow
   (\Sigma\cup(\Sigma\times\{2\}\times\N))^{*2}$} 
 as
 follows. Let 
  \begin{align*}
      \Decode(T,g) = 
      \begin{cases}
        f_g(T(\varepsilon)) &
        \text{if }\domain(T)=\{\varepsilon\},\\
        f_g(T(\varepsilon)) \mathrel{\backslash}
        \Decode(\inducedTreeof{0}{T},g) &\text{if } 1\notin\domain(T),
        0\in\domain(T),\\
        f_g(T(\varepsilon)) \mathrel{\backslash}
        (\varepsilon:\Decode(\inducedTreeof{1}{T},g+1)) &
        \text{if } 0\notin\domain(T), 1\in\domain(T),\\
        f_g(T(\varepsilon)) \mathrel{\backslash}
        (\Decode(\inducedTreeof{0}{T},g):
        \Decode(\inducedTreeof{1}{T},g+G(\inducedTreeof{0}{T}))) &
        \text{otherwise,}
      \end{cases}
    \end{align*} where
    $G(\inducedTreeof{0}{T})\coloneqq
    \lvert\Decode(\inducedTreeof{0}{T},0)\rvert$ is 
    the width of the 
    stack encoded in $\inducedTreeof{0}{T}$. 
    For a tree $T\in\EncTrees$, the decoding of $T$ is
    \begin{align*}
      \Decode(T)\coloneqq(T(\varepsilon),
      \Decode(\inducedTreeof{0}{T},0))\in Q\times
      (\Sigma\cup(\Sigma\times\{2\}\times\N))^{+2}.       
    \end{align*}
\end{defi}
\begin{rem}
  Obviously, for each $T\in\EncTrees$, $\Decode(T)\in
  Q\times(\Sigma\cup (\Sigma\times\{2\}\times\N))^{+2}$. 
  In fact, the image of $\Decode$ is contained in $\Conf$,
  i.e., $\Decode(T)=(q,s)$ such that $s$ is a level $2$ stack. 
  The verification of this claim relies on two important
  observations. 

  Firstly, $T(0)=(\bot,1)$ due to condition 2 of Definition
  \ref{STACS:DefEncodingTrees}. Thus, all words in $s$ start with
  letter $\bot$. 
  $s$ is a stack if and only if the link structure of $s$ can be
  created using the push, clone and $\Pop{1}$ operations. 
  The proof of this claim can be done by a tedious but straightforward
  induction. We only sketch the most important observations for this
  fact. 

  Every letter $a$ of the form
  $(\sigma,2,g)$ occurring in $s$ is either a clone or can be created
  by the $\Push{\sigma,2}$ operation.   
  We call $a$ a clone if $a$
  occurs in $s$ in some word $waw'$ such that the word to the left of
  this word has $wa$ as prefix. Note that cloned elements are those
  that can 
  be created by use of the $\Clone{2}$ and $\Pop{1}$ operations from a
  certain substack of $s$. 
  
  If $a$ is not a clone in this sense, then $\Decode$ creates the
  letter $a$ because there is some $(\sigma,2)$-labelled node in $T$
  corresponding to $a$. Now, the important observation is that
  $\Decode$ defines
  $a=f_g((\sigma,2))$ where $g+1$ is the width of the stack decoded from
  the lexicographically smaller nodes. Hence, the letter $a$ occurs in
  the $(g+1)$-st word of $s$ and points to the $g$-th word. 
  Such a letter $a$ can clearly be created by a $\Push{\sigma,2}$
  operation. Thus, all $2$-words in the image of $\Decode$ can be
  generated by stack operations from the initial stack. A
  reformulation of this observation is that the image of $\Decode$
  only contains configurations.
\end{rem}

Now, we prove that 
$\Decode$ is injective on $\EncTrees$.  Afterwards, we show
that 
$\Decode\circ\Encode$ is 
the identity on the set of all configurations. This implies that
$\Decode$ is a surjective map from $\EncTrees$ to 
$\Conf$. Putting both facts together, we obtain that $\Decode$ is the
inverse of $\Encode$ whence, of course, $\Encode$ is bijective. 

\begin{lem} \label{DecInjective}
  $\Decode$ 
  is injective on $\EncTrees$.
\end{lem}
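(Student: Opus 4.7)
My plan is to prove the stronger statement that for every $g \in \N$, the map $\Decode(\cdot, g)$ is injective on the collection of trees that can occur as subtrees of trees in $\EncTrees$; injectivity of $\Decode$ on $\EncTrees$ is then immediate because the root label (the state component) is recovered trivially. I induct on $|\domain(T)|$ and, at each stage, recover from $\Decode(T, g)$ the three pieces of data needed to reconstruct $T$: the root label $T(\varepsilon)$, the shape (which of $0, 1$ lie in $\domain(T)$), and the decodings of the subtrees.

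The first piece is easy: the first letter of the first word of $\Decode(T, g)$ equals $f_g(T(\varepsilon))$, and $f_g$ is injective on the relevant label alphabet for fixed $g$, so $T(\varepsilon)$ is determined (an empty first word corresponds uniquely to the label $\varepsilon$, and a leading letter of the form $(\sigma, 2, g)$ corresponds uniquely to $(\sigma, 2)$ since the parameter $g$ is known from context). For the shape, I would inspect the maximal-$\sigma$-block decomposition of $\Decode(T, g)$, where $\sigma = f_g(T(\varepsilon))$: the four cases of the $\Decode$ recursion correspond exactly to whether the $2$-word is a single trivial block $[\sigma]$, a single non-trivial block, a sequence of blocks whose first is trivial, or a sequence whose first is non-trivial. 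Reading these off directly tells me whether $0, 1 \in \domain(T)$.

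The critical point is verifying that the ``tree-block'' decomposition implicit in the shape of $T$ actually coincides with the maximal-block decomposition of the $2$-word $\Decode(T, g)$. For children whose roots carry level-$2$ labels $(\sigma, 2)$ there is nothing to check, since $f_g$ assigns different link values to symbols arising at different right-depths, so adjacent tree-blocks automatically begin with distinct letters and thus form distinct maximal blocks. For level-$1$ labels the agreement is precisely the content of condition~\ref{Cond:LevelOneBlockscoincide} of Definition~\ref{STACS:DefEncodingTrees}: the forbidden pattern $T(t0) = T(t10) = (\sigma, 1)$ is exactly the one that would allow a tree-boundary to occur strictly inside a maximal block. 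Hence the tree-block decomposition of $T$ really is the maximal one, and the shape is uniquely determined by $\Decode(T, g)$.

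Once $T(\varepsilon)$ and the shape are fixed, $\Decode(\inducedTreeof{0}{T}, g)$ is obtained from the first maximal block of $\Decode(T, g)$ by stripping the leading $f_g(T(\varepsilon))$ from each of its words, and $\Decode(\inducedTreeof{1}{T}, g + |\Decode(\inducedTreeof{0}{T}, g)|)$ is obtained from the remaining blocks (after the same stripping). Applying the induction hypothesis to each present subtree with its appropriate parameter recovers it uniquely, completing the induction. The main obstacle is the level-$1$ case of the paragraph above: one has to use condition~\ref{Cond:LevelOneBlockscoincide} carefully, together with the uniqueness of the maximal-block decomposition of any $2$-word, to rule out any ambiguity in where the split into ``first block'' and ``remainder'' occurs.
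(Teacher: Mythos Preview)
Your approach is essentially the same as the paper's --- both prove a strengthened claim about subtrees by induction, and both hinge on condition~\ref{Cond:LevelOneBlockscoincide} for level-$1$ labels and on the distinct link values for level-$2$ labels to force the tree structure to coincide with the maximal-block decomposition. However, your stated stronger claim is false as written: $\Decode(\cdot, g)$ is \emph{not} injective on the full class of subtrees of trees in $\EncTrees$. Take $T$ the single-node tree with $T(\varepsilon) = (\sigma, 1)$ and $U$ the two-node tree with $U(\varepsilon) = \varepsilon$, $U(0) = (\sigma, 1)$; both decode to $[\sigma]$ at any $g$. The error is precisely your assertion that ``the first letter of the first word of $\Decode(T, g)$ equals $f_g(T(\varepsilon))$'': when $T(\varepsilon) = \varepsilon$ the root contributes nothing and the first letter actually comes from $T_0$, so the root label cannot be read off from the decoding alone.

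The fix is exactly what the paper builds into its Claim: restrict the inductive statement to pairs of subtrees whose roots are of the same type --- both labelled $\varepsilon$ or both labelled from $\Sigma \times \{1,2\}$. This side condition is automatically inherited in the recursion, since left children in $\EncTrees$ always carry labels from $\Sigma \times \{1,2\}$ and right children always carry $\varepsilon$. With that adjustment your reconstruction argument goes through and is a clean repackaging of the paper's exhaustive case analysis; the paper simply handles the same content by checking, for each combination of present/absent subtrees in $T$ and $U$, that mismatched shapes force distinct decodings and that matched shapes permit the induction hypothesis.
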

\begin{proof}
  Assume that there are trees $T', U'\in\EncTrees$ with
  $\Decode(T')=\Decode(U')=(q,s)$. Then by definition 
  $T'(\varepsilon) = U'(\varepsilon) = q$. Thus, we only have to compare
  the subtrees
  rooted at $0$, i.e., $T\coloneqq\inducedTreeof{0}{T'}$ and 
  $U\coloneqq\inducedTreeof{0}{U'}$. From our assumption it follows that
  \mbox{$\Decode(T,0)=\Decode(U,0)$.} 

  Note that the roots of $T$ and of $U$ are both labelled by 
  $(\bot, 1)$. 
  The lemma follows from the 
  following claim. 
  \begin{claim}
    Let $T$ and $U$ be trees such that there are $T', U'\in\EncTrees$
    and $d\in\domain(T')\setminus \{\varepsilon\}$, 
    $e\in\domain(U')\setminus \{\varepsilon\}$ such that
    $T=\inducedTreeof{d}{T'}$ and 
    $U=\inducedTreeof{e}{U'}$. 
    If $\Decode(T,  m) = \Decode(U, m)$ and either
    $T(\varepsilon) = U(\varepsilon) = \varepsilon$ or
    $T(\varepsilon)\in\Sigma\times\{1,2\}$ and
    $U(\varepsilon)\in\Sigma\times\{1,2\}$, then 
    $U=T$.\footnote{Since a node $d$ of a tree in $\EncTrees$ is
      labelled by $\varepsilon$ iff $d\in\{0,1\}^*1$, the pair of subtrees
      $\inducedTreeof{i}{T}$ and $\inducedTreeof{i}{U}$ inherit this
      condition for all $i\in\{0,1\}$.}
  \end{claim}

  The proof is by induction on the depth of the trees $U$ and
  $T$. 
  If \mbox{$\depth{U}=\depth{T}=0$}, $\Decode(U,m)$ and $\Decode(T,m)$
  are uniquely determined by
  the label of their roots. 
  A straightforward consequence of the definition of $\Decode$ is
  that $U(\varepsilon)=T(\varepsilon)$ whence $U=T$.

  Now, assume that the claim is true for all trees of depth at most
  $k$ for some fixed $k\in\N$. Let $U$ and $T$ be trees of
  depth at most $k+1$. 

  We proceed by a case distinction on whether the left or right
  subtree of $T$ and $U$ are defined. 
  In fact, we will later prove that
  $\Decode(T, m) = \Decode(U, m)$ implies that
  \begin{enumerate}[(1)]
  \item \label{ass1-DecInjective}
    $\inducedTreeof{0}{T}\neq \emptyset$ if and only if 
    $\inducedTreeof{0}{U}\neq \emptyset$ and
  \item \label{ass2-DecInjective}
    $\inducedTreeof{1}{T}\neq \emptyset$ if and only if
    $\inducedTreeof{1}{U}\neq \emptyset$. 
  \end{enumerate}
  We  first prove that $\Decode(T,  m)=\Decode(U,m)$ and
  conditions \eqref{ass1-DecInjective} and \eqref{ass2-DecInjective} imply that
  $U=T$. 
  Afterwards we show that all possible combinations that do not
  satisfy these conditions imply $\Decode(T,  m) \neq
  \Decode(U,m)$. 
  
  \begin{enumerate}[(1)]
  \item \label{Case1111}
    Assume that $\inducedTreeof{0}{U} = \inducedTreeof{1}{U} =
    \inducedTreeof{0}{T} = \inducedTreeof{1}{T} = \emptyset$. Then
    $\depth{T}= \depth{U} = 0$. For trees of depth $0$ we have already
    shown that  $\Decode(U, 0) = \Decode(T,0)$ implies $U=T$.
  \item \label{Case1010} 
    Assume that
    $\inducedTreeof{0}{U} = \emptyset$, 
    $\inducedTreeof{1}{U} \neq \emptyset$, 
    $\inducedTreeof{0}{T} = \emptyset$ and
    $\inducedTreeof{1}{T} \neq \emptyset$.
    In this case 
    \begin{align*}
      &\Decode(U, m) = f_m(U(\varepsilon)) \mathrel{\backslash}
      (\varepsilon: \Decode(\inducedTreeof{1}{U}, m+1))\text{
        and}\\
      &\Decode(T, m) = f_m(T(\varepsilon)) \mathrel{\backslash}
      (\varepsilon: \Decode(\inducedTreeof{1}{T}, m+1)). 
    \end{align*}
    Since $U(\varepsilon)=\varepsilon$ if and only if 
    $T(\varepsilon)=\varepsilon$, we can directly conclude that
    $U(\varepsilon) = T(\varepsilon)$. 
    But then $\Decode(T, m)  = \Decode(U, m)$ implies that 
    $\Decode(\inducedTreeof{1}{T}, m+1) =
    \Decode(\inducedTreeof{1}{U}, m+1)$.  
    Since $\depth{\inducedTreeof{1}{T}}\leq k$ and 
    $\depth{\inducedTreeof{1}{U}}\leq k$, the induction hypothesis
    implies that $\inducedTreeof{1}{T}=\inducedTreeof{1}{U}$. 
    We
    conclude that $T=U$.
  \item \label{Case0101}
    Assume that
    $\inducedTreeof{0}{U} \neq \emptyset$, 
    $\inducedTreeof{1}{U} = \emptyset$, 
    $\inducedTreeof{0}{T} \neq \emptyset$, and
    $\inducedTreeof{1}{T} =\emptyset$.
    In this case, 
    \begin{align*}
      &\Decode(U, m) = f_m(U(\varepsilon)) \mathrel{\backslash}
      \Decode(\inducedTreeof{0}{U}, m)\text{ and}\\
      &\Decode(T, m) = f_m(T(\varepsilon)) \mathrel{\backslash}
      \Decode(\inducedTreeof{0}{T}, m).      
    \end{align*}
    Since $U(\varepsilon)=\varepsilon$ if and only if
    $T(\varepsilon)=\varepsilon$, we conclude that
    $U(\varepsilon)=T(\varepsilon)$ and  
    $\Decode(\inducedTreeof{0}{U}, m) = 
    \Decode(\inducedTreeof{0}{T}, m)$. 
    Since the depths of $\inducedTreeof{0}{U}$ and of
    $\inducedTreeof{0}{T}$ are at most $k$, the induction hypothesis
    implies $\inducedTreeof{0}{U}=\inducedTreeof{0}{T}$ whence $U=T$.
  \item \label{Case0000} 
    Assume that
    $\inducedTreeof{0}{U}\neq \emptyset$, 
    $\inducedTreeof{1}{U}\neq \emptyset$, 
    $\inducedTreeof{0}{T}\neq \emptyset$,  and
    $\inducedTreeof{1}{T}\neq \emptyset$. 
    Then we have 
    \begin{align*}
      & 
      \Decode(U,m)= f_m(U(\varepsilon)) \mathrel{\backslash} 
      \left(\Decode(\inducedTreeof{0}{U},m):
        \Decode(\inducedTreeof{1}{U},m+m')\right) \text{ and }\\
      & 
      \Decode(T,m)= f_m(T(\varepsilon)) \mathrel{\backslash} 
      \left( \Decode(\inducedTreeof{0}{T},m):
        \Decode(\inducedTreeof{1}{T},m+m'')\right)
    \end{align*}
    for some natural numbers  $m',m''>0$. 
    
    Since $U(\varepsilon)=\varepsilon$ if and only if
    $T(\varepsilon)=\varepsilon$ this implies that the roots of $U$
    and $T$ coincide. 
    Hence,
    \begin{align*}
    \Decode(\inducedTreeof{0}{U},m):
    \Decode(\inducedTreeof{1}{U},m+m') 
    =
    \Decode(\inducedTreeof{0}{T},m):
    \Decode(\inducedTreeof{1}{T},m+m'')      
    \end{align*}
    If $\Decode(\inducedTreeof{0}{U}, m) = 
    \Decode(\inducedTreeof{0}{T}, m)$, then the induction
    hypothesis yields $\inducedTreeof{0}{U}=\inducedTreeof{0}{T}$. 
    Furthermore, this implies 
    $\Decode(\inducedTreeof{1}{U},m+m') =
    \Decode(\inducedTreeof{1}{T},m+m'')$ and $m'=m''$ whence
    by induction hypothesis 
    $\inducedTreeof{1}{U}=\inducedTreeof{1}{T}$.  
    In this case we conclude immediately that $T=U$. 
    
    The other case is that
    $\Decode(\inducedTreeof{0}{U},m)\neq
    \Decode(\inducedTreeof{0}{T},m)$. 
    We conclude immediately that
    the width of
    $\Decode(\inducedTreeof{0}{U},m)$ and the width of 
    $\Decode(\inducedTreeof{0}{T},m)$ do not coincide. 
    We prove that this case contradicts the assumption that
    $\Decode(U,m)=\Decode(T,m)$. 
    
    Let us assume that
    $\Decode(\inducedTreeof{0}{U},m) = 
    \Pop{2}^z\left(\Decode(\inducedTreeof{0}{T},m)\right)$ for
    some $z\in\N\setminus\{0\}$.  Note that this implies that the
    first word of $\Decode(\inducedTreeof{1}{U},m+m')$ is a word
    in $\Decode(\inducedTreeof{0}{T},m)$.
    
    Since $U(0)$ is a left successor in some tree belonging to $\EncTrees$, it
    is labelled by some \mbox{$(\sigma,l)\in \Sigma\times\{1,2\}$.} 
    We make a case distinction on $l$.
    \begin{enumerate}[(a)]
    \item Assume that $U(0) = (\sigma,2)$ for some
      $\sigma\in\Sigma$. Then 
      all words in $\Decode(\inducedTreeof{0}{T},m)$ start with
      the letter $(\sigma, 2, m)$. Thus, the first word of
      $\Decode(\inducedTreeof{1}{U},m+m')$ must also start with
      $(\sigma, 2, m)$. But all collapse links of level $2$ in
      $\Decode(\inducedTreeof{1}{U}, m+m')$ are at least
      $m+m'>m$. This is a contradiction.
    \item Otherwise,  $U(1) = (\sigma, 1)$ for some $\sigma\in\Sigma$. Thus, 
      all words in $\Decode(\inducedTreeof{0}{T},m)$ start with
      the letter $\sigma$. Thus, the first word of 
      $\Decode(\inducedTreeof{0}{U},m)$ and the first word of 
      $\Decode(\inducedTreeof{1}{U},m+m')$ have to start with
      $\sigma$. But this implies
      \mbox{$U(0)=U(10)=(\sigma, 1)$}. This contradicts the assumption that
      $U$ is a proper subtree of a tree from $\EncTrees$
      (cf.~condition \ref{Cond:LevelOneBlockscoincide} of Definition \ref{STACS:DefEncodingTrees}). 
    \end{enumerate}
    Both cases result in contradictions. Thus,  it is not the fact
    that there is  some $z\in\N\setminus\{0\}$ such that
    \begin{align*}
      \Decode(\inducedTreeof{0}{U},m) = 
      \Pop{2}^z\left(\Decode(\inducedTreeof{0}{T},m)\right)      
    \end{align*}
    By symmetry, we obtain that there is no $z\in\N\setminus\{0\}$
    such that
    \begin{align*}
      \Decode(\inducedTreeof{0}{T},m) = 
      \Pop{2}^z\left(\Decode(\inducedTreeof{0}{U},m)\right).
    \end{align*}
    Thus, we conclude that
    $\Decode(\inducedTreeof{0}{T},m) = 
    \Decode(\inducedTreeof{0}{U},m)$ whence $U=T$ as shown above. 
  \end{enumerate}
  If $\Decode(T,m)=\Decode(U,m)$, one of the previous cases applies
  because the following case distinction shows that 
  all other cases for 
  the defined or undefined subtrees of $T$ and $U$ imply
  $\Decode(T,m)\neq\Decode(U,m)$. 
  \begin{enumerate}[\phantom0(1)]
  \item \label{Case1110}
    Assume that $\inducedTreeof{0}{U}=\inducedTreeof{1}{U}=
    \inducedTreeof{0}{T}=\emptyset$ and 
    $\inducedTreeof{1}{T} \neq\emptyset$. In this case, 
    $\Decode(U,m)$ is $[\varepsilon]$ or
    $[\tau]$ for some
    $\tau\in \Sigma\cup(\Sigma\times\{2\}\times\N$). 
    Furthermore, 
    \begin{align*}
      \Decode(T, m) = f_m(T(\varepsilon)) \mathrel{\backslash} (\varepsilon:
      \Decode(\inducedTreeof{1}{T},m+1)).       
    \end{align*}
    It follows  that 
    $\lvert \Decode(T, m) \rvert \geq 2 > 
    \lvert \Decode(U, m) \rvert =  1$ whence
    \mbox{$\Decode(T,m)\neq\Decode(U,m)$.}
  \item \label{Case1101}
    Assume that 
    $\inducedTreeof{0}{U} = \inducedTreeof{1}{U} = \emptyset$, 
    $\inducedTreeof{0}{T} \neq \emptyset$, and 
    $\inducedTreeof{1}{T} = \emptyset$. 
    In this case,  $\Decode(U,m)$ is again $[\varepsilon]$ or
    $[\tau]$ for some
    $\tau\in\Sigma\cup(\Sigma \times\{2\}\times\N)$. 
    Since $U(\varepsilon)= \varepsilon$ if and only if
    $T(\varepsilon)=\varepsilon$,  we conclude that 
    $\lvert f_m(T(\varepsilon) \rvert = \lvert
    \Decode(U,m)\rvert$. Moreover, 
    \begin{align*}
      \Decode(T, m) = f_m(T(\varepsilon)) \mathrel{\backslash}
      f_m(T(0)) \mathrel{\backslash} s      
    \end{align*}
    for some
    $2$-word $s$. Since $T$ is a subtree of a tree in $\EncTrees$,
    $T(0)\in\Sigma\times\{1,2\}$. Thus,
    $f_m(T(0))
    \in\Sigma\cup(\Sigma\times\{1,2\}\times\N)$.
    We conclude that the length of the first word of $\Decode(T, m)$
    is greater than the length of the first word of 
    $\Decode(U, m)$. Thus,
    \mbox{$\Decode(T, m)  \neq \Decode(U, m)$.}
  \item \label{Case1100} Assume that 
    $\inducedTreeof{0}{U} = \inducedTreeof{1}{U} = \emptyset$, 
    $\inducedTreeof{0}{T} \neq \emptyset$, and 
    $\inducedTreeof{1}{T} \neq \emptyset$. 
    Completely analogous to case
    \ref{Case1110}, we conclude that 
    $\lvert \Decode(T,  m) \rvert \geq 2 > 
    \lvert \Decode(U, m) \rvert =  1$ whence
    \mbox{$\Decode(T,m) \neq \Decode(U,m)$.} 
  \item \label{Case1011}
    Assume that 
    $\inducedTreeof{0}{U} = \emptyset$, 
    $\inducedTreeof{1}{U} \neq \emptyset$, and
    $\inducedTreeof{0}{T} =
    \inducedTreeof{1}{T} = \emptyset$. Exchanging the roles of $U$ and
    $T$, this is exactly the same as case \ref{Case1110}.
  \item \label{Case1001}
    Assume that
    $\inducedTreeof{0}{U} = \emptyset$, 
    $\inducedTreeof{1}{U} \neq \emptyset$, 
    $\inducedTreeof{0}{T} \neq \emptyset$, and
    $\inducedTreeof{1}{T} = \emptyset$.
    Analogously to case \ref{Case1101}, we derive that the length of
    the first word of $\Decode(T, m)$ is greater than the length of
    the first word of $\Decode(U, m)$. Thus,
    $\Decode(T, m) \neq \Decode(U, m)$.
  \item \label{Case1000}
    Assume that
    $\inducedTreeof{0}{U} = \emptyset$, 
    $\inducedTreeof{1}{U} \neq \emptyset$, 
    $\inducedTreeof{0}{T} \neq \emptyset$, and
    $\inducedTreeof{1}{T} \neq \emptyset$.
    Analogously to case \ref{Case1101}, we derive that the length of
    the first word of $\Decode(T, m)$ is greater than the length of
    the first word of $\Decode(U, m)$. Thus,
    $\Decode(T, m) \neq \Decode(U, m)$.
  \item \label{Case0111}
    Assume that
    $\inducedTreeof{0}{U} \neq \emptyset$, and
    $\inducedTreeof{1}{U} = 
    \inducedTreeof{0}{T} =
    \inducedTreeof{1}{T} =\emptyset$.
    Exchanging the roles of $U$ and
    $T$, this is exactly the case \ref{Case1101}.
  \item \label{Case0110}
    Assume that
    $\inducedTreeof{0}{U} \neq \emptyset$, 
    $\inducedTreeof{1}{U} = 
    \inducedTreeof{0}{T} =\emptyset$, and
    $\inducedTreeof{1}{T} \neq\emptyset$.
    Exchanging the roles of $U$ and $T$, this is exactly the case
    \ref{Case1001}.
  \item \label{Case0100}
    Assume that
    $\inducedTreeof{0}{U} \neq \emptyset$, 
    $\inducedTreeof{1}{U} = \emptyset$, 
    $\inducedTreeof{0}{T} \neq \emptyset$, and
    $\inducedTreeof{1}{T} \neq \emptyset$.
    In this case, 
    \begin{align*}
      &\Decode(U,m) = f_m(U(\varepsilon))\mathrel{\backslash}
      \Decode(\inducedTreeof{0}{U},m)\\
      \text{and }     
      &\Decode(T,m) = f_m(T(\varepsilon))\mathrel{\backslash}
      \left(\Decode(\inducedTreeof{0}{T},m) :     
        \Decode(\inducedTreeof{1}{T},m+m')\right)      
    \end{align*}
    for some $m'\in\N\setminus\{0\}$. 
    Since $U(\varepsilon) = \varepsilon$ if and only if 
    $T(\varepsilon)= \varepsilon$, we conclude that  $U(\varepsilon) =
    T(\varepsilon)$.  
    Now, 
    \begin{align*}
      \Decode(\inducedTreeof{0}{U},  m) =
      \tau\mathrel{\backslash} u'
    \end{align*}
    for
    $\tau=f_m(U(0))
    \in\Sigma\cup(\Sigma\times\{2\}\times\{m\})$ and $u'$ some level
    $2$-word. We distinguish the following cases. 
    
    First assume that $\tau=(\sigma,2,m)$. 
    For all letters in $T'\coloneqq
    \Decode(\inducedTreeof{1}{T},m+m')$ of 
    collapse level $2$, the collapse link is greater or equal to $m+m'$. Hence,
    $T'$ does not contain a symbol $(\sigma,2,m)$ whence
    $\Decode(U,m) \neq \Decode(T,m)$. 
    
    Otherwise, $\tau\in\Sigma$. But then 
    $\Decode(U,m) = \Decode(T,m)$ would imply that
    \begin{align*}
      &\Decode(\inducedTreeof{0}{T},m) =
      \tau \mathrel\backslash T'\\
      \text{and }
      &\Decode(\inducedTreeof{1}{T},m+m') =
      \tau \mathrel\backslash T''      
    \end{align*}
    for certain nonempty level
    $2$-words 
    $T'$ and $T''$. Since $T(1)=\varepsilon$, it follows that
    \mbox{$T(0)=T(10)=(\tau,1)$ } 
    which contradicts the fact that $T$ is a subtree of some tree from
    $\EncTrees$. 
    
    Thus, we conclude that 
    $\Decode(T, m) \neq \Decode(U,m)$.
  \item[(10)]  \label{0011} 
    Assume that 
    $\inducedTreeof{0}{U} \neq \emptyset$, 
    $\inducedTreeof{1}{U} \neq \emptyset$, and
    $\inducedTreeof{0}{T} =
    \inducedTreeof{1}{T} = \emptyset$. 
    Exchanging the roles of $U$ and $T$, this is the same as case
    \ref{Case1100}.
  \item[(11)] \label{Case0010}
    Assume that
    $\inducedTreeof{0}{U} \neq \emptyset$, 
    $\inducedTreeof{1}{U} \neq \emptyset$,
    $\inducedTreeof{0}{T} = \emptyset$, and
    $\inducedTreeof{1}{T} \neq \emptyset$. 
    Exchanging the roles of $U$ and $T$, this is the same as case
    \ref{Case1000}.
  \item[(12)] \label{Case0001}
    Assume that
    $\inducedTreeof{0}{U} \neq \emptyset$, 
    $\inducedTreeof{1}{U} \neq \emptyset$,
    $\inducedTreeof{0}{T} \neq \emptyset$, and
    $\inducedTreeof{1}{T} = \emptyset$. 
    Exchanging the roles of $U$ and $T$, this is the same as case 
    \ref{Case0100}.
  \end{enumerate}
  Hence, we have seen that 
  $\Decode(T, m)= \Decode(U, m)$ implies that each of the
  subtrees of $T$ is defined if and only if the corresponding subtree
  of $U$ is defined. 
  Under this condition, we concluded that $U=T$.
  Thus, the claim holds
  and the lemma follows as indicated above. 
\end{proof}

Next, we prove that $\Decode$ is a surjective map from
$\EncTrees$ to $\Conf$. This is done by induction
on the size of blocklines used to encode a stack.  
In this proof we use the notion of \emph{left-maximal} blocks and
\emph{good} blocklines. Let 
\begin{align*}
  s:\left(w\mathrel\backslash (w':b)\right) : s'  
\end{align*}
be a stack  
where $s$ and $s'$ are  $2$-words, $w$, and $w'$ are words, and 
$b$ is a $\tau$-block.\footnote{In this definition, we explicitly
  allow the case
  $s= w' = \emptyset$, i.e., a stack of the form 
  $(w\mathrel\backslash b) : s'$.}
We call $b$  \emph{left maximal} in this
stack if either $b=[\tau]$ or 
$b=\tau\tau'\mathrel{\backslash} b'$ such that $w'$ does not
start with $\tau\tau'$ for some
$\tau'\in\Sigma\cup(\Sigma\times\{2\}\times\N)$.  
We call a blockline in some stack \emph{good}, if its first block is left
maximal. Furthermore, we call the blockline starting with the block
$b$ \emph{left maximal} if $w'$ does not start with $\tau$.
Recall that the encoding of stacks works on left maximal blocks
and good blocklines.

\begin{lem} \label{lem:DecEncIsId}
  $\Decode\circ\Encode$ is the identity,
  i.e., $\Decode(\Encode(c)) = c$, for all $c\in \Conf$.
\end{lem}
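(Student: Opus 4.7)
The plan is to reduce the claim to a strengthened statement about $\Encode$ and $\Decode$ on blocklines with a matching offset parameter, and then argue by induction on the structure of the blockline. Since $\Encode(q,s) = \treeL{q}{\Encode(s)}$ and consequently $\Decode(\Encode(q,s)) = (q, \Decode(\Encode(s), 0))$, it suffices to prove $\Decode(\Encode(s), 0) = s$ for every stack $s$ reachable from $\bot_2$.

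The strengthened inductive claim I would prove is: for every $\tau$-blockline $l$ arising as the blockline induced by some maximal block in a valid stack, for every offset $g$ equal to the number of words of the ambient stack lying strictly to the left of $l$, and for every label $\sigma$ chosen according to the encoding rules (namely $\sigma = (\Sym(\tau),\Lvl(\tau))$ when $\Encode$ recurses into a left subtree and $\sigma = \varepsilon$ when it recurses into a right subtree), one has
\[
\Decode(\Encode(l,\sigma), g) \;=\; f_g(\sigma) \mathrel\backslash l'
\]
where $l'$ is $l$ with its uniform leading letter $f_g(\sigma)$ stripped (and $l' = l$ when $\sigma = \varepsilon$). The base case $l = [\tau]$ is immediate: $\Encode(l,\sigma)$ is the single leaf labelled $\sigma$, so $\Decode$ returns $f_g(\sigma)$, and this equals $\tau$ precisely because of the offset invariant below.

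The inductive step proceeds by case analysis matching the four defining clauses of $\Encode(s,\sigma)$. In each case one applies the induction hypothesis to the left subtree call (recursing on the blockline induced by the first maximal block of $l$, with the offset $g$ unchanged since this block occupies the same initial position) and/or to the right subtree call (recursing on the remaining blocks, with the offset advanced by the width of the first block). The crucial point is that the $\Decode$ formula increments the offset by exactly $G(T_0)$ (or by $1$ when only a right subtree is present), which equals the width of the first block by a trivial auxiliary induction. Then the concatenations appearing in the two formulas line up letter-for-letter by construction of $\mathrel\backslash$.

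The main obstacle, and the reason the claim must be strengthened as above, is showing that $\Decode$ correctly restores the collapse links of level $2$. This requires the offset invariant: every letter $(\sigma,2,k)$ occurring in a valid stack $s$ satisfies $k = i - 1$, where $w_i$ is the first word of $s$ in which that particular letter occurs. This invariant holds by induction on the construction of stacks, since $\Push{\sigma,2}$ on a width-$n$ stack creates a letter with link $n-1$, and $\Clone{2}$ duplicates existing letters verbatim. Combining this with the observation that a $(\sigma,2)$-labelled node in $\Encode(s)$ corresponds exactly to the first appearance of its letter, one sees that the offset $g$ presented to $\Decode$ at that node equals $i-1 = k$, so $f_g((\sigma,2)) = (\sigma,2,k)$ reconstructs the letter correctly. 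Once this invariant is in place, applying the strengthened claim to the full $\bot$-blockline $s$ with $\sigma = (\bot,1)$ and $g = 0$ yields $\Decode(\Encode(s), 0) = \bot \mathrel\backslash s' = s$, completing the proof.
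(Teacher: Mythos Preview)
Your overall strategy matches the paper's: reduce to a strengthened claim about $\Decode\circ\Encode$ on blocklines carrying a correct offset, induct on the blockline structure following the four $\Encode$ clauses, and handle level-$2$ link reconstruction via the observation that a non-cloned occurrence of $(\sigma,2,k)$ must sit in word $k+1$. That last invariant is indeed the heart of the argument, and your justification for it is correct.

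However, your strengthened inductive hypothesis is mis-stated in the $\sigma = \varepsilon$ branch, and this already makes your base case fail. You claim $\Decode(\Encode(l,\varepsilon),g) = l$, but for the $\tau$-blockline $l = [\tau]$ one has $\Encode([\tau],\varepsilon)$ equal to a single $\varepsilon$-labelled leaf, and then $\Decode$ returns $f_g(\varepsilon) = [\varepsilon]$, not $[\tau]$. The point is that an $\varepsilon$-labelled root does \emph{not} re-prepend the common letter $\tau$; that prepending is performed one level up, by the parent call to $\Decode$. The correct pair of claims (which the paper proves by simultaneous induction) is: (i) for a \emph{good} $\tau$-blockline $b$, $\Decode(\Encode(b,\varepsilon),g) = b'$ where $b = \tau\mathrel\backslash b'$; and (ii) for a \emph{left-maximal} $\tau$-blockline $b$ with $g$ words to its left in the ambient stack, $\Decode(\Encode(b,(\Sym(\tau),\Lvl(\tau))),g) = b$. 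The left recursion of $\Encode$ lands in case (ii) and the right recursion in case (i), so both are needed together. Your offset invariant is exactly what makes case (ii) go through when $\Lvl(\tau)=2$; once you repair the hypothesis as above, the remainder of your sketch is sound.
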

\begin{cor} \label{DecSurjective}
  $\Decode:\EncTrees\rightarrow \Conf$ is surjective.
\end{cor}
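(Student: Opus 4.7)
The plan is to derive the corollary immediately from Lemma \ref{lem:DecEncIsId}, which has just been proved. Since the lemma states that $\Decode(\Encode(c)) = c$ for every configuration $c \in \Conf$, surjectivity of $\Decode$ on $\EncTrees$ follows as soon as we know that $\Encode(c) \in \EncTrees$ for every $c \in \Conf$. So the only thing to verify is that the image of $\Encode$ really lies in $\EncTrees$, i.e.\ that every tree produced by $\Encode$ satisfies the six conditions of Definition \ref{STACS:DefEncodingTrees}.

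Concretely, given a configuration $c = (q,s)$, I would go through the conditions one by one. Conditions (1)--(3) are immediate from the outermost clause $\Encode(q,s) = \treeL{q}{\Encode(s,(\bot,1))}$ together with the fact that every stack starts with $\bot$ as the first letter of its first word, so the recursion at the root of $\Encode(s)$ produces label $(\bot,1)$. Conditions (4) and (5) hold by a straightforward induction on the recursive definition of $\Encode(s,\sigma)$: left successors are only introduced in the $\treeL{\cdot}{\cdot}$ and $\treeLR{\cdot}{\cdot}{\cdot}$ clauses with label $(\Sym(\tau'),\Lvl(\tau'))\in(\Sigma\setminus\{\bot\})\times\{1,2\}$, and right successors are only introduced with label $\varepsilon$. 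Condition (6), the non-repetition of $(\sigma,1)$-blocks, is exactly the observation made in the remark following Definition \ref{STACS:DefEncodingTrees}: two adjacent maximal blocks of a blockline cannot both induce a $\sigma$-blockline with a level~$1$ top symbol, since then they would already merge into a single larger block, contradicting maximality.

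With this in hand the corollary is one line: for any $c\in\Conf$, the tree $T\coloneqq \Encode(c)$ lies in $\EncTrees$ and $\Decode(T)=c$ by Lemma \ref{lem:DecEncIsId}, so $c$ is in the image of $\Decode{\restriction}_{\EncTrees}$. Combined with the injectivity established in Lemma \ref{DecInjective}, this shows that $\Decode:\EncTrees\to \Conf$ is a bijection, which is exactly the content of Lemma \ref{STACS:Bijective}. There is no real obstacle here — the substantive work was already carried out in Lemmas \ref{DecInjective} and \ref{lem:DecEncIsId}, and this corollary merely records surjectivity as the obvious consequence; the only point that deserves mention is the (routine) verification that $\Encode$ actually lands in $\EncTrees$.
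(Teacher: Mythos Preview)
Your proposal is correct and matches the paper's approach: the paper states this as an immediate corollary of Lemma \ref{lem:DecEncIsId} without further proof, relying implicitly on the fact (discussed around Definition \ref{STACS:DefEncodingTrees} and its subsequent remark) that the image of $\Encode$ lies in $\EncTrees$. Your explicit verification of conditions (1)--(6) simply spells out what the paper takes for granted.
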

\begin{proof}[Proof of Lemma \ref{lem:DecEncIsId}]
  Let $c=(q,s)$ be a configuration. 
  Since $\Decode$ and $\Encode$ encode and decode the state of $c$ in the
  root of $\Encode(c)$, it suffices  to show that
  \begin{align*}
    \Decode(\Encode(s, (\bot,1)),0) = s    
  \end{align*}
  for all stacks $s\in \Stacks(\Sigma)$. 
  We proceed by induction on
  blocklines of the stack $s$. For this purpose we reformulate the lemma
  in the following claim.
  \begin{claim}
    Let $s'$ be some stack which decomposes as
    $s'=s'':(w\mathrel\backslash b):s'''$ such that
    \mbox{$b\in(\Sigma\cup(\Sigma\times\{2\}\times\N))^{+2}$} is a good  
    $\tau$-blockline for some
    $\tau\in\Sigma\cup(\Sigma\times\{2\}\times\N)$.   
    Then
    \begin{enumerate}[(1)]
    \item $\Decode(\Encode(b,\varepsilon),
      \lvert s''\rvert )= b'$ for the unique $2$-word $b'$ such that
      \mbox{$b=\tau\mathrel{\backslash}b'$} and 
    \item if $b$ is left maximal, then
      $\Decode(\Encode(b,(\sigma,l)),\lvert s''\rvert)=
      b$ where $\sigma=\Sym(\tau)$ and \mbox{$l=\Lvl(\tau)$.} 
    \end{enumerate}
  \end{claim}
  Note that the conditions in the second part require that 
  either $\tau\in\Sigma$ or $\tau=(\sigma,2,\lvert s'' \rvert)$ for
  some $\sigma\in\Sigma$. 

  The lemma follows from the second part of the claim because every stack
  is a left maximal $\bot$-blockline. 

  We prove both claims by parallel induction on the size of $b$. 
  As an abbreviation we set
  \mbox{$g\coloneqq\lvert s'' \rvert$}. We
  write $\overset{(1)}{=}$($\overset{(2)}{=}$, respectively)  when
  some equality is due to the induction hypothesis of the first claim
  (the second claim , respectively). 
  The arguments for the first claim are as follows.
  \begin{iteMize}{$\bullet$}
  \item If $b=[\tau]$ for
    $\tau\in\Sigma\cup(\Sigma\times\{2\}\times\N)$,  the claim  is 
    true because
    \begin{align*}
      \Decode(\Encode(b,\varepsilon), g)=
      \Decode(\varepsilon,  g) = \varepsilon.      
    \end{align*}
  \item  If there are $b_1,b_1'\in(\Sigma\cup(\Sigma\times\{2\}\times\N))^{*2}$
    such that 
    \begin{align*}
      b= [ \tau ] : b_1 = [\tau]
      :\left(\tau\mathrel{\backslash} b_1'\right)\text{ then}
    \end{align*}
    \begin{align*}
      &\Decode(\Encode(b,\varepsilon), g) = \Decode(
      \treeR{\varepsilon}{ \Encode(b_1,\varepsilon)},g)\\
      =   
      &f_g(\varepsilon) \mathrel{\backslash} \left(\varepsilon:
      \Decode(\Encode(b_1,\varepsilon),g+1)\right) \\
      \overset{(1)}{=}
      &\varepsilon \mathrel{\backslash}(\varepsilon : b_1') =
      \varepsilon:b_1'=b'.
    \end{align*}
  \item Assume that there is some
    $\tau'\in\Sigma\cup(\Sigma\times\{2\}\times\N)$ and some 
    $b_1\in (\Sigma\cup (\Sigma\times\{2\}\times\N))^{*2}$ such that
    \begin{align*}
      b=\tau\tau' \mathrel{\backslash} b_1.      
    \end{align*}
    The assumption that $b$ is good implies that  the blockline
    $\tau'\mathrel\backslash b_1$ is left maximal 
    whence
    \begin{align*}
      &\Decode(\Encode(b,\varepsilon),g) = 
      \Decode(    
      \treeL{\varepsilon}{
      \Encode( \tau'\mathrel{\backslash}
      b_1, (\Sym(\tau'), \Lvl(\tau')))}, g)\\
      = &f_g(\varepsilon) \mathrel{\backslash} 
      \Decode(\Encode( \tau'\mathrel{\backslash} b_1,
      (\Sym(\tau'), \Lvl(\tau')), g))\\
      \overset{(2)}{=}  &\tau' \mathrel{\backslash} b_1 =
      b'.
    \end{align*}
  \item The last case is that
    \begin{align*}
      b = \tau \mathrel{\backslash}
      \left((\tau' \mathrel{\backslash} b_1):b_2\right)
    \end{align*}
    for  $b_2$ a blockline of $s$ not starting with
    $\tau'$. By this we mean that
    $b_2\neq \tau' w':b_2'$ for any word $w'$ and any $2$-word
    $b_2'$. 
    Since $b$ is good,  
    \mbox{$\tau' \mathrel{\backslash} b_1$} is a left maximal
    blockline. Furthermore, $\tau \mathrel{\backslash} b_2$ is a
    good blockline.  
    Thus, 
    \begin{align*}
    &\Decode(\Encode(b,\varepsilon), g)\\
    =
    &\Decode\left(
    \treeLR{\varepsilon}{
      \Encode\left( \tau'\mathrel{\backslash}
      b_1, (\Sym(\tau'), \Lvl(\tau'))\right)}{
      \Encode(\tau\mathrel{\backslash}b_2,\varepsilon)},g\right)\\
    = 
    &f_g(\varepsilon)\mathrel{\backslash}
    \left(\Decode(\Encode(\tau'\mathrel{\backslash} b_1,
    ( \Sym(\tau'), \Lvl(\tau'))),g) : 
    \Decode(\Encode(\tau\mathrel{\backslash} b_2,
    \varepsilon),g + f)\right), 
    \end{align*}
    where
    \begin{align*}
      f = \lvert
    \Decode(\Encode(\tau'\mathrel{\backslash} b_1, 
    (\Sym(\tau'), \Lvl(\tau'))),g)  \rvert  \overset{(2)}{=}
    \lvert b_1\rvert.  
    \end{align*}
    From this, we obtain that
    \begin{align*}
      &\Decode(\Encode(b,\varepsilon),g) \\
      \overset{(2)}{=}
      &\varepsilon \mathrel{\backslash}\big(
      (\tau'\mathrel{\backslash} b_1) :
      \Decode(\Encode(\tau \mathrel{\backslash} b_2,
      \varepsilon), g+f)\big) \\
      \overset{(1)}{=}
      &(\tau'\mathrel{\backslash} b_1) : b_2 = b'.
    \end{align*}
  \end{iteMize} 
  For the proof
  of the second claim, note that the calculations are basically the
  same, but $f_g(\varepsilon)$ is replaced by $f_g(\sigma,l)$.
  Thus, if $l=1$ then $f_g(\sigma,l) = \sigma = \tau$. 
  For the case  $l=2$, recall that $g=\lvert s'' \rvert$ whence
  $f_g(\sigma,l) = (\sigma,2,\lvert s'' \rvert)$. 
  Note that $\Lnk(\tau)=\lvert s''\rvert$ due to the left maximality
  of $b$. 
  
  Thus, one proves the second case using
  the same calculations, but replacing
  $\varepsilon$ by $\tau$.
\end{proof}

From the previous lemmas, we directly obtain
Lemma \ref{STACS:Bijective}, i.e.,  we obtain that $\Encode$ is
bijective.

\section{Automaton for 
  Relation   \texorpdfstring{\relA}{R-Leftarrow}}
\label{Appendix:AA}
Given a \CPS $\mathcal{S}$,  there is an automaton
$\mathcal{A}_{\relA}$ that accepts the tree
$\Encode(c_1)\otimes\Encode(c_2)$ for arbitrary configurations $c_1$
and $c_2$ if and only if
$c_2=\Pop{2}^k(c_1)$ and there is a run $\rho$ from $c_1$ to $c_2$ such that
$\rho(j) \not\leq c_2$ for all $j<\length(\rho)$, i.e., if and only if
$(c_1,c_2)\in\relA$. 
The states of $\mathcal{A}_{\relA}$ come from the set
\begin{align*}
  &\{\bot, q_I, q_\emptyset, q_=\} \cup M\text{ with }  
  M:=\{S, R, P, C\} \times Q \times Q \times \Sigma \times
  \{1,2\} \times 2^{Q\times Q}.
\end{align*}

Before giving a definition of $\mathcal{A}_{\relA}$, we informally
describe how 
$\mathcal{A}_{\relA}$ processes some tree
$T:=\Encode(c_1)\otimes\Encode(c_2)$. 
An accepting run on $T$ labels $d\in\{0,1\}^*$
\begin{enumerate}[\phantom0 C1.]
\item \label{ConditionAC1}by $\bot$ if $d\in T_+$;
\item by $q_I$ if $d=\varepsilon$; it is the initial state in which
  the automaton reads the states of $c_1$ and $c_2$, before it 
  processes the encodings of the stacks,
\item by $q_=$ if $d\in\Encode(c_2)$ but not in the rightmost branch of
  $\Encode(c_2)$; this state is used to check equality of the parts
  of $\Encode(c_1)$ and $\Encode(c_2)$ that are left of the rightmost
  branch of $\Encode(c_2)$,
\item by some element from $\{S\} \times Q \times Q \times \Sigma \times
  \{1,2\} \times 2^{Q\times Q}$ if $d$ is in the
  rightmost branch of $\Encode(c_2)$; $S$ stands for searching the
  node encoding the final configuration of the run. 
\item by some element from the set
  $\{q_\emptyset\} \cup( \{ R, P, C\} \times Q \times Q \times
  \Sigma \times \{1,2\} \times 2^{Q\times Q}$) if
  $d\in\Encode(c_1)\setminus\Encode(c_2)$.
\end{enumerate}
Those labels that come from $M$ are used to check the existence of
some run from 
$c_1$ to $c_2$ as follows. For all $d\in\Encode(c_1)$, let $d^\nearrow$
be the rightmost leaf of the subtree induced by $d$ in
$\Encode(c_1)$. Set $s_d^\nearrow:=\LeftStack(d^\nearrow,
\Encode(c_1))$. Let $\bar q\in M$ be the label of some node $d$. By
$\pi_i(\bar q)$ we denote the projection of $\bar q$ to the $i$-th
component. 
Depending on  $\pi_1(\bar q)$ we define a stack $s_d$ as
follows. 
\begin{enumerate}[(1)]
\item If $\pi_1(\bar q)=S$, set $s_d$ to be the stack of $c_2$.
\item If $\pi_1(\bar q) = R$, set $s_d:=\Pop{2}(\LeftStack(d,
  \Encode(c_1)))$. 
\item If $\pi_1(\bar q) = C$, set
  $s_d:=\Collapse(\LeftStack(d,\Encode(c_1)))$.
\item If $\pi_1(\bar q)=P$, set $s_d:=\Pop{1}(\LeftStack(d,\Encode(c_1)))$.
\end{enumerate}
An accepting run $\rho$ of $\mathcal{A}_{\relA}$ will label some node $d$ by
$\bar q\in M$ such that
\begin{enumerate}[\phantom0 C1.]
  \setcounter{enumi}{5}
\item $\pi_4(\bar q) = \Sym(\LeftStack(d, \Encode(c_1)))$,
\item $\pi_5(\bar q) = \Lvl(\LeftStack(d, \Encode(c_1)))$, and
\item \label{ConditionACLast}$\pi_6(\bar q) = \ExRet(\LeftStack(d,
  \Encode(c_1)))$.
\item \label{ConditionACNOTP} Moreover, if $\pi_1(\bar q)\neq P$ 
then there is a run $\rho$ from $(\pi_2(\bar q), s_d^\nearrow)$
to $(\pi_3(\bar q), s_d)$ 
(which is an infix of some run witnessing $(c_1,c_2)\in\relA$). 
The meaning of the labels $R$ and $C$ is as follows.
If $\pi_1(\bar q)=R$ then $\rho$ ends in 
$\Pop{2}(\LeftStack(d,\Encode(c_1)))$. 
If $\pi_1(\bar q)=C$ then $\rho$ ends in 
$\Collapse(\LeftStack(d,\Encode(c_1)))$ and the collapse level is $2$
(moreover, $\rho$ actually performs as the last operation a collapse
on a copy of the topmost element of $\LeftStack(d, c_1)$). 
Thus, in both cases the run will end in the stack $s_d$. To be more
precise, the run ends in $s_d$ and does not visit any substack of
$s_d$ before its final configuration. 
\item \label{ConditionACP}
If $\pi_1(\bar q) = P$ then there is 
some stack $s'$ with $s_d\leq s'$ and
$\TOP{2}(s_d)=\TOP{2}(s')$ such that there is a run from 
$(\pi_2(\bar q), s_d^\nearrow)$
to $(\pi_3(\bar q), s')$ (which is again an infix of some run
witnessing $(c_1,c_2)\in\relA$).
\item 
$\rho$  will label $d$ by $q_\emptyset$ if 
there is a run from $c_1$ to 
$c_2$ not passing $s_e^\nearrow$ for all $d\leq e$.\vspace{3 pt}
\end{enumerate}

Let us fix some notation. In  this section, $\gamma$ ranges over
$\Gamma$, 
$y$ ranges over \mbox{$(\Sigma\times\{1,2\})\cup\{\varepsilon\}$} and $w$ ranges
over all words of the form $w=\TOP{2}(s){\downarrow}_0$. 
Whenever $w$ is fixed, we write $\sigma:=\Sym(w)$ and
$l=\Lvl(w)$. Furthermore, $x$ ranges over 
$\{(\sigma,l), \varepsilon\}$.  
The variables $q_1, q_2, q_1', q_2', q, q'$ range over 
$Q$. $\tau$ ranges over $\Sigma\setminus\{\bot\}$ and $k$ over
$\{1,2\}$. 
We use the abbreviation
\mbox{``$(q, \sigma, q', \ColPop{k})\in \Delta$''} for ``$\exists
\gamma$ such that
\begin{enumerate}[(1)]
\item $(q, \sigma, \gamma, q', \Pop{1})\in \Delta$ or
\item $(q, \sigma, \gamma, q', \Collapse)\in \Delta$ and $k=1$''. 
\end{enumerate}
If $w, \tau$ and $k$ are fixed, we write $w\tau_k$ for
the word 
$w\theta$ where $\theta=
\begin{cases}
  (\tau,2,0)&\text{if }k=2,\\
  \tau &\text{if }k=1.
\end{cases}$

\begin{defi}\label{Def:AA}
  Fix some \CPS $\mathcal{S}=(Q,\Sigma, \Gamma, \Delta, q_0)$. 
  Define   
  $\mathcal{A}_{\relA}:=(Q_A, \Sigma_A, \bot, F, \Delta_A)$ with
  $Q_A:= \{\bot, q_I, q_\emptyset, q_=\} \cup M$,
  $\Sigma_A=\left(\{\varepsilon, \Box\}\cup\{\Sigma\times
    \{1,2\}\}\cup Q\right)^2$,  
  $F=\{q_I\}$. $\Delta_A$ contains the following transitions.
  \begin{enumerate}[\phantom0 T1.]
  \item $\left(q_I, (q_1,q_2), 
    (S, q_1, q_2, \bot, 1, \ExRet(\bot_2)),   \bot\right)$,
  \item $(q_\emptyset, (y, \Box), Y, Z)$ for 
    $Y, Z\in\{\bot, q_\emptyset\}$, and
  \item $(q_=, (y, y), Y, Z)$ for $Y,  Z\in\{\bot, q_=\}$,
  \end{enumerate}
  Fix some 
  $\bar q:=(S, q_1, q_2, \sigma, l, \ExRet(w))$.
  We add the following transitions to $\Delta_A$:
  \begin{enumerate}[\phantom0 T1.]
    \setcounter{enumi}{3}
  \item  $(\bar q, (x,x), \bot, \bot)$ if  $q_1=q_2$; \label{AA:T4} 
  \item  $(\bar q, (x,x), \bot, \bar q_1)$ for  
    $\bar q_1=(R, q_1, q_2, \sigma, l, \ExRet(w))$ ;
  \item  $(\bar q, (x,x), X, \bar q)$ for $X\in\{\bot, q_=\}$;
  \item  $(\bar q, (x,x), \bar q_0, \bot)$ for 
    $\bar q_0=(S, q_1, q_2, \tau, k,\ExRet(w\tau_k))$;
  \item 
    $(\bar q, (x,x), \bar q_0, \bar q_1)$ for 
    $\bar q_1=(R, q_1, q_2', \sigma, l, \ExRet(w))$ and
    $\bar q_0=(S,q_2',q_2,\tau,k,\ExRet(w\tau_k))$.
  \end{enumerate}
  Fix some $\bar q:=(R, q_1, q_2, \sigma, l, \ExRet(w))$. We add the
  following transitions to $\Delta_A$:  
  \begin{enumerate}[\phantom0 T1.]
    \setcounter{enumi}{8}
  \item $(\bar q, (x, \Box), \bot, \bot)$  \label{AA:T10}
    if $(q_1, q_2)\in \ExRet(w)$;
  \item $(\bar q, (x, \Box), \bot, \bar q_1)$ for  
    $\bar q_1=(R, q_1, q_2', \sigma, l, \ExRet(w))$
    such that
    $(q_2', q_2)\in\ExRet(w)$;
  \item 
    $(\bar q, (x, \Box), \bar q_0, \bot)$ for 
    $\bar q_0=(R, q_1, q_2, \tau, i, \ExRet(w\tau_i))$ for $i\in\{1,2\}$;
  \item 
    $(\bar q, (x, \Box), \bar q_0, \bar q_1)$ for      
    \begin{align*}
      &\bar q_1=(R, q_1, q_2', \sigma, l, \ExRet(w)),\\
      &\bar q_0=(R, q_2', q_2, \tau, i, \ExRet(w\tau_i))
      \text{ and}\\
      &i\in\{1,2\};      
    \end{align*}
  \item 
    $(\bar q, (x, \Box), \bar q_0, \bot)$ for 
    $\bar q_0=(C, q_1, q_2, \tau, 2, \ExRet(w\tau_2))$;
  \item 
    $(\bar q, (x, \Box), \bar q_0, \bar q_1)$ for      
    $\bar q_1=(R, q_1, q_2', \sigma, l, \ExRet(w))$ and
    $\bar q_0=(C, q_2', q_2, \tau, 2, \ExRet(w\tau_2))$.
  \end{enumerate}
  Fix some $\bar q:=(P, q_1, q_2, \sigma, l, \ExRet(w))$. We add the
  following transitions to $\Delta_A$:
  \begin{enumerate}[\phantom0 T1.]
    \setcounter{enumi}{14}
  \item \label{AA:T25}
    $(\bar q, (x, \Box), \bot, \bot)$ if there
    is a $q$ with
    $(q_1, q)\in \ExOneLoop(w)$  and 
    $(q, \sigma, q_2, \ColPop{l})\in \Delta$;
  \item 
    $(\bar q, (x, \Box), X, \bar q)$ for $X\in \{\bot, q_\emptyset\}$;
  \item 
     $(\bar q, (x, \Box), \bar q_0, \bot)$ for 
     \begin{align*}
       &\bar q_0=(P, q_1, q_2', \tau, k, \ExRet(w\tau_k)),\\ 
       &(q_2', q)\in \ExOneLoop(w)\text{ and}\\
       &(q, \sigma, q_2, \ColPop{l})\in \Delta;       
     \end{align*}
  \item 
    $(\bar q, (x, \Box), \bar q_0, \bar q_1)$ for 
    \begin{align*}
      &\bar q_1=(R, q_1, q_1', \sigma, l, \ExRet(w)),\\
      &\bar q_0=(P, q_1', q_2', \tau, k, \ExRet(w\tau_k)),\\
      &(q_2', q)\in   \ExOneLoop(w)\text{ and}\\
      &(q, \sigma, q_2, \ColPop{l})\in\Delta. 
    \end{align*}
  \end{enumerate}
  Fix some $\bar q:=(C, q_1, q_2, \sigma, l, \ExRet(w))$ with $l=2$
  (whence  $x$ ranges here over $\{(\sigma,2),\varepsilon\}$). We add the 
  following transitions to $\Delta_A$:  
  \begin{enumerate}[\phantom0 T1.]
    \setcounter{enumi}{18}
  \item \label{AA:T30}
    $(\bar q, (x, \Box), \bot, \bot)$ if there is a $q$ 
    with $(q_1, q)\in  \ExOneLoop(w)$  and 
    $(q, \sigma, \gamma, q_2, \Collapse)\in \Delta$;
  \item $(\bar q, (x, \Box), X, \bar q)$ for $X\in\{\bot, q_\emptyset\}$;
  \item 
    $(\bar q, (x, \Box), \bot, \bar q_1)$  for 
    \begin{align*}
      &\bar q_1 = (R, q_1, q_2', \sigma, 2, \ExRet(w)),\\
      &(q_2', q) \in \ExOneLoop(w) \text{ and}\\
      &(q, \sigma, \gamma, q_2, \Collapse)\in \Delta;      
    \end{align*}
  \item $(\bar q, (x, \Box), \bar q_0, \bot)$  for 
    \begin{align*}
      &\bar q_0 = (P, q_1, q_2', \tau, k, \ExRet(w\tau_k)),\\
      &(q_2', q)\in \ExOneLoop(w) \text{ and}\\
      &(q, \sigma, \gamma, q_2, \Collapse) \in \Delta;      
    \end{align*}
  \item $(\bar q, (x, \Box), \bar q_0, \bar q_1)$ for 
    \begin{align*}
      &\bar q_1 = (R, q_1, q_1', \sigma, 2, \ExRet(w)),\\
      &\bar q_0 = (P, q_1', q_2', \tau, k, \ExRet(w\tau_k)),\\
      &(q_2', q)\in \ExOneLoop(w)\text{ and}\\
      &(q,\sigma, \gamma, q_2,  \Collapse)\in\Delta. 
    \end{align*}
  \end{enumerate}  
\end{defi}

\begin{lem}
  If $\mathcal{A}_{\relA}$ accepts a tree $\Encode(c_1)\otimes \Encode(c_2)$
  for configurations $c_1, c_2$, then \mbox{$c_2 = \Pop{2}^k(c_1)$} and there is
  some run from $c_1$ to $c_2$ that does not reach a substack of $c_2$
  before the final configuration, i.e., $(c_1,c_2)\in \relA$. 
\end{lem}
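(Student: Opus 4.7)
The plan is to verify that an accepting computation of $\mathcal{A}_{\relA}$ on $\Encode(c_1)\otimes\Encode(c_2)$ actually encodes a decomposition of some run witnessing $(c_1,c_2)\in\relA$ according to Lemma \ref{FormLemma}. First I would analyse the ``skeleton'' of the accepting run, i.e., the way the states $\bot, q_I, q_\emptyset, q_=$ propagate. Transition T1 forces the root to read the pair of states $(q_1,q_2)$ of $c_1$ and $c_2$ and to send its left child into state $(S,q_1,q_2,\bot,1,\ExRet(\bot_2))$. The only transitions leaving $q_=$ enforce componentwise equality of the two convoluted trees, while the only transitions leaving $q_\emptyset$ allow $\Box$ in the second component. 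Combined with the restrictions on the transitions emanating from $S$- and $R/P/C$-states, this forces the following shape: $\Encode(c_1)$ and $\Encode(c_2)$ agree lexicographically left of the rightmost branch of $\Encode(c_2)$, the rightmost branch of $\Encode(c_2)$ carries $S$-labels, and all extra nodes of $\Encode(c_1)$ (labelled by states with first component $R$, $P$, $C$, or by $q_\emptyset$) lie lexicographically to the right of that rightmost branch, forming $k$ extra right-spines without $0$-successors. By the characterisation of $\Encode$ (Remark \ref{TOP2Determined} and the block structure), this forces $c_2=\Pop{2}^k(c_1)$.

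The heart of the argument is an inductive claim processed from the leaves to the root of $\Encode(c_1)\otimes\Encode(c_2)$: for every node $d$ of the tree labelled by a state $\bar q\in M$, writing $s_d^\nearrow:=\LeftStack(d^\nearrow,\Encode(c_1))$ for the left stack of the rightmost leaf $d^\nearrow$ of the subtree rooted at $d$ in $\Encode(c_1)$, one has $\pi_4(\bar q)=\Sym(\LeftStack(d,c_1))$, $\pi_5(\bar q)=\Lvl(\LeftStack(d,c_1))$, $\pi_6(\bar q)=\ExRet(\LeftStack(d,c_1))$, and moreover there is a run of $\mathcal{S}$ from $(\pi_2(\bar q),s_d^\nearrow)$ to a configuration determined by $\pi_1(\bar q)$ and $\pi_3(\bar q)$ as described in the conditions C\ref{ConditionACNOTP}--C\ref{ConditionACP} before Definition \ref{Def:AA}. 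The bookkeeping of $\pi_4,\pi_5,\pi_6$ is forced locally by the letter read and is correct by Proposition \ref{Prop:ReturnsAutomatic}. The existence of the induced run is proved by case distinction on which transition of $\Delta_A$ the automaton uses at $d$: transitions T\ref{AA:T4}, T\ref{AA:T10}, T\ref{AA:T25}, T\ref{AA:T30} are base cases providing runs of length $0$, returns (via Lemma \ref{Lem:RetunDependsTopword}), and $1$-loops followed by a $\Pop{1}/\Collapse$ of level $1$ or level $2$ (via Corollary \ref{Cor:OneLoopsAutomatic}); the remaining transitions combine the runs obtained from the inductive hypotheses at $d0$ and $d1$. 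Crucially, an $R$- or $S$-labelled child encodes an intermediate return that is concatenated in front of the run provided by the label at the left child, so concatenation along the path from $d$ to $d^\nearrow$ assembles the subrun associated with $\bar q$.

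Applied at the node $0$, the claim yields a run from $(q_1,c_1)$ to $(q_2,c_2)$ that does not visit any substack of $c_2$ before its final configuration: the concatenation built by the induction is precisely a sequence of segments of the forms F\ref{FormReturn}, F\ref{1LoopCol}, F\ref{1LoopPop} of Lemma \ref{FormLemma}, hence a witness for $(c_1,c_2)\in\relA$. The main obstacle will be the side condition in form F\ref{1LoopPop}: every maximal chain of $P$-labelled segments must be followed by a $C$-labelled segment. This has to be checked against the transition structure: $P$-states can only reach leaves via transitions T\ref{AA:T25} (which directly produces a $1$-loop followed by $\Pop{1}$ or collapse of level $1$) or via transitions that have a $P$- or $R$-labelled right child and a $P$- or $C$-labelled left child, while $C$-states arise only at T\ref{AA:T30} or via inheritance leftwards. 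One has to verify that walking left along the tree from a $P$-labelled position, following the states propagated by the transitions, one necessarily meets a $C$-labelled position before reaching a position where the chain is ``closed'' by combination with a return or with the $S$-branch. Together with the orientation of the $1$-loops (which are always preceded by one further transition that widens the stack, ensuring that we never visit $\Pop{2}^k$-substacks too early), this implements exactly the side condition of form F\ref{1LoopPop} and completes the proof.
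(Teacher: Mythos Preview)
Your proposal is correct and follows essentially the same two-phase induction as the paper: a root-to-leaves pass establishing the structural conditions C\ref{Cond:C1}--C\ref{ConditionACLast} (in particular $c_2=\Pop{2}^k(c_1)$), followed by a leaves-to-root pass establishing the run-existence conditions C\ref{ConditionACNOTP}--C\ref{ConditionACP}, and concluding at node~$0$. One small simplification compared with your last paragraph: since the root carries an $S$-label, condition C\ref{ConditionACNOTP} at node~$0$ already asserts that the constructed run does not visit any substack of $s_2$ before its final configuration, so you get $(c_1,c_2)\in\relA$ directly without appealing to Lemma~\ref{FormLemma}; the F\ref{1LoopPop} side-condition is absorbed into the inductive verification of C\ref{ConditionACNOTP} at $C$-labelled nodes (where the C\ref{ConditionACP}-run from the $P$-labelled child is closed off by a level-$2$ collapse) rather than being a separate obstacle.
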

\begin{proof}
  Assume that $\rho_{\relA}$ is an accepting run of $\mathcal{A}_{\relA}$ on 
  $T:=\Encode(c_1) \otimes \Encode(c_2)$. 
  A straightforward induction from the root to the leaves shows that
  $c_1=\Pop{2}^k(c_2)$, that 
  Conditions C\ref{ConditionAC1} -- C\ref{ConditionACLast} hold and
  that the rightmost leaf of $T$ is labelled by some element of $M$. 
  Furthermore, $\rho_{\relA}(0)\in M$ and $(\pi_1(\rho_{\relA}(0)),
  \pi_2(\rho_{\relA}(0)),
  \pi_3(\rho_{\relA}(0)))=(S,q_1, q_2)$ for $c_i=(q_i,s_i)$. 
  Moreover, note that  
  $\rho_{\relA}(d)\in M$ and $\pi_1(\rho_{\relA}(d)) = C$ implies 
  $\Lvl(\LeftStack(d, c_1))=2$: if a transition at some node $d$
  labels the $0$-successor $d0$ by $C$, then we always have
  $\pi_5(\rho_{\relA}(d0))=2$. By construction, the
  transition of $\rho_{\relA}$ applied at $d0$ enforces that
  $\pi_5(\rho_{\relA}(d0))$ is the link level encoded in the tree at
  $d0$. Thus, the claim holds for all $0$ successors. 
  Moreover, a $1$-successor is labelled by $C$ only if its predecessor
  is also labelled by $C$. Thus, by induction on the distance to the
  first ancestor which is a $0$-successor the claim holds also for
  $1$-successors. 
  
  By induction from the leaves to the root, we show that 
  Conditions C\ref{ConditionACNOTP} and C\ref{ConditionACP} hold. 
  This completes the proof, because $\rho_{\relA}(0)$ then witnesses that
  there is a run from $c_1$ to $c_2$ not passing a substack of $c_2$
  before its final configuration. 
  For the base case, assume that $d\in T$ is a leaf labelled by some
  $\bar q\in M$. Depending on $\pi_1(\bar q)$ we have the following cases.
  \begin{iteMize}{$\bullet$}
  \item If $\pi_1(\bar q)=S$, then $d$ is the rightmost leaf of
    $\Encode(c_2)$. Thus, $s_d^\nearrow=s_d=s_2$. 
    Since $\rho$ is an accepting run, it uses a transition of the form
    T\ref{AA:T4}. Thus, $\pi_2(\bar q)=\pi_3(\bar q)$ and Condition
    C\ref{ConditionACNOTP} is trivially satisfied.
  \item If $\pi_1(\bar q)=R$,  $\rho$ applies a transition of the
    form T\ref{AA:T10}. 
    Recall that $d$ is a leaf whence \mbox{$s_d=\Pop{2}(\LeftStack(d,
      c_1))$},  $s_d^\nearrow=\LeftStack(d, c_1)$, and
    \mbox{$\pi_6(\bar q) = \ExRet(\LeftStack(d, c_1))$}. Thus, the
    condition in T\ref{AA:T10} ensures that there is a run from 
    $(\pi_2(\bar q), s_d^\nearrow)$ to $(\pi_3(\bar q), s_d)$, i.e., 
    Condition C\ref{ConditionACNOTP} holds.
  \item If $\pi_1(\bar q)=C$, then $\rho$ applies a transition of the
    form T\ref{AA:T30}. Since the collapse level of $s_d^\nearrow$ is
    $2$, we conclude
    analogously to the previous case that
    Condition C\ref{ConditionACNOTP} holds.
  \item If $\pi_1(\bar q)=P$, the transition of $\rho_{\relA}$ at $d$ is of
    the form T\ref{AA:T25}.
    Since $s_d^\nearrow=\LeftStack(d, c_1)$ the conditions
    of T\ref{AA:T25} ensure there exists some stack $s'$ with
    $s_d^\nearrow  \leq s'$ and 
    \mbox{$\TOP{2}(s') = \TOP{2}(s_d^\nearrow)$} such that there
    is a $1$-loop from $s_d^\nearrow$ to $s'$ followed by a $\Pop{1}$
    operation or a collapse of level $1$. Note that $s_d\leq
    \Pop{1}(s')$ and $\TOP{2}(s_d) = \TOP{2}(\Pop{1}(s'))$. Thus, Condition
    C\ref{ConditionACP} is satisfied. 
  \end{iteMize}
  A tedious but easy case distinction shows that Conditions
  C\ref{ConditionACNOTP} and C\ref{ConditionACP} carry over to all nodes
  of $T$. Instead of giving the full case distinction, we mention
  briefly the underlying ideas. 
  \begin{enumerate}[(1)]
  \item If $d0\in \Encode(c_1)\setminus\Encode(c_2)$, $d1\in
    \Encode(c_1)\setminus\Encode(c_2)$ and $\rho_{\relA}(d0)\in M$,
    then also 
    \mbox{$\rho_{\relA}(d1)\in M$} and
    $\pi_1\left(\rho_{\relA}(d1)\right)=R$ whence 
    $s_{d1}=s_{d0}^\nearrow$. Thus, we can compose the run associated
    to $d1$ with the run associated to $d0$ and obtain a run
    associated to $d$.  
  \item If $i\in\{0,1\}$ minimal such that $di\in\Encode(c_1)$, then 
    either $s_{di}=s_d$ such that the final part of the run associated
    to $s_{di}$ can serve as final part of the run associated to $s_d$
    or
    $s_{di}= \LeftStack(d, c_1)$ and the conditions on the
    transition at $d$ ensure that this run can be extended to a run to
    $s_d$ if $\pi_1\left(\rho_{\relA}(d)\right)\neq P$.
    If $\pi_1\left(\rho_{\relA}(d)\right) = P$, this run can be
    extended to some stack $s'$ with $s_d\leq s'$ and
    $\TOP{2}(s_d)=\TOP{2}(s')$. 
  \item If $i\in\{0,1\}$ is maximal such that $di\in\Encode(c_1)$ then
    $s_d^\nearrow=s_{di}^\nearrow$ and 
    the run associated to $di$ may serve as initial part of the run
    associated to $d$. \qedhere
  \end{enumerate}
\end{proof}

\begin{lem}
  Let $c_1=(q,s_1), c_2=(q',s_2)$ be configurations such that
  $s_2=\Pop{2}^k(s_1)$ and there is a run from $c_1$ to $c_2$ that
  passes a substack of $s_2$ only in its final configuration. Then
  there is an accepting run of $\mathcal{A}_{\relA}$ on
  $\Encode(c_1)\otimes\Encode(c_2)$. 
\end{lem}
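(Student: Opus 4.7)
\medskip
\noindent\textbf{Proof proposal.} The plan is to invert the argument of the previous lemma: starting from a witnessing run $\rho \in \Runs(c_1,c_2)$, we will explicitly construct an accepting labelling $\rho_{\relA}$ of the tree $T := \Encode(c_1)\otimes\Encode(c_2)$. First I would apply Lemma \ref{FormLemma} to decompose $\rho = \rho_1\circ\rho_2\circ\cdots\circ\rho_n$ into pieces of the forms F1, F2, F3. Since $s_2 = \Pop{2}^k(s_1)$, the tree $\Encode(c_1)$ extends $\Encode(c_2)$ by $k$ new paths beyond the rightmost leaf of $\Encode(c_2)$; label this leaf $d_0$ and let $d_0 < d_1 < \cdots < d_k$ (in lexicographic order) be the rightmost leaves of the subtrees representing the milestones $\Pop{2}^{k-i}(s_1)$. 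By Remark~\ref{milestonesInEncoding} and Lemma~\ref{LemmaOrderIso}, $\LeftStack(d_i, c_1) = \Pop{2}^{k-i}(s_1)$, so the decomposition of $\rho$ induces a natural partition of the indices $\{0,1,\dots,k\}$: each piece $\rho_j$ of form F1 or F2 ends at some $\LeftStack(d_{i_j}, c_1)$, while the F3 pieces preceding an F2 are ``internal'' to the transition from $d_{i_{j+1}}$ to $d_{i_j}$.

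Next I would define $\rho_{\relA}$ piece by piece. Set $\rho_{\relA}(\varepsilon) = q_I$, label every node of $\Encode(c_2)$ strictly left of the rightmost branch by $q_=$ (this is forced to be consistent since $\Encode(c_1)$ and $\Encode(c_2)$ agree there, by the characterisation analogous to Lemma~\ref{Lem:CharacterisationPop1Sequence} for a $\Pop{2}$-sequence), and label each node on the rightmost branch of $\Encode(c_2)$ by $(S, q_j, q', \sigma_d, l_d, \ExRet(w_d))$, where $q_j$ is the state in which $\rho$ first reaches the corresponding milestone and the remaining components are read off the path (using Proposition~\ref{Prop:LoopsAutomatic} to compute $\ExRet$). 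For each $d_i$ lying strictly in $\Encode(c_1)\setminus\Encode(c_2)$, label $d_i$ and the path from $d_{i-1}$'s split to $d_i$ with states from $M$ according to the type of the piece of the decomposition ending at $\Pop{2}^{k-i}(s_1)$: use first component $R$ for F1 returns and first component $C$ for F2. The initial and final state components are taken from the initial and final state of $\rho_j$, while the $\ExRet$ component is read from the path.

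The hardest part is the correct treatment of the F3 chains, where the intermediate stacks traversed by $\rho$ can be strictly wider than $s_1$ and so are not milestones of $s_1$. Following the idea already sketched in Section~\ref{subsec:RegA}, for such a chain $\rho_{i_{j+1}+1}\circ\cdots\circ\rho_{i_j}$ starting at $\LeftStack(d_{i_{j+1}}, c_1)$ and ending at $\LeftStack(d_{i_j}, c_1)$, the successive intermediate stacks $t_1,t_2,\dots$ reached after each F3 piece agree on their topmost word (including level-$2$ collapse link targets) with the milestones $m_x := \LeftStack(e_x, c_1)$ where $e_x$ is the unique ancestor of $d_{i_{j+1}}$ with $d_{i_{j+1}} = e_x 01^{y_x}$. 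I would label each such $e_x$ by $(P, q_{x-1}', q_x, \sigma_{e_x}, l_{e_x}, \ExRet(w_{e_x}))$, where $q_{x-1}', q_x$ are the initial and final states of the corresponding F3 piece. By Lemma~\ref{Lem:LoopsDependsTopword} and the fact that $\TOP{2}(t_x) = \TOP{2}(m_x)$ with matching level-$2$ links, the existence of the required $1$-loop in $\rho$ translates into membership $(q_{x-1}',q)\in \ExOneLoop(w_{e_x})$ followed by an applicable $\Pop{1}$ or $\Collapse$ transition — precisely the side conditions of transitions T17--T18 and T21--T23.

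Finally I would verify by induction from the leaves to the root that the labelling $\rho_{\relA}$ so defined is a run of $\mathcal{A}_{\relA}$: at each inner node the applicable transition is one of those listed in Definition~\ref{Def:AA}, as ensured by (i) the correct topmost-symbol/link-level/$\ExRet$-components (read off the path), (ii) the consistency of initial and final state components between successive pieces of the decomposition, and (iii) the local side conditions on $\ExOneLoop$ and $\Delta$ witnessed by the actual transitions of $\rho$. Nodes outside $\Encode(c_1)\cup\Encode(c_2)$ receive $\bot$, nodes in $\Encode(c_1)\setminus\Encode(c_2)$ that lie on none of the chosen paths receive $q_\emptyset$, and the root transition T1 is applicable because $\rho_{\relA}(0)$ has first component $S$ and carries the states $q_1,q_2$ of $c_1,c_2$. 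Hence $\rho_{\relA}$ is accepting, which completes the proof.
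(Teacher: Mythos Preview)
Your proposal is correct and follows essentially the same approach as the paper: decompose the witnessing run via Lemma~\ref{FormLemma}, then build the labelling $\rho_{\relA}$ by assigning flags $S$, $R$, $C$, $P$, $q_=$, $q_\emptyset$ according to the type of each piece, using ancestor nodes as representatives for the F3 chain (since intermediate stacks may exceed the width of $s_1$ but agree with the corresponding milestone on their topmost word), and finally verifying by induction that the labelling is a run of $\mathcal{A}_{\relA}$. Two minor points: the relevant structural fact about $\Encode(c_1)\otimes\Encode(c_2)$ when $s_2=\Pop{2}^k(s_1)$ is not Lemma~\ref{Lem:CharacterisationPop1Sequence} itself but its $\Pop{2}$ analogue (stated informally in Section~\ref{subsec:RegD}); and on the rightmost branch of $\Encode(c_2)$ the second state component of an $S$-label is propagated from the child rather than being ``the state in which $\rho$ first reaches the corresponding milestone''.
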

\begin{proof}
  Let $\rho$ be some run from $c_1$ to $c_2$. 
  Recall the decomposition
  $\rho=\rho_1\circ\rho_2\circ\dots\circ\rho_n$
  provided by Lemma \ref{FormLemma}. 
  Let $\rho_0:=\rho{\restriction}_{[0,0]}$ and
  let $q_i$ denote the final state of $\rho_i$ for all $0\leq i \leq
  n$.
  In the following, we will use the notation 
  $\hat d:=\LeftStack(d, c_1)$ for all $d\in\{0,1\}^*$. 
  Let $d$ be the rightmost leaf of
  $\Encode(c_1)\otimes\Encode(c_2)$. Note that  
  \begin{align*}
    \hat d=s_1=\rho(0)=\rho_0(0)=\rho_0(\length(\rho_0))    
  \end{align*}
  whence $\rho_0$ ends in $(q_0, \hat d)$. 
  We define an accepting run $\rho_{\relA}$
  of $\mathcal{A}_{\relA}$
  on \mbox{$T:=\Encode(c_1)\otimes \Encode(c_2)$}  by induction 
  as follows. Let $d\in T$ be the
  lexicographically maximal node of
  $\Encode(c_1)$  such that $\rho_{\relA}$ has not been defined yet at $d$.
  Assume that
  there is some maximal $i\geq 1$ such that $\rho_i(0)=(q_{i-1},
  s)$ for some stack $s$ satisfying 
  $\Pop{2}( \hat d)\leq \Pop{2}(s)$ and 
  $\TOP{2}(\hat d) = \TOP{2}(s)$. Furthermore, 
  assume that $s=\hat d$ if $\rho_{i-1}$ is not
  of the form F\ref{1LoopPop}. 
  Depending on the form of
  $\rho_i$, we proceed as follows. 
  \begin{enumerate}[(1)]
  \item If $\rho_i$ is of the Form F\ref{FormReturn}, 
    let $d'$ be the minimal element such that
    $d=d'0^m$ for some $m\in\N$. 
    We set 
    $\rho_{\relA}(d):= (R, q_{i-1}, q_i, \Sym(\hat d),
             \Lvl(\hat d), \ExRet( \hat d))$ and
    for  $d'\leq e < d$ we set
    $\rho_{\relA}(e):= (R, q_e, q_i, \Sym(\hat e),
    \Lvl(\hat e), \ExRet( \hat e))$
    where
    $q_e= \pi_2\left(\rho_{\relA}(ej)\right)$ for 
    $j=\max\{i\in\{0,1\}:  di\in \Encode(c_1)\}$.
  \item If $\rho_i$ is of the Form F\ref{1LoopCol}, 
    let $e'$ be minimal such that $d=e'0^{m_0}1^{m_1}$ for some
    \mbox{$m_0,m_1\in\N$}. 
    For each $e$ satisfying $e'0^{m_0} \leq e \leq d$ and
    for all
    $e0\leq f \in\Encode(c_1)$
    we set
    $\rho_{\relA}(e):=(C,q_{i-1},q_i,
    \Sym(\hat e),
    \Lvl(\hat e),
    \ExRet(\hat e))$ and
    $\rho_{\relA}(f):=q_\emptyset$. 

    For all $e$ with $e'\leq e < e'0^{m_0}$ we define
    $\rho_{\relA}(e):=
    (R, q_e, q_i, \Sym(\hat e),
    \Lvl(\hat e),
    \ExRet(\hat e))$ where $q_e$ is
    defined as in the previous case.    
  \item If $\rho_i$ is of the Form F\ref{1LoopPop}, 
    then we proceed as follows.
    \begin{iteMize}{$\bullet$}
    \item If $d$ is a leaf of
    $\Encode(c_1)$, set
    $\rho_{\relA}(d):=(P, q_{i-1}, q_i, 
      \Sym(\hat d),
      \Lvl(\hat d),
      \ExRet(\hat d)
    )$;
  \item otherwise, let $j\in\{0,1\}$ be maximal such that
    $dj\in\Encode(c_1)$. Then we set
    $\rho_{\relA}(d):=\left(P, \pi_2\left(\rho_{\relA}(ej)\right), q_i, 
      \Sym(\hat d),
    \Lvl(\hat d),
    \ExRet(\hat d)\right)$.
    \end{iteMize}
    In case that there is some $e\in\{0,1\}^*$ such that
    $d=e1$, then define $\rho_{\relA}(e0f):=q_\emptyset$ for all
    $f\in\{0,1\}^*$ such that $e0f\in\Encode(c_1)$.  
  \end{enumerate}
  These rules define $\rho_{\relA}$ on $\Encode(c_1)\setminus\Encode(c_2)$. 
  Let $d$ be the rightmost leaf of $\Encode(c_2)$, and
  set $\rho_{\relA}(d):=\left(S,q_n,q_n, \Sym(\hat d), 
  \Lvl(\hat d),
  \ExRet(\hat d)\right)$.
  Let $0\leq d$ be the maximal element in the rightmost branch of
  $\Encode(c_2)$ such that $\rho_{\relA}(d)$ is undefined. 
  Let $j\in\{0,1\}$ be maximal such that $dj\in\Encode(c_1)$. 
  We set $\rho_{\relA}(d):=\left(S, \pi_2(\rho_{\relA}(dj)), q_n, 
  \Sym(\hat d),  
  \Lvl(\hat d),
  \ExRet(\hat d)\right)$.
  We complete the definition by $\rho_{\relA}(\varepsilon)=q_I$ and
  $\rho_{\relA}(d):=q_=$ for all $d\in\Encode(c_2)$ that are not in the
  rightmost branch of $\Encode(c_2)$. 
  A tedious, but straightforward induction shows that 
  $\rho_{\relA}$ is an accepting run of $\mathcal{A}_{\relA}$ on
  $\Encode(c_1)\otimes\Encode(c_2)$. 
\end{proof}

\section{Automaton for 
  Relation   \texorpdfstring{\relB}{R-Downarrow}}
\label{Appendix:AB}
In the following definition, $w$ ranges over words, $\tau$ over
letters from $\Sigma\setminus\{\bot\}$, $k$ over $\{1,2\}$,
$q_i, q_e, q_e'$ over $Q$ and $z,z'$ over $\{q_=, \bot\}$. Whenever we
have fixed a word $w$, then 
$\sigma:=\Sym(w)$, $l:=\Lvl(w)$ and $x$ ranges over $\{(\sigma,l),
\varepsilon\}$.  

\begin{defi}
  $\mathcal{A}:=\left(Q_{\mathcal{A}},\Sigma_{\mathcal{A}},\bot,\{q_I\},
  \Delta_{\mathcal{A}}\right)$
  where 
  \begin{iteMize}{$\bullet$}
  \item $\Sigma_{\mathcal{A}}:=( Q\cup
    (\Sigma\times\{1,2\})\cup\{\varepsilon, \Box\})^2$, 
  \item  $Q_{\mathcal{A}}:= 
    \{q_I, \bot, q_=, (\Box, \varepsilon) \} \cup (Q\times Q \times
    2^{Q\times Q} \times \Sigma \times
    \{1,2\}\times \{S, P_1, P_2\})$, and 
  \item $\Delta_\mathcal{A}$ contains the following transitions:
    \begin{enumerate}
    \item $(q_I, (q_1, q_2), (q_1, q_2, \emptyset, \bot, 1, S), \bot)$;
    \item $(q_=, (y,y), z, z')$ for all       
      $y\in (\Sigma\times\{1,2\}) \cup \{\varepsilon\}$;
    \item \label{TransBoxVarepsilon} $((\Box, \varepsilon), (\Box,
      \varepsilon), \bot, \bot)$; 
    \end{enumerate}
    now fix an arbitrary
    $\bar q=(q_i, q_e, \ExRet(\Pop{1}(w)), \Sym(w), \Lvl(w), S)$. 
    $\Delta_{\mathcal{A}}$ contains
    \begin{enumerate}
      \setcounter{enumi}{3}
    \item $(\bar q, (x,x), \bar q_0, \bot)$  for each
      $\bar q_0=(q_i, q_e, \ExRet(w), \tau, k, S)$;
    \item 
      $(\bar q, (x,x), z, \bar q)$;
    \item $(\bar q, (x,x), \bot, \bot)$ if $q_i=q_e$;
    \item \label{RelBT7}$(\bar q, (x,x), \bar q_0^2, \bot)$ and
      $(\bar q, (x,x), \bar q_0^1, (\Box, \varepsilon))$
      for
      $\bar q_0^j=(q_i, q_e', \ExRet(w), \tau,
      k, P_j)$  such that there is some $q\in Q$ with 
      $(q_e', q)\in\ExHLoop(w\tau_k)$ and 
      $(q, \tau, q_e, \ColPop{k})\in\Delta$;
    \end{enumerate}
    now fix an arbitrary
    $\bar q=(q_i, q_e, \ExRet(\Pop{1}(w)), \Sym(w), \Lvl(w), P_2)$. 
    $\Delta_\mathcal{A}$ contains
    \begin{enumerate}
      \setcounter{enumi}{7}
    \item\label{RelBT8} $(\bar q, (x,\Box), \bot, \bot) $ if $q_i=q_e$;
    \item\label{RelBT9} $(\bar q, (x,\Box), \bar q_0, \bot) $  for 
      $\bar q_0=(q_i, q_e', \ExRet(w), \tau,
      k, P_2)$ such that there is a $q\in Q$ with 
      $(q_e', q)\in \ExHLoop(w\tau_k)$ and 
      $(q, \tau, q_e, \ColPop{k})\in\Delta$;
    \end{enumerate}
    now fix an arbitrary
    $\bar q=(q_i, q_e, \ExRet(\Pop{1}(w)), \Sym(w), \Lvl(w), P_1)$. 
    $\Delta_\mathcal{A}$ contains
    \begin{enumerate}
      \setcounter{enumi}{9}
    \item $(\bar q, (x,x), \bar q_0, \bot)$ for
      $\bar q_0=(q_i, q_e', \ExRet(w), \tau,
      k, P_1)$ such that there is a $q\in Q$ with 
      $(q_e', q)\in \ExHLoop(w\tau_k)$ and $(q, \tau, q_e,
      \ColPop{k})\in\Delta$;
    \item $(\bar q, (x,x), z, \bar q)$;
    \item $(\bar q, (x,x), z, \bar q_1)$ for 
      $\bar q_1=(q_i, q_e, \ExRet(\Pop{1}(w)), \sigma, l, P_2)$. 
    \end{enumerate}
  \end{iteMize}
\end{defi}

\noindent Let us explain the use of the flags $S$ ('{\bf S}earching the rightmost leaf
of the second input'), $P_1$ and $P_2$ ('{\bf P}op sequence'). 
Let $c=(q,s)$ and $c'=(p,t)$ be configurations such that
$t=\Pop{1}^k(s)$. Then we can always define nodes $d_1,d_2,d_3$ (and
an auxiliary node $e_3$)  in the
convolution of $\Encode(c) \otimes \Encode(c')$ as follows. Let $d_3$
be the rightmost leaf  of $\Encode(c)$,
let $e_3$ be the rightmost leaf of $\Encode(c')$, 
let $d_2$ be the minimal node of the rightmost path of $\Encode(c)$ which
is not in $\Encode(c')\setminus\{e_3\}$ and let $d_1$ be the
maximal node of the 
rightmost path of $\Encode(c)$ which is  on the rightmost path of
$\Encode(c')$. See Figure \ref{fig:P1P2Andd1d2d3Example} for an
example. By definition one concludes that $d_1\leq d_2 \leq d_3$. An
accepting run of $\mathcal{A}$ labels all nodes up to $d_1$ with flag
$S$, the nodes strictly between $d_1$ and $d_2$ with $P_1$ and the
nodes between $d_2$ and $d_3$ by $P_2$. Using these flags the
automaton guarantees that $t=\Pop{1}^k(s)$ for some $s$. 
Furthermore the transitions used at the nodes labelled by $P_1$ or
$P_2$ guarantee that there is a  sequence of loops and pop
operations connecting the two configurations. 

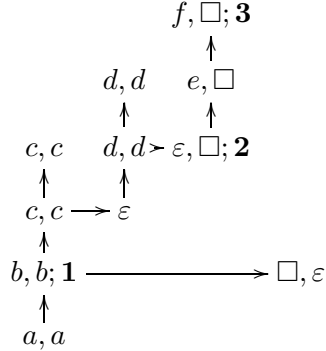
\begin{figure}
  \centering
  $\begin{xy}
    \xymatrix@R=9pt@C=3pt{
       &     & f,\Box;{\bf 3}\\
       & d,d & e,\Box \ar[u]\\
      c,c & d,d \ar[u] \ar[r] & \varepsilon,\Box;{\bf 2} \ar[u]\\
      c,c \ar[u]\ar[r]& \varepsilon\ar[u]  \\
      b,b;{\bf 1} \ar[u]\ar[rrr] & &  & \Box,\varepsilon \\
       a,a \ar[u]       
      }
  \end{xy}
  $
  \caption{Nodes $d_1,d_2,d_3$ in case of 
    $s=abcc:abcdd:abcdef$ and $t=abcc:abcdd:ab$ are marked by boldface numbers
    ${\bf 1}$,${\bf 2}$,${\bf 3}$, respectively. }
  \label{fig:P1P2Andd1d2d3Example}
\end{figure}

\begin{lem}
  Let $c_1$ and $c_2$ be configurations. $\mathcal{A}_{\relB}$ accepts
  $\Encode(c_1)\otimes \Encode(c_2)$ if and only if
  $c_2=\Pop{1}^m(c_1)$ such that $(c_1,c_2)\in {\relB}$. 
\end{lem}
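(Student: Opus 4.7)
The plan is to prove both directions of the equivalence separately, using the informal description of the flags $S$, $P_1$, $P_2$ that precedes the statement, combined with Lemma \ref{Lem:CharacterisationBC} which characterises runs witnessing $\relB$ as alternating sequences of high loops and $\Pop{1}/\Collapse$-of-level-$1$ transitions. The key correspondence to be made precise is the following: given configurations $c_1=(q,s)$, $c_2=(p,t)$ with $t=\Pop{1}^m(s)$, the three distinguished nodes $d_1\le d_2 \le d_3$ on the rightmost branch of $\Encode(c_1)$ (as in the paragraph before the lemma) delimit three regions whose $\mathcal{A}_{\relB}$-labels carry the flags $S$, $P_1$, $P_2$, respectively, and the combinatorics of $\Pop{1}^m$ on the rightmost branch exactly matches Lemma \ref{Lem:CharacterisationPop1Sequence}\eqref{ass2-Lem:CharacterisationPop1Sequence}.

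For the direction from acceptance to $(c_1,c_2)\in\relB$, I would argue by a root-to-leaves inspection of an accepting run $\rho_{\relB}$ on $T:=\Encode(c_1)\otimes\Encode(c_2)$. The initial transition forces $T(\varepsilon)=(q,p)$ and $T(0)$ to carry an $S$-state with initial symbol $\bot$ and link level $1$. The transitions for $S$-states only allow $(x,x)$ on the second argument, so as long as the run stays in flag $S$, both trees agree, which by Lemma \ref{STACS:Bijective} identifies the corresponding prefix of $\Encode(c_2)$ with that of $\Encode(c_1)$. Once the run switches (via transition T\ref{RelBT7}) to a $P_1$ or $P_2$ state, the second component starts producing $\Box$-labels, exactly matching case \eqref{ass2-Lem:CharacterisationPop1Sequence} of Lemma \ref{Lem:CharacterisationPop1Sequence}; the $P_1$ branch covers the nodes $c\le x< d$ (where $d$ disappears in $\Encode(c_2)$), while $P_2$ handles the path from $d$ down to the rightmost leaf of $\Encode(c_1)$. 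Therefore $c_2=\Pop{1}^m(c_1)$ for some $m$. Next, I would prove by induction from the leaves to the root along the rightmost branch that, whenever a node $d$ is labelled by $(q_i,q_e,\ExRet(\Pop{1}(w)),\sigma,l,\mathrm{flag})$ with $w$ the topmost word of $\LeftStack(d,c_1)$, then there exists a run from $(q_i,\LeftStack(d,c_1))$ to a configuration with state $q_e$ and stack $\Pop{1}^{m'}(\LeftStack(d,c_1))$ for the appropriate $m'$, obtained as a sequence of high loops followed by $\Pop{1}/\Collapse$-of-level-$1$ operations; here Proposition \ref{Prop:LoopsAutomatic} guarantees that the stored information $\ExRet(\Pop{1}(w))$ together with $\sigma$, $l$ determines $\ExHLoop(w)$, so that each transition T\ref{RelBT7}, T\ref{RelBT8}, T\ref{RelBT9} really certifies a high loop of $\LeftStack(d,c_1)$ followed by the proper transition. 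Applying this at the root of the rightmost branch then yields precisely a decomposition as in Lemma \ref{Lem:CharacterisationBC} from $c_1$ to $c_2$, witnessing $(c_1,c_2)\in\relB$.

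For the converse direction, assume $(c_1,c_2)\in \relB$ and fix a run $\rho=\lambda_1\circ\rho_1\circ\dots\circ\lambda_n\circ\rho_n$ given by Lemma \ref{Lem:CharacterisationBC}, where each $\lambda_i$ is a high loop of the current stack $s_i$ and each $\rho_i$ performs exactly one $\Pop{1}$ or $\Collapse$ of level $1$, with $s_1=s$ and $s_{n+1}=t$. I would construct an accepting run $\rho_{\relB}$ as follows. Outside the rightmost branch of $\Encode(c_1)$, use $q_=$ or $\bot$ in the obvious way. Along the rightmost branch of $\Encode(c_1)$, the nodes $d_1\le d_2\le d_3$ defined above are identified by inspecting $\Encode(c_1)\otimes\Encode(c_2)$. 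Label nodes strictly above $d_1$ by $S$-states, nodes from $d_1$ up to $d_2$ exclusive by $P_1$-states and nodes from $d_2$ down to $d_3$ by $P_2$-states; each state carries the initial state $q_i$ of the pending part of $\rho$, a state $q_e$ guessing the final state of that part, the set $\ExRet(\Pop{1}(w))$ for $w$ the topmost word of the corresponding $\LeftStack$, and $\Sym$, $\Lvl$ of that stack. The guesses $q_e$ are determined by the actual intermediate states of $\rho$: if the node represents the stack $\Pop{1}^{m-k}(s)$, take $q_e$ to be the state reached in $\rho$ after processing $\lambda_{n-k+1}\circ\rho_{n-k+1}\circ\dots$, specifically at the end of $\lambda_{n-k+1}$. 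A case-by-case verification using Proposition \ref{Prop:LoopsAutomatic} (which guarantees that $\ExRet(\Pop{1}(w))$, $\sigma$, $l$ as stored really do determine $\ExHLoop(w)$) shows that each of these transitions is of one of the forms listed in Definition \ref{Def:AA}, and that the resulting $\rho_{\relB}$ is an accepting run.

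The main obstacle will be the bookkeeping that links the local information stored at each node (the tuple $(q_i, q_e, \ExRet(\Pop{1}(w)), \sigma, l, \mathrm{flag})$) with the existence of a genuine subrun of $\mathcal{S}$; in particular, ensuring that the flag transitions $S\to P_1$ and $P_1 \to P_2$ happen precisely at the splitting nodes predicted by Lemma \ref{Lem:CharacterisationPop1Sequence}, and that the $q_e'\to q_e$ propagation step certified by transitions T\ref{RelBT7}, T\ref{RelBT9} corresponds to inserting exactly one high loop followed by one level-$1$ $\Pop{1}/\Collapse$, rather than an arbitrary sequence. Careful use of the fact that $\Sym(w)$ and $\Lvl(w)$ together with $\ExRet(\Pop{1}(w))$ uniquely determine $\ExHLoop(w)$ (Corollaries \ref{Cor:ExHLoopDependence} and \ref{Cor:ExLoopDependence}) is what keeps the construction finite-state.
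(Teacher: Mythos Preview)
Your proposal follows essentially the same approach as the paper: both directions hinge on the three-region flag structure $S/P_1/P_2$ along the rightmost branch, Lemma~\ref{Lem:CharacterisationPop1Sequence} for the shape of $\Encode(c_1)\otimes\Encode(c_2)$, and Lemma~\ref{Lem:CharacterisationBC} for the loop-$\Pop{1}$ decomposition of the witnessing run, with the state guesses $q_e$ read off from the intermediate states of that run.

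One small correction: you write that once the automaton switches to a $P_1$ state ``the second component starts producing $\Box$-labels'', but in fact $P_1$-states still read pairs $(x,x)$ (transitions~10--12 of the definition of $\mathcal{A}_{\relB}$); these are exactly the nodes on the rightmost branch of $\Encode(c_1)$ that lie in $\domain(\Encode(c_2))$ but \emph{not} on the rightmost branch of $\Encode(c_2)$, while $P_2$ covers the nodes in $\domain(\Encode(c_1))\setminus\domain(\Encode(c_2))$ reading $(x,\Box)$. The paper's proof makes this explicit via the case distinction $d_1=d_2=d_3$, $d_1=d_2<d_3$, $d_1<d_2<d_3$, the first two corresponding to case~(1) of Lemma~\ref{Lem:CharacterisationPop1Sequence} (where $P_1$ is skipped) and the last to case~(2). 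Also, your reference to ``Definition~\ref{Def:AA}'' points to the automaton $\mathcal{A}_{\relA}$; the transitions you need are those of $\mathcal{A}_{\relB}$ in Appendix~\ref{Appendix:AB}.
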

\begin{proof}[Proof (sketch).]
  Assume that $\rho$ is an accepting run of
  $\mathcal{A}_{\relB}$  on $\Encode(c_1)\otimes\Encode(c_2)$.
  
  Every accepting run labels the root by $q_I$. Now an easy induction
  shows that there are nodes 
  $0\leq d_1 \leq d_2 \leq d_3$ such that the following holds. 
  \begin{iteMize}{$\bullet$}
  \item   $d_3$ is the rightmost leaf  of $\Encode(c_1)$.
  \item All nodes  $0\leq e \leq d_3$ are labelled by
    elements in 
    \begin{align*}
      M:=Q\times Q \times
      2^{Q\times Q}  \times \Sigma \times
      \{1,2\}\times \{S, P_1, P_2\}      
    \end{align*}
    such that $\pi_6(e)=
    \begin{cases}
      S &\text{if }0\leq e \leq d_1,\\
      P_1& \text{if } d_1<e \leq d_2, \\
      P_2& \text{if } d_2<e \leq d_3. \\
    \end{cases}
    $
  \end{iteMize}
  Furthermore, all nodes in 
  $\Encode(c_1) \otimes \Encode(c_2)$ to the left
  of this branch are labelled by $q_=$ which ensures that the two
  configurations agree on these nodes. 
  We distinguish the following cases:
  \begin{enumerate}[$d_1=d_2=d_3$]
  \item[$d_1=d_2=d_3$] In this case, no transition of the form
    \eqref{RelBT7} is used in the run. One easily concludes that
    $c_1=c_2$ and $(c_1,c_2)\in\relB$ is witnessed by the run of
    length $0$ connecting $c_1$ with $c_2$.
  \item[$d_1=d_2<d_3$] In this case, we prove by induction from $d_1$
    to $d_3$ that the automaton uses at $d_1$ a
    transition of the first form of \eqref{RelBT7}, between $d_1$
    and $d_3$ it uses transitions from \eqref{RelBT9}, and at $d_3$ it
    uses a transition from \eqref{RelBT8}. 
    Due to Lemma \ref{Lem:CharacterisationPop1Sequence} (first case)
    this implies 
    that $c_2=\Pop{1}^m(c_1)$ for some $m\in\N$ and
    $\LeftStack(d_1,c_1)$ is the stack of $c_2$. 
    Now, by induction from $d_3$ to $d_1$ one proves for each
    $d_1\leq e \leq d_3$ that $\pi_1(e)$ is the state of $c_1$ and
    $\pi_2(e)$ is a state such that there is a run witnessing 
    \mbox{$\left(c_1, (\pi_2(e), \LeftStack(e,c_1))\right)\in \relB$}. We
    conclude by another induction showing that $\pi_2(e)$ is the state
    of $c_2$.
  \item[$d_1<d_2<d_3$] Analogously to the previous case, we use Lemma
    \ref{Lem:CharacterisationPop1Sequence} (second case) to show that 
    \mbox{$c_2=\Pop{1}^m(c_1)$} for some $m\in\N$. 
    Induction from $d_3$ to $d_1$ shows that  for each 
    $d_1 \leq e \leq d_3$ there is some number 
    $k(e)\leq m$ such that 
    \begin{align*}
      \TOP{2}(\LeftStack(e, c_1))=\TOP{2}(\Pop{1}^{k(e)}(c_1))      
    \end{align*}
    and the
    label of $e$ is such that
    there is a run from $c_1$ to $(\pi_2(e), \Pop{1}^{k(e)}(c_1))$
    witnessing that this pair is in $\relB$. 
    Moreover, $k(d_1)=m$ and $\pi_2(d_1)$ is the state of $c_2$
    whence $(c_1,c_2)\in\relB$. 
  \end{enumerate}
  For the other direction let $c_1=(q_1,s_1),c_2=(q_2,s_2)$ be
  configurations and  $\rho$ 
  a run that witnesses  $(c_1,c_2)\in \relB$. 
  We only consider the case that 
  $\Encode(c_1)\otimes\Encode(c_2)$ is as described in 
 \eqref{ass2-Lem:CharacterisationPop1Sequence} of Lemma
  \ref{Lem:CharacterisationPop1Sequence}.
  The other case is similar.
  Let $b\in\Encode(c_2)$ be such that $b1$ is the rightmost leaf
  of $\Encode(c_2)$. Let $c$ be maximal in $\Encode(c_2)$ such that
  $c1\in \Encode(c_1)\setminus \Encode(c_2)$. Let $d$
  be the rightmost leaf of $\Encode(c_1)$. 
  Due to Lemma \ref{Lem:CharacterisationPop1Sequence}, $b < c < d$. 

  For all $x\leq d$, let $w_x:=\TOP{2}(\LeftStack(x,\Encode(c_1)))$.
  For $b\leq x \leq d$ let $i_x\in\domain(\rho)$ be minimal such that
  $\rho(i_x)=(q, \Pop{2}(s_1):w_x)$. Let
  $q_x\in Q$ be the state at $\rho(i_x)$.
  We define an accepting run $\lambda$ of $\mathcal{A}_{\relB}$ on
  $\Encode(c_1)\otimes\Encode(c_2)$ as follows.
  For all $\varepsilon< x \leq d$ let 
  \begin{align*}
    &\lambda(x):=(q_1,f_x,\ExRet(\Pop{1}(w_x)), \Sym(w_x),
    \Lvl(w_x), Y_x)\text{ where}\\
    &f_x=
    \begin{cases}
      q_2 &\text{if }x\leq b,\\
      q_x & \text{if } b<x\leq d,
    \end{cases}
    \text{ and}\\
    &Y_x=
    \begin{cases}
      S &\text{if } x\leq b,\\
      P_1&\text{if } b<x\leq c,\\
      P_2&\text{if } c<x\leq d.
    \end{cases}    
  \end{align*}
  Furthermore, we set $\lambda(b1):=(\Box,\varepsilon)$, 
  $\lambda(\varepsilon):=q_I$ and 
  $\lambda(x):=q_=$ for all other nodes
\mbox{  $x\in\Encode(c_1)\otimes\Encode(c_2)$. }
  
  Since $\rho$ decomposes as a sequence of loops, $\Pop{1}$ operations
  and collapse operations of level $1$,  it is
  straightforward to show that $\lambda$ is an accepting run of
  $\mathcal{A}_{\relB}$. 
\end{proof}

\section{Automaton for 
  Relation   \texorpdfstring{\relD}{R-Rightarrow}}
\label{Appendix:AD}
In the following definition, we use the same convention regarding
ranges of variables as in Appendix \ref{Appendix:AA}.

Before we define the automaton recognising the relation $\relD$
formally, we explain how a successful run of it will process a tree 
$\Encode(q_1,s_1)\otimes \Encode(q_2,s_2)$. 
The states of $\mathcal{A}_{\relD}$ come from the set $\{q_I, q_=, \bot\} \cup
M$ where 
\begin{align*}
  M:=Q\times Q \times \Sigma \times \{1,2\} \times
  (2^{Q\times Q}) \times (2^{Q\times Q}) \times \{R,L\} \times \{S,
  N\}.
\end{align*}
$q_I$ is the 
final state that is exclusively used to label the root. $q_=$ is the
state for all nodes in $\Encode(q_1,s_1)$ that do not
belong to the rightmost branch of this tree. This state is used to
check that $\Encode(q_1,s_1)$ and $\Encode(q_2,s_2)$ agree on this
part of the convolution.  The state $\bot$ is the
initial state
only used for marking the end of the tree, i.e., $\bot$ is the label
for the nodes in
$\left(\Encode(q_1,s_1)\otimes\Encode(q_2,s_2)\right)_+$.  
The rest of the nodes are labelled by elements from $M$. 

For some $\bar q\in M$ we write $\pi_i(\bar q)$ for the projection
to the $i$-th component.  In an accepting run, a node $d$ is
labelled by $\bar q\in M$ if the following is satisfied.
\begin{enumerate}[C1.]
\item \label{Cond:C1}
  $\pi_8(\bar q)= S$ iff $d$ is in the rightmost branch of
  $\Encode(q_1, s_1)$. $S$ stands for ``searching the rightmost leaf
  of $(q_1,s_1)$'' while $N$ stands for ``normal reachability''. 
  We ensure that $\pi_8(\bar q)=N$ iff $d$ is in 
  $\Encode(q_2, s_2) \setminus \Encode(q_1,s_1)$.
\item \label{Cond:C2}
  $\pi_7(\bar q) = R$ iff $d$ is in the rightmost branch of
  $\Encode(q_2,s_2)$ (by definition this is also the rightmost
  branch of $\Encode(q_1,s_1)\otimes \Encode(q_2,s_2)$. $R$ stands
  for ``rightmost branch'' while $L$ stands for ``left''.
\item $\pi_6(\bar q) =
  \ExLoop(\LeftStack(d,\Encode(q_2,s_2)))$. Since
  $s_1\in\Milestones(s_2)$, 
  $d\in \Encode(q_1,s_1)$ implies that
  $\pi_6(\bar q)= 
  \ExLoop(\LeftStack(d,\Encode(q_1,s_1)))$. 
\item $\pi_5(\bar q) =
  \ExRet(\LeftStack(d,\Encode(q_2,s_2)))$. Since
  $s_1\in\Milestones(s_2)$,     
  $d\in \Encode(q_1,s_1)$ implies that
   $\pi_5(\bar q)=
  \ExRet(\LeftStack(d,\Encode(q_1,s_1)))$.
\item $\pi_4(\bar q) = 
  \Lvl(\LeftStack(d, \Encode(q_2,s_2)))$.
\item \label{Cond:CNextToLast}
  $\pi_3(\bar q) = 
  \Sym(\LeftStack(d, \Encode(q_2,s_2)))$.
\item 
  Let $q_i:=\pi_1(\bar q)$ and $q_e:=\pi_2(\bar q)$.
  For $d\in
   \Encode(q_2,s_2)\setminus \Encode(q_1,s_1)$ 
  there is a run from $(q_i, \LeftStack(d, \Encode(q_2,s_2)))$ to 
  $(q_e, \InducedGenMilestone(d,\Encode(q_2,s_2)))$. 
  If $d$ is in the rightmost path of $\Encode(q_1,s_1)$,
  $q_i=q_1$ and there is a run from $(q_1,s_1)$ to $(q_e,
  \InducedGenMilestone(d, \Encode(q_2, s_2)))$. 
\end{enumerate}

\begin{defi}
  Let $\mathcal{S}=(Q, \Sigma,\Gamma, \Delta_{\mathcal{S}}, q_0)$ be
  some \CPS.
  Define the automaton $\mathcal{A}_{\relD}$ as follows. The
  set of states is contained in $\{q_I, q_=, \bot\} \cup M$ where
  \begin{align*}
    M:=Q\times Q \times \Sigma \times \{1,2\} \times
    (2^{Q\times Q}) \times (2^{Q\times Q}) \times \{R,L\} \times \{S,
    N\}.    
  \end{align*}
  $\bot$ is the
  initial state and $q_I$ is the only final state. 
  The transition relation $\Delta_{\mathcal{A}}$ of $\mathcal{A}$ contains the
  following transitions. 
  \begin{enumerate}[\phantom0 T1.]
  \item \label{TransitionAD:Init}
    $(q_I, (q_1, q_2), (q_1, q_2, \bot, 1, \ExRet(\bot),
    \ExLoop(\bot), R, S), \bot)\in\Delta_{\mathcal{A}}$ for all pairs
    $(q_1, q_2)\in Q^2$, and
  \item 
    $(q_=, (y,y), X, Y)$ for
    $X, Y \in\{q_=, \bot\}$ and $y\in
    (\Sigma\times\{1,2\})\cup\{\varepsilon\}$. 
  \end{enumerate}
  Fix some $\bar q:=( q_1, q_2, \sigma, l, \ExRet(w),
  \ExLoop(w), R, S)$.  Then we add the following transitions.
  \begin{enumerate}[\phantom0 T1.]
    \setcounter{enumi}{2}
  \item \label{T11} $(\bar q, (x,x), \bot, \bot)$ 
    if $q_1=q_2$;
  \item $(\bar q, (x,x), \bot, \bar q_1)$
    for
    \begin{align*}
    &(q_1, \sigma,  \gamma, q, \Clone{2})\in \Delta_{\mathcal{S}},\\
    &(q, q_1')\in \ExLoop(w),\text{ and}\\
    &\bar q_1=(q_1', q_2, \sigma, l, \ExRet(w), \ExLoop(w),R, N);      
    \end{align*}
  \item $(\bar q, (x,x), X, \bar q)$ for $X\in\{q_=,\bot\}$;
  \item $(\bar q, (x,x), \bar q_0, \bot)$ for 
    $\bar q_0:=(q_1, q_2, \tau, k, \ExRet(w\tau_k),
    \ExLoop(w\tau_k), R,S)$; 
  \item $(\bar q, (x,x), \bar q_0, \bar q_1)$  for 
    \begin{align*}
      &\bar q_0:=(q_1, q_2', \tau, k, \ExRet(w\tau_k),
      \ExLoop(w\tau_k), L,S),\\
      &(q_2', \tau, \gamma, q, \ColPop{k})\in \Delta_{\mathcal{S}},\\
      &(q, q_1')\in \ExLoop(w),\text{ and}\\
      &\bar q_1:=(q_1', q_2, \sigma, l, \ExRet(w), \ExLoop(w), R,N);
    \end{align*}
  \end{enumerate}    
  Fix some $\bar q:=( q_1, q_2, \sigma, l, \ExRet(w),
  \ExLoop(w), L, S)$.  Then we add the following transitions.
  \begin{enumerate}[\phantom0 T1.]
    \setcounter{enumi}{7}
  \item \label{T21} $(\bar q, (x,x), \bot, \bot)$ 
    for $(q_1, \sigma, \gamma, q,
    \Clone{2})\in\Delta_{\mathcal{S}}$ and $(q, q_2)\in \ExLoop(w)$;
  \item   $(\bar q, (x,x), \bar q_0, \bot)$ 
    for
    \begin{align*}
      &\bar q_0:=(q_1, q_2', \tau, k, \ExRet(w\tau_k),
      \ExLoop(w\tau_k), L, S),\\  
      &(q_2', \tau, q, \ColPop{k})\in\Delta_{\mathcal{S}}\text{ and}\\
      &(q, q_2)\in\ExLoop(w);      
    \end{align*}
  \item $(\bar q, (x,x), X, \bar q)$ for $X\in\{q_=,\bot\}$;
  \item $(\bar q, (x,x), \bot, \bar q_1)$ for 
    \begin{align*}
      &(q_1, \sigma, \gamma, q, \Clone{2})\in\Delta_{\mathcal{S}},\\
      &(q, q_1')\in \ExLoop(w)\text{  and}\\
      &\bar q_1:=(q_1', q_2, \sigma, l, \ExRet(w), \ExLoop(w), L, N);      
    \end{align*}
  \item $(\bar q, (x,x), \bar q_0, \bar q_1)$ 
    for 
    \begin{align*}
      &\bar q_0:=(q_1, q_2', \tau, k, \ExRet(w\tau_k),
      \ExLoop(w\tau_k), L, S),\\
      &(q_2', \tau, q, \ColPop{k})\in\Delta_{\mathcal{S}},\\
      &(q, q_1')\in \ExLoop(w)\text{ and}\\
      &\bar q_1:=(q_1', q_2, \sigma,l, \ExRet(w),
      \ExLoop(w), L, N).
    \end{align*}
  \end{enumerate}
  Fix some $\bar q:=( q_1, q_2, \sigma, l, \ExRet(w),
  \ExLoop(w), R, N)$.  Then we add the following transitions.
  \begin{enumerate}[\phantom0 T1.]
    \setcounter{enumi}{12}
  \item \label{T31}$(\bar q, (\Box, x), \bot, \bot)$ 
    if $q_1=q_2$;
  \item $(\bar q, (\Box, x), \bar q_0, \bot)$ 
    for  
    \begin{align*}
      &(q_1, \sigma, \gamma, q, \Push{\tau,k})\in\Delta_{\mathcal{S}},\\
      &(q, q_1')\in \ExLoop(w\tau_k)\text{ and}\\
      &\bar q_0:=(q_1', q_2, \tau, k, \ExRet(w\tau_k),
      \ExLoop(w\tau_k), R, N);      
    \end{align*}
  \item $(\bar q, (\Box, x), \bot, \bar q_1)$ 
    for
    \begin{align*}
      &(q_1, \sigma, \gamma, q, \Clone{2})\in\Delta_{\mathcal{S}},\\
      &(q, q_1')\in \ExLoop(w)\text{ and}\\
      &\bar q_1:=(q_1', q_2, \sigma, l, \ExRet(w),\ExLoop(w), R, N);
    \end{align*}
  \item 
    $(\bar q, (\Box, x), \bar q_0, \bar q_1)$ 
    for 
    \begin{align*}
    &(q_1, \sigma, \gamma, q, \Push{\tau,k})\in\Delta_{\mathcal{S}},\\ 
    &(q, q_1')\in \ExLoop(w\tau_k),\\ 
    &\bar q_0:=(q_1', q_2', \tau, k, \ExRet(w\tau_k),
    \ExLoop(w\tau_k), L, N),\\ 
    &(q_2', \tau,  q',\ColPop{k})\in\Delta_{\mathcal{S}},\\
    &(q', q_1'')\in \ExLoop(w)\text{ and}\\
    &\bar q_1:=(q_1'', q_2, \sigma,l, \ExRet(w),\ExLoop(w), R, N).
    \end{align*}
  \end{enumerate}
  Fix some $\bar q:=( q_1, q_2, \sigma, l, \ExRet(w),
  \ExLoop(w), L, N)$.  Then we add the following transitions.
  \begin{enumerate}[\phantom0 T1.]
    \setcounter{enumi}{16}
 \item \label{T41} $(\bar q, (\Box, x), \bot, \bot)$ 
   for
   $(q_1, \sigma, \gamma, q, \Clone{2})\in\Delta_{\mathcal{S}}$ and 
   $(q,q_2)\in\ExLoop(w)$;
  \item $(\bar q, (\Box, x), \bar q_0, \bot)$ 
     for  
     \begin{align*}
       &(q_1, \sigma, \gamma, q, \Push{\tau,k})\in\Delta_{\mathcal{S}},\\
       &(q, q_1')\in \ExLoop(w\tau_k),\\ 
       &\bar q_0:=(q_1', q_2', \tau, k,
       \ExRet(w\tau_k),\ExLoop(w\tau_k), L, N),\\ 
       &(q_2', \tau, q', \ColPop{k})\in\Delta_{\mathcal{S}}\text{ and}\\
       &(q', q_2)\in\ExLoop(w);        
     \end{align*}
  \item $(\bar q, (\Box, x), \bot, \bar q_1)$ 
    for  
    \begin{align*}
      &(q_1, \sigma, \gamma, q, \Clone{2})\in\Delta_{\mathcal{S}},\\
      &(q, q_1')\in \ExLoop(w),\text{ and}\\
      &\bar q_1:=(q_1', q_2, \sigma, l, \ExRet(w), \ExLoop(w), L, N);      
    \end{align*}
  \item $(\bar q, (\Box, x), \bar q_0, \bar q_1)$ 
    for 
    \begin{align*}
      &(q_1, \sigma, \gamma, q, \Push{\tau,k})\in\Delta_{\mathcal{S}},\\ 
      &(q, q_1')\in \ExLoop(w\tau_k),\\ 
      &\bar q_0:=(q_1',q_2',\tau,k,\ExRet(w\tau_k),\ExLoop(w\tau_k)),L,N),\\ 
      &(q_2', \tau, q', \ColPop{k})\in\Delta_{\mathcal{S}},\\
      &(q', q_1'')\in \ExLoop(w)\text{ and}\\
      &\bar q_1:=(q_1'', q_2, \sigma, l, \ExRet(w),\ExLoop(w), L, N).      
    \end{align*}
  \end{enumerate}\vspace{3 pt}
\end{defi}

\noindent The next lemma is a first step towards the proof that any accepting
run of $\mathcal{A}_{\relD}$ on a tree $\Encode(c_1)\otimes\Encode(c_2)$
witnesses the existence of some run from $c_1$ to $c_2$. 

\begin{lem}
  Let $\mathcal{S}$ be some \CPS.
  Let $(q_1, s_1), (q_2, s_2)$ be configurations and
  let $\rho$ be an accepting run on 
  $T:=\Encode(q_1, s_1)\otimes\Encode(q_2, s_2)$. 
  Then Conditions C\ref{Cond:C1}--C\ref{Cond:CNextToLast} of the
  beginning of this section
  hold and $s_1\in\Milestones(s_2)$. 
\end{lem}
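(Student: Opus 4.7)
The plan is to proceed by induction on the depth of the tree, from the root $\varepsilon$ downwards. For the base case, the only final state is $q_I$, so any accepting run $\rho$ must label $\varepsilon$ by $q_I$, and the unique applicable rule T\ref{TransitionAD:Init} reads $(q_1,q_2)$ at $\varepsilon$ and labels $0$ by $\bar q_0:=(q_1,q_2,\bot,1,\ExRet(\bot),\ExLoop(\bot),R,S)$. Since $0$ lies in the rightmost branch of $\Encode(q_i,s_i)$ for $i=1,2$ and $\LeftStack(0,\Encode(q_i,s_i))=\bot_2$, Conditions C\ref{Cond:C1}--C\ref{Cond:CNextToLast} are immediate at $0$.

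For the inductive step, I fix a node $d$ labeled by some $\bar q\in M$ at which C\ref{Cond:C1}--C\ref{Cond:CNextToLast} have been established, and perform a case distinction on the transition of $\rho$ applied at $d$. Each rule in the definition of $\mathcal{A}_{\relD}$ was designed around three invariants. \emph{First}, its input coordinate has $\Box$ on the first component exactly when $\pi_8(\bar q)=N$, and when $\pi_8(\bar q)=S$ the two coordinates coincide; together with the propagation rules for the auxiliary state $q_=$, this pins down C\ref{Cond:C1} at the children and forces the $M$-labeled nodes with flag $S$ to trace precisely the rightmost branch of $\Encode(q_1,s_1)$. \emph{Second}, flag $R$ is always passed to the right child and $L$ to the left (with $\bot$ closing absent children), which preserves C\ref{Cond:C2}. \emph{Third}, the update of the third through sixth components of $\bar q$ mirrors the update performed by the automaton of Proposition~\ref{Prop:LoopsAutomatic} on input $\TOP{2}(\LeftStack(\cdot,\cdot)){\downarrow}_0$; since the path from the root to $di$ encodes this topmost word (Remark~\ref{TOP2Determined}), C\ref{Cond:CNextToLast} and the analogous statements for $\Lvl$, $\ExRet$ and $\ExLoop$ propagate from $d$ to each existing child $di$.

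Once C\ref{Cond:C1}--C\ref{Cond:CNextToLast} hold throughout $T$, the membership $s_1\in\Milestones(s_2)$ follows from the global shape of the run. By C\ref{Cond:C1}, the set $D_S:=\{d\in T:\rho(d)\in M,\ \pi_8(\rho(d))=S\}$ is exactly the rightmost branch of $\Encode(q_1,s_1)$; its $\leq_{\mathrm{lex}}$-maximal element $d_f$ is therefore the rightmost leaf of $\Encode(q_1,s_1)$. The leaf-closing rules T\ref{T11} and T\ref{T21} force the two input coordinates to agree at $d_f$, while the propagation of $q_=$ enforces pointwise agreement of $\Encode(q_1,s_1)$ and $\Encode(q_2,s_2)$ on every $M$-labeled node strictly below the rightmost branch of $\Encode(q_2,s_2)$ and on every $q_=$-labeled node. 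Hence $\Encode(q_1,s_1)$ coincides with the restriction of $\Encode(q_2,s_2)$ to the $\leq_{\mathrm{lex}}$-downward closure of $d_f$, so $s_1=\LeftStack(d_f,\Encode(q_2,s_2))$ and Lemma~\ref{LemmaOrderIso} yields $s_1\in\Milestones(s_2)$.

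The main obstacle is purely combinatorial: one has to walk through all twenty transition schemata in the definition of $\mathcal{A}_{\relD}$ and verify, uniformly, that the $S\!\to\!N$ flag switches fire only at the rightmost leaf of $\Encode(q_1,s_1)$, that the $R\!/\!L$ dispatch follows the tree branching correctly, and that the six-component bookkeeping stays in lock-step with the automaton of Proposition~\ref{Prop:LoopsAutomatic}. Once these invariants are formulated precisely, each individual case reduces to a direct check against the corresponding rule.
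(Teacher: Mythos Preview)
Your proposal is correct and is essentially the detailed induction that the paper summarizes as ``straightforward inductions''; you have simply unfolded the root-to-leaves case distinction on the transition schemata and made explicit the invariants (the $S/N$ and $R/L$ bookkeeping, the synchronisation of $\Sym,\Lvl,\ExRet,\ExLoop$ with the path automaton of Proposition~\ref{Prop:LoopsAutomatic}, and the agreement checks forcing $\Encode(q_1,s_1)=\LeftTree{d_f,\Encode(q_2,s_2)}$). One small imprecision: $R$ is not always sent to the right child---in T6 and T9 the $R$ (respectively $L$) flag goes to the left child because the right child is $\bot$---but C\ref{Cond:C2} still propagates correctly once this is noted.
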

The proof consists of straightforward inductions. 

\begin{lem}
  Let $\mathcal{S}$ be some \CPS.
  Let $(q_1, s_1), (q_2, s_2)$ be configurations and
  let $\rho$ be an accepting run on 
  $T:=\Encode(q_1, s_1)\otimes\Encode(q_2, s_2)$. 
  Let $T_1:= \domain(T)\setminus\domain(\Encode(q_1, s_1))$ and
  $T_2$ be the rightmost branch of $\Encode(q_1, s_1)$ without the root. 
  Furthermore, for all $d\in T_1$, let
  \mbox{$s_d:=  \LeftStack(d, \Encode(q_2,s_2))$} and for each
  $ d\in T_2$, let
  $s_d:= s_1$.
  For each $d\in T_1\cup  T_2$ we have
  $\rho(d)\in M$ and there is a run
  $\rho_{\mathcal{S}}$
  of $\mathcal{S}$ from
  $\left(\pi_1(\rho(d)), s_d\right)$ to 
  $\left(\pi_2(\rho(d)), \InducedGenMilestone(d, \Encode(q_2, s_2))\right)$
  such that for all $0<i\leq \length(\rho_\mathcal{S})$,
  $\rho_{\mathcal{S}}(i)\neq s_1$. 
\end{lem}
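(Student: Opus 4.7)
The plan is to proceed by bottom-up induction on the subtree $T$ restricted to $T_1\cup T_2$, simultaneously establishing $\rho(d)\in M$ and constructing the witness run. The fact that $\rho(d)\in M$ is essentially already done by the previous lemma: $q_I$ can only label the root, $q_=$ only appears on nodes to the left of the rightmost branch of $\Encode(q_2,s_2)$, and $\bot$ only appears on $T_+$, none of which intersects $T_1\cup T_2$. The flags $\pi_7,\pi_8$ maintained along an accepting run are furthermore forced to match the intended meaning of Conditions C\ref{Cond:C1} and C\ref{Cond:C2}, since every transition clause either preserves $(\pi_7,\pi_8)$ or switches in exactly the way dictated by $d$ crossing the boundary between $\Encode(q_1,s_1)$ and its complement in $\Encode(q_2,s_2)$.

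For the base cases, I would consider a leaf $d\in T_1\cup T_2$ of $T$. The only transitions that can justify labelling a leaf with a state in $M$ are T\ref{T11}, T\ref{T21}, T\ref{T31} and T\ref{T41}, and in all four the relevant generalised milestone is $\Clone{2}(\LeftStack(d,\Encode(q_2,s_2)))$ by Remark \ref{rem:BasicPropExLeftStack}. In transition T\ref{T11} we have $q_1=q_2$ and $d\in T_2$ is necessarily the (unique) common rightmost leaf, forcing $s_1=s_2$ and making the empty run a valid witness. In T\ref{T21}, T\ref{T31} and T\ref{T41} the transition couples a single level-$2$ clone supplied by $\Delta_{\mathcal{S}}$ with a loop guessed from $\pi_6(\bar q)=\ExLoop(w)$; by Lemma \ref{Lem:LoopsDependsTopword} such a loop genuinely exists at the stack $\Clone{2}(\LeftStack(d,\cdot))$, and concatenating $\Clone{2}$ followed by this loop produces exactly the required run from $(\pi_1(\bar q),s_d)$ to $(\pi_2(\bar q),\InducedGenMilestone(d,\Encode(q_2,s_2)))$.

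For the inductive step, I would perform a case analysis on $\pi_7(\bar q)\in\{R,L\}$, $\pi_8(\bar q)\in\{S,N\}$ and on which of $d0,d1$ lie in $T$. In every case the transition clauses are designed so that the combinatorial step from $\InducedGenMilestone$ (or $\LeftStack$) at $d$ to the corresponding stacks at $d0,d1$ realises exactly one step in the $\ll$-enumeration of $\genMilestones(s_2)$ provided by Corollary \ref{Cor:OrderEmbedding}: a push or level-$2$ clone plus a loop when moving to a left successor (flag $L$), or a $\Pop{1}$/collapse-of-level-$1$ plus a loop when bridging the induced milestone of a left subtree to the left stack of a right subtree. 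Concatenating the runs produced by the induction hypothesis on the children with the single transition-plus-loop licensed by the automaton yields a run at $d$ with the correct initial and final states. When $\pi_8$ switches from $N$ (at a child) to $S$ (at $d$), the flag $L$ on the $N$-child guarantees that its final stack coincides (up to a level-$1$ pop) with the stack at which the $S$-child's run starts, so the two runs compose.

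The main obstacle will lie in two places. First, the bookkeeping for how $\InducedGenMilestone$ and $\LeftStack$ at $d$ are determined locally from those at $d0$ and $d1$ (using the four cases of Remark \ref{rem:BasicPropExLeftStack}) has to be matched, transition by transition, against the precise form of the state components $\pi_3,\pi_4,\pi_5,\pi_6$, which is tedious but mechanical. Second, and more delicate, I must verify that the constructed run never returns to $s_1$ after its initial configuration, as required by the definition of $\relD$. This will follow by observing that every stack visited by the run at node $d$ is either of width strictly greater than $\lvert s_1\rvert$ (inside a loop coming from the induction hypothesis) or equals $\InducedGenMilestone(e,\Encode(q_2,s_2))$ for some descendant $e\preceq d$ in $T_1\cup T_2$, and that $s_1\ll \InducedGenMilestone(e,\Encode(q_2,s_2))$ whenever $e\neq d$ lies strictly above the root of $T_2$ in the $\ll$-order. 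Since $\ll$ is a strict linear order on $\genMilestones(s_2)$, this prevents any revisit of $s_1$, and applying the statement at the root of $T_2$ then yields a run witnessing $(c_1,c_2)\in\relD$.
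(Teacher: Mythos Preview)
Your approach is essentially the paper's: bottom-up induction from the leaves, case analysis on the four leaf transitions T\ref{T11}, T\ref{T21}, T\ref{T31}, T\ref{T41}, and in the inductive step splicing the children's runs via a single stack operation plus a loop, using the local relationships between $\LeftStack$ and $\InducedGenMilestone$ from Remark~\ref{rem:BasicPropExLeftStack}. One small slip: you group T\ref{T31} with the clone-plus-loop cases, but T\ref{T31} applies at the rightmost leaf of $\Encode(q_2,s_2)$, which lies on the rightmost branch, so there $\InducedGenMilestone(d,\Encode(q_2,s_2))=s_2=\LeftStack(d,\Encode(q_2,s_2))$ (not $\Clone{2}$ of it) and the witnessing run is the empty run, exactly as for T\ref{T11}; only T\ref{T21} and T\ref{T41} require the clone-plus-loop argument.
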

\begin{proof}
  The proof is by induction starting at the leaves. 
  The base cases are the following.
  \begin{iteMize}{$\bullet$}
  \item Assume that $s_1=s_2$. Due to C\ref{Cond:C1} and C\ref{Cond:C2},
    the rightmost leaf $d$
    of $T$ satisfies $(\pi_7(d), \pi_8(d))=(R,S)$. Thus, $\rho$
    applies at $d$ some transition of the form 
    T\ref{T11}. Due to the existence of this transition, we conclude
    that $\pi_1(\rho(d))=\pi_2(\rho(d)))$.
    Since $s_d=\InducedGenMilestone(d,\Encode(q_2, s_2))=s_1$, we
    conclude there is a loop of length $0$ from
    $(\pi_1(\rho(d)), s_d)$ to 
    $\left(\pi_2(\rho(d)), \InducedGenMilestone(d, \Encode(q_2, s_2))\right)$.
  \item Assume that $s_1\neq s_2$. 
    Let $d$ be the rightmost leaf of $T$, i.e., $d$ is the rightmost
    leaf  of $\Encode(q_2, s_2)$. Due to C\ref{Cond:C1} and C\ref{Cond:C2}, 
    $(\pi_7(d), \pi_8(d))=(R,N)$. Thus, $\rho$ applies a transition of
    the form T\ref{T31} whence $\pi_1(\rho(d))=\pi_2(\rho(d))$. As in
    the previous case we obtain
    $s_d=\InducedGenMilestone(d,\Encode(q_2, s_2))=s_2$ and a run of
    length $0$ connects 
    $(\pi_1(\rho(d)), s_d)$ with
    $(\pi_2(\rho(d)), \InducedGenMilestone(d, \Encode(q_2, s_2)))$
    because the two configurations agree.

    Now let
    $d\in T_1$
    be a leaf of $T$ that is not in the rightmost branch. 
    Due to C\ref{Cond:C1} and C\ref{Cond:C2}, $(\pi_7(d), \pi_8(d))=(L,N)$. 
    Thus, $\rho$ applies a transition of the form T\ref{T41}. 
    Note that $s_d=\LeftStack(d,\Encode(q_2,s_2))$ and
    $\InducedGenMilestone(d, \Encode(q_2, s_2))=\Clone{2}(s_d)$. 
    Due to the conditions on the existence of a transition of form
    T\ref{T41} one immediately concludes that there is a run from 
    $(\pi_1(\rho(d)), s_d)$ to 
    $\left(\pi_2(\rho(d)), \InducedGenMilestone(d, \Encode(q_2, s_2))\right)$.

    Now let $d$ be the rightmost  leaf in $T$ that is in  
    $\Encode(q_1, s_1)$.  
    Since $d$ is not in the rightmost branch of $T$ and due to
    C\ref{Cond:C1} and C\ref{Cond:C2}, $(\pi_7(d),
    \pi_8(d))=(L,S)$. Thus, $\rho$ applies a 
    transition of the form T\ref{T21}.  
    Note that $s_d=\LeftStack(d,\Encode(q_2,s_2))$ and
    $\InducedGenMilestone(d, \Encode(q_2, s_2))=\Clone{2}(s_d)$. 
    Due to the conditions on the existence of a transition of form
    T\ref{T21} one immediately concludes that there is a run from 
    $(\pi_1(\rho(d)), s_d)$ to 
    $\left(\pi_2(\rho(d)), \InducedGenMilestone(d, \Encode(q_2, s_2))\right)$.
  \end{iteMize}
  Note that all the runs obtained in the base cases do not visit 
  the stack $s_1$ except for the first configuration in the run
  associated to the rightmost leaf of $\Encode(q_1,s_1)$.

  Analogously to the base case, the inductive step consists of a
  lengthy but rather straightforward case distinction.  
  Instead of stating all cases, we mention
  the crucial ideas underlying the proof. 
  \begin{iteMize}{$\bullet$}
  \item For $d\in\{0,1\}^*$ and $i\in\{0,1\}$ such that
    $di$ is in the rightmost branch of $\Encode(q_1,s_1)$, then
    $s_d=s_{di}=s_1$. Thus, the run associated to $di$ serves as
    initial part of the run associated to $d$.
  \item For $d\in\Encode(q_2, s_2)\setminus \Encode(q_1, s_1)$ or $d$
    the rightmost leaf of $\Encode(q_1,s_1)$,
    let 
    $i\in\{0,1\}$  minimal such that $di\in\Encode(q_2, s_2)$. 
    Then $\LeftStack(d, (q_2, s_2))$ and 
    $\LeftStack(di, (q_2, s_2))$ differ in one stack operation
    $\op$.  The transition of $\mathcal{A}_{\relD}$ used at $d$
    ensures that there is a run from 
    $\left(\pi_1(\rho(d)), \LeftStack(d, (q_2, s_2)\right)$ to
    $\left(\pi_1(\rho(di)), \LeftStack(di, (q_2, s_2))\right)$ that
    performs this operation $\op$ followed by a loop. The
    composition of this run with the run associated to $di$ serves as
    initial part of the run associated to $d$.
  \item If $d$ is in the rightmost branch of $\Encode(q_2, s_2)$ and
    $i\in\{0,1\}$ is maximal such that $di\in \Encode(q_2,s_2)$, then 
    $\InducedGenMilestone(d, \Encode(q_2,s_2))=
    \InducedGenMilestone(di, \Encode(q_2,s_2))$ whence the run
    associated to $di$ serves as final part of the run associated to
    $d$.
  \item If $d1\in \Encode(q_2, s_2)$, then 
    $\InducedGenMilestone(d, \Encode(q_2,s_2))=
    \InducedGenMilestone(d1, \Encode(q_2,s_2))$ whence the run
    associated to $d1$ serves as final part of the run associated to
    $d$.
  \item If $d$ is not in the rightmost branch of $\Encode(q_2, s_2)$
    and $d1\notin \Encode(q_2, s_2)$, then we have
    $\InducedGenMilestone(d, \Encode(q_2,s_2))=
    \Pop{1}(\InducedGenMilestone(d0, \Encode(q_2,s_2)))$. 
    Furthermore, the transition of $\mathcal{A}_{\relD}$ used at $d$ ensures
    that there exists a run from 
    $(\pi_2(\rho(d0)), \InducedGenMilestone(d0, \Encode(q_2, s_2))$ to
    $(\pi_2(\rho(d)),$ $\InducedGenMilestone(d, \Encode(q_2, s_2)))$. This run
    serves as final part of the run associated to $d$.
  \item If $d0, d1\in \Encode(q_2, s_2)$, then 
    $\LeftStack(d1, (q_2, s_2))=
    \Pop{1}(\InducedGenMilestone(d0, \Encode(q_2,
    s_2)))$. Furthermore, the existence of the
    transition of $\mathcal{A}_{\relD}$ used at $d$ ensures that there is a
    run from 
    $\left(\pi_2(\rho(d0)), \InducedGenMilestone(d0, \Encode(q_2,
      s_2))\right)$ to 
    $\left(\pi_1(\rho(d1)), \LeftStack(d1, (q_2, s_2))\right)$. 
    This run is used to connect the initial part induced by $d0$ with
    the final part induced by $d1$ in order to obtain the run
    associated to $d$. \qedhere
  \end{iteMize}
\end{proof}

\begin{rem}
Due to the transitions of the form T\ref{TransitionAD:Init} any
accepting run of $\mathcal{A}_{\relD}$ on a tree
$\Encode(q_1,s_1)\otimes\Encode(q_2,s_2)$ 
satisfies $(\pi_1(\rho(0)), \pi_2(\rho(0)))=(q_1,q_2)$. Moreover,
recall that 
\mbox{$\InducedGenMilestone(0, \Encode(q_2,s_2))=s_2$}.
Thus, the lemma implies that
$\left( (q_1,s_1),(q_2,s_2)\right)\in \relD$ if there is an accepting
run of $\mathcal{A}_{\relD}$ on
$\Encode(q_1,s_1)\otimes\Encode(q_2,s_2)$.
\end{rem}

\begin{lem}
  Let $\mathcal{S}$ be some \CPS. Let $c_1:=(q_1, s_1)$, $c_2:=(q_2,
  s_2)$ be 
  configurations such that $s_1=\Pop{2}^k(s_2)$ for some $k\in\N$. Let
  $\rho_{\mathcal{S}}$ be a run from $(q_1, s_1)$ to $(q_2, s_2)$ 
  witnessing $(c_1,c_2)\in \relD$. 
  Then there is an accepting run of $\mathcal{A}$ on 
  $T:=\Encode(q_1, s_1)\otimes \Encode(q_2, s_2)$. 
\end{lem}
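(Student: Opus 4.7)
The plan is to construct an accepting run $\rho$ of $\mathcal{A}_{\relD}$ on $T = \Encode(q_1,s_1)\otimes\Encode(q_2,s_2)$ explicitly from the given run $\rho_{\mathcal{S}}$, and to verify that each local transition of $\rho$ is indeed in $\Delta_{\mathcal{A}}$ by appealing to the decomposition of $\rho_{\mathcal{S}}$ provided by Corollary \ref{Cor:OrderEmbedding}.

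First I would deal with the labelling on the ``skeleton'' of $T$. Set $\rho(\varepsilon) := q_I$ and $\rho(d) := q_=$ for every $d \in \Encode(q_1,s_1)$ that is not on the rightmost branch (these are forced, since $s_1$ is a milestone of $s_2$ so $\Encode(q_1,s_1)$ and $\Encode(q_2,s_2)$ agree there). Call $T_2$ the rightmost branch of $\Encode(q_1,s_1)$ without the root, and $T_1 := \domain(T)\setminus\domain(\Encode(q_1,s_1))$; every remaining node lies in $T_1 \cup T_2$ and must be labelled by an element of $M$. For $d \in T_1 \cup T_2$ define $s_d$ as in the previous lemma (either $s_1$ or $\LeftStack(d,\Encode(q_2,s_2))$) and write $m_d := \InducedGenMilestone(d,\Encode(q_2,s_2))$. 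I would set the last four components of $\rho(d)$ by the forced values $\Sym(\LeftStack(d,\Encode(q_2,s_2)))$, $\Lvl(\LeftStack(d,\Encode(q_2,s_2)))$, $\ExRet(\LeftStack(d,\Encode(q_2,s_2)))$, $\ExLoop(\LeftStack(d,\Encode(q_2,s_2)))$; set $\pi_7(\rho(d)) = R$ iff $d$ is on the rightmost branch of $\Encode(q_2,s_2)$, and $\pi_8(\rho(d)) = S$ iff $d \in T_2$.

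The core of the argument is the choice of $\pi_1(\rho(d))$ and $\pi_2(\rho(d))$. By Corollary \ref{Cor:OrderEmbedding}, $\rho_{\mathcal{S}}$ decomposes as $\rho_1\circ\lambda_1\circ\cdots\circ\rho_n\circ\lambda_n$ where each $\rho_i$ performs a single operation connecting two $\ll$-consecutive generalised milestones of $s_2$ and each $\lambda_i$ is a loop of the stack reached. Since $\LeftStack(\cdot,\Encode(q_2,s_2))$ is an order isomorphism between $(\domain(\Encode(q_2,s_2)),\leq_{\mathrm{lex}})$ and $(\Milestones(s_2),\ll)$ (Lemma \ref{LemmaOrderIso}), and the observations in Remark \ref{rem:BasicPropExLeftStack} relate $\LeftStack$ and $\InducedGenMilestone$ at local (parent/child) level, each generalised milestone $m_d$ is visited by $\rho_{\mathcal{S}}$. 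Let $i_d$ be the last index at which $\rho_{\mathcal{S}}$ visits $m_d$; set $\pi_2(\rho(d))$ to be the state of $\rho_{\mathcal{S}}(i_d)$. For $d \in T_2$ set $\pi_1(\rho(d)) := q_1$; for $d \in T_1$, let $d'$ be the $\ll$-predecessor node of $d$ in the rightmost branch up to $d$'s position, so that $\LeftStack(d,\Encode(q_2,s_2))$ is reached from $m_{d'}$ by exactly one operation of $\rho_{\mathcal{S}}$ followed by a loop; set $\pi_1(\rho(d))$ to be the state immediately after that single operation.

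It then remains to verify, case by case on the local shape of $T$ at each node $d$ (leaf; only $0$-successor present; only $1$-successor present; both successors present; on the rightmost branch or not; in $T_2$ or in $T_1$) that the tuple assigned to $d$ together with those assigned to $d0$ and $d1$ fits one of the transitions T\ref{TransitionAD:Init}--T20. Each such check uses exactly the correspondence in Remark \ref{rem:BasicPropExLeftStack} between $\LeftStack$, $\InducedGenMilestone$ and their values at $d0, d1$, and the fact that a piece of $\rho_{\mathcal{S}}$ between two consecutive last-occurrences of milestones always has the form ``single stack operation followed by a loop,'' so that the required membership $(q,q')\in\ExLoop(\cdot)$ (or $\ExRet(\cdot)$ via $\ColPop{k}$ transitions) of Lemma \ref{Lem:LoopsDependsTopword} is guaranteed. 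The main obstacle is precisely this case distinction: in particular, the cases where $d$ is a leaf of $\Encode(q_1,s_1)$ but not of $\Encode(q_2,s_2)$, and the cases where $d1\notin\Encode(q_2,s_2)$, need the identity $m_d = \Pop{1}(m_{d0})$ (which supplies a $\ColPop{k}$ transition after the loop) together with the fact that $\LeftStack(d1,\Encode(q_2,s_2)) = \Pop{1}(m_{d0})$ when $d1$ exists, in order to match the ``$\bar q_0$ with tag $L$ followed by $\ColPop{k}$ followed by a loop'' pattern of the transitions creating $(R,N)$ or $(R,S)$ labels. Once all the local checks pass, $\rho$ is an accepting run by the transition T\ref{TransitionAD:Init} applied at the root, since $\pi_1(\rho(0)) = q_1$ and $\pi_2(\rho(0))$ is the final state of the last occurrence of $m_0 = s_2$ in $\rho_{\mathcal{S}}$, which is $q_2$.
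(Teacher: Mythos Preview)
Your plan is essentially the paper's: construct the accepting run $\rho$ by reading off, at each node $d$, the states at which $\rho_{\mathcal{S}}$ visits the associated milestone $\LeftStack(d,\Encode(q_2,s_2))$ and generalised milestone $\InducedGenMilestone(d,\Encode(q_2,s_2))$, and then verify the local transitions using the loop/single-operation decomposition of Corollary~\ref{Cor:OrderEmbedding} together with Remark~\ref{rem:BasicPropExLeftStack}.

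There is, however, one concrete slip in your choice of $\pi_1(\rho(d))$ for $d\in T_1$. The automaton's transitions (see e.g.\ T14, T16, T18, T20) are of the shape ``from state $\pi_1(\bar q)$ at $d$ perform one stack operation to reach some state $q$, then a loop from $q$ to $\pi_1(\bar q_0)$ at $d0$''. Hence $\pi_1(\rho(d))$ must be the state of $\rho_{\mathcal{S}}$ at the \emph{last} visit to $\LeftStack(d,\Encode(q_2,s_2))$, i.e.\ the state \emph{after} the loop of that stack, not the state ``immediately after that single operation'' as you wrote. With your choice, the verification at $d$ would ask for a transition of $\mathcal{S}$ out of the state at the \emph{beginning} of the loop, which is not in general the state from which $\rho_{\mathcal{S}}$ performs its next stack operation, so the local check would fail. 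The fix is exactly what the paper does: for $d\notin\Encode(q_1,s_1)$, set $\pi_1(\rho(d))$ to be the state at the maximal $j$ with stack of $\rho_{\mathcal{S}}(j)$ equal to $\LeftStack(d,\Encode(q_2,s_2))$. With this correction (and the trivial omission of labelling $T_+$ by $\bot$), the rest of your plan goes through as written.
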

\begin{proof}
  We define the accepting run $\rho$ as follows.
  Set $\rho(\varepsilon):= q_I$, $\rho(d):=\bot$ for all 
  $d\in T_+$, $\rho(d):= q_=$ for all 
  $d\in \Encode(q_1, s_1)\setminus B$ where
  $B$ is the rightmost branch of $\Encode(q_1, s_1)$, and for all
  other $d\in\domain(T)$, set
  \begin{align*}
    \rho(d):=(q_i^d, q_e^d, \Sym(s'),  \Lvl(s'), \ExRet(s'),
    \ExLoop(s'),   X, Y)    
  \end{align*}
  where $s':=\LeftStack(d, \Encode(q_2,s_2))$ if $d\notin B$ and
  $s'=s_1$ if $d\in B$ and $q_i^d, q_e^d, X$ and $Y$ are
  defined as follows.  
  \begin{iteMize}{$\bullet$}
  \item $q_i^d=q_1$ if $d\in \Encode(q_1,s_1)$. 
    Otherwise let  
    $j\in\domain(\rho_{\mathcal{S}})$ be maximal such that
    $\rho_{\mathcal{S}}(j)=(q,\LeftStack(d,\Encode(q_2,s_2)))$ for
    some $q\in Q$. Set $q_i^d:=q$.
  \item Let $j\in\domain(\rho_{\mathcal{S}})$ be maximal such that 
    $\rho_{\mathcal{S}}(j)=(q,\InducedGenMilestone(d,\Encode(q_2,s_2)))$
    for some $q\in Q$. Set  $q_e^d:=q$. 
  \item Set $X=R$ if $d$ is in the rightmost branch of $T$ and 
    set $X=L$ otherwise.
  \item Set $Y=S$ if $d$ is in the rightmost branch
    of $\Encode(q_1, s_1)$ and set $Y=N$ otherwise. 
  \end{iteMize}
  A straightforward, but tedious induction shows that $\rho$ is
  accepting on $T$.   It relies on the decomposition result for runs
  witnessing $(c_1,c_2)\in \relD$ from Corollary \ref{Cor:OrderEmbedding}.
\end{proof}

\section{Modifications for the 
  Proof of Proposition \ref{Prop:ReachLregular} 
  (cf.~page \pageref{Proof:ReachLregular})}
\label{Appendix:ReachL}
We replace the automaton $\mathcal{A}_{\relA}$ in the construction
of $\Reach$ on the product of the pushdown system with the automaton
for the regular language $L$ with
the following version $\mathcal{A}_{{\relA}'}$.
Let $Q_L$ be the states of a finite automaton recognising $L$, $i_0\in
Q_L$ be its initial state and
$F\subseteq Q_L$ be its final states. 
We replace transitions of the form
$(q_I, (q_1,q_2),  (S, q_1, q_2, \bot, 1, \ExRet(\bot_2)),   \bot)$
by transitions of the form
$(q_I, (q_1,q_2), (S, (q_1,i_0), (q_2,\hat q), \bot, 1, \ExRet(\bot_2)),
\bot)$ for $q_1,q_2\in Q$ and $\hat q\in Q_L$. 
Constructing $\varphi$ with $\mathcal{A}_{{\relA}'}$ instead of
$\mathcal{A}_{\relA}$ ensures that $T_1$ encodes some configuration
$(q,s)$ where the state $q$ is in $Q$, but it checks for runs
starting in $((q,i_0), s)$. 

Furthermore, we replace the automaton $\mathcal{A}_{\relD}$ 
with the version $\mathcal{A}_{{\relD}'}$ where we replace the transitions
of the form 
$(q_I, (q_1, q_2), (q_1, q_2, \bot, 1, \ExRet(\bot), \ExLoop(\bot), R,
S), \bot)\in\Delta_{\mathcal{A}}$ 
with the transitions
$(q_I, (q_1, q_2), ((q_1, \hat q), (q_2, q_f), \bot, 1, \ExRet(\bot),
\ExLoop(\bot), R,  S), \bot)\in\Delta_{\mathcal{A}}$ 
for $q_1,q_2\in Q$, $\hat q\in Q_L$ and $q_f\in F$. 
Constructing $\varphi$ with $\mathcal{A}_{{\relD}'}$ instead of
$\mathcal{A}_{\relD}$ ensures that $T_2$ encodes some configuration
$(q,s)$ where the state $q$ is in $Q$, but it checks for runs
ending in $((q,q_f), s)$ for some final state $q_f$ of
$\mathcal{A}_L$.

\end{document}